\newlist{abbrv}{itemize}{1}
\setlist[abbrv,1]{label=,labelwidth=1.2in,align=parleft,itemsep=0.1\baselineskip,leftmargin=!}
\numberwithin{equation}{section}
\newcommand{\Fock}{\mathcal F} \newcommand{\tr}{\text{Tr}}
\newcommand{\abs}[1]{\left| #1 \right|}
\newcommand{\scp}[2]{\big\langle #1 , #2 \big\rangle} \newcommand{\bra}[1]{\langle #1 |}
\newcommand{\ket}[1]{| #1 \rangle} \newcommand{\norm}[1]{\| #1 \| }
\renewcommand{\Re}{\mathrm{Re}} \renewcommand{\Im}{\mathrm{Im}}
\newcommand{\id}{\mathds{1}} 
 \newcommand{\loc}{\mathrm{loc}}
\newcommand{\h}[1]{\mathfrak h_{#1}} \newcommand{\2}{{L^2}}
\newcommand{\Lp}[1]{L^{#1}} \newcommand{\Sob}[1]{H^{#1}}
 \newcommand{\R}{\mathbb{R}} \newcommand{\C}{\mathbb{C}}
\newcommand{\cF}{\Fock} \newcommand{\cN}{\mathcal{N}}
\newcommand{\RRR}{\mathbb{R}}  \newcommand{\NNN}{\mathbb{N}}
\DeclareMathOperator*{\slim}{\mathrm{s--lim}}
\newtheorem{theorem}{Theorem}[section] \newtheorem{lemma}[theorem]{Lemma}
\newtheorem{def:lemma}[theorem]{Definition and Lemma}
\newtheorem{corollary}[theorem]{Corollary}
\newtheorem{proposition}[theorem]{Proposition}
\theoremstyle{definition} 
\theoremstyle{remark} \newtheorem{remark}{Remark}[section]
\begin{document}

\title{Renormalized Bogoliubov Theory for the Nelson Model}

\author[M. Falconi]{Marco Falconi} \address{Marco Falconi, Dipartimento di
  Matematica, Politecnico di Milano, Piazza Leonardo da Vinci 32, 20133
  Milano, Italy.}  \email{marco.falconi@polimi.it}

\author[J. Lampart]{Jonas Lampart} \address{Jonas Lampart, CNRS \& LICB (UMR
  6303), Universit\'e de Bourgogne Franche--Comt\'e, 9 Av.~A.~ Savary, 21078
  Dijon Cedex, France.}  \email{lampart@math.cnrs.fr}
\author[N. Leopold]{Nikolai Leopold} \address{Nikolai Leopold, University of
  Basel, Department of Mathematics and Computer Science, Spiegelgasse 1, 4051
  Basel, Switzerland.}  \email{nikolai.leopold@unibas.ch}
\author[D. Mitrouskas]{David Mitrouskas} \address{David Mitrouskas, Institute
  of Science and Technology Austria (ISTA), Am Campus 1, 3400 Klosterneuburg,
  Austria.} \email{mitrouskas@ist.ac.at}

\begin{abstract}
  We consider the time evolution of the renormalized Nelson model, which
  describes $N$ bosons linearly coupled to a quantized scalar field, in the
  mean-field limit of many particles $N\gg 1$ with coupling constant
  proportional to $N^{-1/2}$. First, we show that initial states exhibiting
  Bose--Einstein condensation for the particles and approximating a coherent
  state for the quantum field retain their structure under the many-body time
  evolution.~Concretely, the dynamics of the reduced densities are
  approximated by solutions of two coupled PDEs, the
  Schr\"odinger--Klein--Gordon equations. Second, we construct a renormalized
  Bogoliubov evolution that describes the quantum fluctuations around the
  Schr\"odinger--Klein--Gordon equations. This evolution is used to extend the
  approximation of the evolved many-body state to the full norm topology. In
  summary, we provide a comprehensive analysis of the Nelson model that
  reveals the role of renormalization in the mean-field Bogoliubov theory.
\end{abstract}

\keywords{Nelson Model, Renormalized Bogoliubov Theory, Quantum Field Theory,
Many-Body Quantum Mechanics}
\subjclass[2020]{Primary:
  \href{https://mathscinet.ams.org/msc/msc2020.html?t=81Vxx&btn=Current}{81V73}. Secondary:
  \href{https://mathscinet.ams.org/msc/msc2020.html?t=81Txx&btn=Current}{81T16}.}

\maketitle

\frenchspacing


\tableofcontents


\section{Introduction and main results}

We study the effective behavior of a large number of bosonic particles in
weak interaction with a quantized scalar field. Microscopically, such a
system is described by the Nelson Hamiltonian. This was first introduced in
1964 in the mathematical physics literature by E.\ Nelson~\cite{nelson}, and
provides an example of rigorous renormalization in quantum field theory: The
formal Hamiltonian needs to be corrected by the divergent self-energy of the
particles to obtain a self-adjoint operator and associated unitary dynamics.
Renormalization plays a crucial role not only in (mathematical) physics, it
also led, perhaps unexpectedly, to groundbreaking advances in pure and
applied mathematics, from stochastic and nonlinear partial differential
equations to dynamical systems and geometry (see
\cite{bourgain,burq,hairer-regularity,mcmullen1994} for some celebrated
examples). A deeper understanding of renormalization is thus of great
relevance for both mathematics and physics. In this paper, we clarify the
role played by renormalization in the mean-field Bogoliubov theory for the
Nelson model.

What came to be known as Bogoliubov theory was introduced in the 1940s as a
heuristic approach to the analysis of excitations in the condensed Bose gas
\cite{Bogoliubov1947}. After the successful creation of Bose--Einstein
condensates in laboratory experiments during the 1990s, these ideas have
regained significant interest from the mathematical physics community. This
led to the development of a larger research endeavor aimed at providing a
rigorous justification of Bogoliubov's approach, starting from the many-body
Schr\"odinger theory. For the low-energy excitation spectrum of large bosonic
systems, Bogoliubov theory was first justified by Seiringer
\cite{Seiringer11} and Grech and Seiringer \cite{GrechS13}. Regarding the
time evolution of excitations in many-particle systems, pioneering results
were obtained by Ginibre and Velo \cite{GV1979a,GV1979b} and Grillakis,
Machedon and Margetis \cite{GMM2011,GMM2010}. Over the past decade, there has
been substantial progress in developing refined methods for the derivation of
Bogoliubov's theory and in extending the analysis to cover more singular
interactions, e.g. \cite{BBCS19,Lewin:2015a,LNSS2015,brennecke:2017} and we
refer to Section \ref{sect:comparison} for a detailed overview. In the
present paper, we continue this effort by establishing Bogoliubov's
approximation for the time evolution in the Nelson model.  What sets this
problem qualitatively apart from previous works is the need for
renormalization of the underlying many-body Hamiltonian at finite particle
number and the corresponding Bogoliubov evolution. Our results shed new light
on the behavior of such systems and we hope that they pave the way for
further investigation of the interplay between many-body effects and singular
particle-field interactions. It is worth mentioning that such systems are of
continued relevance in physics. To cite a recent example from condensed
matter theory, the interplay of many-body effects and singular particle-field
interactions is crucial in quantum fluids of light \cite{frerot23,
  carusotto13, fontaine18}. There, a renormalized Bogoliubov theory of the
condensate-phonon interaction is necessary to explain the properties of
concrete polariton-exciton condensates \cite{frerot23}.

Now in more detail, we investigate the dynamical evolution of the Nelson
model in a mean-field limit in which the number of particles, denoted by $N$,
becomes large while the coupling to the scalar field is proportional to $1/
\sqrt{N}$.  Our first result concerns the dynamics of the one-particle and
one-field-mode reduced density matrices. We assume that, at the initial time,
the particles exhibit Bose--Einstein condensation and the quantum field is
approximately in a coherent state.  We prove that the time evolution of the
reduced density matrices of such initial states can be described by a
condensate wave function and a classical scalar field that solve a system of
two coupled PDEs, the Schr\"odinger--Klein--Gordon equations, with errors that
tend to zero as $N\to \infty$. The renormalization on the microscopic level,
interestingly, does not appear in the mean-field equations. This was observed
earlier by Z.\ Ammari and one of the authors in \cite{AF2017}, who proved a
similar statement using semiclassical techniques and without quantitative
bounds.  In this article we employ different techniques that yield an
explicit rate of convergence for initial states satisfying an energy
condition.

Our second result provides an approximation in norm of the time evolved
many-body state, by a state obtained from the Schrödinger--Klein--Gordon
equations and a quadratic Bogoliubov type evolution modeling the quantum
fluctuations around the mean-field dynamics. On the level of the quantum
fluctuations, it is important to take the renormalization into account and
construct a renormalized Bogoliubov evolution, whose construction may be of
interest on its own.


\subsection{The Nelson Hamiltonian}
\label{sec:nelson-hamiltonian}

We consider the massive Nelson model in the mean-field regime. It describes a
system of $N$ non-relativistic bosonic particles that are linearly coupled to
a scalar quantum field, whose states are elements of the Hilbert space
\begin{align}
  \mathcal{H}_N = \bigotimes_{\rm sym}^N L^2(\mathbb R^3) \otimes \mathcal{F}
\end{align}
where $\mathcal{F} = \C \Omega \oplus \bigoplus_{n = 1}^{\infty} \bigotimes_{\rm sym}^n L^2(\mathbb R^3)$ is the
bosonic Fock space over $L^2(\mathbb{R}^3)$ with vacuum state $\Omega$. The state of the
system evolves according to the Schr\"odinger equation
\begin{align}
  \label{eq:Schroedinger equation}
  i \partial_t \Psi_{N}(t) = H_N \Psi_{N}(t) .
\end{align}
Formally, the Nelson Hamiltonian $H_N$ is given by the expression
\begin{align}
  \label{eq:Nelson Hamiltonian formal definition}
  \sum_{j=1}^N  \left[ - \Delta_j
    + N^{-1/2}
    \int_{\mathbb{R}^3} dk \, \omega^{- 1/2}(k)
    \left( e^{- i k x_j } a^*_k  + e^{ ikx_j}  a_k  \right)  \right] +  \text{d} \Gamma_a(\omega) ,
\end{align}
where $x_1,\ldots,x_N $ denote the variables of the particles, $\omega (k) = \sqrt{k^2
  + 1}$ and $\text{d} \Gamma_a(\omega) = \int_{\mathbb{R}^3} dk \, \omega(k) a_k^* a_k$ is the second
quantization of the multiplication operator $\omega$, describing the energy of the
quantum field.  The annihilation and creation operators are defined by the
distribution-valued expressions
\begin{align}
  \begin{split}
    \left( a_k \Psi_N \right)^{(n)}(X_N,K_n)
    &= \sqrt{n+1} \Psi_N^{(n+1)}(X_N, k, K_n) ,
    \\
    \left( a_k^* \Psi_N \right)^{(n)}(X_N,K_n)
    &= n^{-\frac{1}{2}} \sum_{j=1}^n \delta(k - k_j) \Psi_N^{(n-1)}(X_N, K_n\setminus k_j )
  \end{split}
\end{align}
with $\Psi^{(n)}_N \in \bigotimes_{\rm sym}^N L^2(\mathbb R^3) \otimes \bigotimes_{\rm sym}^n L^2(\mathbb
R^3)$ and $X_N = (x_1, \ldots, x_N)$, $K_n=(k_1, \dots, k_n)$.  They satisfy the
canonical commutation relations
\begin{align}
  \label{eq: canonical commutation relation}
  [a_k, a^*_l ] &=  \delta(k-l), \quad
  [a_k, a_l ] = 
  [a^*_k, a^*_l ] = 0 .
\end{align}

This definition of $H_N$ is only formal, since no domain has been specified.
The quadratic form associated to the expression is ill defined on the
form-domain of the non-interacting Hamiltonian, and while it may be defined
on more regular states, this makes it unbounded from below and not closable.
However, this problem can be remedied by renormalization~\cite{nelson}:
Denote by $H_{N}^\Lambda$ the version of~\eqref{eq:Nelson Hamiltonian formal
  definition} with $\omega^{-1/2}$ replaced by $\omega^{-1/2}\id_{|k|\leq
  \Lambda}$ in the interaction, then there exists a diverging family of
numbers $E^\Lambda$ and a self-adjoint operator $H_N$, $D(H_N)$ so that
\begin{equation}\label{eq:def:ren:Nelson}
  e^{-i t H_N  } =\slim_{\Lambda\to \infty} e^{-i t  H_N^\Lambda }e^{-i t E^\Lambda}.
\end{equation}
We take this as the definition of the Nelson Hamiltonian $H_N$, and remark
that, due to the coupling constant $N^{-1/2}$, the numbers $E^\Lambda$ can be
chosen independent of $N$.  The operator $H_N$ can be characterized further
by applying a dressing transformation, see~\cite{nelson, GW2018} and
Lemma~\ref{lem:H:D:representation}, or by an alternative approach related to
boundary conditions~\cite{LaSch19}.  It is important to note the effect of
the renormalization on the domain of $H_N$. While the operators $H_N^\Lambda$
all share the domain of the free Hamiltonian, it holds that $D(H_N^{1/2})\cap
H^1(\R^{3N})\otimes \Fock=\{0\}$, i.e., even the form domain of $H_N$ is
completely different from that of the free Hamiltonian~\cite{GW2018,
  LaSch19}.

An important role in our analysis is played by a unitary transformation,
usually called the dressing transformation, that relates the renormalized
Nelson Hamiltonian to an operator whose quadratic form is more explicit and,
importantly, comparable to the one of the free Hamiltonian.  Although our
main results below can be stated without any reference to this
transformation, it is crucial in their proofs.

\subsection{Main results}
\label{sect:main-results}

In this section we state our results on the approximation for the Nelson time
evolution, first on the level of reduced densities by the mean-field
equations, and then in norm by a renormalized Bogoliubov evolution.\\

\noindent\textbf{Mean-field approximation.}
We are interested in the evolution of many-body states in which the particles
form a Bose--Einstein condensate and the field is approximately in a coherent
state. To be more precise, let us define the unitary Weyl operator
\begin{align}
  \label{eq: Weyl operator}
  W(f) = \exp \left(  \int_{\mathbb{R}^3} dk \, \big( f(k) a^*_k - \overline{f(k)} a_k \big) \right).
\end{align}
The initial states we have in mind are of the form
\begin{align}
  \label{eq:many-body initial states}
  \Psi_{N} \approx  u^{\otimes N}\otimes W(\sqrt N \alpha) \Omega
\end{align}
with $\Omega$ being the vacuum in $\mathcal{F}$ and $u, \alpha \in
L^2(\mathbb{R}^3)$. We will show that this product-like structure is
preserved during the time evolution and that
\begin{align}
  \label{eq:many-body state time evolved approximation introduction}
  e^{-i t H_N } \Psi_{N} \approx u_t^{\otimes N}\otimes W(\sqrt N \alpha_t) \Omega,
\end{align}
where $(u_t, \alpha_t) \in L^2(\mathbb{R}^3) \oplus L^2(\mathbb{R}^3)$ solve
the Schr\"odinger--Klein--Gordon (SKG) equations
\begin{align}
  \label{eq:Schroedinger-Klein-Gordon equations intro}
  \begin{cases}
    \begin{aligned}
      i \partial_t u_t(x) & \, =\, \big( - \Delta + \phi_{\alpha_t}(x) - \tfrac{1}{2} \scp{u_t}{ \phi_{\alpha_t} u_t}_{L^2} \big) u_t(x) \\[1.5mm]
      i \partial_t \alpha_t(k) &\, =\, \omega(k) \alpha_t(k) + \scp{u_t}{G_{(\cdot)}(k) u_t}_{L^2} \\[1.5mm]
    \end{aligned}
  \end{cases}
\end{align}
where
\begin{align}
  \label{eq:definition of G-x}
  G_x(k) &= \frac{1}{\sqrt{\omega(k)}} e^{- i k x}
  \quad \text{and} \quad
  \phi_{\alpha} (x) = 2 \Re\scp{G_x}{\alpha}_{L^2(\mathbb{R}^3)} .
\end{align}
These equations are the Hamiltonian equations of the energy
\begin{align}
  \label{eq:energy functional Schroedinger-Klein-Gordon equations regular}
  \mathcal{E}(u, \alpha) & = \scp{u}{\left( - \Delta + \phi_{\alpha} \right) u}_{L^2(\mathbb{R}^3)} + \scp{\alpha}{\omega \alpha}_{L^2(\mathbb{R}^3)} .
\end{align}

We denote the flow of solutions to the SKG equations by
$\mathfrak{s}[t](u,\alpha) = (u_t,\alpha_t)$, that is $(u_t,\alpha_t)$ solves
\eqref{eq:Schroedinger-Klein-Gordon equations intro} with initial conditions
$(u_t,\alpha_t)|_{t=0} = (u,\alpha)$. In our main results, we will use that
$\mathfrak s$ is well defined on $H^3(\mathbb R^3) \oplus \mathfrak h_{5/2}$
where $H^3$ denotes the $L^2$-Sobolev space and $\mathfrak h_{5/2}$ is the
weighted space with norm $\|\alpha\|_{\mathfrak{h}_{5/2}} = \| \omega^{5/2}
\alpha\| _{L^2(\mathbb{R}^3)} $. Moreover, the SKG flow conserves the energy
$\mathcal E$ and the $L^2$-norm of $u_t$. These statements follow from the
more general well-posedness result summarized in Proposition \ref{prop:skg
  well posed}.

Our first result states that the one-particle reduced density matrices of a
state $e^{-i t H_N } \Psi_N$ are close to those of the product
state~\eqref{eq:many-body state time evolved approximation introduction} if
this holds at the initial time and if the energy expectation per particle of
$\Psi_N$ is close to the initial mean-field energy. A convenient measure for
the convergence of reduced densities is given by the functional
\begin{align}\label{eq:definition of beta}
  \beta \left[ \Psi_N, ( u , \alpha ) \right]
  &= \scp{\Psi_N}{(q_u)_1 \Psi_N}_{\mathcal{H}_N} 
  +  N^{-1} \|\mathcal{N}_a^{1/2} W^* (\sqrt{N} \alpha)\Psi_N\|^2_{\mathcal{H}_N} ,
\end{align} 
where $(q_u)_1$ denotes the orthogonal projection $q_u = \id - |u\rangle
\langle u |$ on $L^2(\mathbb{R}^3)$ acting on the first particle's variable
$x_1$, and $\mathcal N_a = \int_{\mathbb R^3} dk\, a_k^* a_k$ is the number
operator on $\mathcal F$.  The functional $\beta$ counts the number of
particles in states orthogonal to $u$ and the number of field modes outside
of the coherent state $W(\sqrt{N}\alpha)\Omega$, both relative to the total
number $N$ of particles.  In particular, $\beta[\Psi_N, ( u, \alpha ) ]=0$
for states of the product form~\eqref{eq:many-body initial states}.

\begin{theorem}
  \label{theorem:final reduced density matrices} Let $(u, \alpha) \in
  H^3(\mathbb R^3) \oplus \mathfrak h_{5/2} $ with
  $\norm{u}_{L^2(\mathbb{R}^3)} =1 $ and let $\mathfrak s[t](u,\alpha) =
  (u_t,\alpha_t)$ denote the solution of \eqref{eq:Schroedinger-Klein-Gordon
    equations intro} with initial data $(u,\alpha)$. Then, there exists a
  constant $C>0$ such that for all $N\ge 1$, $t\in \mathbb R$ and $\Psi_N\in
  D(H_N^{1/2})$ with $\norm{\Psi_N}_{\mathcal{H}_N} =1$,
  \begin{multline*}
    \beta \left[ e^{-i t H_N }\Psi_N , \mathfrak s[t] (u ,\alpha ) \right]  \le
    e^{C R (t) } \Big(
    \big| N^{-1} \scp{\Psi_N}{H_N \Psi_N}_{\mathcal{H}_N} - \mathcal{E}(u , \alpha) \big|   \notag\\
    + \max_{ j =1,2} \big( \beta \left[ \Psi_N, ( u , \alpha ) \right] + N^{-1} \big)^{1/j}  \Big)
  \end{multline*}
  with $R(t) = 1 + \int_0^{|t|} (\|u_s\|_{H^3(\mathbb{R}^3)}^{10} +
  \|\alpha_s\|_{\mathfrak{h}_{5/2}}^{10}) ds$, and $\mathcal E$ defined by
  \eqref{eq:energy functional Schroedinger-Klein-Gordon equations regular}.
\end{theorem}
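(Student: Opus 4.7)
The plan is to set up a Gr\"onwall-type inequality for the time-dependent quantity $\beta(t) := \beta[e^{-itH_N}\Psi_N,\mathfrak{s}[t](u,\alpha)]$. Differentiating in $t$ and using the Schr\"odinger equation for $\Psi_N(t) = e^{-itH_N}\Psi_N$ together with the SKG equations for $(u_t,\alpha_t)$, one obtains $\frac{d}{dt}\beta(t) = i\langle\Psi_N(t),[H_N - H_N^{\mathrm{mf}}(t),\mathcal{B}_t]\Psi_N(t)\rangle$, where $\mathcal{B}_t = (q_{u_t})_1 + N^{-1}W(\sqrt N\alpha_t)\mathcal{N}_a W^*(\sqrt N\alpha_t)$ is the operator defining $\beta$ and $H_N^{\mathrm{mf}}(t)$ collects the generators of the SKG flow lifted to Fock space. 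The "leading" mean-field contributions cancel by design of~\eqref{eq:Schroedinger-Klein-Gordon equations intro}, leaving remainder terms that one wishes to bound by $\beta(t)+N^{-1}$ and by the conserved energy discrepancy $D := |N^{-1}\langle\Psi_N,H_N\Psi_N\rangle - \mathcal{E}(u,\alpha)|$ (conserved because both $\langle H_N\rangle$ and $\mathcal{E}(\mathfrak{s}[t](u,\alpha))$ are).

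The central technical obstruction is that $H_N$ is defined only through the renormalization limit \eqref{eq:def:ren:Nelson}, and $D(H_N^{1/2})$ does not contain any of the polynomially regular states on which the naive commutators above would be computable. To circumvent this I would pass to the dressed picture via the Gross transformation of Lemma~\ref{lem:H:D:representation}: writing $H_N = U_N H_N^D U_N^*$ with $H_N^D$ a self-adjoint operator whose quadratic form is comparable to that of the free Hamiltonian, the many-body state $\tilde\Psi_N(t) = U_N^*\Psi_N(t)$ lies in a domain on which all the relevant creation/annihilation operators are form-bounded. The conjugated projector $U_N^*(q_{u_t})_1 U_N$ and the conjugated Weyl operator $U_N^* W(\sqrt N\alpha_t)$ acquire explicit field-dependent dressing factors, but one can either recast $\beta$ directly as an expectation in the dressed frame or control the difference with the undressed counting functional by means of $\mathcal{N}_a/N$ and Cauchy--Schwarz.

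Once in the dressed picture, the computation of $\frac{d}{dt}\beta$ reduces to a finite collection of expectations of polynomials in the shifted creation/annihilation operators $a^*(G_x), a(G_x)$ (after Weyl-conjugation one picks up $\phi_{\alpha_t}$-terms that are precisely what cancels against the SKG drift on $\alpha_t$), combined with projections $q_{u_t}$ and $p_{u_t} = |u_t\rangle\langle u_t|$. Each term can be classified by the number of $q_{u_t}$ factors and the number of genuine creation/annihilation operators it contains. The purely "zero $q$" and "two $q$, diagonal" contributions cancel via SKG; the "one $q$" off-diagonal terms are handled by Cauchy--Schwarz splitting, trading a factor $N^{-1/2}a^\#(G_{x_j})$ for $\sqrt{\mathcal{N}_a/N}$ times a form factor whose norm is controlled by $\|u_t\|_{H^s}$ and $\|\alpha_t\|_{\mathfrak{h}_{s'}}$ for $s,s'$ manageable thanks to the $H^3\oplus\mathfrak{h}_{5/2}$ regularity assumed on the SKG solution. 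This produces a bound of the form $\frac{d}{dt}\beta(t)\le c(1+\|u_t\|_{H^3}^{10}+\|\alpha_t\|_{\mathfrak{h}_{5/2}}^{10})(\beta(t)+N^{-1}+\sqrt{\beta(t)+N^{-1}}+D)$, and Gr\"onwall's lemma applied to $\beta + N^{-1} + D$ gives the advertised estimate, with the maximum over $j\in\{1,2\}$ reflecting both the direct $\beta$-dependence and the square-root dependence of certain cross terms.

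The principal difficulty I anticipate is the bookkeeping of the dressing on the counting functional: showing that $U_N^*\mathcal{B}_t U_N$ differs from $\mathcal{B}_t$ by operators that are dominated by $N^{-1}\mathcal{N}_a$ plus the renormalized kinetic energy, \emph{uniformly in $N$ and without using the free form domain}, is what forces one into the dressed framework in the first place, and it is here that the singular self-energy divergences present in the formal Hamiltonian must be rigorously shown to drop out of the mean-field limit.
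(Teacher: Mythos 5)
Your proposal correctly identifies that the renormalized Hamiltonian forces one into the dressed picture, and the broad shape (dress, set up a Gr\"onwall inequality for a counting functional, undress) is indeed the paper's strategy. However, there is a genuine gap at the heart of the argument. You propose to run the Gr\"onwall estimate for $\beta[e^{-itH_N}\Psi_N, \mathfrak s[t](u,\alpha)]$, i.e.\ with the \emph{undressed} SKG flow as reference, merely conjugating by $U_N$ to make commutators well defined. But the leading-order cancellation you invoke (``the purely zero-$q$ contributions cancel via SKG'') does not occur in the dressed picture with the SKG flow as reference: the mean-field Hamiltonian associated to $H_N^{\mathrm D}$ is a \emph{different} nonlinear flow $\mathfrak s^{\mathrm D}[t]$, generated by $h_{u,\alpha}$ in \eqref{eq:def:dressed:meanfield:Ham}, not by $h_{\alpha}= -\Delta + \phi_\alpha - \tfrac12\langle u,\phi_\alpha u\rangle$. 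Without replacing the reference pair by $\mathfrak D(u_t,\alpha_t)$, the $\mathcal O(1)$ drift terms survive and the differential inequality does not close. The paper's resolution is the intertwining identity $\mathfrak s[t]=\mathfrak D^{-1}\circ\mathfrak s^{\mathrm D}[t]\circ\mathfrak D$ (Lemma \ref{lemma:commuting:flow:x}), which lets one prove the Gr\"onwall estimate against the dressed mean-field flow (Theorem \ref{thm:gross-transformed dynamics reduced density matrices}) and then transport it back using a separate Gr\"onwall-type estimate in the dressing parameter $\theta$ for the effective dressing flow $\mathfrak D[\theta]$ (Lemma \ref{lemma:reduced densities dressing flow}) --- a step your proposal compresses into a single Cauchy--Schwarz, which is not enough.

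A second, related gap concerns the energy condition. You want $D=|N^{-1}\langle \Psi_N, H_N\Psi_N\rangle - \mathcal E(u,\alpha)|$ to appear as a \emph{source term at every time} in the Gr\"onwall inequality, using conservation. In the actual argument, $D$ enters only once, at $t=0$: the Gr\"onwall estimate for the dressed dynamics controls $\beta$ in terms of $\beta+\gamma$ at time zero, where $\gamma$ is the \emph{kinetic energy} of the excitations (the functional $\gamma$ in \eqref{eq:gamma:functional:def}). This $\gamma$ cannot be bounded by $\beta$ alone, since $\beta$ is a pure number counting functional; one needs the energy difference $D$ to control $\gamma[W^{\mathrm D}\Psi_N,\mathfrak D(u,\alpha)]$, and Lemma \ref{lem:beta:gamma:relation} does this with a square-root-in-$\beta$ correction, which is where the $\max_{j=1,2}$ in the final bound genuinely comes from. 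Your attribution of the $j$-max to ``square-root dependence of certain cross terms'' inside the Gr\"onwall is not the correct mechanism and would not, as stated, reproduce the theorem.
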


The proof of the theorem is given in Section \ref{sect:final reduced
  density}. We note that $R(t)$ does not grow faster than polynomially in
time by Proposition \ref{prop:skg well posed}.  Initial states of interest
are of course those for which the right-hand side is small.  It is important
to mention that this is not the case if $\Psi_N$ is exactly of the product
form~\eqref{eq:many-body initial states}. Indeed, due to the singular nature
of the Nelson Hamiltonian such states are not in the form domain of $H_N$,
see~\cite{GW2018,LaSch19}. Next, we provide an example of initial states in
the form domain of the Nelson Hamiltonian that are close to product
states. To this end, we modify the large momenta by means of a dressing
transformation.
\begin{proposition}
  \label{proposition:class of suitable initial states}
  For $(u,\alpha)\in L^2(\R^3) \oplus L^2(\R^3)$, $K\geq 0$, let $B_{K,x} =
  (k^2+\omega)^{-1}G_{K,x} \id_{| k | \ge K}$ and define
  \begin{equation}
    \Psi_{N,K}:=\prod_{j=1}^N W^* ( N^{-1/2}  B_{K,x_j} )\big( u^{\otimes N} \otimes W(\sqrt{N} \alpha) \Omega \big).\notag
  \end{equation}
  There exists a constant $C >0$ such that for all $(u, \alpha) \in
  H^1(\mathbb{R}^3) \oplus \mathfrak{h}_1$ with $\norm{u}_{L^2(\mathbb{R}^3)}
  = 1$, $K > 0$ and $N\geq1$
  \begin{align*}
    \beta[\Psi_{N,K}, (u , \alpha ) ] &\leq C  K^{-1} \left( 1 + \norm{\alpha}_{L^2(\mathbb{R}^3)} \right)
    \\
    \big| N^{-1} \scp{\Psi_{N,K}}{H_N \Psi_{N,K}} - \mathcal{E}(u, \alpha) \big|   & \leq C  \left(   K^{-1} +  \tfrac{1+ \ln K}{N}  \right) \big( \norm{u}_{H^1(\mathbb R^3)}^2 + \norm{\alpha}_{\mathfrak{h}_1}^2 \big) .
  \end{align*}
\end{proposition}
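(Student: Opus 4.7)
The plan is to verify both estimates by direct calculation, exploiting the product structure of $\Phi_N := u^{\otimes N}\otimes W(\sqrt N\alpha)\Omega$ together with $\Psi_{N,K}=U\Phi_N$ for $U:=\prod_{j=1}^N W^*(N^{-1/2}B_{K,x_j})$. Two structural facts will drive the argument: the factors $W^*(N^{-1/2}B_{K,x_j})$ commute pairwise up to scalar phases, since the $B_{K,x_j}$ are multiplication operators in $x_j$ tensored with a fixed profile in $L^2(dk)$; moreover $\|B_{K,x}\|^2_{L^2(dk)}$ and the $L^1$-norm of $\omega(k)^{-1}(k^2+\omega(k))^{-2}\id_{|k|\ge K}$ are both of order $K^{-2}$, while $\|\widehat{|u|^2}\|_\infty\le\|u\|_{L^2}^2=1$.

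For the bound on $\beta$ I treat its two terms separately. Write $U=U_1 U_{\ge 2}$ with $U_1:=W^*(N^{-1/2}B_{K,x_1})$. Since $(q_u)_1$ commutes with $U_{\ge 2}$ and annihilates $\Phi_N$, one obtains $(q_u)_1\Psi_{N,K}=(q_u)_1(U_1-I)U_{\ge 2}\Phi_N$. I will control this via the Duhamel identity
\[
U_1 - I = -N^{-1/2}\int_0^1 \bigl(a^*(B_{K,x_1})-a(B_{K,x_1})\bigr)W^*(sN^{-1/2}B_{K,x_1})\,ds
\]
together with the pointwise-in-$x_1$ estimates $\|a^{\#}(B_{K,x_1})\phi\|\le\|B_{K,x_1}\|(\|\mathcal N_a^{1/2}\phi\|+\|\phi\|)$ and the bound $\|\mathcal N_a^{1/2}U_{\ge 2}\Phi_N\|\lesssim\sqrt N(1+\|\alpha\|)$, the latter obtained by conjugating $\mathcal N_a$ through the remaining Weyl operators and using $\|\mathcal N_a^{1/2}W(\sqrt N\alpha)\Omega\|=\sqrt N\|\alpha\|$. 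For the field term I apply $W^*(f)W^*(g)=e^{-i\,\mathrm{Im}\langle f,g\rangle}W^*(f+g)$ to collapse all the Weyl factors, obtaining
\[
W^*(\sqrt N\alpha)\Psi_{N,K} = e^{i\Theta}\,W^*(\xi)\,u^{\otimes N}\otimes\Omega, \qquad \xi := N^{-1/2}\sum_j B_{K,x_j},
\]
with $\Theta$ a real phase depending on $(x_1,\dots,x_N)$. Since $\|\mathcal N_a^{1/2}W^*(\xi)\Omega\|^2_{\Fock}=\|\xi\|^2_{L^2(dk)}$ pointwise in $x$, the remaining expectation against $|u|^{\otimes 2N}$ splits into $N$ diagonal pair terms giving $N^{-1}\|B_K\|^2\lesssim N^{-1}K^{-2}$ and $O(N^2)$ off-diagonal pair terms, each of which is bounded by Plancherel using $|\widehat{|u|^2}|\le 1$ and the $L^1$-estimate above.

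For the energy estimate I will rely on the representation of $H_N$ supplied by Lemma~\ref{lem:H:D:representation}: the full Gross-type dressing conjugates $H_N$ into a Fr\"ohlich-type operator whose quadratic form is comparable to that of the free Hamiltonian. Since $U$ is precisely that dressing truncated to momenta $|k|\ge K$, conjugating produces
\[
U^*H_N U = H_N^{\mathrm{Fr},K} + R_K,
\]
where $H_N^{\mathrm{Fr},K}$ is a partially dressed Hamiltonian on which $\Phi_N$ is a regular vector, and $R_K$ collects commutator contributions from modes $|k|\ge K$. Evaluating the expectation in $\Phi_N$ using $a_k W(\sqrt N\alpha)\Omega=\sqrt N\alpha(k)W(\sqrt N\alpha)\Omega$ and the particle-side Wick structure produces the leading value $N\mathcal E(u,\alpha)$, plus a UV-tail remainder of order $K^{-1}(\|u\|_{H^1}^2+\|\alpha\|_{\mathfrak h_1}^2)$ from incomplete dressing of modes $|k|\ge K$ and a finite-$N$ correction of order $N^{-1}\int_{|k|\le K}\omega(k)^{-1}(k^2+\omega(k))^{-1}dk\lesssim(1+\ln K)/N$ from pair contractions between dressing modes attached to distinct particles.

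The main obstacle will be organizing $U^*H_N U$ so that its singular pieces cancel the divergent counterterm $E^\Lambda$ built into the definition of $H_N$ against the commutators generated by the partial dressing, leaving only the announced $K^{-1}$ and $(1+\ln K)/N$ remainders. This is essentially a bookkeeping task modeled on the full-dressing computation in Lemma~\ref{lem:H:D:representation}, but must be carried out for the $K$-truncated, particle-position-dependent $U$; the logarithmic factor $1+\ln K$ emerges naturally from the mass counterterm restricted to $|k|\le K$ and appears unavoidable in this truncated setup.
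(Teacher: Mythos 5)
Your plan follows essentially the same route as the paper's Appendix~\ref{sect:initial states prop}: the particle part of $\beta$ is controlled through the smallness of $W^*(N^{-1/2}B_{K,x_1})-I$ on states carrying $O(N)$ photons (your Duhamel identity replaces the paper's algebraic inequality $(1-W)(1-W^*)\le N^{-1}\hat\Phi(iB_{K,x_1})^2$, yielding the same $\|B_{K,0}\|_{L^2}\lesssim K^{-1}$ gain), the field part by the identical diagonal/off-diagonal evaluation of $\|N^{-1/2}\sum_j B_{K,x_j}\|^2_{L^2(dk)}$ against $|u|^{\otimes 2N}$, and the energy bound by conjugating $H_N$ with the truncated dressing via the transformation relations underlying Lemma~\ref{lem:H:D:representation} and evaluating in the product state. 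One small correction of attribution: the $(1+\ln K)/N$ term is $N^{-1}E_K$ with $E_K=\int_{|k|\le K}\omega(k)^{-1}(k^2+\omega(k))^{-1}dk$, coming from the \emph{single-particle} self-contractions (the portion of the mass counterterm not reproduced by the $|k|\ge K$ dressing), not from pair contractions between distinct particles --- those produce the effective potential $V_K$, whose contribution is only $O(K^{-3/2})$; your concluding paragraph already identifies the correct source.
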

The proof is given in Appendix~\ref{sect:initial states prop}. Note that the
dressing transformation used above converges strongly to the identity as
$K\to \infty$, and thus $\Psi_{N,K}$ for large $K$ is close to a product
state also in the norm topology. For initial states $\Psi_{K,N}$ with $K=N$,
Theorem \ref{theorem:final reduced density matrices} simplifies to the
following form.
\begin{corollary}\label{cor:mean-field}
  Let $(u, \alpha) \in H^3(\mathbb R^3) \oplus \mathfrak h_{5/2} $ with
  $\norm{u}_{L^2(\mathbb{R}^3)} =1 $ and let $\Psi_{N,N}$ be the state
  defined in Proposition~\ref{proposition:class of suitable initial states}
  for $K=N$.  Then, there exist constants $C>0$, $\delta>0$ so that
  \begin{align*}
    & \beta \left[ e^{-i  t H_N }\Psi_{N,N} , \mathfrak s[t](u , \alpha)  \right]  \le 
    e^{C (1+ | t |^\delta)} N^{-1/2}.
  \end{align*}
\end{corollary}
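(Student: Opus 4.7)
The plan is to obtain Corollary~\ref{cor:mean-field} as an essentially immediate consequence of Theorem~\ref{theorem:final reduced density matrices} applied to $\Psi_N = \Psi_{N,N}$, with the terms on its right-hand side controlled by the quantitative estimates in Proposition~\ref{proposition:class of suitable initial states} at $K=N$.

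First I would check the hypotheses of Theorem~\ref{theorem:final reduced density matrices}. Since $(u,\alpha)\in H^3(\R^3)\oplus\mathfrak h_{5/2}\subset H^1(\R^3)\oplus\mathfrak h_1$ and $\|u\|_{L^2}=1$, Proposition~\ref{proposition:class of suitable initial states} at $K=N$ is applicable and yields, with a constant $C$ depending only on the norms of $(u,\alpha)$,
\[
\beta[\Psi_{N,N},(u,\alpha)]\leq C N^{-1},
\qquad
\big|N^{-1}\langle \Psi_{N,N}, H_N \Psi_{N,N}\rangle-\mathcal E(u,\alpha)\big|\leq C\,\frac{1+\ln N}{N}.
\]
In particular the energy expectation is finite, so $\Psi_{N,N}\in D(H_N^{1/2})$, and $\|\Psi_{N,N}\|=1$ because the dressing Weyl operators in its definition are unitary. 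Hence Theorem~\ref{theorem:final reduced density matrices} applies with $\Psi_N=\Psi_{N,N}$.

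Next I would substitute these bounds into the right-hand side of Theorem~\ref{theorem:final reduced density matrices}. The energy-difference term contributes $O(N^{-1}\ln N)$, which is absorbed into $O(N^{-1/2})$. For the other term, $\beta[\Psi_{N,N},(u,\alpha)]+N^{-1}\leq (C+1)N^{-1}$, so the dominant contribution in the maximum over $j\in\{1,2\}$ is the $j=2$ exponent, giving $(\beta+N^{-1})^{1/2}\leq C' N^{-1/2}$. Combining these bounds yields
\[
\beta\bigl[e^{-itH_N}\Psi_{N,N},\mathfrak s[t](u,\alpha)\bigr]\leq e^{CR(t)}\,C'N^{-1/2}.
\]

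Finally I would estimate the time prefactor. By the well-posedness result cited as Proposition~\ref{prop:skg well posed}, the SKG flow $\mathfrak s$ is globally defined on $H^3\oplus\mathfrak h_{5/2}$ with at most polynomial growth of $\|u_t\|_{H^3}$ and $\|\alpha_t\|_{\mathfrak h_{5/2}}$ in $|t|$. Substituting this polynomial bound into $R(t)=1+\int_0^{|t|}(\|u_s\|_{H^3}^{10}+\|\alpha_s\|_{\mathfrak h_{5/2}}^{10})\,ds$ yields $R(t)\leq C(1+|t|^\delta)$ for some $\delta>0$, which gives the claimed $e^{C(1+|t|^\delta)}N^{-1/2}$ bound after renaming constants. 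There is no genuine obstacle in the argument; the only bookkeeping point is to verify that the logarithmic factor in the energy estimate is indeed subleading compared to $N^{-1/2}$, and that the polynomial growth of the SKG solution in the $H^3\oplus\mathfrak h_{5/2}$ topology is indeed available from the cited well-posedness statement.
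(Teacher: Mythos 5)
Your proposal is correct and follows essentially the same route as the paper, which proves the corollary by combining Theorem~\ref{theorem:final reduced density matrices}, Proposition~\ref{proposition:class of suitable initial states} at $K=N$, and the polynomial-in-time bounds of Proposition~\ref{prop:skg well posed}; your version merely spells out the bookkeeping (the $j=2$ term dominating the maximum, the $N^{-1}\ln N$ energy error being subleading, and $R(t)\le C(1+|t|^\delta)$).
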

\begin{proof}
  The corollary is a direct consequence of Theorem~\ref{theorem:final reduced
    density matrices}, Proposition \ref{proposition:class of suitable initial
    states} and the bound on the solutions to the SKG equations of
  Proposition~\ref{prop:skg well posed}.
\end{proof}

\begin{remark}[Convergence of reduced densities]\label{rem:reduced:densities}
  We briefly explain how Theorem~\ref{theorem:final reduced density matrices}
  relates to the approximation of reduced densities. To this end, recall the
  definition of the reduced one-particle density matrix for the bosons,
  \begin{align}
    \gamma_{\Psi_N}^{(1,0)} =  N \, \tr_{2, \ldots, N} \otimes \tr_{\mathcal{F}} \big( \ket{\Psi_N} \bra{\Psi_N} \big)
  \end{align}
  where $\tr_{2, \ldots, N}$ is the partial trace w.r.t. $(x_2,\ldots,x_N)$,
  and the definition of the reduced density of the field, given in terms of
  its integral kernel
  \begin{align}
    \gamma_{\Psi_N}^{(0,1)}(k, l ) =  \scp{\Psi_N}{a_{ l  }^* a_k \Psi_N}_{\mathcal H_N}.
  \end{align}
  Their distance, measured in trace norm, to the density operators obtained
  from the solutions of the SKG equations can be controlled by $\beta$ from
  \eqref{eq:definition of beta} via the inequalities
  \cite[Lem. VII.2]{LP2018}
  \begin{subequations}
    \begin{align}
      \tr \big| \gamma_{\Psi_N}^{(1,0)} -  N\ket{u } \bra{u}\big|
      &\leq N \sqrt{ 8 \scp{\Psi_N}{ (q_u)_1 \Psi_N}_{\mathcal{H}_N} },\\
      \tr \big| \gamma_{\Psi_N}^{(0,1)} - N \ket{\alpha } \bra{\alpha}\big|
      &\leq 3 \| \mathcal{N}_a^{1/2} W^* (\sqrt{N} \alpha)\Psi_N\|^2_{\mathcal{H}_N }  \\
      &\qquad + 6 \norm{\alpha}_{L^2(\mathbb R^3) } \| \mathcal{N}_a^{1/2} W^* (\sqrt{N} \alpha)\Psi_N\|_{\mathcal{H}_N}\notag .
    \end{align}
  \end{subequations}
  For $\Psi_N(t)=e^{-i t H_N }\Psi_N$ and $(u_t$, $\alpha_t)$, we
  consequently get a bound on the difference of reduced densities in terms of
  the right-hand side of the bound in Theorem~\ref{theorem:final reduced
    density matrices}.  That is, for suitable initial states (such as those
  of Corollary~\ref{cor:mean-field}) the average boson behaves like $u_t$ and
  there are on average $N$ field modes behaving like $\alpha_t$.
\end{remark}
\vspace{2mm}

\noindent\textbf{Bogoliubov approximation.} In the following we define the renormalized Nelson--Bogoliubov evolution and explain its role in approximating the fluctuations in the Nelson dynamics.
We gather the quantum fluctuations around the condensate with wave function
$u \in L^2(\mathbb R^3)$ and the coherent state associated with the field
$\sqrt N \alpha \in L^2(\mathbb{R}^3)$ in an element $\chi \in
\mathcal{F}\otimes \mathcal{F}$ that is orthogonal to $u$ in every variable
$x_1, \dots, x_N$. That is,
\begin{equation}\label{eq:excitation space intro}
  \chi \in \bigoplus_{k=0}^\infty \bigotimes_{\rm sym}^k \{u \}^\perp  \otimes \mathcal{F} =: \mathcal{F}_{\perp u} \otimes \mathcal{F}. 
\end{equation}
For any $\Psi_N \in \mathcal{H}_N $ one obtains such a $\chi := X_{u,\alpha}
\Psi_N$ using a variant of the excitation map introduced in~\cite{LNSS2015},
see Section~\ref{sect:excitation space} for details.  To describe the inverse
of this map, let $\chi^{(k)}$ denote the component of $\chi$ in the $k$-th
summand of~\eqref{eq:excitation space intro}. If $\chi^{(k)} =0$ for $k>N$,
we can reconstruct $\Psi_N$ as
\begin{equation}
  \label{eq:Psi_N excitation}
  \Psi_N = W(\sqrt N \alpha ) \sum_{k=0}^N u ^{\otimes N- k} \otimes_{\rm s} \chi^{(k)} = X_{u,\alpha}^*\chi ,
\end{equation}
where the symmetric tensor product has to be understood as the tensor product
of the subspaces of $L^2(\R^3)$, $\mathrm{span}(u)$ and $\{u \}^\perp$, so
that each summand yields an element of $ \bigotimes_{\rm sym}^N L^2(\R^3)
\otimes \mathcal{F}$, on which the Weyl operator $W(\sqrt{N}\alpha)$ acts in
the second tensor factor.\footnote{More precisely, we set $ u ^{\otimes N- k}
  \otimes_{\rm s} \chi^{(k)} := P_{\rm sym}^N ( u ^{\otimes N- k} \otimes
  \chi^{(k)})$, where the tensor product is taken w.r.t. the spaces
  $\text{Span}( u )^{\otimes N-k}$ and $(\{ u\}^\perp )^{ \otimes k} \otimes
  \cF $ and where $P^N_{\rm sym}$ is the orthogonal projection onto the
  symmetric subspace of $\bigotimes^N L^2(\mathbb R^3)$, while it acts as the
  identity on $\mathcal F$.} Note that the product state~\eqref{eq:many-body
  initial states} would correspond to $\chi =\Omega\otimes \Omega$.

Now let $\Psi_N(t)$ and $(u_t, \alpha_t)$ be solutions of
\eqref{eq:Schroedinger equation} and \eqref{eq:Schroedinger-Klein-Gordon
  equations intro} with suitable initial conditions $\Psi_N$ and
$(u,\alpha)$, and consider the fluctuation vector $\chi(t)$ satisfying
$\Psi_N(t) = X_{u_t,\alpha_t}^*\chi(t)$. This vector is an element of
$\cF_{\perp u_t} \otimes \cF$, which we can naturally identify with a
subspace of the double Fock space
\begin{equation}
  \mathcal{F}\otimes \mathcal{F} \cong \bigoplus_{n=0}^\infty \bigotimes_{\rm sym}^n \big( L^2(\R^3)\oplus L^2(\R^3)\big) .
\end{equation}
We will denote the creation and annihilation operator on the first factor in
$\cF \otimes \cF$, associated with excitations of the bosons, by $b^*$ and
$b$.

The dynamics of the fluctuations $\chi(t)$ will be approximated by a time
dependent Bogoliubov transformation. Roughly speaking, Bogoliubov
transformations on $\cF \otimes \cF$ are unitary maps that are determined (up
to a phase) by a map on $L^2(\mathbb R^3) \oplus L^2(\mathbb R^3)$. This
property makes Bogoliubov transformations much simpler in terms of
complexity. To be more precise, define the joint creation operator of the
excitations and the field by
\begin{equation}
  c^* (f\oplus g)=b^*(f) + a^*(g),
\end{equation}
and the annihilation operator as its adjoint. A Bogoliubov transformation on
$\mathcal{F}\otimes\mathcal{F}$ is a unitary map $\mathbb{U}$ with the
property that
\begin{equation}\label{eq:bogoliubov:transformation}
  \mathbb{U}^* c^*(f\oplus g) \mathbb{U} = c^*(\mathfrak{u}( f\oplus g) ) + c( \mathfrak{v}(\overline {f\oplus g} ) ) 
\end{equation}
for some bounded linear maps $\mathfrak{u},\mathfrak{v}: L^2(\R^3) \oplus
L^2(\R^3)\to L^2(\R^3)\oplus L^2(\R^3)$.  In other words, conjugation of
$a^*$, $b^*$ by $\mathbb{U}$ maps these to linear combinations of $a, a^*, b,
b^*$ with modified arguments. For a more detailed introduction of Bogoliubov
transformations and related concepts, we refer to \cite{JPS2007} and
\cite[Sec. 4]{Bossmann2019}.

The generators of Bogoliubov transformations are (formally) Hamiltonians
quadratic in the creation and annihilation operators, $a,a^*, b, b^*$. For
the Nelson model with ultraviolet cutoff, such a quadratic generator can be
obtained from the full Hamiltonian~\cite{Falconi21} following the
approximation ideas of Bogoliubov~\cite{Bogoliubov1947}. The
Nelson--Bogoliubov Hamiltonian with cutoff $\Lambda \in (0,\infty)$ is
defined by
\begin{align}
  \label{eq:reg:Nelson:Bog}
  \mathbb H^\Lambda_{u,\alpha}(t) &  = \int_{\mathbb{R}^3} dx\, b^{*}_x  h_{\alpha_t} b_x    + \int_{\mathbb{R}^3} dk\, \omega(k) a^{*}_k a_k  
  \\
  &\, +   \int_{\mathbb{R}^6} dx  dk  \Big(\big(q_{u_t}  G^\Lambda_{(\cdot)}(k) u_t\big)(x)\, a^{*}_k b_x^*  + \big(q_{u_t}  \overline{G^\Lambda_{(\cdot)}(k)} u_t\big)(x)\, a_{k} b_x^* \Big) +\text{h.c.} ,\notag
\end{align}
where $G_x^{\Lambda}(k) = G_x (k) \id_{\abs{k} \leq \Lambda}$ and
$h_{\alpha_t} = - \Delta + \phi_{\alpha_t} - \frac{1}{2} \scp{u_t}{
  \phi_{\alpha_t} u_t} $ with $G_x(k) $ and $\phi_{\alpha_t}$ given by
\eqref{eq:definition of G-x}, and where $\text{h.c.}$ denotes the hermitian
conjugate of the preceding term. Let $\mathbb{U}^\Lambda(t)$ be the unique
unitary propagator (with initial time $t=0$) on the double Fock space
$\cF\otimes\cF$ associated with $\mathbb H_{u,\alpha}^\Lambda(t)$. For a
discussion of its existence, we refer to \cite[Thm.~4.1]{Falconi21} or
Proposition \ref{prop:theta-Bog} below with $\theta =0$.  Our next result
states the existence of a renormalized Nelson--Bogoliubov time evolution in
the limit $\Lambda\to \infty$.  We note that the numbers $E^\Lambda$ below,
as given explicitly in Proposition \ref{prop:id:Bog:ren}, have the same
asymptotic behavior as those in the renormalization of the Nelson
Hamiltonian~\eqref{eq:def:ren:Nelson}.

\begin{theorem}\label{theorem:limit:unitary} Let $(u,\alpha) \in H^3\oplus
  \mathfrak h_{5/2}$ with $\norm{u}_{L^2(\mathbb{R}^3)} =1$ and let
  $(u_t,\alpha_t)$ denote the solution of \eqref{eq:Schroedinger-Klein-Gordon
    equations intro} with initial datum $(u,\alpha)$. There exists a family
  $E^\Lambda$ (with $E^\Lambda \to \infty$ as $\Lambda \to \infty$) such that
  \begin{align*}
    \mathbb{U} (t) := \slim_{\Lambda\to \infty} \mathbb{U}^\Lambda(t ) e^{- i t E^\Lambda }
  \end{align*}
  exists for all $ t \in \mathbb R$. Moreover, $\mathbb{U}(t )$ has the
  following properties:
  \begin{enumerate}
  \item[\textnormal{(i)}] $\mathbb{U}(t)$ is unitary and strongly continuous
    in $t$,
  \item[\textnormal{(ii)}] $\mathbb{U}(t) ( \mathcal{F}_{\perp u} \otimes
    \mathcal{F} ) \subseteq \mathcal{F}_{\perp u_t} \otimes \mathcal{F}$,
  \item[\textnormal{(iii)}] $\mathbb{U}(t)$ is a Bogoliubov transformation on
    $\cF \otimes \cF$.
  \end{enumerate}
\end{theorem}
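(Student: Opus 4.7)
The plan is to reduce the problem to the construction of a well-behaved limit generator via a time-dependent Bogoliubov dressing $\mathbb{T}^\Lambda(t)$, in direct analogy with the Gross transformation that renormalizes the Nelson Hamiltonian $H_N$ (cf.~Lemma~\ref{lem:H:D:representation}). Since $\mathbb{H}^\Lambda_{u,\alpha}(t)$ is quadratic in $a,a^*,b,b^*$, the propagator $\mathbb{U}^\Lambda(t)$ is automatically a Bogoliubov transformation on $\cF\otimes \cF$ at every $t$; any strong limit of such operators is therefore also a Bogoliubov transformation, so property (iii) will follow essentially for free once the limit has been shown to exist. The real content of the theorem is thus the existence of the strong limit with the correct phase $E^\Lambda$.

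First step: the dressing. I will define a time-dependent (proper) Bogoliubov transformation $\mathbb{T}^\Lambda(t)$ on $\cF\otimes\cF$ whose symplectic data are built from the kernel
\begin{equation*}
 \tfrac{1}{k^2+\omega(k)}\big(q_{u_t} G_x^\Lambda u_t\big)(x)\, \id_{K\le |k|\le \Lambda},
\end{equation*}
mirroring the classical shift $W^*(N^{-1/2} B_{K,x_j})$ appearing in Proposition~\ref{proposition:class of suitable initial states}. A direct computation using~\eqref{eq: canonical commutation relation} will then yield
\begin{equation*}
  (\mathbb{T}^\Lambda(t))^*\, \mathbb{H}^\Lambda_{u,\alpha}(t)\, \mathbb{T}^\Lambda(t) - i (\mathbb{T}^\Lambda(t))^* \partial_t \mathbb{T}^\Lambda(t) =: \widetilde{\mathbb{H}}^\Lambda(t) + E^\Lambda,
\end{equation*}
where the divergent c-number $E^\Lambda$ is the Bogoliubov analog of the Nelson renormalization constant in~\eqref{eq:def:ren:Nelson} (its explicit form being that of Proposition~\ref{prop:id:Bog:ren}), and $\widetilde{\mathbb{H}}^\Lambda(t)$ is a quadratic generator whose coefficients, as $\Lambda\to \infty$, converge to those of a $\Lambda$-independent $\widetilde{\mathbb{H}}(t)$ with Hilbert--Schmidt pairing kernels.

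Second step: propagator and limit. Given this regularized generator, the time-dependent Bogoliubov theory of Proposition~\ref{prop:theta-Bog} furnishes propagators $\widetilde{\mathbb{U}}^\Lambda(t)$ and $\widetilde{\mathbb{U}}(t)$ for $\widetilde{\mathbb{H}}^\Lambda(t)$ and $\widetilde{\mathbb{H}}(t)$ respectively. A Duhamel estimate on a common invariant core, controlled by the convergence of the Hilbert--Schmidt coefficients of $\widetilde{\mathbb{H}}^\Lambda$, yields strong convergence $\widetilde{\mathbb{U}}^\Lambda(t) \to \widetilde{\mathbb{U}}(t)$, uniformly on compact intervals. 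Since $\mathbb{T}^\Lambda(s) \to \mathbb{T}(s)$ strongly (likewise uniformly in $s$), the factorization
\begin{equation*}
  \mathbb{U}^\Lambda(t)\, e^{-itE^\Lambda} = \mathbb{T}^\Lambda(t)\, \widetilde{\mathbb{U}}^\Lambda(t)\, (\mathbb{T}^\Lambda(0))^*
\end{equation*}
gives the strong limit $\mathbb{U}(t) := \mathbb{T}(t)\, \widetilde{\mathbb{U}}(t)\, \mathbb{T}(0)^*$. Properties (i) (unitarity and strong continuity) pass to the strong limit by standard arguments; (ii) holds at each finite $\Lambda$ because the coupling kernels of $\mathbb{H}^\Lambda(t)$ carry the projection $q_{u_t}$, so $\mathbb{U}^\Lambda(t)(\cF_{\perp u}\otimes \cF)\subseteq \cF_{\perp u_t}\otimes \cF$, and the dressing $\mathbb{T}^\Lambda$ respects this grading, whence the invariance survives in the limit; (iii) follows from the strong (resp.\ Hilbert--Schmidt) convergence of the diagonal $\mathfrak{u}^\Lambda(t)$ (resp.\ off-diagonal $\mathfrak{v}^\Lambda(t)$) of $\mathbb{U}^\Lambda(t) e^{-itE^\Lambda}$, ensuring that the Shale--Stinespring implementability condition persists in the limit.

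The main technical obstacle lies in the first step: choosing $\mathbb{T}^\Lambda(t)$ so that, simultaneously, (a) the conjugation of $\mathbb{H}^\Lambda$ produces \emph{exactly} the divergent constant $E^\Lambda$ from~\eqref{eq:def:ren:Nelson} and no other $\Lambda$-diverging contribution, and (b) the gauge term $-i(\mathbb{T}^\Lambda)^*\partial_t \mathbb{T}^\Lambda$, generated by the $t$-dependence of $u_t$ and $\alpha_t$ through the SKG flow~\eqref{eq:Schroedinger-Klein-Gordon equations intro}, does not reintroduce a non-Hilbert--Schmidt pairing in $\widetilde{\mathbb{H}}^\Lambda$. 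The regularity assumption $(u_t,\alpha_t) \in H^3\oplus \mathfrak{h}_{5/2}$ should provide just enough smoothness of the SKG flow to handle the latter, in a manner parallel to how it enters the a priori estimate $R(t)$ of Theorem~\ref{theorem:final reduced density matrices}.
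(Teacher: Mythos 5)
Your overall strategy — conjugate $\mathbb{H}^\Lambda_{u,\alpha}(t)$ by a Bogoliubov dressing to extract the divergent constant $E^\Lambda$ and a regularized, $\Lambda$-stable generator, then pass to the strong limit — is exactly the strategy the paper uses, and you correctly identify the Gross/Nelson renormalization as the model. You also correctly observe that all the real work lies in the first step. However, the proposal has a genuine gap precisely there, and it is worth being concrete about what is missing.

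You propose to define $\mathbb{T}^\Lambda(t)$ \emph{directly} as a Bogoliubov transformation "whose symplectic data are built from the kernel $(k^2+\omega)^{-1}(q_{u_t}G_x^\Lambda u_t)(x)\,\id_{K\le|k|\le\Lambda}$," and you say that a "direct computation using the CCR will then yield" the conjugation identity. Neither step is as easy as suggested. The paper's dressing $\mathbb{W}^\Lambda_{u,\alpha}(\theta)$ is \emph{not} an explicit exponential of a quadratic expression; it is the propagator (in a "dressing time" $\theta\in[0,1]$) of a \emph{non-autonomous} generator $\mathbb{D}^\Lambda_{u,\alpha}(\theta)$ (see~\eqref{eq:Gross-Bog-generator}), whose coefficient kernel $\kappa^\Lambda_{u^\theta}=(q_{u^\theta}iB^\Lambda_{(\cdot)}u^\theta)$ evolves with $\theta$ because the reference condensate $u^\theta$ evolves under the mean-field dressing flow $\mathfrak{D}[\theta]$. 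That non-autonomous structure is not a technicality: it is what makes the conjugation produce exactly the renormalization constant $E^\Lambda$ and no surviving non-Hilbert–Schmidt coupling. The conjugation identity you anticipate as a direct CCR calculation is Proposition~\ref{prop:id:Bog:ren} in the paper, whose proof is a multi-page computation carried out by differentiating in $\theta$ a family of interpolating Bogoliubov Hamiltonians $\mathbb{H}^\Lambda_{u,\alpha,\theta}(t)$ (which you never introduce) and verifying term by term that all $a^\#a^\#$, $a^\#b^\#$, $b^\#b^\#$ coefficients cancel and that the c-number remainders sum to $\partial_\theta E^\Lambda_\theta$. Your obstacles (a) and (b) are the right ones, but you give no mechanism to meet them, and the direct-exponential ansatz you sketch would not automatically meet (b), because the gauge term $-i(\mathbb{T}^\Lambda)^*\partial_t\mathbb{T}^\Lambda$ would involve $\partial_t q_{u_t}$ and $\partial_t\alpha_t$ and would contaminate $\widetilde{\mathbb{H}}^\Lambda$ unless the dressing is tied to the SKG flow in exactly the way the $\theta$-flow provides.

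Two smaller remarks. First, the extra infrared cutoff $K$ is unnecessary: $B_x\in L^2$ already, so only the ultraviolet $\Lambda$-cutoff is needed, and the paper's $\kappa^\Lambda$ reflects this. Second, your claim that property (iii) "will follow essentially for free" because "any strong limit of Bogoliubov transformations is again a Bogoliubov transformation" is not true without a uniform bound: one must control $\|\mathfrak{v}^\Lambda(t)\|_{\mathfrak{S}_2}$ along the sequence (the paper does this via uniform number-operator estimates, Lemma~\ref{lem:Bogoliubov-convergence}). You later mention Hilbert–Schmidt convergence of $\mathfrak{v}^\Lambda(t)$, which is the right mechanism but contradicts the "for free" statement; the paper actually sidesteps the issue more cleanly by writing $\mathbb{U}(t)$ as a composition of three unitaries (the two dressings at $\theta=1$ and the dressed evolution $\mathbb{U}^\mathrm{D}$), each shown to be a Bogoliubov transformation separately, with the Shale–Stinespring argument applied only to the dressed objects which have square-integrable pairing kernels by construction.
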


Our second main result is a norm approximation of the dynamics generated by
the Nelson Hamiltonian. It states that the fluctuations around the condensate
wave function $u_t \in L^2(\mathbb R^3)$ and the coherent state associated
with the field mode $\sqrt N \alpha_t \in L^2(\mathbb{R}^3) $ are effectively
described by the renormalized Nelson--Bogoliubov evolution introduced in
Theorem \ref{theorem:limit:unitary}. Together with the fact that $\mathbb
U(t)$ is a Bogoliubov transformation, this implies an approximation of the
Nelson evolution in terms of a transformation of the form
\eqref{eq:bogoliubov:transformation}.

\begin{theorem}\label{thm:norm:approximation} Let $(u,\alpha) \in H^3(\mathbb
  R^3)\oplus \mathfrak h_{5/2}$ with $\norm{u}_{L^2(\mathbb{R}^3)} =1$ and
  let $(u_t,\alpha_t)$ denote the solution to
  \eqref{eq:Schroedinger-Klein-Gordon equations intro} for initial data $(u
  ,\alpha )$.
  There exists a quadratic form $\delta \ge 1$ whose domain is dense in
  $\cF_{\perp u}\otimes \cF$ and a constant $C>0$ so that for all $t\in
  \mathbb R$, $N\ge 1$ and $\chi \in \mathcal{F}_{\perp u} \otimes
  \mathcal{F}$ with $\norm{\chi}_{\cF\otimes \cF} =1$, we have for
  $\Psi_N=X_{u,\alpha}^*\chi$ given by~\eqref{eq:Psi_N excitation}
  \begin{equation*}\label{eq:main:bound:norm}
    \Big\| e^{-i t H_N } \Psi_N  - W(\sqrt N \alpha_t) \sum_{k=0}^N u_t^{\otimes N-k} \otimes_{\rm s} (\mathbb{U}(t)\chi)^{(k)} \Big\|_{\mathcal{H}_N} \\
    \le e^{C R(t)}  \delta(\chi)^{1/2} \frac{\sqrt {\ln N}}{N^{1/4}} ,
  \end{equation*}
  where $R(t) = 1+ \int_0^{|t|} (\| u_s\|_{H^3(\mathbb{R}^3)}^{10} + \|
  \alpha_s \|_{\mathfrak h_{5/2}}^{10}) ds $.
\end{theorem}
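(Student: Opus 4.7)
The plan is to reduce the norm approximation to a comparison on the double Fock space via the excitation map, and then to apply a Duhamel-type argument whose remainder is controlled after conjugation by the dressing transformation. All estimates would first be carried out with an ultraviolet cutoff $\Lambda$, and then pushed to the renormalized limit using Theorem~\ref{theorem:limit:unitary}.

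First, I would introduce the true fluctuation vector
$$\chi_N(t) := X_{u_t,\alpha_t}\, e^{-itH_N}\, X_{u,\alpha}^* \chi \in \cF_{\perp u_t}\otimes \cF.$$
Using that the reconstruction formula \eqref{eq:Psi_N excitation} is an isometry when restricted to vectors with at most $N$ bosonic excitations,
$$W(\sqrt N\alpha_t)\sum_{k=0}^{N} u_t^{\otimes N-k}\otimes_{\rm s}(\mathbb U(t)\chi)^{(k)} = X_{u_t,\alpha_t}^*\,P_{\le N}\,\mathbb U(t)\chi,$$
where $P_{\le N}$ projects onto the first $N{+}1$ sectors of $\cF_{\perp u_t}\otimes\cF$. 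The quantity to be bounded is therefore controlled by $\|\chi_N(t)-\mathbb U(t)\chi\|_{\cF\otimes\cF}$ plus a tail term $\|(\id - P_{\le N})\mathbb U(t)\chi\|$. Since $\mathbb U(t)$ is a Bogoliubov transformation (Theorem~\ref{theorem:limit:unitary}(iii)), the latter is bounded by $N^{-1}\,\scp{\mathbb U(t)\chi}{(\mathcal N_a+\mathcal N_b)\mathbb U(t)\chi}$, which one dominates by $e^{CR(t)}\delta(\chi)/N$ for a suitable quadratic form $\delta$ that measures moments of $\mathcal N_a+\mathcal N_b$ and a fraction of the Nelson energy of $X_{u,\alpha}^*\chi$.

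Second, I would work at the regularized level. Conjugating $H_N^\Lambda - E^\Lambda$ by $X_{u_t,\alpha_t}$ and using the SKG equations to compensate the time derivative $i(\partial_t X_{u_t,\alpha_t})X_{u_t,\alpha_t}^*$, one obtains the generator of the cutoff excitation dynamics $\chi_N^\Lambda(t)$ in the form
$$\mathcal L^\Lambda_N(t) = \mathbb H^\Lambda_{u,\alpha}(t) - E^\Lambda + c^\Lambda_N(t) + \mathcal R^\Lambda_N(t),$$
where $c^\Lambda_N(t)$ is a scalar vanishing under the matched renormalization and $\mathcal R^\Lambda_N(t)$ collects the cubic and quartic corrections in $a,a^*,b,b^*$, each formally of order $N^{-1/2}$. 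Duhamel then gives
$$\chi_N^\Lambda(t) - \mathbb U^\Lambda(t)\,e^{-itE^\Lambda}\chi = -i\int_0^t \mathbb U^\Lambda(t-s)\,e^{-i(t-s)E^\Lambda}\bigl(\mathcal R^\Lambda_N(s) + c^\Lambda_N(s)\bigr)\chi_N^\Lambda(s)\,ds.$$
Two families of estimates, uniform in $\Lambda$, are needed: operator bounds $\|\mathcal R^\Lambda_N(s)\phi\|\leq CN^{-1/2}\|(\mathcal N_a+\mathcal N_b+1)^{\kappa}\phi\|$ for some $\kappa$, and moment propagation bounds $\|(\mathcal N_a+\mathcal N_b+1)^{\kappa}\chi_N^\Lambda(s)\|\leq e^{CR(s)}\delta(\chi)^{1/2}$. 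Combined via Gronwall they produce the sought bound at fixed $\Lambda$, and the strong convergence $\mathbb U^\Lambda(t)e^{-itE^\Lambda}\to\mathbb U(t)$ from Theorem~\ref{theorem:limit:unitary} transfers it to the renormalized limit.

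The main obstacle is obtaining the two families of estimates uniformly in $\Lambda$. Direct bounds on the cubic and quartic terms of $\mathcal R_N^\Lambda$ involving the singular form factor $G^\Lambda_x$ diverge with $\Lambda$; the cancellations responsible for the finite renormalized limit sit in $E^\Lambda$ and are visible only at the quadratic level. To access them I would conjugate \emph{both} the Nelson and the Bogoliubov generators by the dressing transformation used in Proposition~\ref{proposition:class of suitable initial states} (cf.\ also Lemma~\ref{lem:H:D:representation}). In the dressed picture the effective form factors are $L^2$ uniformly in $\Lambda$, so $\mathcal N_a+\mathcal N_b$ controls the dressed remainder, and the a priori inputs from Theorem~\ref{theorem:final reduced density matrices} and Proposition~\ref{prop:skg well posed} close the moment propagation. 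The final rate $\sqrt{\ln N}/N^{1/4}$ should emerge from choosing $\Lambda=\Lambda(N)$ optimally: the $N^{-1/2}$ gain from $\mathcal R^\Lambda_N$ is paired with residual logarithmic-in-$\Lambda$ errors arising from the dressed moment estimates at the ultraviolet cutoff, and the two are balanced by taking $\Lambda$ a positive power of $N$.
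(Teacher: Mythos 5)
Your overall architecture (excitation map, Duhamel against the Bogoliubov evolution, moment propagation, and the recognition that the dressing is what makes the remainder controllable uniformly in the cutoff) is the right instinct and matches the spirit of the paper. However, the concrete implementation has gaps that would prevent the argument from closing. First, you cannot "work at the regularized level" for the many-body dynamics and pass to the limit with a rate: $H_N$ is only defined through the strong limit \eqref{eq:def:ren:Nelson}, and no quantitative bound on $e^{-itH_N}-e^{-itH_N^\Lambda}e^{-itE^\Lambda}$ is available, so an optimization over $\Lambda=\Lambda(N)$ at the end has nothing to optimize against on the many-body side. The paper avoids this entirely by writing $e^{-itH_N}=(W^{\rm D})^*e^{-itH_N^{\rm D}}W^{\rm D}$ exactly (Lemma~\ref{lem:H:D:representation}) and doing the single Duhamel comparison for the \emph{dressed} dynamics at $\Lambda=\infty$ (Theorem~\ref{thm:norm:approximation:dressed:dynamics}). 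Second, "conjugating both generators by the dressing" is not a conjugation in the excitation picture: $W^{\rm D}$ changes the reference condensate $(u,\alpha)\mapsto\mathfrak{D}(u,\alpha)$, so one must approximate $X_{\mathfrak D(u,\alpha)}W^{\rm D}X_{u,\alpha}^*$ by a \emph{new} non-autonomous Bogoliubov flow $\mathbb W$ (Proposition~\ref{prop:Gross-Bog:evolution}, Lemma~\ref{lemma:U:Lambda:approx:fluc}), incurring its own $N^{-1/2}$ error at $t=0$ and again at time $t$. Third, and most importantly, after the dressed Duhamel step you are left with $\mathbb U^{\rm D}_{\mathfrak D(u,\alpha)}(t)\mathbb W\chi$, and you never explain how the renormalized $\mathbb U(t)$ of the theorem statement re-enters. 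This requires the exact dressing identity $\mathbb U^{\rm D}_{\mathfrak D(u,\alpha)}(t)\mathbb W_{u,\alpha}=\mathbb W_{\mathfrak s[t](u,\alpha)}\mathbb U_{u,\alpha}(t)$ (Proposition~\ref{prop:id:Bog:ren:new}), whose proof is the technical heart of Section~\ref{sec:renormalization}: it rests on the intertwining of the mean-field flows (Lemma~\ref{lemma:commuting:flow:x}) and on an exact cancellation producing the divergent phase $E^\Lambda_\theta$. Without this identity your bound would be stated for the wrong limiting evolution.

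Finally, your account of the rate is not correct. The renormalization cutoff plays no role in the error: it is removed completely. The $\sqrt{\ln N}$ arises from an \emph{auxiliary} momentum splitting inside the dressed remainder estimate for the cubic term $H_3(t)$ (Lemma~\ref{lem:fluctuationgen:BogHam:difference}), where one balances a $\Lambda^{-1/2}$ tail against a $\sqrt{\ln\Lambda}/\sqrt N$ low-momentum contribution and sets $\Lambda=N$; and the $N^{-1/4}$ (rather than $N^{-1/2}$) comes from the Duhamel bound controlling the \emph{square} of the norm difference by $\int\rho(s)\,ds\cdot N^{-1/2}\ln N$ — the right-hand side is not proportional to the difference itself, so one integrates directly and takes a square root rather than applying Gr\"onwall. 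Your proposed mechanism (trading the $N^{-1/2}$ remainder against logarithmic losses in the renormalization cutoff) would not reproduce this exponent.
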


This theorem is proved in Section~\ref{sect:proof:thm:norm}.

\begin{remark}\label{rem:delta}
  The quadratic form $\delta$ is constructed explicitly using a unitary
  Bogoliubov transformation $\mathbb W $ that implements a dressing on the
  level of the fluctuation vector $\chi$. With this transformation, which is
  introduced in Proposition \ref{prop:Gross-Bog:evolution} as $\mathbb
  W^\infty_{u,\alpha}(1)$, $\delta$ is given by
  \begin{align}
    \delta(\chi) =  \norm{\chi}^2_{\cF\otimes \cF} + \scp{ \mathbb W \chi }{ ( \mathcal N^3 + \textnormal{d}\Gamma_b(-\Delta) + \textnormal{d} \Gamma_a(\omega))  \mathbb W  \chi}_{\mathcal{F} \otimes \mathcal{F}}  \, ,
  \end{align}
  i.e., it measures the expectation of the third moment of the number of
  excitations and the energy of the excitations and the field after dressing
  with $\mathbb W$.  In Proposition \ref{prop:Gross-Bog:evolution}, we also
  show that $\mathbb W$ preserves the domain of $\cN^{3/2}$. This is
  relevant, as it implies that the norm of the initial state $\Psi_N$ in
  \eqref{eq:main:bound:norm} approaches one as $N\to \infty$. More precisely,
  it implies that $\| X^*_{u,\alpha}\chi \|_{\mathcal H_N} \geq \| \chi
  \|_{\cF\otimes \cF}-CN^{-3}\delta(\chi)$ for some $C>0$. On the other hand,
  let us note that we do not expect that $\mathbb W$ preserves the norm of
  $\textnormal{d}\Gamma_b(-\Delta)$.
\end{remark}

The following corollary extends the norm approximation of
Theorem~\ref{thm:norm:approximation} to all initial states with finite number
of excitations, in the sense that $X_{u,\alpha}\Psi_N$ has a well-defined
limit as $N\to \infty$.

\begin{corollary}
  Let $(u,\alpha) \in H^3(\mathbb R^3)\oplus \mathfrak h_{5/2}$ with
  $\norm{u}_{L^2(\mathbb{R}^3)}=1$ and let $(u_t,\alpha_t)$ denote the
  solution to \eqref{eq:Schroedinger-Klein-Gordon equations intro} for
  initial data $(u ,\alpha )$.  Let $X_{u_t, \alpha_t}^*$ be the adjoint of
  the excitation map, given by~\eqref{eq:Psi_N excitation}.  Let $\chi \in
  \mathcal{F}_{\perp u}\otimes \mathcal{F}$ with $\|\chi \|_{\mathcal{F}
    \otimes \mathcal{F}}=1$ and let $\Psi_N$, $N\geq 1$, be such that
  $\lim_{N\to \infty} \norm{\Psi_N -X_{u,\alpha} ^*\chi}_{\mathcal{H}_N}
  =0$. Then, for all $T>0$
  \begin{equation*}
    \lim_{N\to \infty} \|  e^{-i t H_N } \Psi_N  - X_{u_t,\alpha_t}^* \mathbb{U}(t)\chi \|_{\mathcal{H}_N} =0
  \end{equation*}
  uniformly for $|t|\leq T$.
\end{corollary}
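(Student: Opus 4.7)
The plan is to reduce, via unitarity of $e^{-itH_N}$ and the triangle inequality, to the case of exact equality $\Psi_N = X_{u,\alpha}^*\chi$, and then to extend Theorem~\ref{thm:norm:approximation} from vectors in the form domain of $\delta$ to arbitrary unit vectors $\chi \in \mathcal{F}_{\perp u}\otimes\mathcal{F}$ by density. Since $e^{-itH_N}$ is unitary,
\[
\|e^{-itH_N}\Psi_N - X_{u_t,\alpha_t}^*\mathbb{U}(t)\chi\|_{\mathcal{H}_N} \le \|\Psi_N - X_{u,\alpha}^*\chi\|_{\mathcal{H}_N} + \|e^{-itH_N}X_{u,\alpha}^*\chi - X_{u_t,\alpha_t}^*\mathbb{U}(t)\chi\|_{\mathcal{H}_N},
\]
and the first summand vanishes as $N \to \infty$ by assumption, trivially uniformly in $t$.

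To handle the second summand, fix $\epsilon > 0$. Since the form domain of $\delta$ is dense in $\mathcal{F}_{\perp u}\otimes\mathcal{F}$ (cf.\ Remark~\ref{rem:delta}), I pick $\chi_\epsilon$ in this domain with $\|\chi - \chi_\epsilon\|_{\mathcal{F}\otimes\mathcal{F}} < \epsilon$. A key auxiliary fact is that $X_{u,\alpha}^*$ defined by~\eqref{eq:Psi_N excitation} is a contraction from $\mathcal{F}_{\perp u}\otimes\mathcal{F}$ into $\mathcal{H}_N$: indeed, the excitation map $X_{u,\alpha}$ is a unitary isomorphism between $\mathcal{H}_N$ and $\bigoplus_{k=0}^N \bigl(\bigotimes_{\rm sym}^k \{u\}^\perp\bigr)\otimes\mathcal{F}$, and $X_{u,\alpha}^*$ simply annihilates the $k>N$ sector of $\mathcal{F}_{\perp u}\otimes\mathcal{F}$. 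Combined with unitarity of $\mathbb{U}(t)$, this yields
\[
\|e^{-itH_N}X_{u,\alpha}^*\chi - X_{u_t,\alpha_t}^*\mathbb{U}(t)\chi\|_{\mathcal{H}_N} \le 2\epsilon + \|e^{-itH_N}X_{u,\alpha}^*\chi_\epsilon - X_{u_t,\alpha_t}^*\mathbb{U}(t)\chi_\epsilon\|_{\mathcal{H}_N}.
\]

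Now I apply Theorem~\ref{thm:norm:approximation} to the unit vector $\chi_\epsilon/\|\chi_\epsilon\|$, and use the quadratic scaling $\delta(c\chi) = |c|^2 \delta(\chi)$ together with linearity of all the operators involved to conclude that the last term is bounded by $e^{CR(t)}\delta(\chi_\epsilon)^{1/2}\sqrt{\ln N}/N^{1/4}$. Since $R$ is nondecreasing in $|t|$, this prefactor is uniformly bounded for $|t|\le T$, so the whole term tends to zero as $N\to\infty$ with $\chi_\epsilon$ fixed. Combining the estimates and sending first $N\to\infty$ and then $\epsilon\to 0$ yields the claimed uniform convergence.

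There is no genuine obstacle: the argument is a standard $3\epsilon$ density argument built on the quantitative bound of Theorem~\ref{thm:norm:approximation}. The only points that require a moment of care are the contraction property of $X_{u,\alpha}^*$ on the full double Fock space (which ensures $\|X_{u,\alpha}^*(\chi-\chi_\epsilon)\|_{\mathcal{H}_N}\le \epsilon$) and the linear-in-$\|\chi\|$ scaling of both sides of the estimate in Theorem~\ref{thm:norm:approximation}, which together permit the approximation of $\chi$ by regular fluctuation vectors.
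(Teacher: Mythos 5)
Your proof is correct and follows essentially the same approach as the paper: a triangle-inequality reduction to $\Psi_N = X_{u,\alpha}^*\chi$, density of the form domain of $\delta$ to pick an approximant $\chi_\varepsilon$, the contraction/unitarity properties of $X_{u,\alpha}^*$, $\mathbb{U}(t)$, $e^{-itH_N}$, and then Theorem~\ref{thm:norm:approximation} applied to $\chi_\varepsilon$. The one place where you are slightly more careful than the paper is in noting that $\chi_\varepsilon$ need not be a unit vector, and that the bound of Theorem~\ref{thm:norm:approximation} nonetheless applies by homogeneity (both sides scale linearly in $\|\chi_\varepsilon\|$, since $\delta$ is a quadratic form).
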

\begin{proof}
  This follows from the density of the domain of the quadratic form $\delta$
  by an approximation argument, using that $e^{-i t H_N} $, $\mathbb{U}(t)$
  and $X_{u,\alpha}$ are isometries. To be precise, let $\varepsilon>0$ and
  choose $N$ large enough so that $\Psi_N -X^*_{u,\alpha}\chi$ has norm less
  than $\varepsilon$.
  Then choose $\chi_{\varepsilon}\in \cF_{\perp u}\otimes \cF$ in the domain
  of the form $\delta$ with $\|\chi-\chi_{\varepsilon}\|_{\cF\otimes
    \cF}<\varepsilon$. The difference of $e^{-i tH_N}
  X_{u,\alpha}^*\chi_\varepsilon$ and $X_{u_t, \alpha_t}^*
  \mathbb{U}(t)\chi_\varepsilon$ converges to zero by
  Theorem~\ref{thm:norm:approximation}, so its norm is smaller than
  $\varepsilon$ for $N$ sufficiently large. Using unitarity of $e^{-i t
    H_N}$, $\mathbb U (t)$, and $\| X_{u,\alpha}^*\|= 1$, this implies that
  \begin{align}
    \|  e^{-i t H_N } \Psi_N  - X_{u_t,\alpha_t}^* \mathbb{U}(t)\chi \|_{\mathcal{H}_N}
    < 4\varepsilon,
  \end{align}
  which proves the claim.
\end{proof}

\subsection{Comparison with the literature}
\label{sect:comparison}
The broader subject of this work, the justification of the time-dependent
mean-field and Bogoliubov approximations, has been addressed extensively in
the literature, mainly in the context of the Bose gas with two-body pair
interaction. The situation of particles coupled to a quantum field has been
explored to a much lesser extent. Below we give a brief overview of the
literature on this topic and other works related to this study.

The first works on the mean-field approximation of reduced densities for the
many-body Bose gas with two-body interaction date back to the 1970s and 1980s
by Hepp, Ginibre, Velo, and Spohn \cite{Hepp74,GV1979b,GV1979a,Spohn80}. The
question was revived in the early 2000s \cite{Bardos00,ES01} and within the
next years, new techniques were developed to obtain explicit rates of
convergence \cite{Chen2011,pickl,RS2009} and to cover more singular two-body
potentials, in particular those converging to a Dirac-delta potential
\cite{ESY07,ESY10,KP2010,Pickl2015,BOS2015}.  Since then, this topic
continues to be actively studied and we recommend \cite{Narpiokowski23,
  BPS16, Golse16} for a comprehensive survey of recent works. Fluctuations
around the time-dependent mean-field equations were considered first in
\cite{GV1979b,GV1979a,GMM2011,GMM2010}. Since \cite{Lewin:2015a,LNSS2015},
this subject has gained increased interest which led to further extensions
and refinements in the derivation of the Bogoliubov approximation, see
e.g. \cite{DFPP,nam:2015,nam:2016,brennecke:2017,namnap_review,Boccato:2016,mpp,soffer,Chong2016,NS2020,Kuz2017,
  namnap_low_dim,GM2017}. Higher-order corrections to Bogoliubov theory have
been obtained in \cite{BPPA20,Bossmann2019, Ginbre_Velo_expansion1,
  Ginbre_Velo_expansion2}. Let us note that Bogoliubov theory plays a crucial
role also in the description of the excitation spectrum of large bosonic
systems. While this has been extensively studied for bosons with two-body
potentials \cite{Nam-Triay,Nam-Seiringer,HST22,GrechS13,Seiringer11,BBCS19,
  derezinski14,Boccato:2020,LNSS2015,Pizzo:iii}, we are not aware of any
results concerning the spectral properties of many bosons coupled to a
quantum field.

The derivation of the SKG equations starting from the renormalized Nelson
model, in the same limit as considered in this work, has been addressed
previously by \cite{AF2017}. Using techniques from semiclassical theory, the
authors demonstrate that the Wigner measure associated with the many-body
dynamics evolves in the limit $N\to \infty$ in accordance with the
push-forward of a Wigner measure under the SKG flow. Since convergence of the
Wigner measure implies weak-$\ast$ convergence for the reduced densities,
this statement is comparable to Theorem \ref{theorem:final reduced density
  matrices} of the present work. Unlike the approach taken in \cite{AF2017},
which provides a limit result without explicit error estimates, our method
allows us to determine an explicit rate of convergence for the reduced
densities. On the other hand, the results in \cite{AF2017} apply to a wider
class of initial states. Regarding our second result, the construction of the
renormalized Nelson--Bogoliubov Hamiltonian and the norm approximation, we
are not aware of any prior work that has addressed this problem.

More results have been obtained for models with regular particle-field
interactions (i.e., without need for renormalization): For the regularized
Nelson model with ultraviolet cutoff, derivations of the corresponding
mean-field dynamics were obtained in \cite{AF2014, falconi, LP2018} and the
validity of the Bogoliubov approximation as well as higher-order corrections
was established in \cite{Falconi21}. In addition, the regularized Nelson
model was studied also in a many-fermion limit that is closely linked to a
semiclassical limit \cite{LP2019}. Other particle-field systems, such as the
Fr\"ohlich model and the Pauli--Fierz Hamiltonian, have been studied in the
scaling regime of the present article too, see \cite{LMS2021,LP2020} for the
mean-field approximation and \cite{L2022} for an approximation of the
Fr\"ohlich dynamics in norm. The dressed Nelson Hamiltonan, which will play a
crucial role in our analysis (see Section \ref{sec:outline}), has similar
regularity properties to the Fr\"ohlich Hamiltonian, as both are given in
terms of perturbations of the non-interacting quadratic form. However, the
dressed Nelson Hamiltonian has a more complicated structure than the
Fr\"ohlich Hamiltonian since it is not linear in creation and annihilation
operators. This makes the analysis of the time evolution more involved
already on the level of the dressed Hamiltonian.

Within the broader scope of deriving effective equations from particle-field
models, it is worth noting the following works. The subject of
\cite{CCFO2021, CF2018, CFO2019, GNV06} is a partially classical limit of a
class of models (covering the regularized Nelson, and the Fr\"ohlich and
Pauli--Fierz models), where a fixed number of particles is weakly coupled to
a quantum scalar field with high occupation number. For the Fr\"ohlich
Hamiltonian specifically, the time evolution has been actively studied also
in the strong coupling regime \cite{FG2017, FS2014, G2017,LMRSS2021,
  LRSS2019, M2021}. While the resulting effective equations are of similar
form as the SKG equations, the strong coupling limit is accompanied by a
separation of time scales between the particle and the field, a feature that
is absent in the mean-field limit. The papers \cite{davies,
  hiroshima1998,teufel} focus on the derivation of effective pair particle
potentials arising from the particle-field interaction, in suitable
weak-coupling and adiabatic limits.

Finally, for an overview of results on the renormalized Nelson model not
directly linked to the derivation of effective equations, we refer to the
discussion of \cite{M2018}.

\subsection{Outline of the proofs}\label{sec:outline}

\textbf{General idea.} The Hamiltonian expressed formally in~\eqref{eq:Nelson
  Hamiltonian formal definition} can be represented in terms of an operator
$H_N^\mathrm{D}$ with more regular and explicitly given quadratic form,
conjugated with a unitary dressing transformation $W^\mathrm{D}$, see
Lemma~\ref{lem:H:D:representation}. By unitarity of $W^{\rm D}$, this allows
us to relate the Nelson dynamics to the dynamics generated by the dressed
Hamiltonian via
\begin{align}\label{eq:relation:dressed:undressed:into}
  e^{- i t H_N} =  ( W^{\rm D})^*  e^{- i t H_{N}^{\rm{D}}} W^{\rm{D}}.
\end{align}

The general strategy of the proof is to analyze the mean-field and norm
approximations of the dressed time evolution $e^{- i t H_{N}^{\rm{D}}}$, and
then connect the corresponding dressed mean-field and Bogoliubov evolutions
to the original (undressed) ones. To accomplish this, we will introduce
approximations of the dressing transformation that relate the dressed and
undressed effective evolutions, in analogy to the relation shown in
\eqref{eq:relation:dressed:undressed:into}. Denoting the dressed mean-field
flow and the dressed Bogoliubov evolution as $\mathfrak s^{\rm D}[t]$ and
$\mathbb U^{\rm D}(t)$, respectively, and the approximations of the dressing
transformation by $\mathfrak D$ and $\mathbb W$, then the connection between
the effective evolutions can be expressed as
\begin{align}\label{eq:effective:flows:intro}
  \mathfrak s [t] = \mathfrak D^{-1} \circ  \mathfrak s^{\rm D}[t] \circ \mathfrak D ,\qquad   \mathbb U(t) =  \mathbb W^* \mathbb U^{\rm D}(t) \mathbb W.
\end{align}
To determine $\mathfrak D$ and $\mathbb W$, we view $W^{\rm D}=W^{\rm D}(1)$
as the special case of a quantum evolution operator $W^{\rm D}(\theta)$ with
``time'' $\theta$ and examine its mean-field and Bogoliubov
approximations. The motivation for this stems from the observation that
$W^{\rm D}(\theta)$ is a unitary group that is generated by a field operator
resembling the interaction term in \eqref{eq:Nelson Hamiltonian formal
  definition}, but with a square-integrable form factor replacing
$\omega^{-1/2}$. While the mean-field flow approximation $\mathfrak
D[\theta]$ has been proposed and studied previously in \cite{AF2017}, we
extend this idea to the level of the Bogoliubov approximation. One of the
difficulties that arises in this context is that the effective dressing
$\mathbb W(\theta)$ will be generated by a non-autonomous equation.

In the proof of Theorem \ref{theorem:limit:unitary}, we establish an identity
similar to~\eqref{eq:effective:flows:intro} but for $\mathbb
U^\Lambda(t)e^{-i t E_\Lambda} $ and with $\Lambda$-dependent versions of
$\mathbb W$ and $\mathbb U^{\rm D}(t)$, and then use that the cutoff can be
removed for the conjugated dressed evolution.  \medskip

\noindent \textbf{Mean-field approximation.} In order to derive
Theorem~\ref{theorem:final reduced density matrices}, we consider the dressed
dynamics $e^{ - i t H_N^\mathrm{D} }$ applied to the dressed initial state
$W^{\rm D}\Psi_N$, and compare it with the corresponding mean-field equations
introduced in~\eqref{eq:classical:transformed:equations}, whose flow is
denoted by $\mathfrak s^{\rm D}[t]$. This is the content of Theorem
\ref{thm:gross-transformed dynamics reduced density matrices}, which gives an
analogous statement to Theorem~\ref{theorem:final reduced density matrices}
but for the dressed evolutions. The proof of
Theorem~\ref{thm:gross-transformed dynamics reduced density matrices} relies
on the use of the excitation map and estimates on the generator of the
fluctuation dynamics. To relate the two theorems, i.e., to pass from the
approximation of $e^{-i t H_N^\mathrm{D} }$ to that of $e^{-i t H_N}$, we
then expand on the idea that the dressing $W^{\rm D}$ itself can be
approximated by the mean-field dressing transformation $\mathfrak D$. This is
the subject of Lemmas~\ref{lemma:reduced densities dressing flow} and
\ref{lem:beta:gamma:relation}, with the latter providing the reason for the
required energy condition in Theorem \ref{theorem:final reduced density
  matrices}. Since $\mathfrak D$ interpolates between the dressed and the
undressed mean-field evolutions, i.e. $ \mathfrak s[t] = \mathfrak D^{-1}
\circ \mathfrak s^{\rm D}[t] \circ \mathfrak D $, this allows us to translate
the approximation result of the dressed dynamics to the desired result on the
undressed ones.  The explained strategy is summarized in the commutative
diagram of Figure~\ref{fig:1}.
\begin{figure}[t]
  \centering
  \begin{tikzcd}[row sep=0.7cm, column sep=0.2cm, arrows={crossing over}]
    &&&&&&&\mathfrak D(u,\alpha)
    \arrow[from=3-8,shorten=1mm,mapsto]\arrow[to=1-16,shorten=2mm, mapsto]
    &&&&&&&&\mathfrak s^{\rm D}[t]\circ \mathfrak D(u,\alpha) \arrow[to=3-16,
    ,shorten=1mm, mapsto, "\mathfrak D^{-1}" near start]  &\\
    &W^{\rm D}\Psi_N \arrow[to=1-8,shorten=1mm,"N\to
    \infty"]\arrow[to=2-15,,shorten=2mm,mapsto]&&&&&&&&&&&&& e^{-itH_N^{\rm D}}W^{\rm D} \Psi_N  \arrow[to=1-16,shorten=1mm,"N \to \infty " ]  &&\\
    &&&&&&& (u,\alpha) \arrow[to=3-16,shorten=2mm,mapsto] &&&&&&&&\mathfrak s[t](u,\alpha) &\\
    &{\color{white}{a}}{\Psi_N}{\color{white}{a}}
    \arrow[to=2-2,shorten=1mm,mapsto] \arrow[to=3-8,shorten=1mm,
    "\mspace{25mu}N \to \infty " right, near start]
    \arrow[to=4-15,shorten=2mm,mapsto]&&&&&&&&&&&&&
    \arrow[from=2-15,shorten=1mm,mapsto, "{(W^{\rm D})^*}" near start]e^{-it
      H_N
    }\Psi_N \arrow[to=3-16,shorten=1mm,"\mspace{70mu}N\to\infty" below] &&\\
  \end{tikzcd}
  \vspace{-8mm}
  \caption{\tiny{Diagram schematizing Theorem~\ref{theorem:final reduced
        density matrices} and its relation with the dressed counterpart,
      Theorem~\ref{thm:gross-transformed dynamics reduced density
        matrices}. Each microscopic state on the front face is close to the
      associated mean-field state in the limit $N\to \infty$ (lying on the
      face beyond), as measured by the functional $\beta$. The arrows are
      kept unlabeled if the associated map is obvious.}}
  \label{fig:1}
\end{figure}
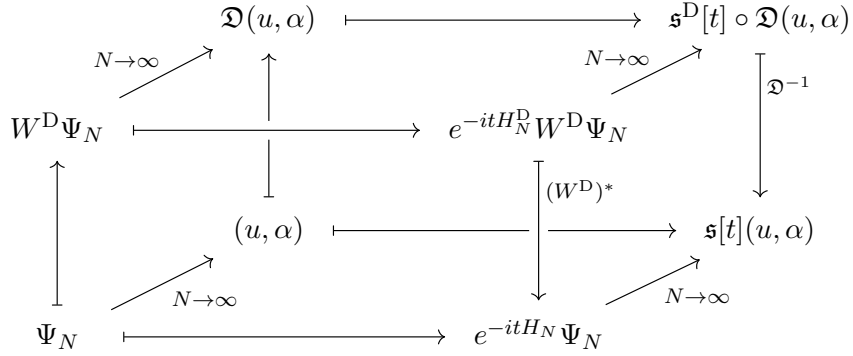

Since the proof of Theorem \ref{theorem:final reduced density matrices} is
based on Theorem~\ref{thm:gross-transformed dynamics reduced density
  matrices} for the dressed Hamiltonian and this diagram, the natural initial
condition would be a many-body state of the form
\begin{align}\label{eq:initial dressed product state}
  \Psi_N = (W^{\rm D})^*  (u^{\rm D})^{\otimes N}\otimes W(\sqrt N \alpha^{\rm D}) \Omega.
\end{align}
However, such states are not close to product states of the
form~\eqref{eq:many-body initial states} when measured by the full Hilbert
space norm (by Lemma~\ref{lemma:reduced densities dressing flow}, they are
close to such states when measured by $\beta)$. In Corollary
\ref{cor:mean-field} we show that Theorem~\ref{theorem:final reduced density
  matrices} also applies to states that are close to product states. These
are obtained from Proposition~\ref{proposition:class of suitable initial
  states}, which shows that it is sufficient to apply the dressing to momenta
larger than $K=N$ to solve the problem that exact product
states~\eqref{eq:many-body initial states} are not in the form domain of the
Nelson Hamiltonian.

\medskip

\noindent \textbf{Bogoliubov approximation.} The general strategy of the norm
approximation resembles the one of the mean-field approximation. That is, we
start again from the analysis of the dressed dynamics, where in analogy to
the discussion preceding Theorem \ref{thm:norm:approximation}, we now examine
the fluctuation vector $ \chi^{\rm D}(t) = X_{\mathfrak s^{\rm
    D}[t](u,\alpha) } e^{-i t H_N^{\rm D}} X_{u,\alpha}^* \chi $ associated
with the dressed mean-field flow and compare it with the effective evolution
$ \mathbb U^{\rm D}(t) \chi$. Here, $\mathbb U^{\rm D}(t)$ is obtained from
the Bogoliubov approximation of the dressed dynamics. The statement analogous
to Theorem~\ref{thm:norm:approximation} for the dressed dynamics is given in
Theorem~\ref{thm:norm:approximation:dressed:dynamics}, whose proof is based
on estimates on the difference of the generator of the dressed fluctuations
and the quadratic generator of $\mathbb U^{\rm D}(t)$.  To establish a
connection of this result to the one for the undressed dynamics, we use
\eqref{eq:relation:dressed:undressed:into} together with a norm approximation
for the dressing transformation $W^\mathrm{D}$. To this end, we elevate the
mean-field approximation $\mathfrak D$ to the level of the fluctuations by
implementing a Bogoliubov transformation $\mathbb{W}$. For the definition of
$\mathbb W$, we need to introduce a non-autonomous flow of Bogoliubov
transformations on $\cF\otimes \cF$, which is the content of Proposition
\ref{prop:Gross-Bog:evolution}. Lemma \ref{lemma:U:Lambda:approx:fluc} then
demonstrates that $\mathbb W$ indeed offers a norm approximation of $W^{\rm
  D}$, in the sense that $ X_{\mathfrak D (u,\alpha) } W^{\rm D}
X_{u,\alpha}^* \chi \approx \mathbb W \chi $ as $N\to \infty$ for suitable
$\chi \in \cF \otimes \cF$. As a final step, we argue that $\mathbb W$
interpolates between the dressed and the undressed Bogoliubov evolutions,
i.e. that $\mathbb U(t) = \mathbb W^* \mathbb U^{\rm D}(t) \mathbb W $, with
the precise version of this identity given in \eqref{eq:renormalized-Bog}.

Altogether, the argument for the norm approximation is based on the following
sequence of identities and approximations (which amount to the arrows in a
diagram similar to Figure~\ref{fig:1}) \allowdisplaybreaks
\begin{align}
  e^{-i t H_N}  X_{u,\alpha}^* \chi   & \ \  \overset{\eqref{eq:relation:dressed:undressed:into}}{=} \ \ (W^{\rm D})^* e^{-i t H_N^\mathrm{D} } W^{\rm D} X_{u,\alpha}^* \chi \notag\\[0mm]
  &\  \  \overset{{\color{white}{blank}}}{\approx } \ \ (W^{\rm D})^* e^{-i t H_N^\mathrm{D} } X^*_{\mathfrak D(u,\alpha)} \mathbb W \chi  \notag\\[0mm]
  &\  \  \overset{{\color{white}{blank}}}{\approx }  \ \ (W^{\rm D})^*  X_{\mathfrak s^{\rm D}[t]\circ \mathfrak D(u,\alpha) }^*  \mathbb U^{\rm D}(t) \mathbb W \chi \notag\\[0mm]
  &\  \ \overset{  \eqref{eq:effective:flows:intro} }{=} \ \ (W^{\rm D})^*  X_{\mathfrak D \circ \mathfrak s [t] (u,\alpha) }^*  \mathbb U^{\rm D}(t) \mathbb W \chi \notag\\[0mm]
  &\  \  \overset{{\color{white}{blank}}}{\approx }   \ \ X_{\mathfrak s[t](u,\alpha) }^*  \mathbb W^*   \mathbb U^{\rm D}(t) \mathbb W \chi \ \ \overset{\eqref{eq:effective:flows:intro}}{=}\ \  X_{\mathfrak s[t](u,\alpha) }^*    \mathbb U (t)  \chi.
\end{align}

\noindent \textbf{Renormalization.} For the purpose of the norm
approximation, we could take \eqref{eq:effective:flows:intro} as the
definition of $\mathbb{U}(t)$ in Theorem~\ref{thm:norm:approximation}.
However, we also want to elucidate the relation of this evolution to the one
that can be formally derived by applying the quadratic Bogoliubov
approximation to \eqref{eq:Nelson Hamiltonian formal definition}. This
relation is given by Theorem~\ref{theorem:limit:unitary}. To this end, one
needs to introduce an ultraviolet cutoff $\Lambda$, since it is not clear
that the Nelson--Bogoliubov Hamiltonian~\eqref{eq:reg:Nelson:Bog} defines a
self-adjoint operator for $\Lambda=\infty$. With a cutoff $\Lambda$, one
might expect that the Bogoliubov approximation of the Nelson dynamics is
given exactly by the Bogoliubov approximation of the dressed Hamiltonian,
conjugated with the approximation of the dressing. However, the Bogoliubov
evolution is fixed only up to a phase, so the identity may only hold for an
appropriate choice of such a phase. This is the content of
Proposition~\ref{prop:id:Bog:ren}, where we show that the correct choice of
phase $e^{-i t E^\Lambda }$ is such that $E^\Lambda \to \infty$ as
$\Lambda\to \infty$. The phase serves to renormalize the Bogoliubov
evolution, which is in complete analogy to the renormalization of $H_N$, as
stated in \eqref{eq:def:ren:Nelson}.\medskip


\section{Preliminaries}
\label{sect:preliminaries}

\subsection{Fock spaces and excitation map}
\label{sect:excitation space}

We recall the Fock space $\cF = \bigoplus_{k=0}^\infty \bigotimes_{\rm sym}^k
L^2(\mathbb R^3)$ and define the (truncated) Fock spaces for the excitations
of the particles,
\begin{align}
  \label{eq: definition excitation Fock space particles}
  \cF^{(k)}_{\perp u} = \bigotimes_{\rm sym}^k \{ u\}^\perp , \quad \cF_{\perp u}^{\le N}  = \bigoplus_{k=0}^N \cF^{(k)}_{\perp u} ,\quad \cF_{\perp u}  = \bigoplus_{k=0}^\infty \cF^{(k)}_{\perp u}
\end{align}
for $\{ u \}^\perp = \{ \varphi \in L^2 (\mathbb R^3) : \langle \varphi , u
\rangle = 0\}$. The relevant double Fock spaces for the Nelson model are
\begin{align}\label{eq:double:Fock:spaces}
  \cF_{\perp u}^{\le N} \otimes \cF, \qquad \cF_{\perp u} \otimes \cF \quad \text{and} \qquad \cF \otimes \cF,
\end{align}
where the first factor always refers to the excitations of the $N$ bosonic
particles, while the second factor describes the excitations of the quantum
field. If the context is not unambiguous, we shall write $\cF= \cF_b$ for the
particles and $\cF = \cF_a$ for the quantum field. In the order of
\eqref{eq:double:Fock:spaces}, we refer to the Fock spaces as {truncated
  excitation Fock space}, {excitation Fock space} and {double Fock
  space}. Moreover, we denote by $b_x$, $b_x^*$, $\mathcal{N}_b$ and $a_k$,
$a_k^*$, $\mathcal{N}_a$ the annihilation, creation and number operators on
$\cF_b$ and $\cF_a$, respectively.  For $f \in L^2(\mathbb{R}^3)$ let
\begin{subequations}
  \begin{align}
    a(f) &= \int_{\mathbb{R}^3} dk \, \overline{f(k)} a_k ,
    \quad 
    a^*(f) = \int_{\mathbb{R}^3} dk \, f(k) a_k^* ,
    \\
    b(f) &= \int_{\mathbb{R}^3} dx \, \overline{f(x)} b_x ,
    \quad 
    b^*(f) = \int_{\mathbb{R}^3} dx \, f(x) b_x^*
  \end{align}
\end{subequations}
denote the bosonic annihilation and creation operators.  For self-adjoint
$T$, $D(T)$ on $L^3(\R^3)$ we denote by $\mathrm{d} \Gamma(T)$ the
self-adjoint second quantization of $T$ on $\cF$.  Depending on which factor
of $\cF\otimes \cF$ this acts, we write
\begin{align}
  \mathrm{d} \Gamma_a(T) &= \int_{\mathbb{R}^6} dx  dy \, T(k,l) a_k^* a_l
  \quad \text{and} \quad
  \mathrm{d} \Gamma_b(T) = \int_{\mathbb{R}^6} dx  dy \, T(x,y) b_x^* b_y,
\end{align}
where $T(\cdot,\cdot)$ is the Schwarz kernel of $T$.  With this, we introduce
the notation
\begin{align}
  \label{eq:total kinetic energy and number of particles operator definition}
  \mathcal{N} &= \mathcal{N}_b + \mathcal{N}_a,
  \qquad
  \mathbb{T} = \mathrm{d} \Gamma_a \left( - \Delta \right)
  + \mathrm{d} \Gamma_a(\omega) .  
\end{align}
We define the field operator by
\begin{equation}\label{eq:field-op}
  \hat \Phi(f)=a(f) +a^*(f).
\end{equation}
Using this definition, the Weyl operator introduced in~\eqref{eq: Weyl
  operator} can be expressed as
\begin{equation}
  W(f) = e^{ -i \hat\Phi(if)}.
\end{equation}
It satisfies
\begin{align}
  \label{eq: Weyl operators product}
  W^{-1}(f)= W(-f), \quad W(f) W(g)   = W(f+g) e^{-i \Im \langle f , g \rangle },
\end{align}
as well as the shift property
\begin{align}
  \label{eq: Weyl operators shift property}
  W^*(f) a_k W(f) = a_k + f(k).
\end{align}

As an important tool in our analysis, we introduce a variant of the
\textit{excitation map} introduced in \cite{Lewin:2015a,LNSS2015}. In the
context of the Nelson model, the excitation map factors out a condensate with
wave function $u$ and a coherent state with field mode $\sqrt N \alpha $. For
$u, \alpha \in L^2(\mathbb{R}^3)$ with $\norm{u}_{L^2} = 1$, it is defined as
the map
\begin{equation}
  X_{ u , \alpha} :\mathcal{H}_N \rightarrow \cF_{\perp u} \otimes \cF
\end{equation}
with $\Psi_{N} \mapsto ( \chi^{(k)} )_{k=0}^N$ given by
\begin{align}
  \label{eq:action of the unitary}
  \chi^{(k)} = \binom{N}{k}^{1/2} \prod_{i=1}^k (q_{u})_i
  \scp{u^{\otimes (N-k)}}{W^* ( \sqrt{N} \alpha ) \Psi_N}_{L^2(\mathbb{R}^{3(N-k)})}
  \; \in \mathcal{F}_{\perp u}^{(k)} \otimes \mathcal{F},
\end{align}
where $(q_{u})_i$ is the orthogonal projection $q_u = \id - |u\rangle \langle
u | $ acting on the $i$th particle coordinate $x_i$. Here, the partial inner
product is taken w.r.t. the particle coordinates $x_{k+1}, \ldots, x_N$. The
adjoint of $X_{u,\alpha}$ is given by \eqref{eq:Psi_N excitation}, and it
holds
\begin{align}\label{eq:X-isometry}
  X_{u,\alpha}^*X_{u,\alpha}=\id_{\mathcal{H}_N}, \qquad  X_{u,\alpha}X_{u,\alpha}^*=\id_{\mathcal{F}_{\perp u}^{\leq N} \otimes \mathcal{F}},
\end{align}
in particular $X_{ u , \alpha} :\mathcal{H}_N \rightarrow \mathcal{F}_{\perp
  u}^{\leq N} \otimes \mathcal{F}$ is unitary.  Written in terms of creation
and annihilation operators, the excitation map acts as
\begin{align}
  X_{u,\alpha} \Psi_N =\bigg( \bigoplus_{k=0}^N q_u^{\otimes k} \frac{b(u)^{ N- k } }{\sqrt{(N-k )!} } \bigg) \otimes W ^*(\sqrt N \alpha) \Psi_N .
\end{align}
This leads to the following useful relations \cite{Lewin:2015a, LNSS2015}.
\begin{lemma}
  \label{lemma:properties U}
  As identities on $\cF_{\perp u}\otimes \cF$, we have for all $f,g \in
  \{u\}^\perp $
  \begin{align*}
    X_{ u , \alpha} b^{*}(u ) b( u ) X_{u , \alpha}^*  & = [N - \mathcal{N}_b]_+  ,
    \\
    X_{ u , \alpha} b^{*}(f) b( u ) X_{ u , \alpha}^*  & = b^{*}(f) \left[ N - \mathcal{N}_b \right]^{1/2}_+  ,
    \\
    X_{ u , \alpha} b^{*}( u ) b(f) X_{ u , \alpha}^*  & =  \left[N - \mathcal{N}_b \right]^{1/2}_+ b(f)  ,
    \\
    X_{ u , \alpha} b^{*}(f) b(g) X_{ u , \alpha}^*  & =  b^{*}(f) b(g)  ,
  \end{align*}
  where $[a]_+=\max\{a,0\}$. Moreover, for all $h \in L^2(\mathbb{R}^3)$,
  \begin{align*}
    X_{ u , \alpha} a(h) X_{ u , \alpha}^* = a(h) + \sqrt{N} \scp{h}{\alpha}.
  \end{align*}
\end{lemma}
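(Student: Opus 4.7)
The plan is to factor the excitation map into a pure-particle part and a pure-field (Weyl) part, and then to handle the four $b$-identities and the $a$-identity separately. Concretely, one writes
\begin{equation*}
X_{u,\alpha} \;=\; U_u\,\bigl(\id\otimes W^*(\sqrt{N}\alpha)\bigr),
\end{equation*}
where $U_u:\mathcal H_N\to \cF_{\perp u}^{\le N}$ is the excitation map obtained from~\eqref{eq:action of the unitary} by setting $\alpha=0$. This decomposition is immediate from the definition, since $W^*(\sqrt{N}\alpha)$ acts only on the field (second) tensor factor and the particle-side projectors $(q_u)_i$ and operators $b(u)^{N-k}$ commute with it.

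For the four identities involving only $b$ and $b^*$, note that these operators act on the particle (first) factor and hence commute with $W(\sqrt{N}\alpha)$. Consequently
\begin{equation*}
X_{u,\alpha}\, b^{\#}(f)\,b^{\#}(g)\, X_{u,\alpha}^* \;=\; U_u\, b^{\#}(f)\,b^{\#}(g)\, U_u^*,
\end{equation*}
reducing the matter to the purely particle-side excitation identities of Lewin--Nam--Serfaty--Solovej~\cite{LNSS2015,Lewin:2015a}. These are verified by direct computation on each sector $\cF_{\perp u}^{(k)}$: using $U_u^*\chi^{(k)} = \binom{N}{k}^{-1/2} P^N_{\rm sym}\bigl(u^{\otimes N-k}\otimes \chi^{(k)}\bigr)$, the operator $b(u)$ removes one condensate factor (producing a $\sqrt{N-k}$), while $b^{\#}(f)$ and $b^{\#}(g)$ with $f,g\in\{u\}^\perp$ act only on the $k$ excitations. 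Reapplying $U_u$ and identifying $N-k$ with $N-\mathcal N_b$ on the $k$-sector yields the claim; the $[\,\cdot\,]_+$ accounts for the unphysical sector $\mathcal N_b>N$, on which $X_{u,\alpha}^*$ vanishes (so that the LHS is zero there as well).

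For the final identity, observe that $a(h)$ acts on the field factor and therefore commutes with $U_u$. Combining this with the shift property~\eqref{eq: Weyl operators shift property} of the Weyl operator gives
\begin{equation*}
X_{u,\alpha}\, a(h)\, X_{u,\alpha}^* \;=\; U_u\, W^*(\sqrt{N}\alpha)\,a(h)\,W(\sqrt{N}\alpha)\, U_u^* \;=\; a(h)+\sqrt{N}\scp{h}{\alpha}
\end{equation*}
on the image $\cF_{\perp u}^{\le N}\otimes\cF$ of $X_{u,\alpha}$, where $U_u U_u^* = \id$ by~\eqref{eq:X-isometry} applied with $\alpha=0$.

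The only non-trivial step is the combinatorial verification of the $b$-identities on a fixed excitation sector. This is purely algebraic bookkeeping of the symmetrizer $P^N_{\rm sym}$ and the powers of $b(u), b^*(u)$ acting on $u^{\otimes N-k}$, and presents no analytic subtlety; the corresponding identities in the pure-particle Fock-space setting are by now classical, so I expect the calculation to be essentially a rewriting of the proofs in~\cite{LNSS2015,Lewin:2015a}.
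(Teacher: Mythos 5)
Your plan is correct and is essentially what the paper is doing: the paper does not give its own proof but simply cites \cite{Lewin:2015a,LNSS2015} for these identities, and your argument is exactly the standard reduction to those references. The factorization $X_{u,\alpha}=U_u\,(\id\otimes W^*(\sqrt N\alpha))$ is the right structural observation; the $b$-identities then follow from the pure-particle excitation-map identities of Lewin--Nam--Serfaty--Solovej after cancelling $W^*W=\id$, and the $a$-identity follows from the Weyl shift property together with the fact that $a(h)$ commutes with $U_u$ and that $U_uU_u^*=\id$ on $\cF^{\le N}_{\perp u}$.

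One small slip: the normalization in your parenthetical formula for $U_u^*$ has the wrong sign of exponent. With the paper's convention
\begin{equation*}
U_u^*\chi = \bigoplus_{k=0}^N q_u^{\otimes k}\,\frac{b(u)^{N-k}}{\sqrt{(N-k)!}},
\end{equation*}
one finds for $\chi^{(k)}\in\cF^{(k)}_{\perp u}$ that
\begin{equation*}
U_u^*\chi^{(k)} \;=\; \frac{b^*(u)^{N-k}}{\sqrt{(N-k)!}}\chi^{(k)} \;=\; \binom{N}{k}^{1/2}\,P^N_{\rm sym}\bigl(u^{\otimes N-k}\otimes\chi^{(k)}\bigr),
\end{equation*}
i.e.\ the prefactor should be $\binom{N}{k}^{+1/2}$, not $\binom{N}{k}^{-1/2}$. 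This does not affect the structure of the argument, only the explicit bookkeeping on each sector. You also correctly observe that the final identity for $a(h)$ holds only on $\cF^{\le N}_{\perp u}\otimes\cF$ rather than on all of $\cF_{\perp u}\otimes\cF$, since the right-hand side contains a constant term while $X_{u,\alpha}^*$ annihilates the sectors with $\mathcal N_b>N$; the paper is slightly imprecise on this point.
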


To measure the distance between reduced densities, we introduced the
functional $\beta$ in \eqref{eq:definition of beta}. In Section
\ref{chap:mean-field}, we need a second functional given by
\begin{align}\label{eq:gamma:functional:def}
  & \gamma \left[ \Psi_N, ( u ,\alpha ) \right]  =  \norm{ \nabla_1 (q_u)_1  \Psi_N}^2  +   N^{-1}\| \text{d} \Gamma_a(\omega)^{1/2} W^*(\sqrt{N} \alpha)\Psi_N\|^2.
\end{align}
Using Lemma \ref{lemma:properties U}, they can be expressed in terms of the
excitation map as
\begin{subequations}
  \begin{align}
    \beta[  \Psi_N, (u,\alpha) ] & = \tfrac{1}{N} \scp{X_{u,\alpha} \Psi_N  }{\mathcal N X_{u,\alpha} \Psi_N}_{\mathcal{H}_N} , \label{eq:beta:second:quant} \\[1mm]
    \gamma[  \Psi_N ,  (u,\alpha) ] &  = \tfrac{1}{N} \scp{X_{u,\alpha} \Psi_N  }{\mathbb T X_{u,\alpha} \Psi_N }_{\mathcal{H}_N}  \label{eq:gamma:second:quant}.
  \end{align}
\end{subequations}


\subsection{Notation}
We recall the Hilbert space
\begin{equation}
  \mathcal{H}_N= \bigotimes_{\rm sym}^N L^2(\mathbb R^3) \otimes \mathcal{F}\subset L^2(\mathbb R^{3N})\otimes \cF,
\end{equation}
and the definitions $\omega(k) = \sqrt{k^2 +1}$ and
\begin{align}
  \label{eq:definition of G-x notation section}
  G_x(k) &= \frac{1}{\sqrt{\omega(k)}} e^{- i k x}
\end{align}
and introduce the function
\begin{align} \label{eq:def:B(k)} B_{x}(k) &= \frac{G_x(k)}{k^2+\omega(k)}
  =\frac{e^{- i k x}}{\sqrt{\omega(k)} \left( k^2 + \omega(k) \right) } .
\end{align}

\noindent Moreover, we adopt the following notation conventions.

\begin{itemize}
\item For a normed space $X$ we denote its topological dual by $X'$.
\item For normed spaces $X,Y$ we denote the norm of a linear map $A:X\to Y$
  by $\|A\|_{X\to Y}$.
\item $H^s(\mathbb{R}^3)$ with $s \in \mathbb{R}$ denotes the non-homogeneous
  $L^2$-Sobolev space.
\item For the norms on $L^2(\mathbb R^3)$ and $H^s(\mathbb R^3)$ we write
  $\norm{u}_{L^2}$ and $\norm{u}_{H^s}$.
\item $\mathfrak{h}_s$ with $s \in \mathbb{R}$ is the weighted $L^2$-space
  with norm $\norm{\alpha}_{\mathfrak{h}_{s}} = \norm{ \omega^s
    \alpha}_{L^2}$.
\item $\| \cdot \| $ denotes the norms of $\mathcal{H}_N$ and $\mathcal{F}
  \otimes \mathcal{F}$, depending on the context.
\item $\dot u_t$ denotes the time-derivative of a function $t\mapsto u_t$.
\item We do not specify the domain of integration if it is equal to
  $\R^{3n}$.
\item For the Fourier transform of $u\in L^2(\mathbb R^3)$, we use the
  convention that
  \begin{align}
    \widehat{u}(k) = (2 \pi)^{-3/2} \int  dx \, e^{- ik x} u(x).
  \end{align}
\item For a quadratic form with domain $Q(A)$ associated with $A:Q(A)\to
  Q(A)'$, we write $A+\text{h.c.}$ for the quadratic form
  \begin{equation}
    \langle \psi, (A+\text{h.c.})\psi\rangle = 2\text{Re}\langle \psi, A \psi\rangle, \qquad \psi \in Q(A).
  \end{equation}
\item The letter $C$ denotes a generic constant, whose value may change
  within a sequence of inequalities. For example, in $X \leq C Y \leq C Z$
  the two occurrences of $C$ may represent different numbers.
\end{itemize}

\subsection{The SKG equations}

The next statement recaps the well-posedness theory of the non-linear SKG
equations,
\begin{align}
  \label{eq:Schroedinger-Klein-Gordon equations regular}
  \begin{cases}
    \begin{aligned}
      i \partial_t u_t(x) & \, =\, \big( - \Delta + \phi_{\alpha_t}(x) - \tfrac{1}{2} \scp{u_t}{ \phi_{\alpha_t} u_t} \big) u_t(x) \\[1.5mm]
      i \partial_t \alpha_t(k) &\, =\, \omega(k) \alpha_t(k) + \scp{u_t}{G_{(\cdot)}(k) u_t} \\[1.5mm]
      (u_t,\alpha_t)|_{t=0}& \,=(u,\alpha).
    \end{aligned}
  \end{cases}
\end{align}
\begin{proposition}
  \label{prop:skg well posed}
  For any $s\geq 0$ the Cauchy problem \eqref{eq:Schroedinger-Klein-Gordon
    equations regular} is globally well-posed in $H^s(\mathbb{R}^3) \oplus
  \mathfrak{h}_{s-1/2}$. The solutions satisfy $\| u_t \|_{L^2} =
  \|u\|_{L^2}$ and, for $s\geq 1$, $\mathcal E(u_t,\alpha_t) = \mathcal E (
  u,\alpha ) $ with $\mathcal E$ defined by \eqref{eq:energy functional
    Schroedinger-Klein-Gordon equations regular}.

  In addition, for any integer $ s\geq 1$, there exists $\delta$ such that
  for any $M>0$ there exists $C$ so that for all $t\in \mathbb{R}$ and
  $\lVert (u,\alpha) \rVert_{H^s \oplus \mathfrak{h}_{s-1/2}}\leq M$, the
  solution $(u_t,\alpha_t)$ satisfies
  \begin{equation*}
    \lVert u_t \rVert_{H^s} + \| \alpha_t\|_{\mathfrak{h}_{s-1/2}} \leq
    \begin{cases}\, 
      C \quad & \text{if} \ s= 1 \; ,
      \\
      \, C(1+\lvert t  \rvert )^\delta
      \quad & \text{otherwise}
      \;.
    \end{cases}
  \end{equation*}
\end{proposition}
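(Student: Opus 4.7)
The plan is to combine a standard Duhamel contraction for local well-posedness, energy conservation computed by direct differentiation, and propagation of Sobolev regularity by a Grönwall-type energy method. Two elementary identities underlie all the estimates: in Fourier space $\phi_\alpha = \omega^{-1/2}\alpha + \overline{\omega^{-1/2}\alpha}$, so $\|\phi_\alpha\|_{H^s(\mathbb{R}^3)}\lesssim \|\alpha\|_{\mathfrak{h}_{s-1/2}}$; and $\langle u,G_{(\cdot)}(k)u\rangle = \omega(k)^{-1/2}\widehat{|u|^2}(k)$, so $\|\langle u,G_{(\cdot)}u\rangle\|_{\mathfrak{h}_{s-1/2}}\lesssim \||u|^2\|_{H^{s-1}(\mathbb{R}^3)}$, which is controlled by $\|u\|_{H^s(\mathbb{R}^3)}^2$ by Sobolev multiplication for $s\geq 1$ and with a little more care (Strichartz in the Schrödinger equation) for $s\in [0,1)$.

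First, I would write the mild formulation
\begin{align*}
u_t &= e^{it\Delta} u - i\int_0^t e^{i(t-s)\Delta}\Bigl(\phi_{\alpha_s}u_s - \tfrac{1}{2}\langle u_s,\phi_{\alpha_s} u_s\rangle u_s\Bigr)\,ds,\\
\alpha_t &= e^{-i\omega t}\alpha - i\int_0^t e^{-i\omega(t-s)}\langle u_s,G_{(\cdot)} u_s\rangle\,ds,
\end{align*}
and apply Banach's fixed-point theorem on $C([0,T];H^s(\mathbb{R}^3)\oplus \mathfrak{h}_{s-1/2})$. The two estimates above together with the unitarity of $e^{it\Delta}$ on $H^s$ and of $e^{-i\omega t}$ on $\mathfrak{h}_{s-1/2}$ turn the Duhamel map into a contraction for $T$ depending only on the norms of $(u,\alpha)$, yielding local well-posedness together with the standard blow-up alternative.

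Conservation of $\|u_t\|_{L^2}$ is immediate: the right-hand side of the $u$-equation equals $-iH_t u_t$ with $H_t=-\Delta+\phi_{\alpha_t}-\tfrac{1}{2}\langle u_t,\phi_{\alpha_t} u_t\rangle$ self-adjoint, since $\phi_{\alpha_t}$ and the scalar correction are real. For the energy I would differentiate $\mathcal{E}(u_t,\alpha_t)$ and observe that (i) the purely kinetic and field terms combined with the self-adjoint part of $H_t$ are $\Im$-exact and cancel, (ii) the cross-term $2\,\Im\langle -\Delta u_t,\phi_{\alpha_t} u_t\rangle + 2\,\Im\langle u_t,\phi_{\alpha_t}(-\Delta) u_t\rangle$ vanishes by self-adjointness of $\phi_{\alpha_t}$, and (iii) the contribution $\langle u_t,\dot\phi_{\alpha_t} u_t\rangle$ equals $2\,\Im\langle\rho_t,\omega\alpha_t\rangle$ with $\rho_t(k)=\langle u_t,G_{(\cdot)}(k)u_t\rangle$, which precisely cancels the corresponding contribution coming from the $\alpha$-equation. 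For low regularity $s\in[0,1)$ conservation follows by an approximation argument using continuous dependence from the contraction.

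Global well-posedness for $s=1$ is obtained by combining the two conservation laws with the coercivity estimate
\begin{equation*}
|\langle u,\phi_\alpha u\rangle|\leq \varepsilon\|\alpha\|_{\mathfrak{h}_{1/2}}^2 + C_\varepsilon\|u\|_{L^2}^2\|u\|_{H^1(\mathbb{R}^3)},
\end{equation*}
which plugged into $\mathcal{E}$ and absorbed via $\|u_t\|_{L^2}=\|u\|_{L^2}$ and Grönwall yields a uniform $H^1\oplus\mathfrak{h}_{1/2}$ bound, ruling out blow-up. Global well-posedness for $s\in[0,1)$ then follows from density, unitarity of the propagators, and continuous dependence. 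For integer $s\geq 2$, I would apply $\langle\nabla\rangle^s$ to the $u$-equation and $\omega^{s-1/2}$ to the $\alpha$-equation, pair with $\langle\nabla\rangle^s u_t$ and $\omega^{s-1/2}\alpha_t$ respectively, and use Kato–Ponce-type commutator estimates, together with the two fundamental estimates of the first paragraph, to derive a differential inequality
\begin{equation*}
\tfrac{d}{dt}\bigl(\|u_t\|_{H^s}^2+\|\alpha_t\|_{\mathfrak{h}_{s-1/2}}^2\bigr)\leq C\bigl(1+\|u_t\|_{H^{s-1}}+\|\alpha_t\|_{\mathfrak{h}_{s-3/2}}\bigr)^p\bigl(\|u_t\|_{H^s}^2+\|\alpha_t\|_{\mathfrak{h}_{s-1/2}}^2\bigr).
\end{equation*}
Induction on $s$ and Grönwall then yield the polynomial-in-$t$ bound claimed. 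The main technical obstacle I anticipate is controlling the interaction at top Sobolev level: the form factor $G$ is only $\omega^{-1/2}$ and the commutator $[\langle\nabla\rangle^s,\phi_{\alpha_t}]$ cannot be disposed of by integration by parts, so one has to be careful in splitting the top derivative between $\phi_{\alpha_t}$ and $u_t$ so that the highest-order factor always appears linearly, with all other factors controlled by the inductively established lower-regularity bounds.
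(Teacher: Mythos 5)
Your architecture (Duhamel contraction, direct differentiation for the conservation laws, coercivity for $s=1$, inductive energy estimates for higher $s$) is reasonable, and the two Fourier identities you isolate, $\|\phi_\alpha\|_{H^s}\lesssim\|\alpha\|_{\mathfrak h_{s-1/2}}$ and $\|\langle u,G_{(\cdot)}u\rangle\|_{\mathfrak h_{s-1/2}}\lesssim\||u|^2\|_{H^{s-1}}$, are exactly the right starting point; the $L^2$- and energy-conservation computations and the $s=1$ coercivity argument (which is the paper's use of \eqref{eq:bound:G:alpha:u}) are fine. However, there are two genuine gaps. First, the contraction in $C([0,T];H^s\oplus\mathfrak h_{s-1/2})$ does not close for $s\lesssim 3/2$, including the essential case $s=1$: since $\alpha\in\mathfrak h_{s-1/2}$ gives only $\phi_\alpha\in H^s$, the product $\phi_\alpha u$ loses derivatives ($H^1\cdot H^1\not\subset H^1$ in $d=3$), so the nonlinearity is not locally Lipschitz on the stated space. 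This is precisely why the paper delegates to \cite{pecher,bachelot,CHT2018}, whose proofs use Strichartz/Bourgain-space norms already at (and below) the energy regularity. Relatedly, your claim that global well-posedness for $s\in[0,1)$ "follows from density, unitarity and continuous dependence" is not a valid argument: persistence of regularity propagates upward, not downward, and below $H^1$ there is no conserved quantity to globalize with; the low-regularity global theory is a substantive result of the cited references, not a corollary of the $s\geq 1$ theory.

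Second, your displayed differential inequality
\begin{equation*}
\tfrac{d}{dt}X_s(t)\leq C\bigl(1+\|u_t\|_{H^{s-1}}+\|\alpha_t\|_{\mathfrak h_{s-3/2}}\bigr)^p\,X_s(t),\qquad X_s=\|u\|_{H^s}^2+\|\alpha\|_{\mathfrak h_{s-1/2}}^2,
\end{equation*}
yields via Gr\"onwall $X_s(t)\leq X_s(0)\exp\bigl(C\int_0^t(1+\cdots)^p\bigr)$, i.e.\ at best \emph{exponential} growth even when the lower-order norms are bounded — it cannot produce the claimed $(1+|t|)^\delta$. To get polynomial bounds the top-order norms must enter the right-hand side only through time integrals of quantities already bounded polynomially, and at the first nontrivial step $s=2$ even the "upper-triangular" version fails by pure energy methods: $\||u|^2\|_{H^1}$ is not controlled by $\|u\|_{H^1}^2$ alone (one needs $\|u\|_{L^\infty}$ or $\|\nabla u\|_{L^3}$, i.e.\ a fractional power of $\|u\|_{H^2}$), and the resulting coupled system $\dot z\lesssim y^{1/2}$, $\tfrac{d}{dt}y^{1/2}\lesssim z$ is again exponential. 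Closing this requires either the dispersive estimates of \cite{gruenrock} or the iterative scheme of \cite{FG2017} that the paper invokes; your closing remark about "splitting the top derivative so the highest-order factor appears linearly" points in the right direction but is contradicted by the inequality you actually write down, and by itself does not resolve the $s=2$ base case.
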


\begin{proof}
  The well-posedness together with the conservation properties is a special
  case of \cite[Thm.1.4]{pecher}; see also~\cite{bachelot,CHT2018}. The
  estimate on the norms for $s=1$ follows from the conservation properties
  and an application of the bound \eqref{eq:bound:G:alpha:u}.
  The polynomial-in-time bounds for $s >1$ can be proved by an iterative
  argument, adapting the approach of \cite{FG2017} for the Landau--Pekar
  equations, see also~\cite{gruenrock} for a different approach.
\end{proof}


\section{The mean-field approximation}\label{chap:mean-field}

In this section we study the approximation of $e^{-i t H_N }$ on the level of
reduced densities and prove Theorem~\ref{theorem:final reduced density
  matrices}.  We first consider the evolution generated by the Nelson
Hamiltonian after dressing it with a suitable unitary transformation. Even
though the dressed Hamiltonian $H_N^{\rm D}$ has a more complicated structure
than the original Hamiltonian, it contains less singular interaction terms
and is thus better suited for our analysis. In fact, $H_N^{\rm D}$ is a
perturbation of the non-interacting Hamiltonian in the sense of quadratic
forms.

In order to approximate the evolution $e^{-i t H_N^\mathrm{D} }$, it is
necessary to replace the mean-field equations by their dressed variant, as
already observed in~\cite{AF2017}.  The approximation result for the reduced
densities of $e^{-i t H_N^\mathrm{D} }\Psi_N$ using the dressed mean-field
equations, with a statement analogous to that of Theorem \ref{theorem:final
  reduced density matrices}, is given in Theorem~\ref{thm:gross-transformed
  dynamics reduced density matrices}. We then study the dressing
transformations, and state in Lemma~\ref{lemma:reduced densities dressing
  flow} that they can be approximated by a mean-field dressing flow in a
similar way. The proof of Theorem~\ref{theorem:final reduced density
  matrices} is given by combining these results in Section~\ref{sect:final
  reduced density}.

\subsection{Dressed dynamics on the microscopic level}

With $B_x(k)$ given by \eqref{eq:def:B(k)} and the field operator $\hat \Phi$
from~\eqref{eq:field-op}, we consider the family of unitary dressing
transformations
\begin{align}
  \label{eq:definition of the quantum dressing group:theta}
  W^{\rm D} (\theta)
  =  \prod_{j=1}^N \exp\bigg( \frac{-i\theta}{\sqrt N } \hat \Phi(i B_{x_j}) \bigg)
\end{align}
and set $W^\mathrm{D}:=W^\mathrm{D}(1)$. This transformation goes back to
Gross and Nelson \cite{nelson}, and the following lemma recalls a well-known
relation between the renormalized Nelson Hamiltonian and the dressed Nelson
Hamiltonian.

\begin{lemma} \label{lem:H:D:representation} Consider the symmetric quadratic
  form defined on the form domain of $\mathrm{d} \Gamma_a(\omega) +
  \sum_{i=1}^N (-\Delta_i)$,
  \begin{multline*}
    H_{N }^{ \rm D} = \textnormal{d} \Gamma_a(\omega) + \sum_{i=1}^N(-\Delta_i)  + \frac{1}{\sqrt N}\sum_{i=1}^N  \hat{A}_{x_i}  + \frac{1}{N} \sum_{1\le i<j\le N }    V  (x_i-x_j)
    \\ + \frac{1}{N} \sum_{i=1}^N \Big(a(kB_{x_i})^2 + 2a^*(kB_{x_i})a(kB_{x_i}) +a^*(kB_{x_i})^2 \Big),
  \end{multline*}
  where 
  \begin{subequations}
    \begin{align}
      \hat A_x &=-2 \big(i\nabla_x \cdot a(k B _x) + a^*(kB _x) \cdot  i\nabla_x\big),\\[1mm]
%
%
      V (x ) & = -4 \Re\langle G _x, B _0 \rangle + 2   \Re \langle \omega B _x, B_0 \rangle.\label{eq:definition of V}
    \end{align}
  \end{subequations}
  There exists a unique self-adjoint operator $H_N^\mathrm{D}$,
  $D(H_N^\mathrm{D})$ whose quadratic form coincides with the above, and we
  have
  \begin{equation*}
    H_N= (W^\mathrm{D})^*H_N^\mathrm{D} W^\mathrm{D}.
  \end{equation*}
\end{lemma}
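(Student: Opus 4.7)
The plan has three steps: (a) use the KLMN theorem to produce a unique self-adjoint operator from the quadratic form, (b) verify the dressing identity at a finite ultraviolet cutoff $\Lambda$, and (c) remove the cutoff, matching the renormalization constant in \eqref{eq:def:ren:Nelson}. For (a), I would work on the form domain of the free Hamiltonian $\mathrm{d}\Gamma_a(\omega)+\sum_i(-\Delta_i)$ and control each perturbative term by an arbitrarily small fraction of the free part. The potential $V$ is a sum of two inverse Fourier transforms: $2\Re\langle \omega B_x,B_0\rangle$ is bounded (its kernel $1/(k^2+\omega)^2$ is in $L^1(\R^3)$), while $-4\Re\langle G_x,B_0\rangle$ has at worst a $\log|x|$ singularity at the origin (the kernel $1/(\omega(k^2+\omega))$ decays only as $|k|^{-3}$), so both are form-small relative to $-\Delta$ by standard Sobolev embeddings. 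The term $\hat A_x/\sqrt N$ is controlled by Cauchy--Schwarz and $\|kB_x\|_{L^2}<\infty$ (since $|kB_x(k)|\lesssim|k|^{-2}$), and the quadratic Fock-space terms are dominated by $\|kB_x\|_{L^2}^2(\mathrm{d}\Gamma_a(\omega)+1)$. With the $1/\sqrt N$ and $1/N$ prefactors, the combined perturbation is form-small, so KLMN yields the claimed $H_N^{\rm D}$.

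For (b), set $B_x^\Lambda=B_x\,\id_{|k|\leq\Lambda}$ and $W^{\rm D}_\Lambda=\prod_j\exp(-iN^{-1/2}\hat\Phi(iB_{x_j}^\Lambda))$. Since $B_{x_j}^\Lambda\in L^2$ is smooth in $x_j$ with all derivatives in $L^2$, $W^{\rm D}_\Lambda$ preserves the operator domain $D(\mathrm{d}\Gamma_a(\omega)+\sum_i(-\Delta_i))$. Using the Weyl CCR one has $W^{\rm D}_\Lambda = W(F)\,e^{-i\phi}$ with $F=N^{-1/2}\sum_j B_{x_j}^\Lambda$ and $\phi=N^{-1}\sum_{i<j}\Im\langle B_{x_i}^\Lambda,B_{x_j}^\Lambda\rangle$. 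Conjugating term by term: the Weyl shift $W^*(f)a_kW(f)=a_k+f(k)$ directly handles $\mathrm{d}\Gamma_a(\omega)$ and the linear interaction $N^{-1/2}\sum_j\hat\Phi(G_{x_j}^\Lambda)$; for the kinetic term, the Hadamard expansion of $U_j^*p_jU_j$ terminates after the first commutator because the double commutator equals $-2iN^{-1}\Im\langle B_{x_j}^\Lambda, kB_{x_j}^\Lambda\rangle=0$ (the integrand $k|B_0^\Lambda(k)|^2$ is odd in $k$). Squaring $U_j^*p_jU_j$, including the contribution from $p_j$ acting on the phase $e^{-i\phi}$, and collecting every c-number piece, the diagonal ($i=j$) contributions assemble into a single divergent constant $E^\Lambda$, while the off-diagonal ones reorganize, via the key identity $G_x=(k^2+\omega)B_x$, into $N^{-1}\sum_{i<j}V(x_i-x_j)$. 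This produces the operator identity
\begin{equation*}
(W^{\rm D}_\Lambda)^* H_N^\Lambda W^{\rm D}_\Lambda = H_N^{{\rm D},\Lambda} - E^\Lambda
\end{equation*}
on the common domain, where $H_N^{{\rm D},\Lambda}$ is the evident cutoff version of the form in the lemma.

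For (c), $B_x^\Lambda\to B_x$ in $L^2$ by dominated convergence, so $W^{\rm D}_\Lambda\to W^{\rm D}$ strongly. The form bounds of (a) are uniform in $\Lambda$, and the quadratic forms of $H_N^{{\rm D},\Lambda}$ converge pointwise on the free form domain to that of $H_N^{\rm D}$, which yields strong resolvent convergence. Combined with (b), $(W^{\rm D}_\Lambda)^* e^{-itH_N^{{\rm D},\Lambda}}W^{\rm D}_\Lambda = e^{-itH_N^\Lambda}e^{-itE^\Lambda}$ converges strongly: the left-hand side to $(W^{\rm D})^* e^{-itH_N^{\rm D}}W^{\rm D}$, and the right-hand side to $e^{-itH_N}$ by \eqref{eq:def:ren:Nelson}, the renormalization constant from (b) coinciding with the one implicit in the definition of $H_N$. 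Stone's theorem identifies the self-adjoint generators.

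The main obstacle is the algebraic bookkeeping in (b): one must organize the several non-commuting shifts, track the gradient of the position-dependent phase $\phi$, and verify that all the c-number contributions assemble precisely into $E^\Lambda$ on the diagonal and $V(x_i-x_j)/N$ off the diagonal. The identity $G_x=(k^2+\omega)B_x$ is the structural input making this cancellation possible; it also retroactively explains why $B_x=G_x/(k^2+\omega)$ is the correct dressing form factor in \eqref{eq:definition of the quantum dressing group:theta}.
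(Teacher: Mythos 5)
Your steps (b) and (c) — the explicit conjugation at finite UV cutoff and the removal of the cutoff via \eqref{eq:def:ren:Nelson} — are sound in outline and follow the classical Nelson argument; the paper itself does not carry this out but delegates it to \cite{nelson,GW2018}. The genuine gap is in step (a).

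The claim $\|kB_x\|_{L^2}<\infty$ is false: $|kB_0(k)|\sim |k|\cdot\omega(k)^{-1/2}(k^2+\omega(k))^{-1}\sim |k|^{-3/2}$ at infinity, so $\int |kB_0|^2\,dk \sim \int_1^\infty r^{-1}dr$ diverges. One only has $kB_0\in\mathfrak h_{-s}$ for $s>0$ (cf.~\eqref{eq:bound:kB:hs:norm}), so every Cauchy--Schwarz estimate on $\hat A_x$ and on the quadratic Fock terms must spend a weight $\omega^{-1/2}$ on the form factor and correspondingly a factor $\mathrm{d}\Gamma_a(\omega)^{1/2}$ on the state. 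This is not a cosmetic correction: once the weighted bound is used, the relative form bound of $N^{-1/2}\sum_i\hat A_{x_i}$ with respect to $\mathrm{d}\Gamma_a(\omega)+\sum_i(-\Delta_i)$ is a \emph{fixed} constant of order $\|\omega^{-1/2}kB_0\|_{L^2}$ (the $1/\sqrt N$ is exactly eaten by the sum over particles, which all couple to the same field), and there is no free parameter to make it small. The same applies to the $a^\sharp a^\sharp$ terms. Hence "arbitrarily small fraction of the free part" is unjustified, and a direct application of KLMN to the full form in the lemma is not available. This is precisely why the paper's proof routes through the infrared-truncated dressed operator $H_{N,K}^{\mathrm D}$ of \cite{GW2018}: for the dressing restricted to $|k|\ge K$ one has $\|\omega^{-1/2}kB_{K,0}\|_{L^2}^2=O(K^{-1})$, so the relative bound is $<1$ for $K$ large and KLMN applies there; the remaining low-momentum dressing (generated by $\hat\Phi(iB_x\id_{|k|\le K})$ with a compactly supported, hence genuinely $L^2$, form factor) is then a unitary that preserves the free form domain, and conjugating $H_{N,K}^{\mathrm D}$ by it produces the operator of the lemma without any further smallness requirement. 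To repair your proof you would either have to reproduce this two-scale argument or supply a sharper form bound with constant $<1$ for the untruncated kernels, which is not known to hold.

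A smaller point in (c): strong resolvent convergence of $H_N^{\mathrm D,\Lambda}$ to $H_N^{\mathrm D}$ from pointwise convergence of forms requires, in addition to uniform lower form bounds, monotonicity or uniform relative boundedness with constant $<1$; once (a) is fixed along the lines above, this is inherited from the uniform-in-$\Lambda$ estimates, but as written it rests on the same unavailable smallness.
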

\begin{proof}
  Formally, this follows from the definition of the Weyl operators and a
  direct computation.  The precise statement is a corollary to the original
  construction of the renormalized Nelson Hamiltonian~\cite{nelson}, refined
  in~\cite{GW2018}. There, one considers the operator $H_{N,K}^\mathrm{D}$
  related to a dressing transformation with an infrared cutoff $K$ (as in
  Proposition~\ref{proposition:class of suitable initial states}). This is
  used to bound the interaction terms relative to the form of the
  non-interacting operator with bound less than one, for $K$ sufficiently
  large (see~\cite[Thm. 3.3]{GW2018}). Transforming $H_{N,K}^\mathrm{D}$ with
  the dressing transformation on momenta below $K$ gives the formula
  above. This does not change the form domain by~\cite[Thm. 4.1,
  Lem. C.4]{GW2018}.
\end{proof}

\subsection{Mean-field approximation of the dressed
  dynamics\label{sec:dressed-dyn-classical}}

Given the dressed Nelson Hamiltonian $H_N^\mathrm{D}$, one can derive an
associated mean-field energy by projecting onto states of the product-like
form~\eqref{eq:many-body initial states} with given one-particle functions
$u,\alpha$.  The dressed mean-field equations, the Hamiltonian equations
associated with this energy, take the form
\begin{align}\label{eq:SKG dressed}
  \begin{cases}
    \begin{aligned}
      i \partial_t u_t (x)
      &=  h_{u_t,\alpha_t} u_t(x)
      \\[1.5mm]
      i \partial_t \alpha_t(k) &  = \omega (k)\alpha_t + 2 \langle u_t, k B_{(\cdot)}(k) (-i\nabla +  F_{\alpha_t } )u_t \rangle \\[1.5mm]
      (u_0,\alpha_0 ) & = (u,\alpha)
    \end{aligned}
  \end{cases}
\end{align}
where
\begin{subequations}
  \begin{align}\label{eq:def:dressed:meanfield:Ham}
    h_{u,\alpha } & = -\Delta +   A_{\alpha } +   (F_{\alpha})^2 + V \ast|u|^2  -\mu_{u,\alpha } ,   \\[1mm]
    A_{\alpha,x} & =   2(- i\nabla_x)  \scp{kB_x }{\alpha} +  2 \overline{\scp{kB_x}{\alpha}} (- i\nabla_x) , \label{eq:def:A_alpha}\\[1mm]
    F_\alpha(x) & =  2 \Re \scp{k B _x}{\alpha} , \label{eq:def:F_alpha} \\[1mm]
    \mu_{u,\alpha }&=  \tfrac{1}{2} \scp{u }{ V  \ast |u|^2 u } +   \Re \scp{\alpha }{ f_{u} } +  \Re \scp{\alpha }{g_{u,\alpha} } , \label{eq:def:dressed:mu}\\[2mm]
    f_{u   }(k) & = 2   \scp{u}{ k B_{(\cdot)}(k)   (-i \nabla  ) u} , \label{eq:def:f}\\[1.5mm]
    g_{u ,\alpha }(k) & = 2   \scp{ u }{ k B_{(\cdot)}(k) F_{\alpha } u}  . \label{eq:def:g}
  \end{align}
\end{subequations}
We denote the associated flow by $\mathfrak {s}^{\rm D}[t](u,\alpha) =
(u_t,\alpha_t)$, that is $(u_t,\alpha_t)$ solves \eqref{eq:SKG dressed} with
initial conditions $(u_t,\alpha_t)|_{t=0} = (u,\alpha)$ (existence of this
flow is the special case $\theta=1$ of
Lemma~\ref{lem:theta-mean-field-flow}).

In the next statement we compare the evolution generated by $H_N^{\rm D}$
with the dressed mean-field flow $\mathfrak s^{\rm D}[t] $. To this end, we
recall \eqref{eq:definition of beta} for the definition of the functional
$\beta$ and
\begin{align}
  & \gamma \left[ \Psi_N, ( u ,\alpha ) \right]  =  \norm{ \nabla_1 (q_u)_1  \Psi_N}^2  +   N^{-1} \|\text{d} \Gamma_a(\omega)^{1/2} W^*(\sqrt{N} \alpha)\Psi_N\|^2.
\end{align}
Essentially, $\gamma$ is the mean kinetic energy of particles outside the
condensate state $u$ and field modes outside of the coherent state
$W(\sqrt{N} \alpha) \Omega$. Also note that by Lemma
\ref{lem:H:D:representation} we have $ e^{- i t H_N } \Psi_N = (W^{\rm D})^*
e^{- i t H_N^{\rm D} } W^{\rm D} \Psi_N$, which explains why we now consider
initial states of the form $W^{\rm D} \Psi_N $.

The following statement is the main result of this section.
\begin{theorem}
  \label{thm:gross-transformed dynamics reduced density matrices}
  Let $(u,\alpha) \in H^3 \oplus \mathfrak h_{5/2}$ with $\norm{u}_\2 = 1$
  and let $ \mathfrak s^{\rm D}[t](u,\alpha) = ( u_t , \alpha_t ) $ denote
  the solution to \eqref{eq:classical:transformed:equations} for initial
  conditions $(u,\alpha)$. There exists a constant $C>0$ such that for all
  $\Psi_N \in D(H_N^{1/2})$ with $\norm{\Psi_N}=1$, $N\ge 1$, and $t \in \R$,
  we have
  \begin{multline*}
    \beta \big[ e^{- i t H_N^{\rm D} } W^{\rm D} \Psi_N , \mathfrak s^{\rm D}[t](u,\alpha)  \big]\\[1mm]
    \leq  e^{C R^{\rm D}(t)} \big( \beta \big[ W^{\rm D} \Psi_N , ( u , \alpha ) \big]
    + \gamma \big[ W^{\rm D} \Psi_N , (u , \alpha ) \big]
    + N^{-1} \big)
  \end{multline*}
  where $R^{\rm D}(t) = 1 + \int_0^{|t|} \| u_s\|^2_{\Sob{3}}
  (1+\norm{\alpha_s}_{\h{3/2}})^2 ds $.
\end{theorem}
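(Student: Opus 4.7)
The proof translates the question about reduced densities into an estimate on moments of the number operator via the excitation map. Setting $\Phi_N(t) = e^{-itH_N^{\rm D}} W^{\rm D}\Psi_N$ and $\chi_N(t) = X_{u_t,\alpha_t}\Phi_N(t)$, identities \eqref{eq:beta:second:quant}--\eqref{eq:gamma:second:quant} give
\begin{equation*}
\beta[\Phi_N(t),(u_t,\alpha_t)] = \tfrac{1}{N}\langle \chi_N(t),\mathcal{N}\chi_N(t)\rangle, \qquad \gamma[\Phi_N(t),(u_t,\alpha_t)] = \tfrac{1}{N}\langle \chi_N(t),\mathbb{T}\chi_N(t)\rangle.
\end{equation*}
Since $X_{u_t,\alpha_t}$ is unitary onto $\mathcal{F}_{\perp u_t}^{\leq N}\otimes \mathcal{F}$, the fluctuation vector $\chi_N(t)$ satisfies a non-autonomous Schr\"odinger equation $i\partial_t\chi_N(t) = \mathcal{L}_N(t)\chi_N(t)$ with generator
\begin{equation*}
\mathcal{L}_N(t) \;=\; \bigl[i\partial_t X_{u_t,\alpha_t}\bigr] X_{u_t,\alpha_t}^* + X_{u_t,\alpha_t}H_N^{\rm D}X_{u_t,\alpha_t}^*,
\end{equation*}
which can be computed explicitly by conjugating the six terms of $H_N^{\rm D}$ in Lemma~\ref{lem:H:D:representation} successively with $W(\sqrt N\alpha_t)$ (using \eqref{eq: Weyl operators shift property}) and with the particle part of $X_{u_t,\alpha_t}$ (using Lemma~\ref{lemma:properties U}).

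\textbf{Structure of $\mathcal{L}_N(t)$ and cancellations.} After this conjugation, $\mathcal{L}_N(t)$ decomposes into: (i) a number-preserving part containing $\mathrm{d}\Gamma_b(h_{u_t,\alpha_t})+\mathrm{d}\Gamma_a(\omega)$ together with quadratic, Bogoliubov-type interactions of the form $b^*A b+b Ab+\text{h.c.}$ and $b^*a(\cdot), b^*a^*(\cdot)$; (ii) linear-in-$b^{(*)}, a^{(*)}$ terms proportional to $[N-\mathcal{N}_b]_+^{1/2}/\sqrt N$ or $1/\sqrt N$; and (iii) $N^{-1}$-small quartic remainders. The crucial point is that the contributions to (ii) that are \emph{of leading order} in $N$ cancel against the generator $[i\partial_t X_{u_t,\alpha_t}]X_{u_t,\alpha_t}^*$ precisely because $(u_t,\alpha_t)$ solves the dressed mean-field equations \eqref{eq:SKG dressed}; this is the defining feature of the mean-field ansatz and the reason the dressed SKG equations \eqref{eq:SKG dressed} (rather than the undressed ones) are the correct classical equations at this stage.

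\textbf{Simultaneous Gr\"onwall on $\beta+\gamma$.} Writing $\beta_N(t)$, $\gamma_N(t)$ for the two functionals along the dynamics, a direct computation together with the cancellations above gives
\begin{align*}
\bigl|\partial_t \beta_N(t)\bigr| &\le C(t)\bigl(\beta_N(t)+\gamma_N(t)+N^{-1}\bigr),\\
\bigl|\partial_t \gamma_N(t)\bigr| &\le C(t)\bigl(\beta_N(t)+\gamma_N(t)+N^{-1}\bigr),
\end{align*}
where $C(t)\le C\|u_t\|_{H^3}^2(1+\|\alpha_t\|_{\mathfrak{h}_{3/2}})^2$. The expectations of $b^*b$, $b^{(*)}a^{(*)}$, $[\nabla,b^*]b$, etc., appearing in the commutators $[\mathcal{L}_N(t),\mathcal{N}]$ and $[\mathcal{L}_N(t),\mathbb{T}]$ are controlled by Cauchy--Schwarz against $\mathcal{N}^{1/2}\chi_N$ and $\mathbb{T}^{1/2}\chi_N$; the form factors $kB_x$, $f_{u_t}$, $g_{u_t,\alpha_t}$ entering the coefficients are bounded in suitable weighted $L^2$ spaces using $(u_t,\alpha_t)\in H^3\oplus \mathfrak{h}_{5/2}$, which is preserved by the SKG flow by Proposition~\ref{prop:skg well posed}. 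Adding the two inequalities and applying Gr\"onwall to $f(t):=\beta_N(t)+\gamma_N(t)+N^{-1}$ yields $f(t)\le e^{CR^{\rm D}(t)}f(0)$, which is the claim since $\gamma_N$ on the left can be dropped.

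\textbf{Main obstacle.} The bulk of the work is the estimate on $\partial_t \gamma_N(t)$: the dressed Hamiltonian contains the symmetric gradient coupling $\hat A_{x}$ and the quadratic $a^*(kB_x)^2 + 2a^*(kB_x)a(kB_x)+a(kB_x)^2$, whose commutators with $\mathbb{T}=\mathrm{d}\Gamma_b(-\Delta)+\mathrm{d}\Gamma_a(\omega)$ produce terms that are only form-bounded, not operator-bounded, by $\mathbb{T}+\mathcal{N}$. Controlling these by appropriate weighted norms of $kB_x$ (which require the weight $\omega^{5/2}$ on $\alpha_t$ at a certain step, while the final coefficient only involves $\mathfrak{h}_{3/2}$) is the technical heart of the argument. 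Once these form-bounds are established, closing Gr\"onwall is routine, and the renormalization constants from \eqref{eq:def:ren:Nelson} play no role here because $H_N^{\rm D}$ is manifestly well-defined as a form.
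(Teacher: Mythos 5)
Your reduction to the excitation dynamics and the identification of the cancellations between the time-derivative of the excitation map and the linear-in-$b^{(*)},a^{(*)}$ terms (which is indeed the reason the dressed SKG flow is the correct reference dynamics) match the paper. However, the core of your proposed Gr\"onwall argument does not close, and the gap is not merely technical.

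You propose to run Gr\"onwall simultaneously on $\beta_N(t)+\gamma_N(t)=\tfrac1N\langle\chi_N(t),(\mathcal N+\mathbb T)\chi_N(t)\rangle$, which requires bounding
\begin{equation*}
\tfrac1N\big|\langle\chi_N(t),\,i[\mathcal L_N(t),\mathbb T]\,\chi_N(t)\rangle\big| \;\leq\; C(t)\big(\beta_N(t)+\gamma_N(t)+N^{-1}\big).
\end{equation*}
But the commutator $[\mathbb T,\mathcal L_N(t)]$ is \emph{not} form-bounded by $\mathbb T+\mathcal N$, contrary to what you assert in your last paragraph. Commuting $\mathrm{d}\Gamma_a(\omega)+\mathrm{d}\Gamma_b(-\Delta)$ against, say, the $a^*b^*$ kernel $(q_{u_t}L_{\alpha_t}(k)u_t)(x)$ (which descends from the $\hat A_x$ coupling) reproduces, via $(k^2+\omega)B_x=G_x$, a kernel whose $k$-dependence is $\sim kG_0(k)\sim|k|^{1/2}$ at large $|k|$. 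Absorbing this with the pairing from Lemma~\ref{lem:ops-Fock} would require $\mathrm{d}\Gamma_a(\omega^{s})$ with $s>3$ on one side, far beyond what $\mathbb T$ provides. No weighted estimate on $kB_x$ or extra regularity $\alpha_t\in\mathfrak{h}_{5/2}$ cures this: the divergence is in the $k$-integral of the interaction kernel, not in the smoothness of the mean-field data. This is precisely the class of divergence that the dressing is designed to tame in $H_N^{\rm D}$ \emph{itself}, but the dressing does not make $[\mathbb T,H_N^{\rm D}]$ any better behaved.

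The paper's proof avoids this commutator entirely. It uses the form-comparability $\pm(H^{\rm D}_{u,\alpha}(t)-\mathbb T)\leq\tfrac12\mathbb T+C(\mathcal N+1)(1+\mathcal N_b/N)^2$ (Eq.~\eqref{eq:key:bound:1:L(t):new}) to control $\mathbb T$ from above by $2H^{\rm D}_{u,\alpha}(t)+C(\mathcal N+1)$, and then runs Gr\"onwall on the scalar quantity $f(t)=2\langle\chi^{\rm D}(t),H^{\rm D}_{u,\alpha}(t)\chi^{\rm D}(t)\rangle+C\langle\chi^{\rm D}(t),(\mathcal N+1)\chi^{\rm D}(t)\rangle$. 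Its time-derivative involves only $\langle\chi,\dot H^{\rm D}_{u,\alpha}(t)\chi\rangle$ (because $[H^{\rm D}_{u,\alpha}(t),H^{\rm D}_{u,\alpha}(t)]=0$) and $i[\mathcal N,H^{\rm D}_{u,\alpha}(t)]$, both of which \emph{are} form-bounded by $\tfrac12\mathbb T+C\rho(t)(\mathcal N+1)(1+\mathcal N_b/N)^2$ (Eqs.~\eqref{eq:key:bound:3:L(t):new}--\eqref{eq:key:bound:4:L(t):new}). Re-using \eqref{eq:key:bound:1:L(t):new} to convert $\mathbb T$ back to $H^{\rm D}_{u,\alpha}(t)+C\mathcal N$ then closes the Gr\"onwall loop. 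In other words, one never differentiates $\langle\chi,\mathbb T\chi\rangle$; one differentiates $\langle\chi,H^{\rm D}_{u,\alpha}(t)\chi\rangle$ and recovers $\mathbb T$-control only at the endpoints. Your proof as stated does not contain this idea, and without it the $\gamma_N$ equation is genuinely uncontrollable.
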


To prepare the proof of the theorem, we introduce the fluctuation generator
associated with $e^{- i t H_N^{\rm D} }$ and $\mathfrak s^{\rm
  D}[t]$. Recalling the definition of the excitation map \eqref{eq:action of
  the unitary}, and fixing $\Psi_N \in \mathcal H_N$, $(u,\alpha)\in
H^3\oplus \mathfrak h_{5/2}$, we consider the fluctuation vector
\begin{align}\label{eq:fluctuation:vector}
  \chi^{\rm D}(t) := X_{\mathfrak s^{\rm D}[t](u,\alpha)} e^{-i t H_N^{\rm D} } W^{\rm D} \Psi_N.
\end{align}
A simple computation shows that $\chi^{\rm D}(t)$ satisfies the equation
$\label{eq:fluctuation:equation} i\partial_t \chi^{\rm D}(t) = H^{ {\rm D},
  \le N}_{u,\alpha}(t) \chi^{\rm D}(t)$ with
\begin{align}
  \label{eq:fluctuation:generator:2}
  H^{{\rm D}, \le N }_{u,\alpha}(t)  & = i \dot X_{\mathfrak s^{\rm D}[t](u,\alpha)} ( X_{\mathfrak s^{\rm D}[t](u,\alpha)} )^* + X_{\mathfrak s^{\rm D}[t](u,\alpha)} H_N^{\rm D}  ( X_{\mathfrak s^{\rm D}[t](u,\alpha)} ) ^*. 
\end{align} 
Note that $H^{{\rm D}, \le N }_{u,\alpha}(t) $ maps $ \cF^{\le N}_{\perp u_t}
\otimes \cF$ into $ \cF^{\le N} \otimes \cF$, but for convenience, we will
write it as the restriction of a symmetric operator $H^{\rm D}_{u,\alpha}(t)
: \cF \otimes \cF \rightarrow \cF \otimes \cF $, that is
\begin{align}\label{eq:generator:restriction}
  H^{{\rm D}, \le N}_{u,\alpha}(t) = H_{u,\alpha}^{\rm D} (t) \restriction \cF^{\le N}_{\perp u_t} \otimes \cF.
\end{align}
The explicit expression for $H_{u,\alpha}^{\rm D}(t)$ is given in Section
\ref{sec:dressed:fluc:generator}.  The fluctuation vector $\chi^{\rm D}(t)$
then satisfies the Schr\"odinger type equation
\begin{align}\label{eq:fluctuation equation}
  \begin{cases}
    \begin{aligned}
      i \partial_t \chi^{\rm D}(t) &=  H^{\rm D}_{u,\alpha}(t) \chi^{\rm D}(t) \\[1mm]
      \chi^{\rm D}(0) & = X _{u,\alpha} W^{\rm D}  \Psi_N.
    \end{aligned}
  \end{cases}
\end{align}
In Lemma \ref{lem:bounds:fluc:generator} we shall prove the following bounds
\begin{subequations}
  \begin{align}
    \pm \big( H^{\rm D}_{u,\alpha}(t) - \mathbb T \big) & \le \tfrac{1}{2}\mathbb T + C  (\mathcal N+ 1) (1 + \tfrac{1}{N}\mathcal N_b )^2  ,  \label{eq:key:bound:1:L(t):new}\\[1.5mm]
    \pm i [\mathcal N ,  H^{\rm D}_{u,\alpha}(t)  ]  & \le  \tfrac{1}{2} \mathbb T + C  (\mathcal N+ 1) (1 + \tfrac{1}{N}\mathcal N_b )^2   , \label{eq:key:bound:3:L(t):new}\\[1.5mm]
    \pm \tfrac{d}{dt}  H^{\rm D}_{u,\alpha}(t)  & \le  \tfrac{1}{2} \mathbb T  + C \rho(t)  (\mathcal N+ 1) (1 + \tfrac{1}{N}\mathcal N_b )^2 ,\label{eq:key:bound:4:L(t):new}
  \end{align}
\end{subequations}
where $\mathcal{N}$ and $\mathbb{T}$ are defined as in \eqref{eq:total
  kinetic energy and number of particles operator definition}, $\rho(t) =
\norm{u_t}_{\Sob{3}}^2 (1+\norm{\alpha_t}_{\h{3/2}})^2$ and $(u_t,\alpha_t) =
\mathfrak s^{\rm
  D}[t](u,\alpha)$. 
Equipped with these estimates we can now come to the proof of Theorem
\ref{thm:gross-transformed dynamics reduced density matrices}, whose strategy
is inspired by \cite{BS2019,NS2020,L2022}.

\begin{proof}[Proof of Theorem \ref{thm:gross-transformed dynamics reduced
    density matrices}] Consider $(u_t,\alpha_t) = \mathfrak s^{\rm
    D}[t](u,\alpha)$ and the fluctuation vector $\chi^{\rm D}(t)$ given by
  \eqref{eq:fluctuation:vector} for initial states $(u,\alpha)$ and $\Psi_N$
  as stated in the hypothesis.  Note that by definition, $\chi^{\rm D}(t) \in
  \cF^{\le N}_{\perp u_t} \otimes \cF$. Relations
  \eqref{eq:beta:second:quant} and \eqref{eq:gamma:second:quant} imply that
  \begin{subequations}
    \begin{align}
      \beta[ e^{-i t H_N^{\rm D} } W^{\rm D}  \Psi_N, \mathfrak s^{\rm D}[t](u,\alpha) ] & = \tfrac{1}{N} \scp{\chi^{\rm D} (t)}{\mathcal N \chi^{\rm D} (t)}\label{eq:beta:second:quant:D} \\[1mm]
      \gamma[e^{-i t H_N^{\rm D} } W^{\rm D}  \Psi_N , \mathfrak s^{\rm D}[t](u,\alpha) ] &  = \tfrac{1}{N} \scp{\chi^{\rm D} (t)}{\mathbb T \chi^{\rm D} (t)} . \label{eq:gamma:second:quant:D}
    \end{align}
  \end{subequations}
  From \eqref{eq:key:bound:1:L(t):new} and $\id_{\mathcal N_b \le N}
  \chi^{\rm D}(t) = \chi^{\rm D}(t)$ it follows that
  \begin{align}
    & N  \beta[ e^{-i t H_N^{\rm D} } W^{\rm D} \Psi_N ,  \mathfrak s^{\rm D}[t](u,\alpha)   ]    \le \scp{\chi^{\rm D} (t)}{ ( \mathbb T + \mathcal N  ) \chi^{\rm D} (t)} \notag \\[1mm] 
    & \hspace{2cm} \le 2 \scp{\chi^{\rm D} (t)}{ H^{\rm D}_{u,\alpha}(t)  \chi^{\rm D} (t)} + C  \scp{\chi^{\rm D} (t)}{( \mathcal N +1 ) \chi^{\rm D} (t)} =: f(t).
  \end{align}
  We proceed by estimating the time-derivative of $f(t)$ in order to conclude
  via Gr\"onwall's inequality. Using \eqref{eq:fluctuation equation} and with
  the aid of
  \eqref{eq:key:bound:1:L(t):new}--\eqref{eq:key:bound:4:L(t):new}, one
  computes
  \begin{align}
    \big|\dot f(t )  \big| & =  \big| 2 \scp{\chi^{\rm D} (t)}{ (\tfrac{d}{dt} H_{u,\alpha}^{\rm D}(t)  ) \chi^{\rm D} (t)} + C \scp{\chi^{\rm D} (t)}{ i [ H^{\rm D}_{u,\alpha} (t)  , \mathcal N ] \chi^{\rm D} (t)} \big| \notag\\[1mm]
    & \le C \scp{ \chi^{\rm D}  (t) }{ \mathbb T \chi^{\rm D} (t) } + C  \rho(t) \scp{ \chi^{\rm D}   (t) }{ (\mathcal N + 1) \chi^{\rm D} (t) } \notag \\[1mm]
    & \le C \scp{ \chi^{\rm D} (t) }{ H_{u,\alpha}^{\rm D}(t)  \chi^{\rm D} (t) } + C \rho(t) \scp{ \chi^{\rm D}  (t) }{ (\mathcal N + 1) \chi^{\rm D} (t) } \notag\\[1.5mm]
    & \le C \rho(t)  f(t),
  \end{align}
  where we used that $\rho(t) \ge 1 $. Gr\"onwall's inequality thus implies $
  f(t) \le e^{C \int _0^{|t|} \rho(s) ds } f(0)$ and using again
  \eqref{eq:key:bound:1:L(t):new} together with
  \eqref{eq:beta:second:quant:D}, \eqref{eq:gamma:second:quant:D}, we arrive
  at
  \begin{align}
    & \beta[e^{-i t H_N^{\rm D} } W^{\rm D} \Psi_{N},   \mathfrak s^{\rm D}[t](u,\alpha)    ]   \le  C e^{C \int _0^{|t|}  \rho(s) ds } \tfrac{1}{N}\scp{\chi^{\rm D} (0)}{(\mathbb T + \mathcal N + 1) \chi^{\rm D} (0)} \notag\\[1mm]
    & \hspace{1.5cm} =  e^{C R^\mathrm{D}(t) } \big( \beta[ W^{\rm D} \Psi_N , ( u,\alpha) ]  + \gamma[ W^{\rm D}  \Psi_N , ( u,\alpha )] + N^{-1} \big).
  \end{align}
  This completes the proof of Theorem \ref{thm:gross-transformed dynamics
    reduced density matrices}.
\end{proof}

\subsection{Mean-field approximation of the dressing transformation}

The dressing transformation $W^\mathrm{D}$ is generated by an operator that
looks like the interaction term of the Nelson Hamiltonian, but has the
regular form factor $iB_x$. There are thus mean-field equations associated to
the dynamics $\theta \mapsto W^\mathrm{D}(\theta)$, \eqref{eq:definition of
  the quantum dressing group:theta}, as in the case of the (dressed) Nelson
Hamiltonian.  The mean-field equations corresponding to the dressing
transformation are given by
\begin{align}
  \label{eq:classical dressing equations}
  \begin{cases}
    \begin{aligned}
      i \partial_{\theta} u^{\theta}(x) &= \tau_{u^\theta,\alpha^\theta}(x) u^{\theta}(x)
      \\[1mm]
      \partial_{\theta} \alpha^{\theta}(k) &= B_0(k)  \widehat{|u^\theta|^2}(k)  \\[1mm]
      (u^\theta,\alpha^\theta)|_{\theta =0}  & = (u,\alpha)
    \end{aligned}
  \end{cases}
\end{align}
where we introduced
\begin{align}\label{eq:def:tau:Lambda}
  \tau _{u,\alpha}(x) = \widetilde{\phi}_{\alpha}(x) - \frac{1}{2}
  \scp{u}{\widetilde{\phi}_{\alpha} u}, \quad
  \widetilde{\phi}_{\alpha}(x) = 2 \Re \scp{i B_x}{\alpha}.
\end{align}
We denote by $\mathfrak D[\theta]$ the flow corresponding to this equation,
i.e.
\begin{equation}\label{eq:dressing:flow:theta}
  \mathfrak D[\theta](u,\alpha)=(u^\theta, \alpha^\theta),
\end{equation}
where $(u^\theta, \alpha^\theta)$ is the solution to~\eqref{eq:classical
  dressing equations} with initial condition $(u,\alpha)$.  Being the flow of
an autonomous system of equations, we have $\mathfrak D[\theta] \circ
\mathfrak D[-\theta] =1$. For $\theta=1$ we use the shorthand
$\mathfrak{D}:=\mathfrak{D}[1]$.

In fact, $\mathfrak{D}[\theta]$ can be determined explicitly
following~\cite[Lem. III.11]{AF2017}.  Since $\tau_{u^\theta,\alpha^\theta}$
is real, the solution satisfies $|u^\theta|^2 = |u |^2$, and then the
equation for $\alpha$ can be solved for
\begin{equation}
  \alpha^\theta(k)=\alpha(k) + \theta B_0(k)\widehat{|u|^2}(k).
\end{equation}
Since $B_0$ is an even function,
\begin{equation}
  \Re \scp{i B_x}{B_0 \widehat{|u|^2}} = \Im \int dy \widehat{B_0^2}(y-x) |u|^2(y) = 0.
\end{equation}
Hence, we have $\tilde \phi_{\alpha^\theta}=\tilde \phi_\alpha$, and one can
simplify the equations using $ \tau_{u^\theta,\alpha^\theta} =
\tau_{u,\alpha}$.  The system of ordinary differential
equations~\eqref{eq:classical dressing equations} for each $(x,k)$ is then
solved explicitly by
\begin{equation}\label{eq:explicit:solution:dressing}
  (u^\theta, \alpha^\theta):=\mathfrak{D}[\theta](u,\alpha)=\Big(e^{-i\theta \tau_{u,\alpha}} u ,\alpha + \theta B_0(\cdot) \widehat{|u|^2}\Big).
\end{equation}
The flow $\mathfrak D[\theta]$ preserves the relevant spaces of Cauchy data
for the SKG equations (cf. Proposition~\ref{prop:skg well posed}).

\begin{lemma}\label{lem:mf-dressing-regularity}
  Let $n \geq1$ be an integer and $n-1<s<n+1$. There exists $C$ so that for
  all $(u, \alpha)\in H^n(\R^3)\oplus \mathfrak{h}_s$ and $|\theta|\leq 1$
  the functions $(u^\theta, \alpha^\theta)=\mathfrak{D}[\theta](u,\alpha)$
  satisfy
  \begin{align*}
    \|u^\theta\|_{H^n(\R^3)} & \leq C \|u\|_{H^n} \|\alpha\|_{\mathfrak{h}_s}^n \\
    \|\alpha^\theta\|_{\mathfrak{h}_s} &\leq C(\|u\|_{H^n}^2+\|\alpha\|_{\mathfrak{h}_s}).
  \end{align*}
\end{lemma}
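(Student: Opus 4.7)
The plan is to use the explicit solution formula~\eqref{eq:explicit:solution:dressing}, namely
\[
u^\theta = e^{-i\theta\,\tau_{u,\alpha}}\,u, \qquad \alpha^\theta = \alpha + \theta\,B_0\,\widehat{|u|^2},
\]
and reduce both bounds to weighted $L^2$ estimates on the kernel $B_0(k)=1/(\sqrt{\omega(k)}(k^2+\omega(k)))$.

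For the bound on $\alpha^\theta$, I would write $\|\alpha^\theta\|_{\mathfrak{h}_s}\leq \|\alpha\|_{\mathfrak{h}_s}+|\theta|\,\|\omega^s B_0\,\widehat{|u|^2}\|_{L^2}$ and use the elementary splitting $\omega^{2s-1}/(k^2+\omega)^2\leq C(1+\omega^{2s-5})$ to dominate the remaining norm by a combination of $\||u|^2\|_{L^2}$ and $\||u|^2\|_{H^{s-5/2}}$. For $n=1$ the Sobolev exponent $s-5/2<-1/2$ is negative, so it suffices to control $\||u|^2\|_{L^2}=\|u\|_{L^4}^2\leq C\|u\|_{H^1}^2$ via the three-dimensional embedding $H^1\hookrightarrow L^4$. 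For $n\geq 2$, one has $s-5/2<n-3/2<n$, and the $H^n$ algebra property (valid in 3D for $n\geq 2>3/2$) gives $\||u|^2\|_{H^n}\leq C\|u\|_{H^n}^2$, which dominates the other piece.

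For the bound on $u^\theta$, the key observation is that $\tau_{u,\alpha}-\widetilde\phi_\alpha$ is a constant in $x$, so all spatial derivatives of $\tau_{u,\alpha}$ coincide with those of $\widetilde\phi_\alpha$. The Faà di Bruno/Leibniz expansion then gives, for $|\beta|\leq n$,
\[
\partial^\beta u^\theta = e^{-i\theta\,\tau_{u,\alpha}}\sum_{\mu+\gamma_1+\cdots+\gamma_m=\beta,\ |\gamma_j|\geq 1} c_{\mu,\gamma_1,\ldots,\gamma_m}(\theta)\,(\partial^\mu u)\prod_{j=1}^m \partial^{\gamma_j}\widetilde\phi_\alpha.
\]
I would put $\partial^\mu u$ in $L^2$ (bounded by $\|u\|_{H^n}$) and each $\partial^{\gamma_j}\widetilde\phi_\alpha$ in $L^\infty$, using the Cauchy--Schwarz bound
\[
\|\partial^{\gamma_j}\widetilde\phi_\alpha\|_{L^\infty}\leq 2\,\|\omega^{-s}\,k^{|\gamma_j|}\,B_0\|_{L^2}\,\|\alpha\|_{\mathfrak{h}_s}.
\]
Checking the integrand $|k|^{2|\gamma_j|}/(\omega^{2s+1}(k^2+\omega)^2)\sim |k|^{2|\gamma_j|-4s-5}$ at infinity, this weighted $L^2$ norm is finite precisely when $|\gamma_j|<2s+1$, and the hypothesis $s>n-1$ ensures this for all $|\gamma_j|\leq n$. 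Assembling the pieces and summing over $|\beta|\leq n$ produces the asserted bound (with the harmless understanding that the right-hand side should be read as $C\|u\|_{H^n}(1+\|\alpha\|_{\mathfrak{h}_s})^n$, the leading term in $\|\alpha\|_{\mathfrak{h}_s}$ being of degree $n$).

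The only real technical point is the weighted $L^2$ estimate on $\omega^{-s}k^{n}B_0$, which is precisely what dictates the restriction $s>n-1$; no genuinely new idea is required. The upper bound $s<n+1$ plays no role in these estimates and is included to match the $(H^n,\mathfrak{h}_{n-1/2})$ pairing natural to the SKG well-posedness of Proposition~\ref{prop:skg well posed}.
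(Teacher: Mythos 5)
Your proposal is correct and follows essentially the same route as the paper: the explicit solution formula \eqref{eq:explicit:solution:dressing}, a Leibniz expansion of $\partial^\beta(e^{-i\theta\tau_{u,\alpha}}u)$ with the derivatives of $\widetilde\phi_\alpha$ placed in $L^\infty$ via Cauchy--Schwarz against $\|\alpha\|_{\mathfrak{h}_s}$ (the paper's Lemma~\ref{lem:bounds:alpha:phi}), and the bound $\|\omega^s B_0\widehat{|u|^2}\|_{L^2}\leq C\||u|^2\|_{H^{s-5/2}}$ treated separately for $n=1$ and $n\geq 2$. One arithmetic slip: the large-$|k|$ decay of $|k|^{2|\gamma_j|}\omega^{-2s-1}(k^2+\omega)^{-2}$ is $|k|^{2|\gamma_j|-2s-5}$, not $|k|^{2|\gamma_j|-4s-5}$, so the integrability threshold is $|\gamma_j|<s+1$ rather than $|\gamma_j|<2s+1$ --- which is harmless here since $s>n-1$ still covers all $|\gamma_j|\leq n$, and is in fact the correct reason why $s>n-1$ is the sharp requirement, as you assert.
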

\begin{proof}
  We use the explicit form~\eqref{eq:dressing:flow:theta} of $\mathfrak
  D[\theta]$ and some straightforward estimates (some of which will be proved
  in Section~\ref{sec:preliminary:estimates}).  We have for $m \leq n$
  \begin{equation}
    e^{i \theta \tau_{u,\alpha}} \nabla^m u^\theta = (\nabla + i\theta \nabla \tau_{u,\alpha})^m u.
  \end{equation}
  With $\nabla \tau_{u,\alpha}(x) =-2 \Re \langle k B_x, \alpha \rangle$ we
  thus have
  \begin{align}
    \| \nabla^m u^\theta\|_{L^2} &\leq C \sum_{\ell=0}^m \sum_{j_1 +\dots + j_\ell \leq m}\Big\| \Big(\prod_{i=0}^{\ell} \langle |k|^{j_i} B_0, |\alpha| \rangle  \Big) \nabla^{m-\sum j_i} u\Big\|_{L^2} \notag\\
    &\leq C \|u\|_{H^m} \|\alpha\|_{\mathfrak{h}_s}^m,
  \end{align}
  by Lemma~\ref{lem:bounds:alpha:phi} (where we used that $s>n-1$).  For
  $\alpha^\theta$ we have
  \begin{equation}
    \| \alpha^\theta \|_{\mathfrak{h}_s} \leq \| \alpha\|_{\mathfrak{h}_s}  + |\theta| \| B_0(k) \widehat{|u|^2}\|_{\mathfrak{h}_s} \leq  \| \alpha\|_{\mathfrak{h}_s}  + |\theta| \| |u|^2\|_{H^{s-5/2}}.
  \end{equation}
  For $n=1$ we simply bound the last term by $\| u\|_{L^4}^2 \leq C
  \|u\|_{H^1}^2$, and for $n\geq 2$ we use that
  \begin{equation}
    \|u^2\|_{H^{s-5/2}}\leq  \|u^2\|_{H^n} \leq C\|u\|_{H^n}^2.
  \end{equation}
  This proves the claim.
\end{proof}

To connect the statements of Theorems \ref{thm:gross-transformed dynamics
  reduced density matrices} and \ref{theorem:final reduced density matrices},
we make use of the fact that the dressing flow $\mathfrak D$ interpolates
between the SKG flow \eqref{eq:Schroedinger-Klein-Gordon equations regular}
and the dressed mean-field flow \eqref{eq:SKG dressed}. This is a direct
consequence of Lemma \ref{lem:theta-mean-field-flow} for $\theta =1$.

\begin{lemma} \label{lemma:commuting:flow:x} For all $t\in \mathbb R$, we
  have
  \begin{align*}
    \mathfrak s^{\rm D}[t] \circ \mathfrak D = \mathfrak D \circ \mathfrak s[t].
  \end{align*}
\end{lemma}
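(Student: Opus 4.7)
The plan is to prove the identity by direct verification using the explicit form \eqref{eq:explicit:solution:dressing} of $\mathfrak{D}$. Let $(u_t,\alpha_t)=\mathfrak{s}[t](u,\alpha)$ and define
\[
  (\tilde u_t,\tilde\alpha_t) := \mathfrak{D}(u_t,\alpha_t) = \bigl(e^{-i\tau_{u_t,\alpha_t}}u_t,\, \alpha_t + B_0\,\widehat{|u_t|^2}\bigr).
\]
The goal is to show that $(\tilde u_t,\tilde\alpha_t)$ solves the dressed system \eqref{eq:SKG dressed} with initial data $\mathfrak{D}(u,\alpha)$; uniqueness for \eqref{eq:SKG dressed} then yields $(\tilde u_t,\tilde\alpha_t) = \mathfrak{s}^{\rm D}[t]\circ\mathfrak{D}(u,\alpha)$, which is exactly the claim.

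The first step is to differentiate $\tilde u_t$ and $\tilde\alpha_t$ in $t$ using the SKG equations for $(u_t,\alpha_t)$. For the wave function, one computes
\[
  i\partial_t \tilde u_t = (\partial_t \tau_{u_t,\alpha_t})\,\tilde u_t + e^{-i\tau_{u_t,\alpha_t}}\bigl(-\Delta+\phi_{\alpha_t}-\tfrac12\langle u_t,\phi_{\alpha_t}u_t\rangle\bigr)u_t,
\]
and the key algebraic fact is that $\nabla\tau_{u,\alpha}=F_\alpha$ (since $\nabla_x B_x = -ikB_x$ and $F_\alpha(x)=2\Re\langle kB_x,\alpha\rangle$), so that the gauge transformation formula
\[
  e^{-i\tau}(-\Delta)e^{i\tau} = -\Delta - 2iF_\alpha\cdot\nabla + F_\alpha^2 - i(\nabla\cdot F_\alpha)
\]
produces precisely the magnetic-type operator $A_\alpha$ and the potential $F_\alpha^2$ appearing in $h_{u,\alpha}$. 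The remaining scalar pieces — the original $\phi_{\alpha_t}$ from SKG, the contribution of $\partial_t\tau$, and the new $\phi$ coming from the shift of $\alpha_t$ to $\tilde\alpha_t$ — must be rearranged using the defining identity
\[
  V(x) = -4\Re\langle G_x,B_0\rangle + 2\Re\langle \omega B_x,B_0\rangle
\]
to collapse into $V*|\tilde u_t|^2$ together with the chemical-potential constant $\mu_{\tilde u_t,\tilde\alpha_t}$.

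For the field component, differentiating $\tilde\alpha_t = \alpha_t + B_0\widehat{|u_t|^2}$ in $t$ and using both SKG equations gives
\[
  i\partial_t\tilde\alpha_t = \omega\alpha_t + \langle u_t,G_{(\cdot)}u_t\rangle + iB_0\,\partial_t\widehat{|u_t|^2}.
\]
On the right-hand side of \eqref{eq:SKG dressed} one has $\omega\tilde\alpha_t + 2\langle \tilde u_t, kB_{(\cdot)}(-i\nabla+F_{\tilde\alpha_t})\tilde u_t\rangle$. Expanding $\omega\tilde\alpha_t = \omega\alpha_t + \omega B_0\widehat{|u_t|^2}$ and using $|\tilde u_t|=|u_t|$, the task reduces to checking that the ``$kB$-current'' of $\tilde u_t$, the $\omega B_0$-correction, and $\langle u_t,G u_t\rangle$ combine consistently; the second identity defining $V$ again plays the central role, together with the cancellation $iB_0\partial_t\widehat{|u_t|^2}$ which via the SKG equation for $u_t$ produces the gradient-type contribution $2\langle u_t, kB\cdot(-i\nabla)u_t\rangle$.

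The main obstacle is precisely this bookkeeping: tracking how the gauge conjugation by $e^{-i\tau}$ and the shift $\alpha\mapsto \alpha+B_0\widehat{|u|^2}$ simultaneously conspire to turn the singular SKG nonlinearity $\phi_\alpha + \langle u,G_{(\cdot)} u\rangle$ into the regularized dressed nonlinearity $A_\alpha+F_\alpha^2+V*|u|^2-\mu_{u,\alpha}$. It is this cancellation, reflected in the definition \eqref{eq:definition of V} of $V$, that mirrors on the classical side the operator-level dressing identity of Lemma~\ref{lem:H:D:representation}. An alternative, perhaps cleaner, route would be to introduce the one-parameter family of partially dressed mean-field energies and their flows $\mathfrak{s}^\theta[t]$, and prove the stronger identity $\mathfrak{s}^\theta[t]\circ\mathfrak{D}[\theta]=\mathfrak{D}[\theta]\circ\mathfrak{s}[t]$ by differentiating in $\theta$ and matching the infinitesimal dressing generator; setting $\theta=1$ then yields the lemma, which is the strategy suggested by the reference to Lemma~\ref{lem:theta-mean-field-flow}.
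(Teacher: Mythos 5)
Your main route (direct verification) is sound, and most of the algebraic ingredients you identify are the right ones: $\nabla\tau_{u,\alpha}=F_\alpha$, the gauge conjugation $e^{-i\tau}(-\Delta)e^{i\tau}=(-i\nabla+F_\alpha)^2$ producing $A_\alpha+F_\alpha^2$, the invariance $|u^\theta|=|u|$, and the role of the definition of $V$. The paper, however, does not argue this way: it proves the stronger statement $\mathfrak{s}_\theta[t]:=\mathfrak{D}[\theta]\circ\mathfrak{s}[t]\circ\mathfrak{D}[-\theta]$ is the Hamiltonian flow of $\mathcal{E}_\theta=\mathcal{E}_0\circ\mathfrak{D}[-\theta]$ (Lemma~\ref{lem:theta-mean-field-flow}), by checking that $\mathfrak{D}[\theta]$ is a symplectic diffeomorphism and invoking the abstract transformation law for Hamiltonian flows from~\cite[Lem.~6.9]{DBGRN}, and then sets $\theta=1$ and observes that the resulting Hamiltonian equations coincide with~\eqref{eq:SKG dressed}. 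This replaces the term-by-term PDE bookkeeping you outline with a one-time ``direct calculation'' that the energy functional transforms correctly (cf.~\cite[Prop.~III.12]{AF2017}), plus a domain check — a considerably cleaner route, and the one you yourself point to as an alternative at the end.

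One imprecision in your sketch is worth flagging: you say that ``the second identity defining $V$ again plays the central role'' in the field equation. In fact $V$ does not enter the $\alpha$-equation at all; the cancellations there are governed by the elementary relation $(k^2+\omega)B_0=G_0$ (so $\omega B_0=G_0-k^2B_0$), the continuity equation $\partial_t|u_t|^2=-\nabla\cdot j_t$ induced by the gauge-invariant Schr\"odinger evolution, and — usefully — the observation that $F_{\tilde\alpha_t}=F_{\alpha_t}$ since $F_{B_0\widehat{|u|^2}}=0$ (the analogue of the statement $\widetilde\phi_{\alpha^\theta}=\widetilde\phi_\alpha$ that the paper establishes just before \eqref{eq:explicit:solution:dressing}). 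With that last identity, the dressed coupling $2\langle\tilde u_t,kB(-i\nabla+F_{\tilde\alpha_t})\tilde u_t\rangle$ reduces immediately to $2\langle u_t,kB(-i\nabla)u_t\rangle$ by undoing the gauge factor, after which the remaining matching is just the $k^2 B_0$ relation together with the continuity equation. So your direct route does close, but the formula for $V$ is only relevant in the $u$-equation, where the scalar pieces from $\phi_{\alpha_t}$, $\partial_t\tau$, and the shift of $\alpha$ recombine.
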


The next lemma is an analogue to Theorem~\ref{thm:gross-transformed dynamics
  reduced density matrices} for the dynamics $W^\mathrm{D}(\theta)$ in
``time'' $\theta$.

\begin{lemma}
  \label{lemma:reduced densities dressing flow}
  There exists a constant $C>0$ such that for all $(u,\alpha) \in H^1(\mathbb
  R^3) \oplus \mathfrak h_0$ with $\norm{u}_\2 = 1 $, $\Psi_N \in \mathcal
  H_N$ with $\norm{\Psi_N} = 1$, $N\ge 1$, and $|\theta| \le 1$, we have
  \begin{align*}
    \beta \big[W^{\rm D}(\theta)\Psi_N, \mathfrak D [\theta] ( u , \alpha ) \big] & \le e^{C(\| u \|_{H^1}^2 + \| \alpha\|_{\mathfrak h_0} ) } \big( \beta \big[   \Psi_N,  (u,\alpha) ] + N^{-1}\big) .
  \end{align*}
\end{lemma}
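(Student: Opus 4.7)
\emph{Strategy.} I would adapt the approach of Theorem~\ref{thm:gross-transformed dynamics reduced density matrices} to the unitary evolution $W^{\rm D}(\theta)$, viewing $\theta$ as a ``time'' and the self-adjoint operator $\mathcal A := N^{-1/2}\sum_{j=1}^N \hat\Phi(iB_{x_j})$ as its generator. Define the fluctuation vector
\begin{equation*}
\xi^\theta := X_{\mathfrak D[\theta](u,\alpha)}\, W^{\rm D}(\theta)\, \Psi_N \in \cF_{\perp u^\theta}^{\le N}\otimes \cF,
\end{equation*}
with $(u^\theta,\alpha^\theta)=\mathfrak D[\theta](u,\alpha)$ given explicitly by~\eqref{eq:explicit:solution:dressing}. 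By identity~\eqref{eq:beta:second:quant} one has $N\beta[W^{\rm D}(\theta)\Psi_N, \mathfrak D[\theta](u,\alpha)]=\langle \xi^\theta, \mathcal N \xi^\theta\rangle$, and $\xi^0=X_{u,\alpha}\Psi_N$. A direct differentiation yields $i\partial_\theta \xi^\theta = \mathcal L(\theta)\xi^\theta$ with
\begin{equation*}
\mathcal L(\theta) = i(\partial_\theta X_{u^\theta,\alpha^\theta})X^*_{u^\theta,\alpha^\theta} + X_{u^\theta,\alpha^\theta}\,\mathcal A\, X^*_{u^\theta,\alpha^\theta}.
\end{equation*}

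Next I would expand $X\mathcal A X^*$ using Lemma~\ref{lemma:properties U}---splitting $\hat\Phi(iB_x)$ in the particle variable along $\id=|u^\theta\rangle\langle u^\theta|+q_{u^\theta}$, and applying the shift rule for the field operators---which decomposes it into a $c$-number piece of order $N$, a linear piece in $(b,b^*,a,a^*)$ of order $\sqrt N$ (modulated by factors $[N-\mathcal N_b]_+^{1/2}/\sqrt N$), and bilinear terms of type $b^\sharp a^\sharp$ of order $1$. The Fock-space derivative $i(\partial_\theta X)X^*$ is computed analogously in terms of $\dot u^\theta$ and $\dot\alpha^\theta$. The crucial structural observation is that the dressing equations~\eqref{eq:classical dressing equations} are precisely the Hamilton equations associated with the mean-field energy $(u,\alpha)\mapsto N^{-1}\langle u^{\otimes N}\otimes W(\sqrt N\alpha)\Omega, \mathcal A\, u^{\otimes N}\otimes W(\sqrt N \alpha)\Omega\rangle$; this forces the $O(N)$ scalar and $O(\sqrt N)$ linear contributions in $\mathcal L(\theta)$ to cancel, leaving only remainders of order $\mathcal N_b/\sqrt N$ coming from $[N-\mathcal N_b]_+^{1/2}/\sqrt N -1 = O(\mathcal N_b/N)$. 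The surviving off-diagonal-in-$\mathcal N$ part consists of bilinear operators with form factors of the type $q_{u^\theta} B_{(\cdot)}(k)u^\theta$, bounded in $L^2_{x,k}$ by $\|B_0\|_{L^2}$; notably, no derivatives of $u^\theta$ enter, since $\mathcal A$ is a multiplication operator in the particle variables. This square-integrability of $B_x$ is the decisive simplification compared to the Nelson case and removes the need for any auxiliary kinetic-energy functional such as $\gamma$.

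Standard creation/annihilation bounds then yield
\begin{equation*}
\pm\, i[\mathcal L(\theta), \mathcal N] \le C\bigl(1+\|\alpha^\theta\|_{L^2}\bigr)(\mathcal N+1) + C N^{-1}(\mathcal N+1)^2,
\end{equation*}
and the second term is absorbed into the first on $\cF^{\le N}_{\perp u^\theta}\otimes \cF$, where $\mathcal N_b \le N$. From~\eqref{eq:explicit:solution:dressing} and the Sobolev embedding $H^1(\R^3)\hookrightarrow L^4(\R^3)$ one obtains $\|\alpha^\theta\|_{L^2}\le \|\alpha\|_{L^2}+|\theta|\,\|B_0\|_{L^\infty}\|u\|_{L^4}^2\le \|\alpha\|_{L^2}+C\|u\|_{H^1}^2$ uniformly for $|\theta|\le 1$. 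Gr\"onwall's inequality applied to $\theta\mapsto \langle \xi^\theta,(\mathcal N+1)\xi^\theta\rangle$ and division by $N$ then yield the claim. The main technical obstacle is verifying the aforementioned cancellation of the $O(N)$ and $O(\sqrt N)$ terms in $\mathcal L(\theta)$, which requires careful bookkeeping of the Fock-space derivative $i(\partial_\theta X)X^*$ and its matching with $X\mathcal A X^*$; the $N^{-1}$ in the statement is the quantitative trace of the discrepancy $[N-\mathcal N_b]_+^{1/2}/\sqrt N -1$ that remains after these cancellations.
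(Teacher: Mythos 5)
Your proposal is correct and follows essentially the same route as the paper: define the fluctuation vector $\zeta(\theta)=X_{\mathfrak D[\theta](u,\alpha)}W^{\rm D}(\theta)\Psi_N$, identify $\beta$ with $N^{-1}\langle\zeta(\theta),\mathcal N\zeta(\theta)\rangle$, compute the fluctuation generator $D_{u,\alpha}(\theta)$ (whose $O(N)$ and $O(\sqrt N)$ contributions cancel thanks to the mean-field dressing equations), bound $\pm i[\mathcal N, D_{u,\alpha}(\theta)]$ by $C(\|u\|_{H^1}^2+\|\alpha\|_{L^2})(\mathcal N+1)$ up to a residual of order $(\mathcal N_b/N)^{1/2}(\mathcal N+1)$ that is absorbed on the range $\mathcal N_b\le N$, and conclude by Grönwall. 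Your remark that the $L^2$-integrability of $B_x$ (rather than $\omega^{-1/2}$) is what obviates any kinetic-energy functional is exactly the structural point the paper exploits in its Lemma~\ref{lem:bounds:Bog:generator:Gross}.
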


To prove the lemma, we introduce, in close analogy to the discussion after
Theorem \ref{thm:gross-transformed dynamics reduced density matrices}, the
fluctuation generator associated with $W^{\rm D}(\theta)$ and $\mathfrak
D[\theta]$. For $\Psi_N\in \mathcal H_N$ and $(u,\alpha) \in H^1(\mathbb R^3)
\oplus \mathfrak h_0$, we consider the fluctuation vector
\begin{align}
  \label{eq:fluctuatoin:dynamics dressing dynamics}
  \zeta (\theta) := {X}_{\mathfrak D[\theta](u,\alpha) } W^{\rm D} (\theta)  \Psi_N.
\end{align}
A short computation shows that $ i \partial_{\theta} \zeta(\theta) = D^{\le
  N} _{u,\alpha} (\theta) \zeta_N(\theta)$ with
\begin{multline}
  D_{u,\alpha}^{\le N}(\theta)  =  X_{\mathfrak D[\theta](u,\alpha)  } \bigg( \tfrac{1}{\sqrt N}\sum_{j=1}^N \hat \Phi ( i B_{x_j} ) \bigg) (X_{ \mathfrak D[\theta](u,\alpha)  })^*
  \\+  i \left( \partial_{\theta} X_{\mathfrak D[\theta](u,\alpha) } \right) (X_{\mathfrak D[\theta](u,\alpha) })^* .
\end{multline}
As before, it is convenient to write $D^{\le N}_{u,\alpha} (\theta) =
D_{u,\alpha}(\theta) \restriction \cF_{\perp u}^{\le N} \otimes \cF$ as the
restriction of a symmetric operator $D_{u,\alpha}(\theta) : \cF \otimes \cF
\to \cF \otimes \cF$. After calculating $D_{u,\alpha}^{\le N}(\theta) $ by a
straightforward application of Lemma \ref{lemma:properties U}, we make the
choice
\begin{align} \label{eq:def:fluc:gen:dressing:trafo}
  D_{u,\alpha}(\theta) & = \textnormal{d}\Gamma_b  \big( \tau_{u,\alpha} \big) + \bigg( \int dx dk \,  \kappa_{u^\theta}( k,x )  a_k^* b_x^*     \big[ 1  - \tfrac{\mathcal N_b}{N}]_+^{1/2}  + \textnormal{ h.c.} \bigg) \notag \\
  &\quad -  \bigg( \int dx dk \,  \kappa_{u^\theta}( - k , x ) a_{ k} b_x^*  \big[ 1  - \tfrac{\mathcal N_b}{N}]_+^{1/2}  + \textnormal{ h.c.} \bigg)
  \nonumber \\
  &\quad + N^{-1/2} \int dx \, b_x^* \Big( q_{u^\theta} \hat \Phi (i B_x) q_{u^\theta} - \scp{u^{\theta}}{\hat \Phi (i B_{(\cdot)}) u^{\theta}}
  \Big) b_x,
\end{align}
where $\tau_{u,\alpha}$ is defined by \eqref{eq:def:tau:Lambda} and
\begin{equation}\label{eq:def:kappa:new}
  \kappa_{u}(k,x)=\big( q_u i B_{(\cdot)}(k)u \big)(x).
\end{equation}

\begin{proof}[Proof of Lemma \ref{lemma:reduced densities dressing flow}]
  Consider the fluctuation vector $\zeta( \theta ) $ given by
  \eqref{eq:fluctuatoin:dynamics dressing dynamics}. Using
  \eqref{eq:beta:second:quant}, we can express the relevant $\beta$
  functional as
  \begin{align}
    \beta[ W^{\rm D} (\theta) \Psi_N , \mathfrak D[\theta](u,\alpha)  ] & = \tfrac{1}{N} \scp{\zeta (\theta)}{\mathcal N \zeta(\theta)}.\label{eq:beta:second:quant:theta:new} 
  \end{align}
  We use $i\partial_\theta \zeta(\theta) = D_{u,\alpha}(\theta)
  \zeta(\theta)$, with $D_{u,\alpha}(\theta)$ given by
  \eqref{eq:def:fluc:gen:dressing:trafo}, in combination with the commutator
  bound (which is stated precisely and proved in Lemma
  \ref{lem:bounds:Bog:generator:Gross})
  \begin{align}
    \pm i [\mathcal N ,  D_{u,\alpha}(\theta) ]  & \le  C( \| u\|_{H^1}^2 + \| \alpha \|_{\mathfrak h_0} )    (\mathcal N+ 1) (1 + (\tfrac{1}{N}\mathcal N_b)^{1/2}).   \label{eq:key:bound:3:D(t):new}
  \end{align}
  Together with $\zeta(\theta) = \id_{\mathcal N_b \le N} \zeta(\theta)$,
  this implies
  \begin{align}
    \big| \tfrac{d}{d \theta}  \scp{\zeta(\theta)}{  \mathcal{N}  \zeta(\theta)} \big| &  \le C ( \| u\|_{H^1}^2 + \| \alpha \|_{\mathfrak h_0} )    \scp{ \zeta(\theta) }{(\mathcal N+1)  \zeta(\theta) }
  \end{align}
  and by applying Gr\"onwall's inequality, we obtain for $|\theta | \le 1$
  \begin{align}
    \scp{\zeta (\theta ) }{(\mathcal N+1) \zeta (\theta)}  \le  e^{C|\theta| ( \| u\|_{H^1}^2+ \| \alpha \|_{\mathfrak h_0} ) } \scp{\zeta(0)}{(\mathcal N +1)  \zeta(0)}.
  \end{align}
  In combination with \eqref{eq:beta:second:quant:theta:new} and the fact
  that $\| \zeta(0) \| =1$, we can derive the desired bound.
\end{proof}

As a final preparation for the proof of Theorem \ref{theorem:final reduced
  density matrices}, the following lemma gives an upper bound for the
functional $\gamma$, defined in~\eqref{eq:gamma:functional:def}, when
evaluated for the dressed states $W^{\rm D}\Psi_N$ and $\mathfrak
D(u,\alpha)$ in terms of the energy difference of the microscopic and
mean-field models evaluated in the states $\Psi_N$ and $(u,\alpha)$, without
dressing.

\begin{lemma}\label{lem:beta:gamma:relation} Let $(u,\alpha) \in H^3(\mathbb
  R^3) \oplus \mathfrak h_{5/2}$ with $\norm{u}_\2=1$ and $\mathcal E$ be
  given by \eqref{eq:energy functional Schroedinger-Klein-Gordon equations
    regular}. There exists a constant $C>0$ such that for all $\Psi_N \in D
  (H_N^{1/2})$ with $\norm{\Psi_N} = 1$, $N\ge 1$, we have
  \begin{multline*}\
    \gamma \big[ W^{\rm D} \Psi_N ,   \mathfrak D(u , \alpha )  \big]
    \leq C \Big(\abs{N^{-1} \scp{\Psi_N}{H_N \Psi_N} - \mathcal{E}( u , \alpha)} \\
    +\max_{j=1,2} ( \beta \left[ \Psi_N, ( u ,  \alpha ) \right] + N^{-1} )^{j/2}
    \Big) .
  \end{multline*}
\end{lemma}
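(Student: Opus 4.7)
Set $(u^{\rm D},\alpha^{\rm D}) := \mathfrak D(u,\alpha)$, $X_0 := X_{u^{\rm D},\alpha^{\rm D}}$, and $\chi := X_0 W^{\rm D}\Psi_N \in \cF_{\perp u^{\rm D}}^{\leq N}\otimes \cF$. By identity \eqref{eq:gamma:second:quant} we have $N\,\gamma[W^{\rm D}\Psi_N,\mathfrak D(u,\alpha)] = \langle\chi,\mathbb T\chi\rangle$, so it suffices to bound $\langle\chi,\mathbb T\chi\rangle$ by $N$ times the claimed right-hand side. Applying the form comparison \eqref{eq:key:bound:1:L(t):new} at $t=0$ with initial data $(u^{\rm D},\alpha^{\rm D})$, and using $\mathcal N_b\leq N$ on $\chi$ to absorb $(1+\mathcal N_b/N)^2$ into a constant, gives
\[
\langle\chi,\mathbb T\chi\rangle \leq 2\langle\chi, H^{\rm D}_{u^{\rm D},\alpha^{\rm D}}(0)\chi\rangle + C\langle\chi,(\mathcal N+1)\chi\rangle.
\]
The last term equals $CN\beta_D + C$ with $\beta_D := \beta[W^{\rm D}\Psi_N,\mathfrak D(u,\alpha)]$, and Lemma~\ref{lemma:reduced densities dressing flow} (with $\theta=1$, whose prefactor is bounded since $(u,\alpha)\in H^3\oplus\mathfrak h_{5/2}$ implies $\|u\|_{H^1}^2+\|\alpha\|_{\mathfrak h_0}$ is bounded) converts it to $CN(\beta[\Psi_N,(u,\alpha)]+N^{-1})$. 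For the leading term, the definition $H^{\rm D}_{u^{\rm D},\alpha^{\rm D}}(0) = X_0 H_N^{\rm D} X_0^* + i\dot X_0 X_0^*$ together with $X_0^*\chi = W^{\rm D}\Psi_N$ and Lemma~\ref{lem:H:D:representation} yield
\[
\langle\chi, H^{\rm D}_{u^{\rm D},\alpha^{\rm D}}(0)\chi\rangle = \langle\Psi_N, H_N\Psi_N\rangle + \langle\chi, i\dot X_0 X_0^*\chi\rangle,
\]
so adding and subtracting $N\mathcal E(u,\alpha)$ produces the desired energy-difference contribution and reduces the problem to estimating the comoving term.

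\textbf{Main obstacle.} The crux of the proof is to show that
\[
\bigl|\langle\chi, i\dot X_0 X_0^*\chi\rangle + N\mathcal E(u,\alpha)\bigr| \leq CN\bigl((\beta_D+N^{-1})^{1/2} + (\beta_D+N^{-1})\bigr).
\]
My plan is to differentiate $X_{u_t,\alpha_t}$ at $t=0$ along the dressed mean-field flow~\eqref{eq:SKG dressed}, producing an explicit normal-ordered expression for $i\dot X_0 X_0^*$ with a scalar piece, linear pieces in $a,a^*,b,b^*$ carrying coefficients of order $\sqrt N$ (coming from the Weyl-derivative part $\dot W^*W \sim -\sqrt N(a^*(\dot\alpha^{\rm D})-a(\dot\alpha^{\rm D}))$ and from differentiating the condensate projector), and bounded quadratic pieces of the form $b^*b$, $a^*b$, etc. The scalar equals $\langle\Omega\otimes\Omega, i\dot X_0 X_0^*\Omega\otimes\Omega\rangle$; evaluating the identity of the previous paragraph at $\chi = \Omega\otimes\Omega$—where $X_0^*\Omega\otimes\Omega = (u^{\rm D})^{\otimes N}\otimes W(\sqrt N\alpha^{\rm D})\Omega$ lies in the form domain of $H_N^{\rm D}$—identifies this scalar as $-N\mathcal E^{\rm D}(u^{\rm D},\alpha^{\rm D}) + O(1)$, with $\mathcal E^{\rm D}$ the mean-field functional of $H_N^{\rm D}$ on product states (the $O(1)$ coming from both the subleading terms in $\langle\mathrm{product},H_N^{\rm D}\mathrm{product}\rangle$ and from $\langle\Omega_0, H^{\rm D}_{u^{\rm D},\alpha^{\rm D}}(0)\Omega_0\rangle$, which is $O(1)$ by the normal-ordered structure encoded in \eqref{eq:key:bound:1:L(t):new}). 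The invariance $\mathcal E^{\rm D}\circ \mathfrak D = \mathcal E$—a direct consequence of $\mathfrak D$ being the classical mean-field limit of the unitary $W^{\rm D}$ that conjugates $H_N$ to $H_N^{\rm D}$ (Lemma~\ref{lem:H:D:representation})—then rewrites the scalar as $-N\mathcal E(u,\alpha) + O(1)$. The linear pieces contribute at most $C\sqrt N\cdot\langle\chi,(\mathcal N+1)\chi\rangle^{1/2} = O(N(\beta_D+N^{-1})^{1/2})$ by Cauchy--Schwarz, and the quadratic pieces at most $C\langle\chi,(\mathcal N+1)\chi\rangle = O(N(\beta_D+N^{-1}))$. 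Assembling these estimates, dividing by $N$, converting $\beta_D$ to $\beta[\Psi_N,(u,\alpha)]$ via Lemma~\ref{lemma:reduced densities dressing flow}, and using $(\beta+N^{-1})^{1/2}+(\beta+N^{-1})\leq 2\max_{j=1,2}(\beta+N^{-1})^{j/2}$ finishes the proof.
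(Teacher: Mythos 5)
Your proposal is correct and follows essentially the same route as the paper's proof: both reduce $\gamma$ to $N^{-1}\langle\zeta,\mathbb T\zeta\rangle$ for $\zeta=X_{\mathfrak D(u,\alpha)}W^{\rm D}\Psi_N$, invoke the form bound \eqref{eq:key:bound:1:L(t):new} to control $\mathbb T$ by the fluctuation generator, use the decomposition of $X_{\mathfrak D(u,\alpha)}H_N^{\rm D}X_{\mathfrak D(u,\alpha)}^*$ into $H^{\rm D}_{\mathfrak D(u,\alpha)}(0)$ plus scalar, $O(\sqrt N)$-linear and bounded quadratic corrections (your $i\dot X_0X_0^*$ is exactly the paper's explicit correction terms), identify the scalar via $\mathcal E=\mathcal E_1\circ\mathfrak D$, and convert $\beta_D$ back to $\beta[\Psi_N,(u,\alpha)]$ with Lemma~\ref{lemma:reduced densities dressing flow}. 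The only cosmetic difference is that you identify the scalar piece by vacuum evaluation rather than by the paper's direct computation of $i\dot X_0X_0^*$ in Appendix~\ref{section:Generator of the fluctuation dynamics of the Gross-transformed Nelson dynamics}.
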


For the proof of Lemma~\ref{lem:beta:gamma:relation}, which is given in
Section \ref{section:estimates-dressing}, it is important to note that the
strategy used to prove Theorem \ref{thm:gross-transformed dynamics reduced
  density matrices} does not work. The reason for this is that the operator
$\mathbb T$ is not dominated by the generator $D_{u,\alpha}(\theta)$, and
hence we do not have analogous estimates to
\eqref{eq:key:bound:1:L(t):new}--\eqref{eq:key:bound:4:L(t):new} at our
disposal. Therefore we rely on a different type of energy estimates,
motivated by ideas from \cite{KP2010}.

\subsection{Proof of Theorem \ref{theorem:final reduced density
    matrices}}\label{sect:final reduced density}

Combining the results of the previous two sections, we can prove our first
main theorem.

\begin{proof}[Proof of Theorem \ref{theorem:final reduced density matrices}]
  For $(u,\alpha)\in H^3(\mathbb R^3) \oplus \mathfrak h_{5/2}$, denote by
  \begin{align*}
    (u_t,\alpha_t) = \mathfrak s[t](u,\alpha), \quad (u^{\mathrm{D}},\alpha^{\mathrm{D}}) = \mathfrak D(u,\alpha),
    \quad  (u^{\mathrm{D}}_t,  \alpha^{\mathrm{D}}_t)  =  \mathfrak s^{\rm D}[t] \circ \mathfrak D(u,\alpha)
  \end{align*}
  the solutions of the SKG equations \eqref{eq:Schroedinger-Klein-Gordon
    equations regular}, and the dressed SKG equations \eqref{eq:SKG dressed}
  with initial data transformed by the mean-field dressing
  transformation~\eqref{eq:dressing:flow:theta}. Also recall that by Lemma
  \ref{lemma:commuting:flow:x}
  \begin{align}\label{eq:D1:sD}
    \mathfrak D(u_t,\alpha_t) = \mathfrak s^{\rm D}[t](u^{\mathrm{D}},\alpha^{\mathrm{D}}).
  \end{align}
  Since $W^{\rm D} (-1) W^{\rm D}(1) =\id $ and $\mathfrak D[-1]\circ
  \mathfrak D=\id$, we can use Lemma \ref{lemma:reduced densities dressing
    flow} for $\theta =-1$ and with $\Psi_N \to W^{\rm D} e^{-i t H_N }
  \Psi_N$, and $(u,\alpha) \to \mathfrak D(u,\alpha)$, to get
  \begin{multline}\label{eq:beta:proof:remark}
    \beta[ e^{-i t  H_N }\Psi_N, (u_t, \alpha_t)]
    \\ \leq e^{C(\| u^{\rm D}_t \|_{H^1}^2 + \| \alpha_t^{\rm D}\|_{\mathfrak h_0} ) }  \left( \beta \big[ W^{\rm D} e^{- i t H_N }   \Psi_N , \mathfrak D(u_t , \alpha_t) \big]
      + N^{-1} \right) .
  \end{multline}
  The exponential factor is uniformly bounded in $t$, since by Lemma
  \ref{lem:mf-dressing-regularity} and Proposition \ref{prop:skg well posed}
  \begin{subequations}
    \begin{align}\label{eq:bounds:uD:alpha:D}
      \| u^{\rm D}_t \|_{H^1}^2 & \le C   \| u_t \|_{H^1}^2 \| \alpha_t \|^2_{\mathfrak h_{1/2}} \le C \,,  \\
      \| \alpha^{\rm D}_t \|_{\mathfrak h_0}  &\le C(  \| u_t \|_{H^1}^2 +  \| \alpha_t \|_{\mathfrak h_{1/2}} )\le C .
    \end{align}
  \end{subequations}
  In view of $H_N = (W^{\rm D})^* H_N^{\rm D} W^{\rm D}$ (see Lemma
  \ref{lem:H:D:representation}) and \eqref{eq:D1:sD}, we can proceed using
  Theorem \ref{thm:gross-transformed dynamics reduced density matrices} to
  estimate
  \begin{multline}
    \beta \big[ e^{- i t H_N^{\rm D} } W^{\rm D}  \Psi_N ,  \mathfrak s^{\rm D}[t](u^{\mathrm{D}} , \alpha^{\mathrm{D}}) \big] \\[1mm]
    \leq   e^{C R^{\rm D}(t) } \big( \beta \big[ W^{\rm D}  \Psi_N , (u^{\mathrm{D}}, \alpha^{\mathrm{D}}) \big]
    + \gamma \big[  W^{\rm D}  \Psi_N , (u^{\mathrm{D}} , \alpha^{\mathrm{D}} )  \big]
    + N^{-1} \big),
  \end{multline}
  where $R^{\mathrm{D}}(t) = 1 + \int_0^{|t|}
  \norm{u^{\mathrm{D}}_s}_{H^{3}}^2 (1+
  \norm{\alpha^{\mathrm{D}}_s}_{\h{3/2}})^2 ds $. The $\beta$ functional on
  the right side is estimated with the aid of Lemma \ref{lemma:reduced
    densities dressing flow},
  \begin{align}
    \beta \big[  W^{\rm D}  \Psi_N , \mathfrak D (u,\alpha)  \big]
    &\leq  C\big(  \beta \big[ \Psi_N , (u , \alpha) ] + N^{-1} \big).
  \end{align}
  The $\gamma$ functional is bounded by Lemma \ref{lem:beta:gamma:relation},
  which yields altogether
  \begin{multline}
    \beta[ e^{-i t  H_N }\Psi_N, (u_t, \alpha_t)]
    \leq  e^{C R^{\rm D}(t) } \Big(   \abs{N^{-1} \scp{\Psi_N}{H_N \Psi_N} - \mathcal{E}( u , \alpha)} \\ + \max_{j=1,2}\big(\beta \big[ \Psi_N , (u , \alpha ) \big] +N^{-1} \big] \big)^{j/2} \Big).
  \end{multline}
  It remains to relate the time-dependent pre-factor to the solution of the
  SKG equation.  Using Lemma~\ref{lem:mf-dressing-regularity} and the fact
  that $(u_s^\mathrm{D}, \alpha^{\mathrm{D}}_s)=\mathfrak{D}(u_s, \alpha_s)$
  we have
  \begin{align}
    \|u^{\mathrm{D}}_s\|_{H^{3}} (1+ \norm{\alpha^{\mathrm{D}}_s}_{\mathfrak{h}_{3/2}}) &\leq C \|u_s\|_{H^3} \|\alpha_s\|_{\mathfrak{h}_{5/2}}^3 (1+ \|\alpha_s\|_{\mathfrak{h}_{3/2}} + \|u_s\|_{H^1}^2) .
 %
  \end{align}
  By Proposition \ref{prop:skg well posed} we have $\|u_s\|_{H^1}^2\leq
  C$. Young's inequality then yields
  \begin{equation}\label{eq:RD(t):bound}
    R^{\rm D}(t)  \leq  1+ C\int_0^{|t|}  (\|u_s\|_{H^3}^{10} + \|\alpha_s\|_{\mathfrak{h}_{5/2}}^{10}) ds ,
  \end{equation}
  and this proves the claim.
\end{proof}


\section{Bogoliubov theory and the norm approximation}\label{chap:Bogoliubov}

In this section we study the Bogoliubov approximation of $e^{-i t H_N }$ and
prove Theorem \ref{thm:norm:approximation}.  Our strategy is similar to the
case of the mean-field approximation discussed in
Section~\ref{chap:mean-field}. We start by introducing the dressed Bogoliubov
Hamiltonian $\mathbb{H}^\mathrm{D}_{u,\alpha}(t)$ and the associated Fock
space evolution $\mathbb U^{\rm D}_{u,\alpha}(t)$, which describe the
fluctuations around the mean-field solution for the dressed dynamics. In
Theorem~\ref{thm:norm:approximation:dressed:dynamics} we provide a statement
analogous to Theorem~\ref{thm:norm:approximation} for the dressed case.  In
Section~\ref{sec:dressing:fluctuations} we then study the norm approximation
of the dressing transformation, which is given again in terms of a suitable
Bogoliubov type evolution. This evolution describes the fluctuations with
respect to the mean-field dressing $\mathfrak D[\theta]$. We use the norm
approximation of the dressing to relate the statements of Theorem
\ref{thm:norm:approximation} and
\ref{thm:norm:approximation:dressed:dynamics}. For that purpose, it is
crucial to observe that the Bogoliubov approximation of $W^{\rm D}(\theta)$
in fact interpolates between the dressed and undressed Bogoliubov
evolutions. This is stated in Proposition \ref{prop:id:Bog:ren:new}, whose
proof is given in Section \ref{sec:renormalization}.

\subsection{Norm approximation of the dressed
  dynamics}\label{sec:dressed:fluctuations}

We introduce the Bogoliubov evolution describing the fluctuations associated
with the dressed Nelson Hamiltonian $H_N^{\rm D}$ and the dressed mean-field
equations \eqref{eq:classical:transformed:equations}. To this end, we
consider the quadratic approximation of the fluctuation generator
$H_{u,\alpha}^{\rm D}(t)$ introduced in \eqref{eq:fluctuation:generator:2}
(see Lemma \ref{lem:fluctuation:generator} for the explicit form of $H^{\rm
  D}_{u,\alpha}(t)$). For $(u,\alpha) \in H^3(\mathbb R^3) \oplus \mathfrak
h_{5/2}$ and $(u_t,\alpha_t) = \mathfrak s^{\rm D} [t](u,\alpha)$, we
introduce the quadratic operator acting on $\cF \otimes \cF$ given by
\begin{align}
  \label{eq:HD:Bog definition}
  \mathbb{H}_{u,\alpha}^{\rm D}(t) & = \textnormal{d}\Gamma_b(  h_{t} )+   \mathbb K^{(1)}_{u_t} + \big( \mathbb K^{(2)}_{u_t}  + \text{h.c.} \big)  +  \textnormal{d}\Gamma_a(  \omega  )   \\
  & \ +   \int dx  dk \Big( \big( q_{u_t} L_{ \alpha_t}(k) u_t\big)(x) a^{*}_k b_x^*  +  \big(q_{u_t} L_{ \alpha_t}(k)^* u_t\big)(x) a_{k} b_x^* \Big)
  + \text{h.c.}
  \notag \\
  & \ +  \int dk dl \Big(  -2  M_{u_t}(k,-l) a^*_k a_l + M_{u_t}(k,l) a_k^* a_l^* +   M_{u_t}(-k,-l) a_k a_l \Big)\notag  %
\end{align}
with $h_{t }=h_{u_t,\alpha_t }$ as defined in
\eqref{eq:def:dressed:meanfield:Ham},
\begin{subequations}
  \begin{align} \label{eq:L(k):operator}
    \big(L_{ \alpha}(k)u\big)(x) &=  2 k B _{ x  }(k) \big((-i \nabla +  F_\alpha(x))u\big)(x), \\
    M_u(k,l) & = \scp{u}{k B_{(\cdot)}(k) \cdot l B_{(\cdot)}(l)
      u}, \label{eq:def:M(k,l)}
  \end{align}
  with $F_\alpha$ given by \eqref{eq:def:F_alpha}, and
  \begin{align}
    \mathbb K^{(1)}_{ u} & = \int dx dy \,  K_{   u }^{(1)}(x ,y ) b^*_{x} b_{y} ,\quad
    \mathbb K^{(2)}_{ u } = \frac{1}{2}\int dx dy \, K_{ u }^{(2)}(x ,y) b^*_{x} b^*_{y} \label{eq:def:K(2)}
  \end{align}
  where
  \begin{align}
    K_{ u}^{(1)}= q_u \widetilde K^{(1)}_{ u}q_u , \qquad   \widetilde K_{  u}^{(1)}  (x,y) & = u(x) V (x-y)  \overline{u(y)}, \\
    K_{ u}^{(2)} = (q_u \otimes q_u) \widetilde K^{(2)}_{  u} , \qquad   \widetilde K_{  u}^{(2)}  (x,y) & = u(x) V (x-y) u(y)
  \end{align}
  with $q_u = 1 - |u \rangle \langle u|$ and $V(x)$ defined in
  \eqref{eq:definition of V}.
\end{subequations}

The next proposition on the evolution generated by the operator $\mathbb
H^{\rm D}_{u,\alpha}(t)$ is the special case $\theta=1$, $\Lambda=\infty$ of
Proposition \ref{prop:theta-Bog} below.

\begin{proposition}\label{prop:Bog-D}
  Let $(u,\alpha)\in H^3(\R^3)\oplus \mathfrak{h}_{5/2}$ with
  $\norm{u}_{L^2}=1$ and let $(u_t,\alpha_t ) = \mathfrak s^{\rm
    D}[t](u,\alpha)$ be given by \eqref{eq:SKG dressed}. For every $\Psi \in
  D(( \mathcal{N} + \mathbb{T})^{1/2})$ there exists a unique solution to the
  Cauchy problem
  \begin{equation*}
    \label{eq:U_1:evolution}
    \begin{cases}
      \begin{aligned}
        i\partial_t \Psi(t) & =  \mathbb H^{\rm D}_{u,\alpha}(t)\Psi(t) \\
        \Psi(0)&= \Psi_0
      \end{aligned}
    \end{cases}
  \end{equation*}
  such that $\Psi \in C(\R,\cF \otimes \cF)\cap L_{\rm{\loc}}^{\infty} (\R,
  D(( \mathcal{N} + \mathbb{T})^{1/2}))$. The solution map $\Psi_0 \mapsto
  \Psi(t)$ extends to a unitary $\mathbb{U}_{u,\alpha}^{\rm D}(t)$ on $\cF
  \otimes \cF$ satisfying $\mathbb U^{\rm D}_{u,\alpha}(t) ( \cF_{\perp u}
  \otimes \cF ) \subseteq \cF_{\perp u_t} \otimes \cF$. Moreover, for every
  $\ell \in \mathbb N$ there is a constant $C(\ell)$ such that for all $t\in
  \mathbb R$,
  \begin{align*}
    \mathbb U^{\rm D}_{u,\alpha}(t)^* (\mathbb T + \mathcal N + 1 ) \mathbb U^{\rm D}_{u,\alpha}(t)  & \le  e^{C(1) R^{\rm D}(t)} (\mathbb T + \mathcal N +1), \\[0mm]
    \mathbb U_{u,\alpha}^{\rm D}(t)^* (\mathcal N+1)^\ell \, \mathbb U_{u,\alpha}^{\rm D} (t) & \le e^{C(\ell)  R^{\rm D}(t)} \, (\mathcal N+1)^\ell .
  \end{align*}
  in the sense of quadratic forms on $\cF \otimes \cF$, with $R^{\rm D}(t) =
  1 + \int_0^{|t|} \norm{u_s}_{H^3}^2 (1+ \norm{\alpha_s}_{\mathfrak
    h_{3/2}})^2 ds $.
\end{proposition}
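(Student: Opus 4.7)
The plan is to deduce Proposition~\ref{prop:Bog-D} as the specialization $\theta=1$, $\Lambda=\infty$ of the more general Proposition~\ref{prop:theta-Bog} announced immediately afterwards; my task is thus to sketch the proof of that general statement. The strategy is a regularization scheme: introduce an ultraviolet cutoff $\Lambda$, invoke classical theory for the cutoff propagator, establish uniform-in-$\Lambda$ moment bounds via commutator estimates, and pass to the limit through a Duhamel/Cauchy argument. Concretely, define $\mathbb H^{{\rm D},\Lambda}_{u,\alpha}(t)$ by replacing each occurrence of $kB_x(k)$ in the interaction terms of~\eqref{eq:HD:Bog definition} by $kB_x(k)\id_{|k|\leq\Lambda}$. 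For finite $\Lambda$ all form factors are square-integrable, so the usual bounds on creation/annihilation operators together with $V\in L^\infty(\mathbb R^3)$ imply that $\mathbb H^{{\rm D},\Lambda}_{u,\alpha}(t)$ is symmetric and form-bounded by $\tfrac12\mathbb T + C(\mathcal N+1)$, with a constant depending on $\|u_t\|_{H^3}$ and $\|\alpha_t\|_{\mathfrak h_{5/2}}$ but not on $\Lambda$. The time-regularity provided by Proposition~\ref{prop:skg well posed}, combined with the classical construction of propagators for time-dependent quadratic Hamiltonians (cf.~\cite[Thm.~4.1]{Falconi21}), then gives a unique unitary $\mathbb U^{{\rm D},\Lambda}_{u,\alpha}(t)$ on $\cF\otimes\cF$ preserving $D((\mathbb T+\mathcal N)^{1/2})$.

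The analytic heart of the argument is to establish quadratic-Hamiltonian analogues of the commutator estimates~\eqref{eq:key:bound:1:L(t):new}--\eqref{eq:key:bound:4:L(t):new}, namely $\pm(\mathbb H^{{\rm D},\Lambda}_{u,\alpha}(t)-\mathbb T)\leq \tfrac12\mathbb T + C(\mathcal N+1)$, $\pm i[\mathcal N,\mathbb H^{{\rm D},\Lambda}_{u,\alpha}(t)]\leq \tfrac12\mathbb T + C(\mathcal N+1)$, and $\pm\tfrac{d}{dt}\mathbb H^{{\rm D},\Lambda}_{u,\alpha}(t)\leq \tfrac12\mathbb T + C\rho(t)(\mathcal N+1)$ with $\rho(t)=\|u_t\|_{H^3}^2(1+\|\alpha_t\|_{\mathfrak h_{3/2}})^2$, all uniform in $\Lambda$. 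Applying Gr\"onwall to $f(t):=\langle\Psi(t),(\mathbb T+\mathcal N+1)\Psi(t)\rangle$ along $\Psi(t)=\mathbb U^{{\rm D},\Lambda}_{u,\alpha}(t)\Psi_0$ yields the linear moment bound, and iterating with $\pm i[\mathcal N^\ell,\mathbb H^{{\rm D},\Lambda}_{u,\alpha}(t)]\leq C(\ell)(\mathbb T+\mathcal N+1)(\mathcal N+1)^{\ell-1}$ combined with the first bound produces the higher moment estimates.

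For the limit $\Lambda\to\infty$, Duhamel's identity gives, for $\Psi_0\in D((\mathbb T+\mathcal N+1)^{1/2})$,
\[
\bigl(\mathbb U^{{\rm D},\Lambda}_{u,\alpha}(t)-\mathbb U^{{\rm D},\Lambda'}_{u,\alpha}(t)\bigr)\Psi_0 = -i\int_0^t \mathbb U^{{\rm D},\Lambda}_{u,\alpha}(t-s)\bigl(\mathbb H^{{\rm D},\Lambda}-\mathbb H^{{\rm D},\Lambda'}\bigr)(s)\mathbb U^{{\rm D},\Lambda'}_{u,\alpha}(s)\Psi_0\,ds.
\]
The difference of generators is supported on $|k|\in(\Lambda\wedge\Lambda',\Lambda\vee\Lambda')$, and combined with the uniform moment bounds provides a Cauchy estimate with rate tending to zero as $\Lambda,\Lambda'\to\infty$. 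The Cauchy limit defines the unitary $\mathbb U^{\rm D}_{u,\alpha}(t)$, and the moment bounds transfer by lower semicontinuity. The subspace invariance $\mathbb U^{\rm D}_{u,\alpha}(t)(\cF_{\perp u}\otimes\cF)\subseteq \cF_{\perp u_t}\otimes\cF$ follows by verifying that $b(u_t)\Psi(t)$ satisfies a closed linear equation with zero initial data on $\cF_{\perp u}\otimes\cF$: the projections $q_{u_t}$ built into $\mathbb H^{{\rm D},\Lambda}_{u,\alpha}(t)$ exactly match the derivative contribution $b(i\dot u_t)=b(h_t u_t)$ coming from the SKG equation~\eqref{eq:SKG dressed}.

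The main obstacle I anticipate is precisely the uniformity in $\Lambda$ of the commutator bounds, especially for the off-diagonal $a^*b^*$ terms in which the kernels $q_{u_t}L_{\alpha_t}(k)u_t$ are only borderline square-integrable in $k$ due to the logarithmic divergence of $\int|kB_0(k)|^2\,dk$. The resolution rests on the identity $-i\nabla_x B_x(k)=-kB_x(k)$, which lets one integrate by parts in the $x$-variable and trade each factor of $|k|$ for a derivative on $u_t$; the analogous manoeuvre for $M_{u_t}$ trades two factors of $|k||l|$ for two derivatives on $u_t$ via $\widehat{|u_t|^2}(k+l)$. This produces kernels that are genuinely Hilbert--Schmidt with norm controlled by $\|u_t\|_{H^3}$ and $\|\alpha_t\|_{\mathfrak h_{5/2}}$, explaining both the strong Cauchy-data hypotheses and the structure of the exponent $R^{\rm D}(t)$.
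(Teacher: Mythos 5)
Your overall skeleton is reasonable, but the route you propose deviates from the paper's in ways that introduce genuine gaps. Most importantly, the paper does \emph{not} regularize $\mathbb{H}^{\rm D}_{u,\alpha}(t)$ by an ultraviolet cutoff to establish Proposition~\ref{prop:Bog-D}: the whole point of the dressing is that the kernels of the dressed Bogoliubov Hamiltonian are already well behaved at $\Lambda=\infty$, and the paper directly verifies the form bounds of Lemma~\ref{lem:bounds:dressed:Bog:theta}(a) (in the form $\pm(\mathbb{H}^{\rm D}-\mathbb{T})\leq \tfrac12\mathbb{T}+C(\mathcal{N}+1)$, $\pm i[\mathcal{N},\mathbb{H}^{\rm D}]\leq\tfrac12\mathbb{T}+C(\mathcal{N}+1)$, $\pm\dot{\mathbb{H}}^{\rm D}\leq\tfrac12\mathbb{T}+C\rho(t)(\mathcal{N}+1)$) and then invokes the abstract well-posedness result of Lewin--Nam--Schlein~\cite[Thm.~8]{Lewin:2015a}. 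Your cutoff scheme would ultimately work, but it is an unnecessary detour for this particular proposition (the cutoff $\Lambda$ in Proposition~\ref{prop:theta-Bog} enters only to connect to the \emph{undressed} evolution in the renormalization argument, not to construct $\mathbb{U}^{\rm D}$). The more serious issues are in the Gr\"onwall argument: applying Gr\"onwall directly to $f(t)=\langle\Psi(t),(\mathbb{T}+\mathcal{N}+1)\Psi(t)\rangle$ does not close, because $\tfrac{d}{dt}f$ produces the commutator $[\mathbb{T},\mathbb{H}^{\rm D}(t)]$, which contains terms like $[\mathrm{d}\Gamma_b(-\Delta),\int\,q_{u_t}L_{\alpha_t}(k)u_t\,a_k^*b_x^*]$ that are not controlled by $\mathbb{T}+\mathcal{N}$. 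The standard fix, used implicitly by the paper through the citation of~\cite{Lewin:2015a} and explicitly in the proof of Theorem~\ref{thm:gross-transformed dynamics reduced density matrices}, is to Gr\"onwall $g(t)=2\langle\Psi(t),\mathbb{H}^{\rm D}(t)\Psi(t)\rangle+C\langle\Psi(t),(\mathcal{N}+1)\Psi(t)\rangle$ instead, for which $\tfrac{d}{dt}g$ involves only $\dot{\mathbb{H}}^{\rm D}$ and $[\mathbb{H}^{\rm D},\mathcal{N}]$, both controllable.

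Similarly, your proposed iterated-commutator estimate $\pm i[\mathcal{N}^\ell,\mathbb{H}^{\rm D}]\leq C(\ell)(\mathbb{T}+\mathcal{N}+1)(\mathcal{N}+1)^{\ell-1}$ does not close Gr\"onwall for $\langle(\mathcal{N}+1)^\ell\rangle$: the right-hand side involves the mixed moment $\langle\mathbb{T}(\mathcal{N}+1)^{\ell-1}\rangle$, which is a new quantity and not directly controlled. The paper instead obtains the higher $\mathcal{N}$-moment bounds from the fact that $\mathbb{U}^{\rm D}_{u,\alpha}(t)$ is a Bogoliubov transformation with $\mathfrak{v}$ Hilbert--Schmidt (uniformly in $t$ via a first-moment Gr\"onwall), whence $\mathbb{U}^*(\mathcal{N}+1)^\ell\mathbb{U}\leq\ell^\ell(1+2\|\mathfrak{v}\|_{\mathfrak{S}_2}+\|\mathfrak{u}\|)^\ell(\mathcal{N}+1)^\ell$ as in~\cite[Lem.~4.4]{Bossmann2019}; this sidesteps the mixed moments entirely. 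Finally, your description of the resolution of the borderline $k$-integrability is conceptually right (regularity of $u_t$ compensates the $|k|$-factors) but technically off: the paper does not integrate by parts via $-i\nabla_x B_x(k)=-kB_x(k)$ but rather puts the kernels in weighted spaces $L^2\otimes\mathfrak{h}_{-s}$ (using that $kB_0\in\mathfrak{h}_{-s}$ for $s>0$) and pays via $\mathrm{d}\Gamma_a(\omega)\leq\mathbb{T}$ in Lemma~\ref{lem:ops-Fock}; the gradient in $L_{\alpha_t}(k)$ is already present from $\hat{A}_x$, not produced by integration by parts. Also, the kernel bounds need only $\alpha_t\in\mathfrak{h}_{3/2}$; the $\mathfrak{h}_{5/2}$ hypothesis enters through the time-derivatives via Proposition~\ref{prop:skg well posed}.
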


The next theorem is our main statement of this section, making precise that
the evolution $\mathbb U^{\rm D}_{u,\alpha}(t)$ describes the fluctuations of
the dressed Nelson dynamics around the dressed mean-field solutions.

\begin{theorem}\label{thm:norm:approximation:dressed:dynamics}
  Let $(u,\alpha) \in H^3(\mathbb R^3) \oplus \mathfrak h_{5/2}$ with
  $\norm{u}_\2 = 1 $, and $\mathbb U^{\rm D}_{u,\alpha}(t)$ the unitary
  defined in Proposition \ref{prop:Bog-D}. There exists a constant $ C>0$
  such that for all $\chi \in \cF_{\perp u} \otimes \cF$ with
  $\norm{\chi}=1$, $N\ge 1$ and $t\in \mathbb R$,
  \begin{align*}
    \Big\| e^{-i t H_N^{\rm D} } X_{u,\alpha}^* \chi  - X_{\mathfrak s^{\rm D}[t](u,\alpha)}^* \mathbb U^{\rm D}_{u,\alpha} (t)\chi  \Big\| \le e^{ C R^{\rm D}(t)}     \delta^{\rm D}(\chi)^{1/2} \frac{ \sqrt{\log N}}{N^{1/4}} ,
  \end{align*}
  where $R^{\rm D}(t) = 1 + \int_0^{|t|} \norm{ u_s }_{ H^3 }^2 ( 1 +
  \norm{\alpha_s}_{ \mathfrak{h}_{3/2}} )^2 ds $ with $(u_s, \alpha_s) =
  \mathfrak s^{\rm D}[s](u,\alpha)$ and
  \begin{equation*}
    \delta^{\rm D}(\chi) = \norm{ (1+\cN ^3  + \textnormal{d}\Gamma_b(-\Delta) + \textnormal{d} \Gamma_a ( \omega) )^{1/2} \chi}^2.
  \end{equation*}
\end{theorem}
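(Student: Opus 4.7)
My approach is to compare the exact dressed fluctuation dynamics with the quadratic Bogoliubov dynamics via an energy estimate in the excitation picture. Define
\begin{equation*}
  \chi^{\rm D}(t) := X_{\mathfrak{s}^{\rm D}[t](u,\alpha)}\,e^{-itH_N^{\rm D}}\,X_{u,\alpha}^{*}\chi \;\in\; \cF^{\leq N}_{\perp u_t}\otimes\cF,
\end{equation*}
which satisfies $i\partial_t\chi^{\rm D}(t) = H^{\rm D}_{u,\alpha}(t)\,\chi^{\rm D}(t)$ with $\chi^{\rm D}(0)=\chi$, whereas $\mathbb U^{\rm D}_{u,\alpha}(t)\chi$ solves the analogous problem with the quadratic generator $\mathbb H^{\rm D}_{u,\alpha}(t)$ by Proposition~\ref{prop:Bog-D}. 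Since $X^*_{\mathfrak{s}^{\rm D}[t](u,\alpha)}$ is an isometry on $\cF^{\leq N}_{\perp u_t}\otimes\cF$, the quantity in the theorem is bounded by $\|\chi^{\rm D}(t)-\mathbb U^{\rm D}(t)\chi\|+\|\id_{\mathcal N_b>N}\mathbb U^{\rm D}(t)\chi\|$; the tail satisfies $\|\id_{\mathcal N_b>N}\mathbb U^{\rm D}(t)\chi\|\leq N^{-3/2}\|(\mathcal N+1)^{3/2}\mathbb U^{\rm D}(t)\chi\|\leq e^{CR^{\rm D}(t)}\delta^{\rm D}(\chi)^{1/2}N^{-3/2}$ by the $\ell=3$ moment bound of Proposition~\ref{prop:Bog-D}, and is thus well below the claimed rate.

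\textbf{Difference of generators.} Differentiating $\|\chi^{\rm D}(t)-\mathbb U^{\rm D}(t)\chi\|^2$ on a common core and using symmetry of both generators gives
\begin{equation*}
  \tfrac{d}{dt}\|\chi^{\rm D}(t)-\mathbb U^{\rm D}(t)\chi\|^2 \,=\, 2\,\mathrm{Im}\,\langle \chi^{\rm D}(t)-\mathbb U^{\rm D}(t)\chi,\,\mathcal R(t)\,\mathbb U^{\rm D}(t)\chi\rangle,
\end{equation*}
where $\mathcal R(t) := H^{\rm D}_{u,\alpha}(t)-\mathbb H^{\rm D}_{u,\alpha}(t)$. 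Conjugating $H^{\rm D}_N$ of Lemma~\ref{lem:H:D:representation} by $X_{\mathfrak{s}^{\rm D}[t](u,\alpha)}$ via Lemma~\ref{lemma:properties U} and subtracting the quadratic part reveals that $\mathcal R(t)$ is the sum of: (i) cubic $N^{-1/2}$-suppressed terms of the schematic form $b^*a^*b$, $b^*b^*b$ (and hermitian conjugates) stemming from $\sum_j\hat A_{x_j}/\sqrt N$ and from the off-diagonal pieces of $a(kB_{x_j})^2/N$, together with $\sqrt{[1-\mathcal N_b/N]_+}-1$ expansion corrections; (ii) quartic $N^{-1}$-suppressed terms from the effective two-body potential $V/N$ and from $a^*(kB)a(kB)/N$.

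\textbf{Main estimate and Gr\"onwall.} The key technical step is an operator pairing bound of the form
\begin{equation*}
  |\langle\Phi,\mathcal R(t)\Psi\rangle| \,\leq\, \tfrac{C\rho(t)\sqrt{\log N}}{\sqrt N}\,\|(\mathcal N+\mathbb T+1)^{1/2}\Phi\|\cdot\|(\mathcal N^3+\mathbb T+1)^{1/2}\Psi\|
\end{equation*}
valid for $\Phi\in\cF^{\leq N}_{\perp u_t}\otimes\cF$, with $\rho(s)=\|u_s\|_{H^3}^2(1+\|\alpha_s\|_{\mathfrak h_{3/2}})^2$. Applied with $\Phi=\chi^{\rm D}(t)-\mathbb U^{\rm D}(t)\chi$ and $\Psi=\mathbb U^{\rm D}(t)\chi$, the weight on $\Phi$ is controlled via the triangle inequality by the $(\mathcal N+\mathbb T+1)^{1/2}$-norms of $\chi^{\rm D}(t)$ and of $\mathbb U^{\rm D}(t)\chi$ — both bounded by $e^{CR^{\rm D}(t)}\delta^{\rm D}(\chi)^{1/2}$ using the energy identity underlying Theorem~\ref{thm:gross-transformed dynamics reduced density matrices} (the function $f(t)$ of its proof also controls $\gamma$) on the one hand, and Proposition~\ref{prop:Bog-D} on the other. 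This yields
\begin{equation*}
  \tfrac{d}{dt}\|\Phi(t)\|^2 \,\leq\, C\rho(t)\|\Phi(t)\|^2 + C\rho(t)\,e^{CR^{\rm D}(t)}\delta^{\rm D}(\chi)\,\tfrac{\log N}{\sqrt N},
\end{equation*}
and Gr\"onwall integration delivers $\|\Phi(t)\|^2\leq e^{CR^{\rm D}(t)}\delta^{\rm D}(\chi)\log N/\sqrt N$, i.e.\ the announced rate $\sqrt{\log N}/N^{1/4}$ (the fourth power arising by taking the square root of a $N^{-1/2}$ bound on the squared norm).

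\textbf{Main obstacle.} The principal technical difficulty lies in the cubic interaction $\hat A_x = -2(i\nabla_x\cdot a(kB_x)+a^*(kB_x)\cdot i\nabla_x)/\sqrt N$: its form factor $kB_x$ fails to be square-integrable, since $|kB_x(k)|\sim|k|^{-3/2}$ at infinity produces a logarithmic divergence of $\int|kB|^2\,dk$. No naive $L^2$-bound on $a(kB_x)$ is therefore available. The remedy is an intermediate ultraviolet cutoff $|k|\leq\Lambda$ combined with integration by parts redistributing $\nabla_x$ onto a creation operator $b^*$, so that the derivative is absorbed by $\mathrm{d}\Gamma_b(-\Delta)$ (which is controlled by $\delta^{\rm D}$): the low-$|k|$ part then contributes a factor $\sqrt{\log\Lambda}$, while the high-$|k|$ tail is handled via $\omega^{-1/2}kB\in L^2$ paired with $\mathrm{d}\Gamma_a(\omega)^{1/2}$ (also controlled by $\delta^{\rm D}$). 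Optimizing $\Lambda\sim N^{1/2}$ balances the two contributions and produces the $\sqrt{\log N}$ factor. The quadratic form $\delta^{\rm D}$ is calibrated precisely so that the moments $\mathcal N^3$, $\mathrm{d}\Gamma_b(-\Delta)$ and $\mathrm{d}\Gamma_a(\omega)$ on $\mathbb U^{\rm D}(t)\chi$, propagated up to $e^{CR^{\rm D}(t)}\delta^{\rm D}(\chi)$ by Proposition~\ref{prop:Bog-D}, are exactly the ingredients needed to close the estimate.
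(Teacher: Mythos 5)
Your proposal follows essentially the same route as the paper's proof: write the difference via the excitation map and its isometry, differentiate the squared norm, invoke a pairing bound on $H^{\rm D}_{u,\alpha}(t)-\mathbb H^{\rm D}_{u,\alpha}(t)$ against the weights $(\mathcal N+\mathbb T+1)^{1/2}$ and $(\mathcal N^3+\mathbb T+1)^{1/2}$, and close using the propagated moment bounds from Proposition~\ref{prop:Bog-D} together with the Gr\"onwall control of $\langle\chi^{\rm D}(t),(\mathcal N+\mathbb T+1)\chi^{\rm D}(t)\rangle$ inherited from Theorem~\ref{thm:gross-transformed dynamics reduced density matrices}. Your identification of the obstacle (the cubic term with form factor $kB_x\notin L^2$, giving the logarithmic loss) and its remedy (UV split, integration by parts to land the gradient on $b^*$ and use $\mathrm d\Gamma_b(-\Delta)$, pair the tail against $\mathrm d\Gamma_a(\omega)^{1/2}$) match the paper's treatment of $H_3(t)$ in Lemma~\ref{lem:fluctuationgen:BogHam:difference}. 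This is the right proof.

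Three small calibration slips, none of which undermines the argument: (i) the optimal cutoff in the $H_3$ estimate is $\Lambda\sim N$, not $N^{1/2}$ --- with $\Lambda\sim N^{1/2}$ the high-$|k|$ tail gives only $\Lambda^{-1/2}\sim N^{-1/4}$, which is far too weak, whereas $\Lambda=N$ balances $\Lambda^{-1/2}$ against $N^{-1/2}\sqrt{\ln\Lambda}$; (ii) you state the key pairing bound with prefactor $\sqrt{\log N}/\sqrt N$ but then feed $\log N/\sqrt N$ into the Gr\"onwall line --- either is adequate to produce the claimed $\sqrt{\log N}/N^{1/4}$ after the final square root, but you should keep one consistently (the paper's Lemma~\ref{lem:fluctuationgen:BogHam:difference} is stated with $\ln N$); (iii) $\chi^{\rm D}(0)=\id_{\cN_b\le N}\chi$, not $\chi$ itself --- this is why the paper separately controls $\|\chi^{\rm D}(0)-\chi\|$; you account for the same effect through the tail of $\mathbb U^{\rm D}(t)\chi$, which is fine. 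The extraneous $C\rho\|\Phi\|^2$ term in your Gr\"onwall inequality is not produced by the pairing bound you wrote (it would require controlling $\|(\mathcal N+\mathbb T+1)^{1/2}\Phi\|$ by $\|\Phi\|$, which is false); it is harmless because it only contributes to the exponential factor already present, and the paper shows one can in fact dispense with it and integrate directly.
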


The proof of the theorem relies on a technical bound on the difference of
$H_{u,\alpha}^{\rm D}(t)$ and its quadratic approximation $\mathbb
H_{u,\alpha}^{\rm D}(t)$ in terms of the operators $\mathcal N$ and $\mathbb
T$. In details, there exists a constant $C>0$, such that for all $\chi \in
\cF^{\le N} \otimes \cF$, $\phi\in \cF \otimes \cF$,
\begin{align}\label{eq:HD:difference:Bound}
  & \big| \scp{\chi }{\left( H^{\rm D}_{u,\alpha}(t) - \mathbb{H}_{u,\alpha}^{\rm D}(t) \right)  \phi} \big| \notag\\[1mm]
  &\hspace{1cm} \le C \rho(t) \frac{ \ln N}{N^{1/2} } \norm{\left( \mathcal{N} + \mathbb{T} + 1 \right)^{1/2} \chi }
  \norm{\left( \mathcal{N}^3 + \mathbb{T} + 1 \right)^{1/2} \phi}
\end{align}
with $\rho(t) = \norm{u_t}_{\Sob{3}}^2 (1+\norm{\alpha_t}_{\h{3/2}} )^2$. The
precise statement and its proof are given in Lemma
\ref{lem:fluctuationgen:BogHam:difference}. Note that we choose to distribute
the higher moments of $\mathcal{N}$ unequally in
\eqref{eq:HD:difference:Bound} because below we will rely on the estimates
provided by Theorem \ref{thm:gross-transformed dynamics reduced density
  matrices} and Proposition \ref{prop:Bog-D} that control the higher moments
of $\mathcal{N}$ during the Bogoliubov dynamics but only its first moment
during the many-body evolution.

\begin{proof}[Proof of Theorem \ref{thm:norm:approximation:dressed:dynamics}]
  Using that the excitation map is isometric, we can write the norm
  difference as
  \begin{align}
    \big\| e^{-i t H_N^{\rm D} } X_{u,\alpha}^* \chi  - X_{\mathfrak s^{\rm D}[t](u,\alpha)}^* \mathbb U^{\rm D}_{u,\alpha}(t)  \big\| =  \big\| \chi^{\rm D}(t) - \id_{\mathcal N_b \le N} \mathbb U^{\rm D}_{u,\alpha}(t) \chi \big\|,
  \end{align}
  where $\chi^{\rm D}(t) = X_{ \mathfrak s^{\rm D}[t](u,\alpha) } e^{-i t
    H_N^{\rm D} } X_{u,\alpha}^*\chi $ and by \eqref{eq:X-isometry},
  $\chi^{\rm D}(0)=\id_{\mathcal N_b \le N} \chi$.  Since $\id_{\mathcal N_b
    \le N} \chi^{\rm D}(t) = \chi^{\rm D}(t)$ we can omit the projection on
  the right-hand side by increasing the value of the norm. Recalling
  \eqref{eq:fluctuation equation}, we now use $i\partial_t \chi^{\rm D}(t) =
  H^{\rm D}_{u,\alpha} (t) \chi^{\rm D}(t)$ and Proposition \ref{prop:Bog-D}
  to obtain
  \begin{align}
    \tfrac{d}{dt} \norm{\chi^{\rm D}(t) - \mathbb U_{u,\alpha}^{\rm D} (t) \chi }^2 = 2 \Im \scp{ \chi^{\rm D}(t) }{( H_{u,\alpha}^{\rm D}(t) - \mathbb H_{u,\alpha}^{\rm D} (t))  \mathbb U_{u,\alpha}^{\rm D} (t) \chi  }.
  \end{align}
  Using again $\id_{\mathcal N_b \le N} \chi^{\rm D}(t) = \chi^{\rm D}(t)$,
  we can apply \eqref{eq:HD:difference:Bound} to bound the right-hand side,
  so that
  \begin{multline}
    \tfrac{d}{dt} \norm{\chi^{\rm D}(t)  -  \mathbb U_{u,\alpha}^{\rm D}(t) \chi }^2  \\
    \le C \rho(t)  \norm{\left( \mathcal{N} + \mathbb{T} + 1 \right)^{1/2} \chi^{\rm D}(t) }
    \norm{\left( \mathcal{N}^3 + \mathbb{T} + 1 \right)^{1/2} \mathbb U_{u,\alpha}^{\rm D}(t) \chi }\frac{\ln N}{\sqrt{N}}.
  \end{multline}
  By Proposition \ref{prop:Bog-D}, we have
  \begin{align}
    \norm{( \mathcal{N}^3 + \mathbb{T} + 1 )^{1/2} \mathbb U_{u,\alpha}^{\rm D}(t) \chi }  \le e^{CR^\mathrm{D}(t) } \norm{ (\mathcal N^3 +\mathbb T +1 )^{1/2} \chi }
  \end{align}
  and by the same argument as in the proof of Theorem
  \ref{thm:gross-transformed dynamics reduced density matrices},
  \begin{align}
    \scp{\chi^{\rm D}(t)}{( \mathcal N + \mathbb T +1 ) \chi^{\rm D}(t) } \le e^{CR^\mathrm{D}(t)} \scp{\chi^{\rm D}(0) } {(\mathcal N + \mathbb T +1 ) \chi^{\rm D}(0) }.
  \end{align}
  Since $\|\chi^{\rm D}(0) - \chi\|\leq N^{-1} \|\cN^{1/2}\chi\|^2$, this
  implies the desired bound
  \begin{align}
    \norm{\chi^{\rm D}(t) - \mathbb U_{u,\alpha}^{\rm D}(t) \chi }^2 
    &\leq  e^{C R^\mathrm{D}(t)} \norm{ ( \mathcal{N}^3 + \mathbb T + 1 )^{1/2} \chi}^2N^{-1/2} \ln N.
  \end{align}
  which concludes the proof of Theorem
  \ref{thm:norm:approximation:dressed:dynamics}.
\end{proof}

\subsection{Norm approximation of the dressing
  transformation}\label{sec:dressing:fluctuations}

We now consider the dressing transformation on the level of the
fluctuations. The effective dressing transformation is used for two
purposes. First, it allows for a norm approximation of $W^{\rm D}(\theta)$
and, second, it provides an interpolation between the undressed and dressed
Bogoliubov evolutions. Since the undressed and the dressed Bogoliubov
Hamiltonians are both quadratic operators on $\cF \otimes \cF$, see
\eqref{eq:reg:Nelson:Bog} and \eqref{eq:HD:Bog definition}, it is natural to
choose the dressing transformation that interpolates between the two itself
as an evolution generated by a quadratic operator. The right candidate for
this is the Bogoliubov type approximation of the microscopic dressing $W^{\rm
  D}(\theta)$ associated with the mean-field flow $\mathfrak D[\theta]$ (more
precisely, the evolution generated by the quadratic approximation of the
fluctuation generator \eqref{eq:def:fluc:gen:dressing:trafo}).

For $(u,\alpha)\in L^2(\mathbb R^3) \oplus L^2(\mathbb R^3)$,
$(u^\theta,\alpha^\theta) = \mathfrak D [\theta](u,\alpha)$, $\Lambda \in
\mathbb R_+ \cup \{ \infty\}$, consider the quadratic operator
\begin{align}\label{eq:Gross-Bog-generator}
  \mathbb D^\Lambda_{u ,\alpha} (\theta)  =  \textnormal{d}\Gamma_b(  \tau_{u, \alpha}) + \bigg(
  \int dx d k\,\big( \kappa_{u^\theta}^\Lambda(k,x) a^*_k   b_x^* - \kappa^\Lambda_{u^\theta}(-k,x)  a_k    b^*_x   \big) + \text{h.c.}   \bigg)
\end{align}
with $\tau_{u, \alpha}$ defined in \eqref{eq:def:tau:Lambda} and
\begin{equation}\label{eq:def:kappa}
  \kappa^\Lambda_{u}(k,x)=\big( q_u iB^\Lambda_{(\cdot)}(k)u\big)(x), \quad B^\Lambda_x (k) = \id_{|k|\le \Lambda } B_x(k).
\end{equation}
Since the unitary generated by $\mathbb{D}_{u,\alpha}^\Lambda$ will play an
important role in the renormalization of the Nelson--Bogoliubov Hamiltonian
in Section \ref{sec:renormalization}, we introduce the kernel
$\kappa^\Lambda_u$ with a cutoff $\Lambda\in \R_+\cup\{\infty\}$.
The next proposition states the existence and some important properties of
this unitary evolution.

\begin{proposition}\label{prop:Gross-Bog:evolution}
  Let $(u,\alpha)\in H^1(\R^3)\oplus L^2(\mathbb{R}^3)$ with
  $\norm{u}_{L^2}=1$ and let $(u^\theta,\alpha^\theta) = \mathfrak
  D[\theta](u,\alpha)$ denote the solution to \eqref{eq:classical dressing
    equations} with initial datum $(u,\alpha)$. For every $\Lambda \in
  \mathbb R_+ \cup \{ \infty\}$ and $\Psi_0 \in D(\cN^{1/2})$ there exists a
  unique solution to the Cauchy problem
  \begin{equation*}
    \begin{cases}
      \begin{aligned}
        i\partial_\theta \Psi(\theta)
        &= \mathbb{D}_{u,\alpha}^\Lambda(\theta)\Psi(\theta) \\
        \Psi(0)&=\Psi_0
      \end{aligned}
    \end{cases}
  \end{equation*}
  such that $\Psi \in C(\R, \mathcal{F} \otimes \mathcal{F}) \cap
  L_{\rm{loc}}^{\infty}(\R, D(\cN^{1/2}))$.  The solution map $\Psi_0 \mapsto
  \Psi(\theta)$ defines a unitary $\mathbb{W}_{u ,\alpha }^\Lambda(\theta)$
  on $\cF \otimes \cF $ with the following properties
  \begin{itemize}
  \item[\textnormal{(i)}] $\mathbb W^\Lambda_{u ,\alpha }(\theta) (\cF_{\perp
      u} \otimes \cF) = \cF_{\perp u^\theta}\otimes \cF$.
  \item[\textnormal{(ii)}] $(u,\alpha,\Lambda) \mapsto \mathbb W_{u ,\alpha
    }^\Lambda(\theta)$ is strongly continuous.
  \item[\textnormal{(iii)}] $\mathbb W^\Lambda_{u ,\alpha }(\theta)$ is a
    Bogoliubov transformation.
  \item[\textnormal{(iv)}] For every $\ell \in \mathbb N$ there exists a
    constant $C(\ell )>0$ such that for all $(u,\alpha) \in H^1(\mathbb R^3)
    \oplus L^2(\mathbb{R}^3)$ with $\| u \|_{\2}=1$, $\Lambda \in \mathbb R_+
    \cup \{ \infty\}$ and $|\theta| \le 1$,
    \begin{subequations}
      \begin{align}
        \mathbb W^\Lambda_{u ,\alpha }(\theta)^* (\mathcal N+1)^\ell \, \mathbb W^\Lambda_{u ,\alpha }(\theta) \le e^{C(\ell )( \| u\|_{H^1}^2 + \| \alpha \|_{L^2}) } (\mathcal N+1)^\ell \label{eq:bound:W*NW} \\
        \mathbb W^\Lambda_{u ,\alpha }(\theta) (\mathcal N+1)^\ell \, \mathbb W^\Lambda_{u ,\alpha }(\theta) ^*  \le e^{C(\ell )( \| u\|_{H^1}^2 + \| \alpha \|_{L^2}) } (\mathcal N+1)^\ell \label{eq:bound:WNW*}
      \end{align}
    \end{subequations}
    in the sense of quadratic forms on $\cF \otimes \cF$.
  \end{itemize}
\end{proposition}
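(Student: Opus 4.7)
The plan is to construct $\mathbb{W}^\Lambda_{u,\alpha}(\theta)$ by exploiting that $\mathbb{D}^\Lambda_{u,\alpha}(\theta)$ is a quadratic Hamiltonian in the creation/annihilation operators with Hilbert--Schmidt pair-creation kernel and uniformly bounded particle-conserving part. First I would establish the kernel bounds
\[
\|\kappa^\Lambda_{u^\theta}\|_{L^2(\R^3\times \R^3)}\le C\|B_0\|_{L^2}\|u\|_{L^2}, \qquad \|\tau_{u,\alpha}\|_{L^\infty}\le 2\|B_0\|_{L^2}\|\alpha\|_{L^2},
\]
which hold uniformly in $\Lambda$ and $|\theta|\le 1$ since $|B_x(k)|=|B_0(k)|$ with $B_0\in L^2(\R^3)$ and $\|u^\theta\|_{L^2}=\|u\|_{L^2}=1$ by~\eqref{eq:explicit:solution:dressing}. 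These translate into the relative bound $\pm\mathbb{D}^\Lambda_{u,\alpha}(\theta)\le C(1+\|\alpha\|_{L^2})(\cN+1)$ and, after computing $[(\cN+1)^\ell,\mathbb{D}^\Lambda_{u,\alpha}(\theta)]$ (only the pair-creation/annihilation pieces contribute), into the commutator estimate
\[
\pm i[(\cN+1)^\ell,\mathbb{D}^\Lambda_{u,\alpha}(\theta)]\le C(\ell)(\|u\|_{H^1}^2+\|\alpha\|_{L^2})(\cN+1)^\ell,
\]
where the $H^1$ norm enters through the dependence of $\nabla u^\theta$ on $u$ via~\eqref{eq:explicit:solution:dressing}.

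Existence and uniqueness I would obtain by a standard regularization. Introducing a smooth number cutoff $\chi_M(\cN)$, I would define $\mathbb{D}^{\Lambda,M}(\theta):=\chi_M(\cN)\mathbb{D}^\Lambda_{u,\alpha}(\theta)\chi_M(\cN)$, which is a bounded strongly continuous family of self-adjoint operators, so that non-autonomous Dyson/Picard theory gives a unitary propagator $\mathbb{W}^{\Lambda,M}(\theta)$. The commutator estimate together with Grönwall's inequality yields the bound (iv) for $\mathbb{W}^{\Lambda,M}$ uniformly in $M$. A Duhamel comparison using these a priori bounds provides a strong-operator limit $\mathbb{W}^\Lambda_{u,\alpha}(\theta)$, which is unitary, solves the Cauchy problem on $D(\cN^{1/2})$, and inherits (iv). The second bound in (iv), on $\mathbb{W}^\Lambda\cN^\ell\mathbb{W}^{\Lambda*}$, follows because the adjoint propagator is generated by $-\mathbb{D}^\Lambda_{u,\alpha}(-\theta)$, which satisfies the same estimates.

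For (iii), the Bogoliubov property follows from the quadratic structure with Hilbert--Schmidt kernels: differentiating $\mathbb{W}^{\Lambda*}(\theta)c^\#(f\oplus g)\mathbb{W}^\Lambda(\theta)$ gives a closed linear system on $c,c^*$ that defines the Bogoliubov matrices $(\mathfrak u(\theta),\mathfrak v(\theta))$. For (i), I would set $\psi(\theta):=b(u^\theta)\phi(\theta)$ with $\phi(\theta):=\mathbb{W}^\Lambda(\theta)\phi_0$ for $\phi_0\in\cF_{\perp u}\otimes\cF$. Since $\tau_{u,\alpha}$ is real-valued and $\dot u^\theta=-i\tau_{u,\alpha}u^\theta$, one has $\partial_\theta b(u^\theta)=ib(\tau_{u,\alpha}u^\theta)$; a direct calculation gives $[\mathbb{D}^\Lambda_{u,\alpha}(\theta),b(u^\theta)]=-b(\tau_{u,\alpha}u^\theta)$, where the contributions from the $a^*b^*$ and $ab^*$ terms cancel because $\kappa^\Lambda_{u^\theta}(k,\cdot)\perp u^\theta$ by the $q_{u^\theta}$ projection in~\eqref{eq:def:kappa}. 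Combining these identities yields $i\partial_\theta\psi=\mathbb{D}^\Lambda_{u,\alpha}(\theta)\psi$ with $\psi(0)=0$, and uniqueness forces $\psi\equiv 0$; applying the same argument to $\mathbb{W}^{\Lambda*}(\theta)$ gives the reverse inclusion and hence equality in (i). Finally, (ii) follows from a Duhamel comparison of $\mathbb{W}^{\Lambda_1}_{u_1,\alpha_1}(\theta)$ and $\mathbb{W}^{\Lambda_2}_{u_2,\alpha_2}(\theta)$ using continuous dependence of $\kappa^\Lambda_{u^\theta}$ and $\tau_{u,\alpha}$ on $(u,\alpha,\Lambda)$ in $L^2$ together with the moment bound (iv).

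The main obstacle is uniform control of higher moments of $\cN$ under an evolution whose off-diagonal part generates pairs of excitations; the Grönwall argument closes with constants independent of $\Lambda$ precisely because the $L^2$ norm of the pair-creation kernel $\kappa^\Lambda_{u^\theta}$ is uniformly bounded, reflecting the regularizing effect of $B_0\in L^2$ inherent to the dressing.
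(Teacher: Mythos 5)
Your proposal is correct in its overall architecture, but it diverges from the paper's proof in several sub-arguments, so a comparison is in order. For existence you regularize with a number cutoff and pass to the limit via Dyson series and Duhamel; the paper instead invokes the abstract well-posedness result of Lewin--Nam--Schlein [Thm.~8, \emph{Am. J. Math.} 137 (2015)] together with the form bounds of Lemma~\ref{lem:bounds:Bog:generator:Gross} -- your route is self-contained but longer. For (iv) you run Gr\"onwall directly on $\langle\Psi(\theta),(\cN+1)^\ell\Psi(\theta)\rangle$ using a commutator estimate $\pm i[(\cN+1)^\ell,\mathbb{D}^\Lambda_{u,\alpha}(\theta)]\le C(\ell)(\cN+1)^\ell$; the paper first establishes (iii) and then cites the general moment bound for Bogoliubov transformations in terms of $\|\mathfrak v\|_{\mathfrak S_2}$ and $\|\mathfrak u\|$, reducing everything to a Gr\"onwall bound on $\|\cN^{1/2}\mathbb W^\Lambda_{u,\alpha}(\theta)\Omega\|$. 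Your direct route is arguably cleaner since it does not make (iv) logically dependent on (iii), and it correctly isolates that only the $a^*b^*$/$ab$ terms contribute to the commutator (note, though, that their kernel $\kappa^\Lambda_{u^\theta}$ is controlled by $\|B_0\|_{L^2}$ alone, so the $H^1$-norm of $u$ does not actually enter here, contrary to your remark about $\nabla u^\theta$). For (i) your annihilation-operator argument ($\psi(\theta)=b(u^\theta)\mathbb W^\Lambda_{u,\alpha}(\theta)\phi_0$ solves the same equation with $\psi(0)=0$, hence vanishes) is equivalent to the paper's conservation of $\|\Gamma(q_{u^\theta})\Psi(\theta)\|$; both hinge on $\kappa^\Lambda_{u^\theta}(k,\cdot)\in\mathrm{Ran}(q_{u^\theta})$, though your version needs a regularization in $\phi_0$ so that $\psi$ lies in the uniqueness class. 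Two details deserve care when writing this up: the backward propagator $\mathbb W^\Lambda_{u,\alpha}(\theta)^*$ is the two-parameter flow $\mathbb W^\Lambda_{u,\alpha}(0,\theta)$, not the evolution generated by $-\mathbb{D}^\Lambda_{u,\alpha}(-\theta)$, although the Gr\"onwall argument goes through unchanged; and in (iii) the Shale--Stinespring criterion applied to the solution $\mathcal V(\theta)$ of the matrix ODE only produces \emph{some} unitary implementing it, so one still has to identify that unitary with $\mathbb W^\Lambda_{u,\alpha}(\theta)$ up to phase (the paper does this via the characterization of the transformed vacuum, cf.\ Appendix~\ref{app:Bogoliubov:transformations}).
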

\begin{proof}
  Existence and uniqueness of the solution follow from \cite[Theorem
  8]{Lewin:2015a} in combination with the bounds from
  Lemma~\ref{lem:bounds:Bog:generator:Gross}. As a consequence, there exists
  a two-parameter flow $\mathbb{W}_{u,\alpha}^\Lambda(\theta,\theta')$ that
  for $\theta,\theta',\theta''\in \R$ satisfies
  $\mathbb{W}_{u,\alpha}^\Lambda(\theta,\theta'')\mathbb{W}_{u,\alpha}^\Lambda(\theta'',\theta')=\mathbb{W}_{u,\alpha}^\Lambda(\theta,\theta')$. Indeed,
  define
  $\mathbb{W}_{u,\alpha}^\Lambda(\theta'+\vartheta,\theta')\Psi(\theta')$ to
  be the solution of $i\partial_{\vartheta} \chi(\vartheta) =
  \mathbb{D}^\Lambda_{u,\alpha}(\theta'+\vartheta)\chi(\vartheta)$ with
  $\chi(0)=\Psi(\theta')$. Then the flow property follows from uniqueness of
  the solution, since
  $\mathbb{W}_{u,\alpha}^\Lambda(\theta,\theta')\Psi(\theta')$ and
  $\Psi(\theta)$ are both solutions that agree at $\theta=\theta'$.
  Since $\mathbb{D}^\Lambda_{u,\alpha}(\theta)$ defines a symmetric quadratic
  form on $D(\cN^{1/2})$, the flow maps
  $\mathbb{W}_{u,\alpha}^\Lambda(\theta,\theta')$ are unitary.

  To show the mapping property, consider the orthogonal projection
  $\Gamma(q_{u^\theta})$ defined by $\Gamma(q_{u^\theta}) \restriction
  \mathcal F^{(n) } \otimes \mathcal F = (q_{u^\theta})^{\otimes n} \otimes
  1$.  Since $\Gamma(q_{u^\theta})( \cF \otimes \cF) = \cF_{\perp u^\theta}
  \otimes \cF$, proving that for all $\Psi \in D(\cN^{1/2})$
  \begin{align}\label{eq:proof:propagation:property:W}
    i \tfrac{d}{d\theta} \norm{\Gamma(q_{u^\theta}) \Psi (\theta) }^2 = - \scp{\Psi (\theta) }{[\mathbb D^\Lambda_{u ,\alpha } (\theta) - \text{d}\Gamma(\tau_{u,\alpha}),\Gamma(q_{u^\theta}) ] \Psi (\theta) } = 0
  \end{align}
  will imply $\mathbb W^\Lambda_{u ,\alpha }(\theta) (\cF_{\perp u} \otimes
  \cF) \subset \cF_{\perp u^\theta}\otimes \cF$.  This holds, since
  $\kappa^\Lambda_{u^\theta}(k,\cdot) \in \text{Ran}(q_{u^\theta})$ and thus
  \begin{align}
    \Big[\int d k dx\, \kappa_{u^\theta}^\Lambda(k,x) a_k^* b_x^* , \Gamma (q_{u^\theta}) \Big] & = \int dk\, a_k^* \Big[ b^*(\kappa^\Lambda_{u^\theta}(k,\cdot)) , \Gamma (q_{u^\theta}) \Big] \\
    &=  \int dk\, a_k^* b^*\big((1-q_{u^\theta})\kappa^\Lambda_{u^\theta}(k,\cdot)\big) \Gamma (q_{u^\theta}) =0, \notag
  \end{align}
  with similar calculations for the other terms in the commutator.  Applying
  the same argument to $\mathbb W^\Lambda_{u ,\alpha }(\theta) ^*=\mathbb
  W^\Lambda_{u ,\alpha }(\theta',\theta)\vert_{\theta'=0}$ shows that
  $\mathbb W^\Lambda_{u ,\alpha }(\theta)^* (\cF_{\perp u^\theta} \otimes
  \cF) \subset \cF_{\perp u}\otimes \cF$, which gives the equality.

  We now prove continuity of $(u,\alpha,\Lambda) \mapsto \mathbb
  W_{u,\alpha}^\Lambda (\theta) \Psi$ for every $\Psi\in \cF \otimes \cF$,
  $\theta\in \mathbb R$.  Let $u', \alpha'\in L^2(\R^3)$, $\Lambda' \in
  \R_+\cup\{\infty\}$.  By uniqueness of the solution, Duhamel's formula
  \begin{align}\label{eq:W-Duhamel}
    & \scp{\Phi}{( 1 - \mathbb {W}_{u,\alpha}^{\Lambda} (\theta)^*   \mathbb{W}_{u',\alpha'}^{\Lambda'}(\theta)) \Psi} \notag\\
    &\quad \qquad  = - i \int_0^\theta d\eta \scp{\Phi}{   \mathbb W^\Lambda_{u,\alpha} (\eta)^*  \big( \mathbb D^{\Lambda}_{u,\alpha}(\eta) - \mathbb D^{\Lambda'}_{u',\alpha'} (\eta)\big) \mathbb{W}_{u',\alpha'}^{\Lambda'}(\eta)) \Psi}
  \end{align}
  holds for all $\Psi,\Phi\in D(\cN^{1/2})$. As $(u',\alpha',\Lambda')\to
  (u,\alpha, \Lambda)$, $\tau_{ u , \alpha }$ tends to $\tau_{ u' , \alpha'
  }$ in $L^\infty(\mathbb R^3)$ and $\kappa_{u'}^{\Lambda'}$ tends to
  $\kappa_u^\Lambda$ in $L^2(\R^3\times \R^3)$, as one easily verifies. Thus,
  $\mathbb{D}_{u,\alpha}^{\Lambda}(\eta) -
  \mathbb{D}_{u',\alpha'}^{\Lambda'}(\eta)$ tends to zero as quadratic form
  on $D(\cN^{1/2})$ and since $\mathbb{W}_{u,\alpha}^{\Lambda}(\eta)$
  preserves $D(\cN^{1/2})$ the difference in~\eqref{eq:W-Duhamel} tends to
  zero.  Since $1 - \mathbb {W}_{u,\alpha}^{\Lambda} (\theta)^*
  \mathbb{W}_{u',\alpha'}^{\Lambda'}(\theta)$ is uniformly bounded, we can
  extend the convergence to arbitrary $\Phi, \Psi \in \cF\otimes \cF$, so
  $\mathbb{W}_{u',\alpha'}^{\Lambda'}(\theta) \to \mathbb
  {W}_{u,\alpha}^{\Lambda} (\theta)$ in the weak operator topology. Since
  this is a family of unitary operators, this implies convergence in the
  strong operator topology.

  The fact that $\mathbb W^\Lambda_{u,\alpha}(\theta)$ is a Bogoliubov
  transformation can be deduced from the property that $\tau_{u,\alpha} \in
  L^2(\mathbb R^3)$ and $\kappa_u^\Lambda \in L^2(\mathbb R^3 \times \mathbb
  R^3)$, which holds for all $\Lambda \in \mathbb R_+ \cup \{ \infty \}$. A
  proof of this well-known implication is provided in Appendix
  \ref{app:Bogoliubov:transformations} (for a different proof see
  e.g. \cite[Lem 4.8]{Bossmann2019}).

  To demonstrate the final statement, we rely on the fact that $ \mathbb
  W^\Lambda_{u,\alpha}(\theta)$ is a Bogoliubov transformation. This implies
  the existence of bounded linear maps $\mathfrak u ,\mathfrak v : L^2\oplus
  L^2 \to L^2\oplus L^2$, $\mathfrak v \in \mathfrak S^2(L^2 \oplus L^2 )$,
  such that
  \begin{align}
    \mathbb W^\Lambda_{u,\alpha}(\theta)^* c^*(f\oplus g)  \mathbb  W^\Lambda_{u,\alpha}(\theta) = c^*(\mathfrak{u} ( f\oplus g) ) + c( \mathfrak v  \overline{( f\oplus g )} )
  \end{align}
  for all $f,g \in L^2$, where $c^*(f\oplus g) = b^*(f) + a^*(g)$. Using this
  relation, one can deduce (see \cite[Lem. 4.4]{Bossmann2019}) that for all
  $\ell \in \mathbb N$
  \begin{align}
    \mathbb W^\Lambda_{u,\alpha}(\theta)^* (\mathcal N+1)^\ell \,  
    \mathbb W^\Lambda_{u,\alpha}(\theta) \le \ell^\ell ( 1 + 2 \norm{\mathfrak v }_{\mathfrak S_2} +  \norm{\mathfrak u} )^\ell \, (\mathcal N+1)^\ell
  \end{align}
  as quadratic forms on $\cF\otimes \cF$. To bound the norms on the
  right-hand side, we rely on the fact that $ \norm{\mathfrak v }_{\mathfrak
    S_2}^2 = \norm{ \mathcal N^{1/2} \mathbb
    W^\Lambda_{u,\alpha}(\theta)\Omega }^2 \le e^{C ( \| u \|^2_{H^1} +
    \|\alpha \|_{L^2} ) }$, where the estimate for $ \norm{ \mathcal N^{1/2}
    \mathbb W^\Lambda_{u,\alpha}(\theta)\Omega }^2 $ is obtained by
  estimating the derivative w.r.t. $\theta$, using the bounds from Lemma
  \ref{lem:bounds:Bog:generator:Gross} and then applying Gr\"onwall's
  inequality. Together, this proves \eqref{eq:bound:W*NW}. Now using the fact
  that the inverse of a Bogoliubov transformation is again a Bogoliubov
  transformation, the proof of \eqref{eq:bound:WNW*} can be carried out in a
  similar fashion. In fact, it is easy to show that $ \mathbb
  W^\Lambda_{u,\alpha}(\theta) c^*(f\oplus g) \mathbb
  W^\Lambda_{u,\alpha}(\theta)^* = c^*(\mathfrak{u}^*( f\oplus g) ) - c(
  \overline{\mathfrak v^* ( \bar f \oplus \bar g )} )$ and thus we can follow
  the same steps as in the proof of \eqref{eq:bound:W*NW}.
\end{proof}

\begin{lemma}\label{lem:W-reg}
  Let $(u,\alpha)\in H^2(\R^3)\oplus L^2(\R^3)$ with $\norm{u}_{L^2}=1$, and
  $\Lambda \in (0,\infty)$. Then for all $\theta\in \R$ and $r\in [0,1]$ we
  have
  \begin{equation*}
    \mathbb{W}_{u,\alpha}^\Lambda(\theta)D(\mathbb{\mathbb T}^r)\subset D(\mathbb{\mathbb T}^r)
  \end{equation*}
  where $\mathbb T = \textnormal{d}\Gamma_b(-\Delta) + \textnormal
  d\Gamma_a(\omega)$.
\end{lemma}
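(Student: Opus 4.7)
The plan is to prove the case $r=1$ by a Gr\"onwall estimate for $\|\mathbb T\Psi(\theta)\|^2$ with $\Psi(\theta)=\mathbb W^\Lambda_{u,\alpha}(\theta)\Psi$, made rigorous by a bounded regularization $\mathbb T_\eta=\mathbb T(1+\eta\mathbb T)^{-1}$ and the limit $\eta\to 0$ using closedness of $\mathbb T$, and then to deduce the case $r\in(0,1)$ by complex interpolation with the trivial case $r=0$ (unitarity of $\mathbb W^\Lambda$ on $L^2$). Concretely, one would apply Stein's theorem to the analytic family $z\mapsto (\mathbb T+1)^z\mathbb W^\Lambda(\theta)(\mathbb T+1)^{-z}$.

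The decisive computation is the commutator $[\mathbb D^\Lambda_{u,\alpha}(\theta),\mathbb T]$. Since $\textnormal{d}\Gamma_b(\tau_{u,\alpha})$ commutes with $\textnormal{d}\Gamma_a(\omega)$, it splits into $\textnormal{d}\Gamma_b(2\nabla\tau_{u,\alpha}\cdot\nabla+\Delta\tau_{u,\alpha})$ plus off-diagonal interaction terms whose integral kernels take the form $(\omega(k)-\Delta_x)\kappa^\Lambda_{u^\theta}(k,x)$, paired with $a_k^*b_x^*$ and its $a_k b_x^*$ analogues. The ultraviolet cutoff $\Lambda$ and $u^\theta\in H^2$ (which holds by Lemma~\ref{lem:mf-dressing-regularity} since $u\in H^2$ and $\alpha\in L^2$) ensure that these kernels lie in $L^2(\mathbb R^3\times\mathbb R^3)$, so by the standard creation--annihilation estimate the interaction contribution is bounded in the sense of quadratic forms by $C_\Lambda(\mathcal N+1)$. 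For the $\textnormal{d}\Gamma_b$ contribution, the summand $\textnormal{d}\Gamma_b(\nabla\tau\cdot\nabla)$ is dominated by $\|\nabla\tau\|_{L^\infty}(\varepsilon\,\textnormal{d}\Gamma_b(-\Delta)+\varepsilon^{-1}\mathcal N_b)$, and the pointwise-unbounded $\Delta\tau_{u,\alpha}$ is handled by integration by parts inside the one-body quadratic form, $|\langle v,(\Delta\tau)v\rangle|\le 2\|\nabla\tau\|_{L^\infty}\|v\|_{L^2}\|\nabla v\|_{L^2}$, which second-quantizes to a bound of the same type. Altogether,
\begin{equation*}
\pm i[\mathbb D^\Lambda_{u,\alpha}(\theta),\mathbb T]\ \le\ C_{\Lambda,u,\alpha}\bigl(\mathbb T+(\mathcal N+1)^\ell\bigr)
\end{equation*}
as quadratic forms on $\mathcal F\otimes\mathcal F$, for some $\ell\in\mathbb N$.

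Inserting this into the derivative of $\|\mathbb T_\eta\Psi(\theta)\|^2$ and using preservation of $\mathcal N$-moments from Proposition~\ref{prop:Gross-Bog:evolution}(iv), Gr\"onwall produces an $\eta$-uniform bound $\|\mathbb T_\eta\mathbb W^\Lambda(\theta)\Psi\|^2\le e^{C_\Lambda|\theta|}(\|\mathbb T\Psi\|^2+\|(\mathcal N+1)^{\ell/2}\Psi\|^2)$, valid initially on $D(\mathbb T)\cap D(\mathcal N^{\ell/2})$; passing to $\eta\to 0$ via closedness of $\mathbb T$ gives $\mathbb W^\Lambda(\theta)\Psi\in D(\mathbb T)$. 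The extension to arbitrary $\Psi\in D(\mathbb T)$ proceeds by approximating with the spectral cutoffs $\Psi_n=\mathds 1_{\mathcal N_b\le n}\Psi$, which commute with $\mathbb T$ and lie in every $D(\mathcal N^M)$, combined with weak compactness of the family $\{\mathbb T\mathbb W^\Lambda(\theta)\Psi_n\}_{n\in\mathbb N}$ and closedness of $\mathbb T$. The main obstacle will be precisely this approximation step: since $\mathcal N_b$ is not form-controlled by $\mathbb T$ (there is no infrared weight on $-\Delta$), the $(\mathcal N+1)^{\ell/2}$ terms on the right must be tracked carefully along $\Psi_n$, and closing the argument ultimately hinges on exploiting the quadratic structure of $[\mathbb D^\Lambda,\mathbb T]$ — which produces corrections involving at most a linear power of $\mathcal N_b$ — together with the $\mathcal N$-moment bound of Proposition~\ref{prop:Gross-Bog:evolution}(iv) applied on both sides of the Bogoliubov evolution.
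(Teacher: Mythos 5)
Your proposal follows the same route as the paper's (very terse) proof: bound the commutator $[\mathbb T,\mathbb D^\Lambda_{u,\alpha}(\theta)]$, run Gr\"onwall with a regularized $\mathbb T_\eta$, and interpolate to $r\in(0,1)$. The decomposition of the commutator, the integration-by-parts treatment of the $\Delta\tau_{u,\alpha}$ term (which is needed, since $k^2 B_0\notin L^2$ so $\Delta\tau$ is not bounded for $\alpha\in L^2$), and the use of Proposition~\ref{prop:Gross-Bog:evolution}(iv) to propagate $\mathcal N$-moments are all correct, and more careful than the paper's one-line assertion that the commutator is ``$\mathbb T$-bounded.''

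The subtlety you flag is genuine and affects the paper's proof as well, not just yours. The off-diagonal commutator terms $\int (\omega(k)-\Delta_x)\kappa^\Lambda_{u^\theta}(k,x)\, a_k^*b_x^*+\mathrm{h.c.}$ are form-bounded by $C(\mathcal N+1)$ but \emph{not} by $C(\mathbb T+1)$, because $\textnormal{d}\Gamma_b(-\Delta)$ has no spectral gap and therefore does not control $\mathcal N_b$. Consequently Gr\"onwall directly yields preservation of $D(\mathbb T^r)\cap D(\mathcal N^r)$, or equivalently $D((\mathbb T+\mathcal N)^r)$, not of $D(\mathbb T^r)$ alone. Your proposed approximation with $\Psi_n=\id_{\mathcal N_b\le n}\Psi$ does not close as written: for $\Psi\in D(\mathbb T)\setminus D(\mathcal N_b^{1/2})$ the right-hand side $\|(\mathcal N+1)^{1/2}\Psi_n\|$ diverges as $n\to\infty$, so the Gr\"onwall bound is not $n$-uniform and closedness of $\mathbb T$ cannot be invoked. (Your final sentence essentially acknowledges this without resolving it.) Note, however, that this gap is harmless downstream: in the only place the lemma is used, the proof of Proposition~\ref{prop:id:Bog:ren}, it is applied to $\Psi\in D((\mathcal N+\mathbb T)^{1/2})$, so the extra $\mathcal N^{1/2}$-regularity is available and the argument goes through without the delicate density step. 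If one insists on the statement exactly as formulated, one would need a different mechanism — e.g.\ verifying that the one-particle Bogoliubov matrix $\mathcal V(\theta)$ of $\mathbb W^\Lambda_{u,\alpha}(\theta)$ preserves the relevant Sobolev scale, which is plausible given the compact support in $k$ and the $H^2$-regularity in $x$ of $\kappa^\Lambda_{u^\theta}$, but is a genuinely different argument.
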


\begin{proof}
  For $u\in H^2(\mathbb R^3)$ and finite $\Lambda$, the coefficients of the
  generator, $\tau_{u,\alpha}(x)$ and $\kappa^\Lambda_{u^\theta}(k,x)$, are
  elements of $H^2(\R^3)$ in $x$ and compactly supported in $k$. The
  commutator $[\mathbb{T},\mathbb{\mathbb D}_{u,\alpha}^\Lambda(\theta)]$ is
  thus $\mathbb{T}$-bounded. From this the claim follows from Gr\"onwall's
  inequality and interpolation.
\end{proof}

The next lemma shows that the dressing transformation $W^{\rm D}(\theta)$
introduced in \eqref{eq:definition of the quantum dressing group:theta} is
effectively described by the evolution $\mathbb W_{u,\alpha}^\infty(\theta) $
introduced in Proposition \ref{prop:Gross-Bog:evolution}. The proof of the
lemma follows a similar strategy as the proof of Theorem
\ref{thm:norm:approximation:dressed:dynamics}.

\begin{lemma}\label{lemma:U:Lambda:approx:fluc} There exists a constant $C>0$
  such that for all $(u ,\alpha )\in H^1(\mathbb R^3) \oplus
  L^2(\mathbb{R}^3)$ with $\norm{u}_{L^2}=1$, $\chi \in \mathcal \cF_{\perp
    u} \otimes \cF $ with $\norm{\chi }=1$, $N\ge 1$ and $|\theta|\le 1$ ,
  \begin{align*}
    \big\|  W^{\rm D}  (\theta) X_{u,\alpha}^*\chi  -X_{\mathfrak D[\theta](u,\alpha)}^* \mathbb W^\infty_{u,\alpha}(\theta) \chi \big\|
    & \le e^{C ( \| u \|_{H^1}^2 + \| \alpha \|_{ L^2} ) }   \norm{ (1 + \mathcal N)^{\frac32} \chi } N^{-\frac12}.
  \end{align*}
\end{lemma}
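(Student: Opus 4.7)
The plan is to follow the strategy of the proof of Theorem~\ref{thm:norm:approximation:dressed:dynamics}, with the dressing ``dynamics'' $\theta\mapsto W^{\rm D}(\theta)$ in place of $t\mapsto e^{-itH_N^{\rm D}}$ and the Bogoliubov flow $\mathbb W^\infty_{u,\alpha}(\theta)$ from Proposition~\ref{prop:Gross-Bog:evolution} in place of $\mathbb U^{\rm D}_{u,\alpha}(t)$. First I would introduce
\begin{equation*}
\zeta(\theta):=X_{\mathfrak D[\theta](u,\alpha)}\,W^{\rm D}(\theta)\,X_{u,\alpha}^*\chi,\qquad \xi(\theta):=\mathbb W^\infty_{u,\alpha}(\theta)\chi ,
\end{equation*}
and use that $X_{\mathfrak D[\theta](u,\alpha)}$ is an isometry on $\mathcal H_N$ together with $\xi(\theta)\in \mathcal F_{\perp u^\theta}\otimes \mathcal F$ (Proposition~\ref{prop:Gross-Bog:evolution}(i)) and $X_{\mathfrak D[\theta](u,\alpha)}X_{\mathfrak D[\theta](u,\alpha)}^*=\id_{\mathcal N_b\le N}$ to reduce the left-hand side of the lemma to $\|\zeta(\theta)-\id_{\mathcal N_b\le N}\xi(\theta)\|\le \|\zeta(\theta)-\xi(\theta)\|$. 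These vectors solve $i\partial_\theta\zeta=D_{u,\alpha}(\theta)\zeta$ with $\zeta(0)=\id_{\mathcal N_b\le N}\chi$ and $i\partial_\theta\xi=\mathbb D^\infty_{u,\alpha}(\theta)\xi$ with $\xi(0)=\chi$; in particular $\|\zeta(0)-\xi(0)\|=\|\id_{\mathcal N_b>N}\chi\|\le N^{-3/2}\|(1+\mathcal N)^{3/2}\chi\|$.

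Since both generators are symmetric, differentiation and Cauchy--Schwarz yield
\begin{equation*}
\bigl|\tfrac{d}{d\theta}\|\zeta(\theta)-\xi(\theta)\|\bigr|\le \bigl\|(D_{u,\alpha}(\theta)-\mathbb D^\infty_{u,\alpha}(\theta))\xi(\theta)\bigr\| .
\end{equation*}
Comparing the explicit expressions~\eqref{eq:def:fluc:gen:dressing:trafo} and~\eqref{eq:Gross-Bog-generator}, the difference of the two generators consists of only two types of contributions: (a) the quadratic $\kappa$-terms of $\mathbb D^\infty_{u,\alpha}$ multiplied by the factor $[1-\mathcal N_b/N]_+^{1/2}-1$, and (b) the cubic term $N^{-1/2}\int dx\,b_x^*(q_{u^\theta}\hat\Phi(iB_x)q_{u^\theta}-\langle u^\theta,\hat\Phi(iB_{(\cdot)})u^\theta\rangle)b_x$, which carries the explicit prefactor $N^{-1/2}$. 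The main technical step is to establish the operator inequality
\begin{equation*}
\bigl\|\bigl(D_{u,\alpha}(\theta)-\mathbb D^\infty_{u,\alpha}(\theta)\bigr)\xi\bigr\|\le C\, N^{-1/2}\,\bigl\|(1+\mathcal N)^{3/2}\xi\bigr\|
\end{equation*}
with a universal constant $C$ uniformly in $|\theta|\le 1$; the $(u,\alpha)$-dependence of the coefficients only enters via $\|\kappa_{u^\theta}\|_{L^2}\le \|B_0\|_{L^2}$, which is finite and independent of $(u,\alpha)$ since $\|u^\theta\|_{L^2}=1$.

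Establishing this operator bound is the main obstacle. For~(a), the key observation is the scalar inequality $([1-n/N]_+^{1/2}-1)^2\le n/N$ (which combines $1-\sqrt{1-x}\le x$ with the trivial bound $|1-\sqrt{1-x}|\le 1$ for $x\in[0,1]$); combined with the standard estimate $\|\int dkdx\,\kappa(k,x)a_k^* b_x^*\eta\|\le C\|\kappa\|_{L^2}\|(1+\mathcal N)\eta\|$ (and its analogue with $a_k$ in place of $a_k^*$), applied to $\eta=([1-\mathcal N_b/N]_+^{1/2}-1)\xi$, together with the operator inequality $(1+\mathcal N)^2\mathcal N_b\le(1+\mathcal N)^3$, this yields the contribution $\lesssim N^{-1/2}\|(1+\mathcal N)^{3/2}\xi\|$. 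For~(b), standard estimates for trilinear expressions in creation and annihilation operators, together with the uniform-in-$x$ bound $\|B_x(\cdot)\|_{L^2(\mathbb R^3_k)}=\|B_0\|_{L^2}$, yield the same rate. Once this operator bound is in hand, Proposition~\ref{prop:Gross-Bog:evolution}(iv) with $\ell=3$ yields $\|(1+\mathcal N)^{3/2}\xi(\theta)\|\le e^{C(\|u\|_{H^1}^2+\|\alpha\|_{L^2})}\|(1+\mathcal N)^{3/2}\chi\|$ uniformly in $|\theta|\le 1$, and integrating the resulting differential inequality from $0$ to $\theta$ closes the argument.
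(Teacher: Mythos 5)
Your proposal is correct and follows essentially the same route as the paper's proof: reduce to the excitation-space norm $\|\zeta(\theta)-\xi(\theta)\|$, estimate its $\theta$-derivative by the generator difference acting on $\xi(\theta)=\mathbb W^\infty_{u,\alpha}(\theta)\chi$, bound that difference by $CN^{-1/2}\|(1+\cN)^{3/2}\xi\|$ using $\|\kappa_{u^\theta}\|_{L^2}\le\|B_0\|_{L^2}$ and $([1-n/N]_+^{1/2}-1)^2\lesssim n/N$, and close with Proposition~\ref{prop:Gross-Bog:evolution}(iv). The only cosmetic difference is that you inline the generator-difference estimate, whereas the paper cites it as the final inequality of Lemma~\ref{lem:bounds:Bog:generator:Gross}, which is proved by exactly the decomposition and estimates you describe.
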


\begin{proof} Let $\zeta(\theta) = X_{\mathfrak D[\theta](u,\alpha)} W^{\rm
    D}(\theta) X_{u,\alpha}^* \chi$ with $\zeta(0) = \id_{\cN_b\leq N}\chi$
  and write the norm difference as
  \begin{align}
    \big\|  W^{\rm D}  (\theta) X_{u,\alpha}^*\chi  -X_{\mathfrak D[\theta](u,\alpha)}^* \mathbb W^\infty_{u,\alpha}(\theta) \chi \big\| & =  \| \zeta(\theta) - \id_{\mathcal N_b \le N} \mathbb W^\infty_{u,\alpha}(\theta) \chi \| .
  \end{align}
  Observe that dropping the projection $\id_{\mathcal N_b \le N}$ from the
  norm increases its value, as $\id_{\mathcal N_b \le N} \zeta(\theta) =
  \zeta(\theta)$. Using $i\partial_\theta \zeta(\theta) =
  D_{u,\alpha}(\theta) \zeta(\theta)$, with $D_{u,\alpha}(\theta) $ given by
  \eqref{eq:def:fluc:gen:dressing:trafo}, and Proposition
  \ref{prop:Gross-Bog:evolution}, we obtain
  \begin{multline}
    \tfrac{d}{d\theta} \norm{\zeta(\theta) - \mathbb W^\infty_{u,\alpha}(\theta) \chi }^2\\
    =  2 \Im \scp{\zeta(\theta) -  \mathbb W^\infty_{u,\alpha}(\theta) \chi  }{\left( D_{u,\alpha}(\theta) - \mathbb{D}^\infty_{u,\alpha}(\theta) \right)  \mathbb W^\infty_{u,\alpha}(\theta) \chi  }.
  \end{multline}
  To bound the right-side, we employ Lemma
  \ref{lem:bounds:Bog:generator:Gross}, which states the existence of a
  constant $C$, such that for all $(u,\alpha) \in H^1(\mathbb R^3) \oplus
  L^2(\mathbb{R}^3)$ with $\| u\|_{L^2} = 1 $, $\phi,\chi \in \cF \otimes
  \cF$,
  \begin{align}
    \big| \langle \phi , ( D_{u,\alpha}(\theta) - \mathbb D_{u,\alpha}^\infty(\theta) ) \chi \rangle \big| \le C  \| \phi \| \, \| (\mathcal N+1)^{3/2} \chi \| N^{-1/2} .
  \end{align}
  With this at hand, the proof is readily finished, as
  \begin{multline}
    \tfrac{d}{d\theta} \norm{\zeta(\theta) - \mathbb W^\infty_{u,\alpha}(\theta) \chi }^2 \\
    \le C \| \zeta(\theta) -  \mathbb W^\infty_{u,\alpha}(\theta) \chi \|\,  \| (\mathcal N+1)^{3/2}  \mathbb W^\infty_{u,\alpha}(\theta) \chi  \| N^{-1/2} ,
  \end{multline}
  so using that $\norm{\zeta(0) - \chi } = \norm{\id_{\cN_b>N}\chi }\leq
  N^{-3} \|\cN^{3/2}\chi \|^2$ and integrating leads to
  \begin{align}
    \norm{\zeta(\theta) - \mathbb W^\infty_{u,\alpha}(\theta) \chi } \le C  \norm{(\mathcal N+1)^{3/2} \mathbb W^\infty_{u,\alpha}(\theta) \chi } N^{-1/2}.
  \end{align}
  The desired result now follows from \eqref{eq:bound:W*NW}.
\end{proof}

In the next proposition we make precise how the dressing $\mathbb
W^\infty_{u,\alpha}(1)$ interpolates between the dressed and undressed
Bogoliubov evolutions. This is the analogous statement to Lemma
\ref{lemma:commuting:flow:x} for the mean-field flows on the level of the
fluctuations. The proof of the proposition, along with the proof of the
existence of $\mathbb U_{u,\alpha}(t)$ (Theorem \ref{theorem:limit:unitary}),
is the subject of Section \ref{sec:renormalization} (see
Proposition~\ref{prop:id:Bog:ren}).

\begin{proposition}\label{prop:id:Bog:ren:new}
  Let $(u,\alpha)\in H^3(\R^3)\oplus \mathfrak{h}_{5/2}$ with
  $\norm{u}_{L^2}=1$, and $ \mathbb U $, $\mathbb U^{\rm D} $, $\mathbb
  W^\infty $ the evolutions introduced in Theorem \ref{theorem:limit:unitary}
  and Propositions \ref{prop:Bog-D}, \ref{prop:Gross-Bog:evolution}. For
  every $t\in \mathbb R$, we have
  \begin{align}
    \mathbb U^{\rm D}_{\mathfrak{D}(u,\alpha) } (t) \,
    \mathbb W^\infty_{u,\alpha}(1) \, = \,  \mathbb W_{\mathfrak s [t](u,\alpha)}^\infty (1) \,  \mathbb U_{u,\alpha}(t) .\notag
  \end{align}
\end{proposition}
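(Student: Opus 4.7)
My plan is to reduce Proposition \ref{prop:id:Bog:ren:new} to a corresponding identity at finite ultraviolet cutoff $\Lambda$ and then remove the cutoff by strong convergence. For $\Lambda \in (0,\infty)$, let $\mathbb U^{{\rm D},\Lambda}_{u,\alpha}(t)$ denote the evolution obtained from Proposition \ref{prop:Bog-D} by replacing $G_x,B_x$ in \eqref{eq:HD:Bog definition} by their cutoff versions. The target cutoff identity is
\begin{equation*}
\mathbb U^{{\rm D},\Lambda}_{\mathfrak D(u,\alpha)}(t)\, \mathbb W^\Lambda_{u,\alpha}(1)
\;=\; \mathbb W^\Lambda_{\mathfrak s[t](u,\alpha)}(1)\, \mathbb U^\Lambda_{u,\alpha}(t)\, e^{-i t E^\Lambda},
\end{equation*}
where $\mathbb U^\Lambda$ is the undressed Nelson--Bogoliubov evolution generated by \eqref{eq:reg:Nelson:Bog}. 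Taking $\Lambda \to \infty$ on the LHS uses only the strong convergence $\mathbb W^\Lambda \to \mathbb W^\infty$ from Proposition \ref{prop:Gross-Bog:evolution}(ii), combined with $\mathbb U^{{\rm D},\Lambda}\to \mathbb U^{\rm D}$, which follows by a Duhamel argument from $\mathbb H^{{\rm D},\Lambda}\to \mathbb H^{\rm D}$ on the domain controlled by Proposition \ref{prop:Bog-D}. Strong convergence of the RHS then forces $\mathbb U^\Lambda e^{-itE^\Lambda}$ to have a strong limit, which we identify as $\mathbb U(t)$; this simultaneously proves Theorem \ref{theorem:limit:unitary} and the claimed intertwining.

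\textbf{Reduction to an identity of generators.} The cutoff identity is established by applying both sides to vectors $\chi$ in the intersection of domains preserved by the two flows (via Lemma \ref{lem:W-reg} and Proposition \ref{prop:Bog-D}). At $t=0$, both sides equal $\mathbb W^\Lambda_{u,\alpha}(1)\chi$. The LHS solves a Schr\"odinger equation with generator $\mathbb H^{{\rm D},\Lambda}_{\mathfrak D(u,\alpha)}(t)$, whose coefficients depend on $\mathfrak s^{\rm D}[t]\circ \mathfrak D(u,\alpha) = \mathfrak D \circ \mathfrak s[t](u,\alpha)$ by Lemma \ref{lemma:commuting:flow:x}. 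Writing $(u_t,\alpha_t)=\mathfrak s[t](u,\alpha)$, the RHS satisfies a Schr\"odinger equation with the effective generator
\begin{equation*}
\mathbb G^\Lambda(t) := \mathbb W^\Lambda_{(u_t,\alpha_t)}(1)\, \mathbb H^\Lambda_{u,\alpha}(t)\, \mathbb W^\Lambda_{(u_t,\alpha_t)}(1)^* + i\bigl(\partial_t \mathbb W^\Lambda_{(u_t,\alpha_t)}(1)\bigr)\, \mathbb W^\Lambda_{(u_t,\alpha_t)}(1)^* - E^\Lambda .
\end{equation*}
By uniqueness of the solutions (Proposition \ref{prop:Bog-D} applied at cutoff $\Lambda$), the cutoff identity reduces to verifying the quadratic-form equality $\mathbb G^\Lambda(t) = \mathbb H^{{\rm D},\Lambda}_{\mathfrak D(u,\alpha)}(t)$ on a common core.

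\textbf{The main computation and the obstacle.} This quadratic-form identity is the crux. Because $\mathbb W^\Lambda$ is a Bogoliubov transformation with kernel parametrized by \eqref{eq:Gross-Bog-generator}, conjugating the quadratic operator $\mathbb H^\Lambda$ yields another quadratic operator plus a c-number: the quadratic part mixes the pairing and number-conserving kernels via the explicit dressing \eqref{eq:explicit:solution:dressing}, and the c-number accumulates the UV-divergent self-energy. The Heisenberg-type term $i(\partial_t \mathbb W^\Lambda)\mathbb W^{\Lambda *}$ is most conveniently handled via the two-parameter flow used in the proof of Proposition \ref{prop:Gross-Bog:evolution}; it supplies the quadratic corrections that turn the undressed one-particle operator into $h_{u_t^{\rm D},\alpha_t^{\rm D}}$ of \eqref{eq:def:dressed:meanfield:Ham} and that produce the pairing kernels $L_{\alpha_t^{\rm D}}$, $M_{u_t^{\rm D}}$, $K^{(j)}_{u_t^{\rm D}}$ appearing in \eqref{eq:HD:Bog definition}. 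The remaining scalar is identified as $E^\Lambda$ and inherits the same logarithmic UV divergence as in \eqref{eq:def:ren:Nelson}. The main obstacle is precisely this bookkeeping: one must show that the algebraic conjugation and the derivative term combine to give \emph{exactly} the dressed Bogoliubov generator evaluated on the trajectory $\mathfrak D(u_t,\alpha_t)$, and that the scalar residue is independent of $(u,\alpha,t)$ (and hence can be absorbed into a $t$-linear phase). This amounts to the quantization of Lemma \ref{lemma:commuting:flow:x} at the level of quadratic Hamiltonians, and it is where the renormalization constant $E^\Lambda$ is born.
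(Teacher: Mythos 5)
Your high-level architecture matches the paper: prove a cutoff intertwining identity, reduce it to an equality of generators via uniqueness of solutions, then remove the cutoff by strong convergence (with $\mathbb{W}^\Lambda \to \mathbb{W}^\infty$ from Proposition~\ref{prop:Gross-Bog:evolution}(ii) and $\mathbb{U}^{\mathrm{D},\Lambda}\to \mathbb{U}^{\mathrm{D}}$ via a Duhamel argument as in Proposition~\ref{prop:theta-Bog}(ii)). The cutoff identity you state is precisely Proposition~\ref{prop:id:Bog:ren} specialized to $\theta=1$.

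However, the step you label ``the main computation and the obstacle'' contains a genuine gap, and the route you sketch for it would not work as stated. You propose to compute $\mathbb{W}^\Lambda_{(u_t,\alpha_t)}(1)\,\mathbb{H}^\Lambda_{u,\alpha}(t)\,\mathbb{W}^\Lambda_{(u_t,\alpha_t)}(1)^*$ plus the Heisenberg correction and compare the result to $\mathbb{H}^{\mathrm{D},\Lambda}_{\mathfrak D(u,\alpha)}(t)$. But the unitary $\mathbb{W}^\Lambda(\theta)$ is only defined as the solution of a non-autonomous ODE; its Bogoliubov matrix $(\mathfrak u(\theta),\mathfrak v(\theta))$ has no closed form, so the conjugated kernels at $\theta=1$ cannot be written down explicitly and the comparison cannot be performed by bookkeeping. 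Recognizing that the Heisenberg term should be expressed via the two-parameter flow does not resolve this, because you still need to know what the conjugated quadratic operator is at intermediate $\theta$ in order to differentiate.

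The device the paper uses, and which is absent from your proposal, is the introduction of the interpolating family of quadratic Hamiltonians $\mathbb{H}^\Lambda_{u,\alpha,\theta}(t)$, $\theta\in[0,1]$, in Section~\ref{sect:theta-Bog}, together with the corresponding interpolating mean-field flow $\mathfrak{s}_\theta[t]$. One then states the cutoff identity for \emph{every} $\theta$ (Proposition~\ref{prop:id:Bog:ren}). It holds trivially at $\theta=0$ (where $\mathbb{W}^\Lambda(0)=\id$ and $E^\Lambda_0=0$), and showing that the $\theta$-derivative of the difference of the two generators vanishes reduces the problem to the identity
\begin{equation*}
  i\partial_\theta E^\Lambda_\theta + i\partial_t \mathbb{D}^\Lambda_t(\theta) - i\partial_\theta \mathbb{H}^\Lambda_\theta(t) - [\mathbb{H}^\Lambda_\theta(t),\mathbb{D}^\Lambda_t(\theta)] = 0,
\end{equation*}
which involves only the \emph{generator} $\mathbb{D}^\Lambda_{u,\alpha}(\theta)$ (whose kernels $\tau_{u,\alpha}$ and $\kappa^\Lambda_{u^\theta}$ are explicit), never the propagator $\mathbb{W}^\Lambda(\theta)$ itself. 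This commutator computation is tractable termwise, and the c-number that remains is identified with $\partial_\theta E^\Lambda_\theta$ by an honest calculation. Without this $\theta$-differentiation trick and the interpolating family $\mathbb{H}^\Lambda_\theta$, your plan for the generator identity has no practicable way forward.
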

Observe that the actions of the mean-field flows, with respect to which we
are considering fluctuations, on both sides of the identity agree.  Indeed,
on the left $(u,\alpha)$ is first evolved to $\mathfrak{D}(u,\alpha)$, by the
definition of the generator $\mathbb{D}_{u,\alpha}^\infty(\theta)$, which is
taken as the initial reference state in the dressed Bogoliubov transformation
$\mathbb U^{\rm D}_{\mathfrak{D}(u,\alpha)}$, whose generator includes the
evolution $\mathfrak{s}^\mathrm{D}$. On the right side, the reference states
evolve according to $(u,\alpha) \mapsto \mathfrak s [t](u,\alpha) \mapsto
\mathfrak D[1]\circ \mathfrak s [t](u,\alpha) $, which equals
$\mathfrak{s}^\mathrm{D}[t]\circ \mathfrak{D}(u,\alpha)$ by Lemma
\ref{lemma:commuting:flow:x}.

The next statement is the last step in the preparation for the proof of
Theorem \ref{thm:norm:approximation}. It is an immediate consequence of
Proposition~\ref{prop:id:Bog:ren:new} and Propositions \ref{prop:Bog-D},
\ref{prop:Gross-Bog:evolution}.

\begin{corollary}\label{Cor:Number:operator:bounds:U} Let $\mathbb
  U_{u,\alpha}(t)$ be the unitary of Theorem \ref{theorem:limit:unitary} for
  $(u_t,\alpha_t) = \mathfrak s[t](u,\alpha)$ and $(u,\alpha)$ as stated in
  the hypothesis. For every $\ell \in \mathbb N$ there exists a constant
  $C(\ell)$ so that for all $|t|\ge 0$
  \begin{align*}
    \mathbb U_{u,\alpha} (t)^* (\mathcal N+1)^\ell \, \mathbb U_{u,\alpha}(t) \le  e^{C (\ell ) R(t)} \, (\mathcal N+1)^\ell
  \end{align*}
  as quadratic forms on $\cF\otimes \cF$, where $R(t) = 1+ \int_0^{|t|}
  (\|u_s\|_{H^3}^{10} + \|\alpha_s\|_{\mathfrak{h}_{5/2}}^{10}) ds$.
\end{corollary}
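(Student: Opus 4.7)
The plan is to use Proposition~\ref{prop:id:Bog:ren:new} to factor $\mathbb{U}_{u,\alpha}(t)$ through the dressed Bogoliubov evolution and the two dressings, and then combine the number-operator bounds already established for each of these three pieces. Concretely, solving the identity in Proposition~\ref{prop:id:Bog:ren:new} for $\mathbb{U}_{u,\alpha}(t)$ gives
\begin{equation*}
\mathbb{U}_{u,\alpha}(t) = \mathbb{W}^\infty_{\mathfrak s[t](u,\alpha)}(1)^{*}\, \mathbb{U}^{\rm D}_{\mathfrak D(u,\alpha)}(t)\, \mathbb{W}^\infty_{u,\alpha}(1),
\end{equation*}
so that $\mathbb{U}_{u,\alpha}(t)^{*}(\mathcal N+1)^{\ell}\mathbb{U}_{u,\alpha}(t)$ is a threefold conjugation which we bound layer by layer.

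I would apply the estimates from the innermost to the outermost factor. First, by Proposition~\ref{prop:Gross-Bog:evolution}(iv) with $(u,\alpha)$ replaced by $\mathfrak s[t](u,\alpha)=(u_t,\alpha_t)$ and $\theta=1$,
\begin{equation*}
\mathbb{W}^\infty_{(u_t,\alpha_t)}(1)(\mathcal N+1)^{\ell}\mathbb{W}^\infty_{(u_t,\alpha_t)}(1)^{*} \le e^{C(\ell)(\|u_t\|_{H^1}^{2}+\|\alpha_t\|_{L^2})}(\mathcal N+1)^{\ell}.
\end{equation*}
Next, conjugating with $\mathbb{U}^{\rm D}_{\mathfrak D(u,\alpha)}(t)$ and using Proposition~\ref{prop:Bog-D} produces a further factor $e^{C(\ell) R^{\rm D}_{\mathfrak D(u,\alpha)}(t)}$, where
\begin{equation*}
R^{\rm D}_{\mathfrak D(u,\alpha)}(t) = 1+\int_0^{|t|}\|u^{\rm D}_s\|_{H^3}^{2}(1+\|\alpha^{\rm D}_s\|_{\mathfrak h_{3/2}})^{2}\,ds
\end{equation*}
with $(u^{\rm D}_s,\alpha^{\rm D}_s)=\mathfrak s^{\rm D}[s]\circ\mathfrak D(u,\alpha)=\mathfrak D\circ\mathfrak s[s](u,\alpha)$ by Lemma~\ref{lemma:commuting:flow:x}. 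Finally, another application of Proposition~\ref{prop:Gross-Bog:evolution}(iv), this time with $(u,\alpha)$ in place of $(u_t,\alpha_t)$, yields a factor $e^{C(\ell)(\|u\|_{H^1}^{2}+\|\alpha\|_{L^2})}$. Multiplying the three exponentials gives the desired operator inequality with a constant of the form $\exp(C(\ell) \Sigma(t))$, where $\Sigma(t)$ is the sum of the three time-dependent exponents above.

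The remaining task is to bound $\Sigma(t)$ by $R(t)$ up to a constant depending only on $\ell$. For the two boundary terms, Proposition~\ref{prop:skg well posed} (with $s=1$) gives $\|u_t\|_{H^1}+\|\alpha_t\|_{\mathfrak h_{1/2}}\le C$ uniformly in $t$, so these contribute an absorbable constant. For the integral term $R^{\rm D}_{\mathfrak D(u,\alpha)}(t)$, I would invoke Lemma~\ref{lem:mf-dressing-regularity} to estimate $\|u^{\rm D}_s\|_{H^3}$ and $\|\alpha^{\rm D}_s\|_{\mathfrak h_{3/2}}$ in terms of $\|u_s\|_{H^3}$ and $\|\alpha_s\|_{\mathfrak h_{5/2}}$; this is exactly the bookkeeping already carried out at the end of the proof of Theorem~\ref{theorem:final reduced density matrices} to obtain \eqref{eq:RD(t):bound}, which gives $R^{\rm D}_{\mathfrak D(u,\alpha)}(t) \le C R(t)$. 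None of these steps presents a genuine obstacle: all the estimates are in place, and the proof is purely a matter of composing them. The only point that requires a bit of care is the uniform bound on the boundary $\|u_t\|_{H^1}^2+\|\alpha_t\|_{L^2}$, but this is handled by the $s=1$ case of Proposition~\ref{prop:skg well posed}.
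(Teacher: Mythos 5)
Your proposal is correct and follows essentially the same route as the paper: both invert the dressing identity of Proposition~\ref{prop:id:Bog:ren:new} to write $\mathbb U_{u,\alpha}(t)$ as a composition of the two dressings and the dressed Bogoliubov evolution, apply \eqref{eq:bound:W*NW}, \eqref{eq:bound:WNW*} and Proposition~\ref{prop:Bog-D} layer by layer, and then control the resulting exponent by $R(t)$ using the uniform $H^1\oplus\mathfrak h_{1/2}$ bound and \eqref{eq:RD(t):bound}. The bookkeeping is exactly the paper's.
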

\begin{proof} Starting from the identity of
  Proposition~\ref{prop:id:Bog:ren:new} and acting on both sides with
  $\mathbb W^\infty_{\mathfrak s[t](u,\alpha)}(1)^*$, we can express $\mathbb
  U_{u,\alpha}(t)$ in terms of three Bogoliubov evolutions, for each of which
  the corresponding bounds on number operators have been established in
  \eqref{eq:bound:W*NW}, \eqref{eq:bound:WNW*} and
  Proposition~\ref{prop:Bog-D}. Collecting the right constant, this gives
  \begin{multline}
    \mathbb U_{u,\alpha} (t)^* (\mathcal N+1)^\ell \, \mathbb U_{u,\alpha} (t) \\
    \le e^{C (\ell) ( R^{\rm D}(t) + \| u_t^{\rm D} \|^2_{H^1} + \| \alpha_t^{\rm D} \|_{L^2} + \| u \|^2_{H^1} + \| \alpha\|_{L^2} ) }  (\mathcal N+1)^\ell
  \end{multline}
  with $R^{\rm D}(t) = 1 + \int_0^{|t|} \| u_s^{\rm D} \|_{H^3}^2 (1 + \|
  \alpha^{\rm D}_s \|_{\mathfrak h_{3/2}} )^2 ds$ where $(u^{\rm D}_t ,
  \alpha^{\rm D}_t) = \mathfrak s^{\rm D}[t] \circ \mathfrak
  D(u,\alpha)$. That the exponential factor is bounded by $e^{C(\ell) R(t)}$
  follows from \eqref{eq:bounds:uD:alpha:D} and \eqref{eq:RD(t):bound}.
\end{proof}

\subsection{Proof of Theorem~\ref{thm:norm:approximation}}
\label{sect:proof:thm:norm}

We are now ready to prove our second main result.

\begin{proof}[Proof of Theorem \ref{thm:norm:approximation}]
  Let $(u,\alpha)\in H^3(\R^3)\oplus \mathfrak{h}_{5/2}$ with
  $\norm{u}_{L^2}=1$, $\chi \in \cF_{\perp u}\otimes \cF$, as given by the
  hypothesis.  Let $\mathbb W_{u,\alpha}^\infty(1)$ be the Bogoliubov
  approximation of the dressing transformation of Proposition
  \ref{prop:Gross-Bog:evolution} and $\mathbb U_{u,\alpha}(t)$ be the
  Nelson--Bogoliubov dynamics given by Theorem \ref{theorem:limit:unitary},
  and $\mathbb U_{\mathfrak{D}(u,\alpha)}^{\rm D}(t)$ be the dressed
  Bogoliubov dynamics of Proposition \ref{prop:Bog-D}.

  Using \eqref{eq:relation:dressed:undressed:into}, we can write the norm
  difference that we want to estimate in terms of the dressed dynamics,
  \begin{align}
    \mathscr D &  = \big\| e^{-i t H_N^{\rm D} }W^{\rm D} X_{u,\alpha}^* \chi - W^{\rm D} X^*_{\mathfrak{s}[t](u,\alpha)} \mathbb U_{u,\alpha} (t) \chi \big\|.
  \end{align}
  We split this into the difference of the dressed dynamics and its
  Bogoliubov approximation, and the corresponding approximation of the
  dressing transformation, i.e., we estimate $\mathscr D \le \mathscr D_1 +
  \mathscr D_2 + \mathscr D_3$ with
  \begin{subequations}
    \begin{align}
      \mathscr D_1 & =  \big\| e^{-i t H_N^{\rm D}}  W^{\rm D} X_{u,\alpha}^*\chi  - e^{-i t H_N^{\rm D}} X_{\mathfrak{D}(u,\alpha)}^*\mathbb W_{u,\alpha}^\infty(1)\chi \big\|  \\[1mm]
      \mathscr D_2 & =  \big\| e^{-i t H_N^{\rm D}} X_{\mathfrak{D}(u,\alpha)}^*\mathbb W_{u,\alpha}^\infty(1)\chi
      - X_{\mathfrak{s}^\mathrm{D}[t]\circ \mathfrak{D}(u,\alpha)}^* \mathbb U_{\mathfrak{D}(u,\alpha)}^{\rm D}(t)   \mathbb W_{u,\alpha}^\infty(1)\chi \big\| \\[2mm]
      \mathscr D_3 & = \big\| X_{\mathfrak{s}^\mathrm{D}[t]\circ \mathfrak{D}(u,\alpha)}^* \mathbb U_{\mathfrak{D}(u,\alpha)}^{\rm D}(t)  \mathbb W_{u,\alpha}^\infty(1)\chi   -  W^{\rm D} X_{\mathfrak{s}[t](u,\alpha)}^* \mathbb U_{u,\alpha} (t)\chi \big\|.
    \end{align}
  \end{subequations}
  Recalling $\mathfrak D = \mathfrak D[1]$ and applying Lemma
  \ref{lemma:U:Lambda:approx:fluc} with $\theta=1$ immediately gives
  \begin{align}
    \mathscr D_1 =  \big\|  W^{\rm D} X_{u,\alpha}^*\chi  -  X_{\mathfrak{D}(u,\alpha)}^* \mathbb W_{u,\alpha} ^\infty (1)\chi \big\|  \le C  \norm{(\mathcal N+1)^{3/2} \chi} N^{-1/2}.
  \end{align}
  By Proposition \ref{prop:id:Bog:ren:new} and Lemma
  \ref{lemma:commuting:flow:x}, we have
  \begin{align}
    \mathscr D_3  &= \big\| X_{\mathfrak D \circ \mathfrak{s}[t](u,\alpha)}^* \mathbb{W}^\infty_{\mathfrak{s}[t](u,\alpha)}(1)\mathbb U_{u,\alpha} (t)  \chi   -  W^{\rm D} X_{\mathfrak{s}[t](u,\alpha)}^* \mathbb{U}_{u,\alpha} (t)\chi \big\| ,
  \end{align}
  and applying Lemma \ref{lemma:U:Lambda:approx:fluc} together with
  Proposition~\ref{prop:skg well posed} then yields (here $(u_t,
  \alpha_t)=\mathfrak{s}[t](u,\alpha)$)
  \begin{align}
    \mathscr D_3 &\le e^{C( \| u_t \|_{H^1}^2 + \| \alpha_t \|_{L^2} ) } \big\| (\mathcal N+1)^{3/2} \mathbb{U}_{u,\alpha} (t)\chi \big\| N^{-1/2} \notag\\
    &\le C \big\| (\mathcal N+1)^{3/2} \mathbb{U}_{u,\alpha}(t)\chi \big\| N^{-1/2} .
  \end{align}
  Using Corollary~\ref{Cor:Number:operator:bounds:U} we arrive at the desired
  bound $ \mathscr D_3 \le e^{C R(t)} \norm{(\mathcal N+1)^{3/2}
    \chi}N^{-1/2}$.  To estimate $\mathscr D_2$, we apply Theorem
  \ref{thm:norm:approximation:dressed:dynamics} (note that
  $\mathfrak{D}(u,\alpha) \in H^3 \oplus \mathfrak h_{5/2}$ by Lemma
  \ref{lem:mf-dressing-regularity} and $(u,\alpha) \in H^3 \oplus \mathfrak
  h_{5/2}$) so that
  \begin{equation}
    \mathscr D_2 \le e^{C R^{\rm D} (t)} \delta^{\rm D}( \mathbb W_{u,\alpha}^\infty(1) \chi )^{1/2}   \sqrt{\ln N} N^{-1/4}
  \end{equation}
  with $R^{\rm D}(t) = 1 + \int_0^{|t|} \norm{ u^{\rm D}_s }_{ H^3 }^2 ( 1 +
  \norm{\alpha^{\rm D}_s}_{ \mathfrak{h}_{3/2}} )^2 ds$ and $(u_t^\mathrm{D},
  \alpha_t^\mathrm{D})=\mathfrak{s}^\mathrm{D}[t]\mathfrak{D}(u,\alpha)=\mathfrak{D}(\mathfrak{s}[t](u,\alpha))$.
  In view of the bound on $R^{\rm D}(t)$ stated in \eqref{eq:RD(t):bound},
  this yields the claimed bound with
  \begin{equation}
    \delta(\chi)= \delta^\mathrm{D}(\mathbb W_{u,\alpha}^\infty(1) \chi).
  \end{equation}
  The domain of $\delta$ is dense in $\cF\otimes \cF$ since it is the image
  under the continuous map $\mathbb W_{u,\alpha}^\infty(1)^*$ of the dense
  set $D((\cN^3+\mathbb{T})^{1/2})$. The set $D((\cN^3+\mathbb{T})^{1/2})\cap
  \cF_{\perp u^\mathrm{D}}\otimes \cF$ is also dense in $\cF_{\perp
    u^\mathrm{D}}\otimes \cF$, as $\Gamma(q_{u^\mathrm{D}})$ is continuous
  and leaves $D((\cN^3+\mathbb{T})^{1/2})$ invariant for $u^\mathrm{D}\in
  H^1$. By the mapping property from
  Proposition~\ref{prop:Gross-Bog:evolution}, we have $\mathbb
  W_{u,\alpha}^\infty(1)^* \cF_{\perp u^\mathrm{D}}\otimes \cF = \cF_{\perp
    u}\otimes \cF$, so the image of $ D((\cN^3+\mathbb{T})^{1/2})$ is also
  dense in $\cF_{\perp u}\otimes \cF$. This completes the proof.
\end{proof}


\section{Renormalization of the Nelson--Bogoliubov evolution}
\label{sec:renormalization}

This section is dedicated to proving the existence of the renormalized
Nelson--Bogoliubov evolution, which is stated in Theorem
\ref{theorem:limit:unitary}. To accomplish this, we consider a family of
Bogoliubov Hamiltonians $\mathbb{H}_{u,\alpha,\theta}^\Lambda$ that
interpolate between $\mathbb{H}_{u,\alpha}^\Lambda$ and
$\mathbb{H}_{u,\alpha}^\mathrm{D}$ with ``dressing parameter''
$\theta\in[0,1]$ and UV cutoff
$\Lambda<\infty$. 
The key point is that for $\theta=1$ we can remove the cutoff, i.e.,
$\mathbb{H}_{u,\alpha,1}^\infty=\mathbb{H}_{u,\alpha}^\mathrm{D}$, and that
the Bogoliubov transformations $\mathbb W^\Lambda_{u,\alpha}(\theta)$
associated with the dressing interpolate between the different members of
$(\mathbb{H}_{u,\alpha,\theta}^\Lambda)_{\theta \in [0,1]}$. We exploit this
property to prove Theorem~\ref{theorem:limit:unitary} in
Section~\ref{sect:unitary} by defining $\mathbb{U}_{u,\alpha}(t)$ as the
Bogoliubov transformation generated by $\mathbb{H}_{u,\alpha}^\mathrm{D}$ and
transformed by the Bogoliubov approximation of the dressing. Along with the
existence of $\mathbb U_{u,\alpha}(t)$, this also proves Proposition
\ref{prop:id:Bog:ren:new}.

Even though the general strategy in this section is motivated by Nelson's
original approach for renormalizing the Nelson Hamiltonian, the argument is
more involved on the level of the fluctuations. This is because the dressing
transformation and the different Nelson--Bogoliubov evolutions are all
generated by non-autonomous equations.

\subsection{Bogoliubov Hamiltonians with $\theta\in
  [0,1]$}\label{sect:theta-Bog}

In order to define the Bogoliubov Hamiltonians
$\mathbb{H}_{u,\alpha}^\Lambda(t)$, we first introduce the partially dressed
mean-field flow
\begin{equation}\label{eq:s_theta-definition}
  \mathfrak{s}_\theta [t]=\mathfrak{D} [\theta]\circ \mathfrak{s} [t]\circ \mathfrak{D} [-\theta],
\end{equation}
where $\mathfrak{s}$ is the flow of the Schr\"odinger--Klein--Gordon
system~\eqref{eq:Schroedinger-Klein-Gordon equations regular}. For the next
statement recall the definitions below \eqref{eq:SKG dressed} and $
\phi_\alpha(x) = 2 \Re \scp{G_x}{\alpha}$.

\begin{lemma}\label{lem:theta-mean-field-flow}
  For $(u, \alpha)\in H^1(\R^3)\oplus \mathfrak{h}_{1/2}$, the function
  $(u_t, \alpha_t)=\mathfrak{s}_\theta [t](u,\alpha)$ is the unique solution
  to the Hamiltonian equations
  \begin{align}\label{eq:classical:transformed:equations}
    \begin{cases}
      \begin{aligned}
        i \partial_t u_t (x)
        &=  h_{u_t,\alpha_t, \theta} u_t(x)
        \\[1.5mm]
        i \partial_t \alpha_t(k) &  = \omega (k)\alpha_t +(1-\theta) \langle u_t, G_{(\cdot)} (k)u_t\rangle \\[1mm]
        & \qquad + 2 \theta  \langle u_t, k B_{(\cdot)}(k) (-i\nabla + \theta F_{\alpha_t })u_t \rangle
      \end{aligned}
    \end{cases}
  \end{align}
  with conserved energy
  \begin{align}\label{eq:theta:mf-energy}
    &\mathcal{E}_\theta(u, \alpha) \\
    &\quad =\big\langle u , \big( - \Delta + (1-\theta) \phi_\alpha + \theta A_{\alpha} + \theta^2 (F_\alpha)^2
    +\tfrac{1}{2} V_\theta * \abs{u}^2  \big) u \big\rangle
    + \scp{\alpha}{\omega \alpha}, \notag
  \end{align}
  where\begin{subequations}
    \begin{align}\label{eq:def:h:Lambda:theta:t}
      h_{u,\alpha,\theta} & = -\Delta + (1-\theta)\phi_{\alpha } + \theta A_{\alpha } + \theta^2 (F_{\alpha})^2 + V_\theta \ast|u|^2  -\mu_{u,\alpha,\theta}   \\[1mm]
      \mu_{u,\alpha,\theta}&= \tfrac{1-\theta}{2}  \scp{u}{ \phi_{\alpha} u}  + \tfrac{1}{2} \scp{u }{ V_\theta \ast |u|^2 u } + \theta \Re \scp{\alpha }{ f_{u} + \theta g_{u,\alpha}} \\[1mm]
      V_\theta  (x)  & = -4 \theta \Re \scp{G_{x} }{B_{0}}   + 2 \theta ^2  \Re \scp{B_{x} }{\omega B_{0}}. \label{eq:V_theta}
    \end{align}
  \end{subequations}
\end{lemma}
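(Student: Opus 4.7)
The plan is a direct chain-rule calculation based on the explicit formula \eqref{eq:explicit:solution:dressing} for $\mathfrak D[\theta]$. Set $(v,\beta):=\mathfrak D[-\theta](u,\alpha)$ and $(v_t,\beta_t):=\mathfrak s[t](v,\beta)$, so that $(v_t,\beta_t)$ solves the SKG system \eqref{eq:Schroedinger-Klein-Gordon equations regular}, and
\begin{equation*}
(u_t,\alpha_t)=\mathfrak D[\theta](v_t,\beta_t)=\bigl(e^{-i\theta\tau_{v_t,\beta_t}} v_t,\; \beta_t+\theta B_0\widehat{|v_t|^2}\bigr).
\end{equation*}
Two identities will be used throughout: $|u_t|^2=|v_t|^2$ (unimodularity of the phase) and $\tilde\phi_{\alpha_t}=\tilde\phi_{\beta_t}$, the latter being the cancellation noted above \eqref{eq:explicit:solution:dressing}. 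Together they give $\tau_{v_t,\beta_t}=\tau_{u_t,\alpha_t}$, so the dressing potential $\tau$ may be evaluated interchangeably at either pair.

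For the first equation of \eqref{eq:classical:transformed:equations}, I differentiate $u_t=e^{-i\theta\tau_{v_t,\beta_t}}v_t$ in $t$, insert the SKG equation for $v_t$, and expand the conjugated Laplacian via the gauge formula
\begin{equation*}
e^{-i\theta\tau}(-\Delta)e^{i\theta\tau}=-\Delta-2i\theta(\nabla\tau)\cdot\nabla-i\theta(\Delta\tau)+\theta^2|\nabla\tau|^2,
\end{equation*}
combined with the identity $\nabla_x\tau_{u,\alpha}=\nabla_x\tilde\phi_\alpha=-F_\alpha$, immediate from $\nabla_x\overline{B_x(k)}=ik\,\overline{B_x(k)}$. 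The first-order terms reassemble into $\theta A_{\alpha_t}$ as defined in \eqref{eq:def:A_alpha}, and the quadratic one into $\theta^2 F_{\alpha_t}^2$. The remaining scalar contributions---the time derivative $\partial_t\tau_{v_t,\beta_t}$ (computed from SKG using $\partial_t\beta_t=-i\omega\beta_t-i\langle v_t,G_{(\cdot)}v_t\rangle$), the residual $\phi_{\beta_t}$, and the mean $\tfrac12\langle v_t,\phi_{\beta_t}v_t\rangle$---are then matched to the $\theta$-dependent terms $(1-\theta)\phi_{\alpha_t}$, $V_\theta*|u_t|^2$, and $\mu_{u_t,\alpha_t,\theta}$ from \eqref{eq:V_theta}--\eqref{eq:def:h:Lambda:theta:t} via the $\theta$-expansion of $\alpha_t=\beta_t+\theta B_0\widehat{|u_t|^2}$ and the kernel identities contained in the definitions of $V$ and $V_\theta$. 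The second equation of \eqref{eq:classical:transformed:equations} follows analogously: differentiating $\alpha_t$ and combining $\partial_t\beta_t$ with $\partial_t|v_t|^2=-2\nabla\cdot\Im(\overline{v_t}\nabla v_t)$, after integration by parts against $B_0$, produces the term $2\theta\langle v_t,kB_{(\cdot)}(\cdot)(-i\nabla+\theta F_{\alpha_t})v_t\rangle$; the shift $\omega\alpha_t=\omega\beta_t+\theta\omega B_0\widehat{|u_t|^2}$ accounts for the $\omega\alpha_t$ part, and the coupling $\langle v_t,G_{(\cdot)}v_t\rangle$ produces the $(1-\theta)$-weighted $G$-coupling in \eqref{eq:classical:transformed:equations}.

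Conservation of $\mathcal E_\theta$ reduces to the algebraic identity $\mathcal E_\theta\circ\mathfrak D[\theta]=\mathcal E$, whose verification repeats the gauge- and field-shift computations above but applied to the energy functional itself rather than to its variational derivatives; combined with conservation of $\mathcal E$ under the SKG flow (Proposition~\ref{prop:skg well posed}) this yields $\mathcal E_\theta(u_t,\alpha_t)=\mathcal E(v,\beta)=\mathcal E_\theta(u,\alpha)$. Uniqueness of solutions to \eqref{eq:classical:transformed:equations} is then a standard Picard argument on $H^1(\mathbb R^3)\oplus\mathfrak h_{1/2}$, whose Lipschitz bounds on the nonlinearities $A_\alpha$, $F_\alpha^2$, $V_\theta*|u|^2$ are supplied by the preliminary estimates; the conservation of $\|u_t\|_{L^2}$ and $\mathcal E_\theta$ then rules out blow-up and yields global solutions.

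The main technical difficulty lies in the algebraic bookkeeping: the specific $\theta$-polynomial structure of $V_\theta$, of $h_{u,\alpha,\theta}$, and of $\mu_{u,\alpha,\theta}$ must emerge \emph{exactly} from the cancellation between the gauge-induced commutators, the time derivative of the dressing phase, and the field shift, leaving no residual terms. In particular, the degree-two polynomial structure of $V_\theta$ in $\theta$ is forced by these cancellations, which provides a useful internal consistency check for the computation.
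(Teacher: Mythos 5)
The route you take is genuinely different from the paper's. The paper avoids the algebraic bookkeeping entirely: it first checks $\mathcal E_\theta=\mathcal E_0\circ\mathfrak D[-\theta]$ (citing \cite[Prop.\,III.12]{AF2017}), observes that $\mathfrak s$ is the Hamiltonian flow of $\mathcal E_0$ for the symplectic form $\sigma((u,\alpha),(u',\alpha'))=2\Im(\langle u,u'\rangle+\langle\alpha,\alpha'\rangle)$, and then reads off the conclusion from the abstract statement that symplectic diffeomorphisms map Hamiltonian flows to Hamiltonian flows, with the domain issues settled by~\cite[Lem.\,6.9]{DBGRN} using Lemmas~\ref{lem:mf-dressing-regularity}, \ref{lem:bounds:alpha:phi} and~\ref{lem:bounds:f:g}. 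Your plan instead differentiates $\mathfrak D[\theta](v_t,\beta_t)$ directly in $t$ and expands the conjugated Laplacian. Both are legitimate, and your route has the advantage of being self-contained; the price is that you defer all the work to the explicit cancellations (which the paper never performs for this lemma, only for the much harder operator-valued analogue in the proof of Proposition~\ref{prop:id:Bog:ren}).

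There is, however, a concrete error in your computation: the identity $\nabla_x\tau_{u,\alpha}=\nabla_x\tilde\phi_\alpha=-F_\alpha$ has the wrong sign. From $\tilde\phi_\alpha(x)=2\Re\langle iB_x,\alpha\rangle$, $\langle iB_x,\alpha\rangle=-i\int\overline{B_x(k)}\alpha(k)\,dk$ and $\nabla_x\overline{B_x(k)}=ik\,\overline{B_x(k)}$ (which you state correctly), one obtains $\nabla_x\langle iB_x,\alpha\rangle=\int k\,\overline{B_x(k)}\alpha(k)\,dk=\langle kB_x,\alpha\rangle$, and hence $\nabla_x\tilde\phi_\alpha=2\Re\langle kB_x,\alpha\rangle=+F_\alpha$ — consistent with the paper's own use of this identity in the proof of Proposition~\ref{prop:id:Bog:ren}. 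With your sign, the first-order term of the conjugated Laplacian becomes $+2i\theta F_\alpha\cdot\nabla$ rather than $-2i\theta F_\alpha\cdot\nabla$, and it would fail to reassemble into $\theta A_{\alpha_t}$.

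Beyond this, the plan leaves the critical cancellation step as a promise. You acknowledge that "the main technical difficulty lies in the algebraic bookkeeping," but note that the conjugation $e^{-i\theta\tau}(-\Delta)e^{i\theta\tau}$ produces, in addition to $-2i\theta F_\alpha\cdot\nabla$ and $\theta^2 F_\alpha^2$, a scalar $-i\theta\,\nabla\!\cdot\!F_\alpha=2i\theta\,\Im\langle k^2B_x,\alpha\rangle$, whereas $\theta A_\alpha$ contributes the scalar $2\theta\langle k^2B_x,\alpha\rangle$; the real part $2\theta\,\Re\langle k^2B_x,\alpha\rangle$ is a residue that must be absorbed by the $\theta$-expansion of $(1-\theta)\phi_{\alpha_t}$, the time derivative of $\tau_{v_t,\beta_t}$, and the $V_\theta$-terms. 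This is exactly the bookkeeping that must be done in full to turn your outline into a proof, and it is not exhibited. Given that the paper explicitly sidesteps this by invoking the symplectic framework, the outline as written is a plan, not a complete argument.
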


\begin{remark} Since, for $\theta=1$,
  Eqs. \eqref{eq:classical:transformed:equations} coincide with the dressed
  mean-field equations \eqref{eq:SKG dressed}, we have $s_1[t] = \mathfrak
  s^{\rm D}[t]$.  Lemma \ref{lem:theta-mean-field-flow} thus provides a proof
  of Lemma \ref{lemma:commuting:flow:x}.
\end{remark}

\begin{proof}[Proof of Lemma \ref{lem:theta-mean-field-flow}]
  One checks by direct calculation~\cite[Prop.III.12]{AF2017} that
  \begin{equation}\label{eq:proof:conversation:mf:energy}
    \mathcal E_\theta = \mathcal E_0 \circ \mathfrak D[-\theta ].
  \end{equation}
  Since $\mathfrak{s}$ is a Hamiltonian flow of $\mathcal{E}_0$ with respect
  to the symplectic form
  \begin{equation}
    \sigma((u,\alpha), (u',\alpha')) = 2 \Im (\langle u,u'\rangle_{L^2} + \langle \alpha ,\alpha'\rangle_{L^2}),
  \end{equation}
  and the equations~\eqref{eq:classical:transformed:equations} are the
  Hamiltonian equations for $\mathcal{E}_\theta$, the claim should follow by
  showing that $\mathfrak{D}$ acts by symplectic transformations. However,
  since the involved spaces are infinite-dimensional, we need to take care of
  some domain issues. These are addressed by~\cite[Lem.6.9]{DBGRN}, so we
  will check the hypothesis of this Lemma (the reader might find it helpful
  to consult the examples given in~\cite[Sect.6.5]{DBGRN}).
  \begin{enumerate}
  \item $(u,\alpha)\mapsto\mathfrak{D}[\theta](u,\alpha)$ must be
    differentiable with derivative continuous on
    $E=H^1(\R^3)\oplus\mathfrak{h}_{1/2}$, and symplectic.  The derivative of
    the flow is
    \begin{equation}\label{eq:dressing-Frechet-derivative}
      \partial_{(v,\eta)} \mathfrak{D}[\theta](u,\alpha) = \Big(e^{-i\theta \tau_{u,\alpha}} v - i\theta(\tilde \phi_\eta -2\Re \langle u,\tilde \phi_\alpha v\rangle)u^\theta, \gamma + \theta B_0 2\widehat{\Re \bar u  v}\Big).
    \end{equation}
    As a linear function of $(v,\eta)$, this is continuous on
    $H^1(\R^3)\oplus\mathfrak{h}_{1/2}$, for all $(u,\alpha)\in
    H^1(\R^3)\oplus\mathfrak{h}_{1/2}$, as follows from the bounds of
    Lemma~\ref{lem:bounds:alpha:phi}.  Using the
    formula~\eqref{eq:dressing-Frechet-derivative}, one also checks that
    \begin{align}
      \sigma(\partial_{(v,\eta)} \mathfrak{D} , \partial_{(v',\eta')} \mathfrak{D})=\sigma((v, \eta), (v', \eta'),
    \end{align}
    by using that $B_0$ is an even function and observing that the mixed
    terms in $v, \eta'$ cancel each other (see
    also~\cite[Prop. IV.1]{AF2017}), i.e., $\mathfrak{D} $ is symplectic in
    the sense of~\cite[Def.6.5]{DBGRN}.

  \item The domain $\mathscr{D}$ on which the Hamiltonian vector field (given
    by Equation~\eqref{eq:classical:transformed:equations}) is defined must
    be invariant under the flow $\mathfrak{D}$. We take
    $\mathscr{D}=H^3(\R^3)\oplus \mathfrak{h}_{5/2}$, so this follows from
    Lemma \ref{lem:mf-dressing-regularity}.
  \item The derivative of $\mathcal{E}_\theta$ on $\mathscr{D}$ should be
    compatible with the symplectic structure as
    in~\cite[Def. 6.5]{DBGRN}. Since in our case the symplectic map is simply
    $\mathscr{J}=2 i$ with range $H^1(\R^3)\oplus \mathfrak{h}_{1/2}\subset
    E'$, we must check that for $(u,\alpha)\in \mathscr{D}$, the right hand
    side of Equation~\eqref{eq:classical:transformed:equations} is an element
    of $H^1(\R^3)\oplus \mathfrak{h}_{1/2}$. This follows from the bounds of
    Lemmas~\ref{lem:bounds:alpha:phi} and
    \ref{lem:bounds:f:g}.
  \end{enumerate}
  Thus the hypothesis of~\cite[Lem. 6.9]{DBGRN} are satisfied and this
  implies the claim.
\end{proof}

Next, we introduce a family of $\theta$- and $t$-dependent quadratic
operators $\mathbb H_{u,\alpha,\theta}^\Lambda(\theta)$ on $\cF \otimes \cF$
that is associated with the mean-field flow $\mathfrak s_\theta
[t](u,\alpha)$ and defined such that
\begin{align}
  \mathbb H_{u,\alpha,0}^\Lambda(t) = \mathbb H^\Lambda_{u,\alpha}(t)  \qquad \text{and} \qquad \mathbb H_{u,\alpha,1}^\infty (t) = \mathbb H^{\rm D}_{u,\alpha}(t)
\end{align}
for the Nelson--Bogoliubov Hamiltonian \eqref{eq:reg:Nelson:Bog} and the
dressed Bogoliubov Hamiltonian \eqref{eq:HD:Bog definition}.  These are
essentially the Bogoliubov Hamiltonians associated with the partially dressed
Hamiltonians $W^\mathrm{D}(\theta) H_N W^\mathrm{D}(\theta)^*$.  It is
important to note that for $\theta\in [0,1)$ the operator needs to be defined
with a UV cutoff $\Lambda < \infty$, while for $\theta =1$ the definition
makes sense also for $\Lambda=\infty$.

Now more concretely, for $(u,\alpha)$ and $\theta \in [0,1]$, let $ (u_t,
\alpha_t) = \mathfrak{s}_\theta[t](u, \alpha)$ denote the solution to
\eqref{eq:classical:transformed:equations}, and define
\begin{align}
  \label{eq:theta:HG:Bog}
  &\mathbb{H}_{u,\alpha,\theta}^{\Lambda}(t)  = \textnormal{d}\Gamma_b ( h_{u_t,\alpha_t,\theta} ) +  \mathbb K^{(1),\Lambda}_{\theta,u_t} + \big( \mathbb K^{(2),\Lambda}_{\theta,u_t}  + \text{h.c.} \big)  +  \textnormal{d}\Gamma_a(\omega)
  \\[1mm]
  & \quad +   \int dx  dk\, \Big( \big(q_{u_t} L^\Lambda_{\theta,\alpha_t}(k) u_t\big)(x) a^{*}_k b_x^*  +  \big(q_{u_t} L^\Lambda_{\theta,\alpha_t}(k)^* u_t\big)(x) a_{k} b_x^* \Big)
  + \text{h.c.}
  \notag \\
  & \quad + \theta^2 \int dk dl\, \Big(  -2  M_{u_t}^\Lambda(k,-l) a^*_k a_l + M^\Lambda_{u_t}(k,l) a_k^* a_l^* +   M^\Lambda_{u_t}(-k,-l) a_k a_l \Big)  \notag
\end{align}
with $h_{u_t,\alpha_t, \theta}$ defined in \eqref{eq:def:h:Lambda:theta:t},
\begin{subequations}
  \begin{align} \label{eq:def:L:theta:Lambda}
    \big(L^\Lambda_{\theta,\alpha}(k)u\big)(x) &= (1-\theta)G^\Lambda_{x}(k) + 2 B^\Lambda_{x}(k) k \big((-i\theta\nabla +\theta^2 F_\alpha(x))u\big)(x) \\
    \label{eq:def:M:Lambda}
    M^\Lambda_u(k,l) & = \scp{u}{k B_{(\cdot)}^\Lambda(k) \cdot l B_{(\cdot)}^\Lambda(l) u},
  \end{align}
  and
  \begin{align}
    \mathbb K^{(1),\Lambda}_{\theta,u} & = \int dx dy \,  K_{ \theta, u }^{(1),\Lambda}(x ,y ) b^*_{x} b_{y}, \quad
    \mathbb K^{(2),\Lambda}_{\theta,u }  = \frac{1}{2}\int dx dy \, K_{\theta,u }^{(2),\Lambda}(x ,y) b^*_{x} b^*_{y}
  \end{align}
  with
  \begin{align}
    K_{\theta,u}^{(1),\Lambda}= q_u \widetilde K^{(1),\Lambda }_{\theta, u}q_u , \qquad   \widetilde K_{\theta, u}^{(1),\Lambda}  (x,y) & = u(x) V_\theta^\Lambda(x-y)  \overline{u(y)}, \\
    K_{\theta,u}^{(2),\Lambda} = (q_u \otimes q_u) \widetilde K^{(2),\Lambda}_{\theta, u} , \qquad   \widetilde K_{\theta, u}^{(2),\Lambda}  (x,y) & = u(x) V_\theta^\Lambda(x-y) u(y),\label{eq:def:K2:theta}
  \end{align}
\end{subequations}
where $V_\theta$ is defined in \eqref{eq:V_theta}.  For $\theta=1$ and
$\Lambda = \infty$ these definitions coincide with those from Section
\ref{sec:dressed:fluctuations}.  For $\theta\neq 1$, the cutoff
$\Lambda<\infty$ is necessary, since the term involving $G_x^\Lambda(k)$ in
\eqref{eq:def:L:theta:Lambda} does not yield a quadratic form on
$D(\mathbb{T}^{1/2})$ for $\Lambda=\infty$. All other terms are in fact
unproblematic for $\Lambda =\infty$ also if $\theta \neq 1$. For the purpose
of the proof of Proposition \ref{prop:id:Bog:ren}, where cancellations
between different terms are important, we work with the definition given
above.

The next proposition states the existence and suitable properties of the
unitary time evolution generated by $\mathbb H^\Lambda_{u,\alpha,\theta}(t)$.

\begin{proposition}\label{prop:theta-Bog}
  Let $(u,\alpha)\in H^3(\R^3)\oplus \mathfrak{h}_{5/2}$ with
  $\norm{u}_{L^2}=1$ and let $(u_t,\alpha_t ) = \mathfrak
  s_\theta[t](u,\alpha)$ be given by \eqref{eq:s_theta-definition}. Moreover,
  let $\theta=1$ and $\Lambda=\infty$ or $\theta \neq 1$ and $\Lambda \in
  (0,\infty)$. For every $\Psi \in D( ( \mathcal{N} + \mathbb{T})^{1/2})$
  there exists a unique solution to the Cauchy problem
  \begin{equation*}
    \begin{cases}
      \begin{aligned}
        i\partial_t \Psi(t) & =  \mathbb H^\Lambda_{u,\alpha,\theta}(t)\Psi(t) \\
        \Psi(0)&= \Psi_0
      \end{aligned}
    \end{cases}
  \end{equation*}
  such that $\Psi \in C(\R,\cF \otimes \cF)\cap L_{\rm{\loc}}^{\infty} (\R,
  D(( \mathcal{N} + \mathbb{T})^{1/2}))$.  The solution map $\Psi_0 \mapsto
  \Psi(t)$ extends to a unitary $\mathbb{U}_{u,\alpha,\theta}^\Lambda(t)$ on
  $\cF \otimes \cF$ satisfying $\mathbb U^\Lambda_{u,\alpha,\theta}(t) (
  \cF_{\perp u} \otimes \cF ) \subseteq \cF_{\perp u_t} \otimes
  \cF$. Moreover, for $\theta=1$ we have the following properties.
  \begin{itemize}
  \item[\textnormal{(i)}] There is a constant $C>0$ such that for all
    $\Lambda \in \mathbb R_+\cup \{ \infty\}$ and $t\in \mathbb R$
    \begin{align*}
      \mathbb U_{u,\alpha,1}^\Lambda (t)^* (\mathbb T + \mathcal N) \mathbb U_{u,\alpha,1}^\Lambda (t)  \le  e^{C R^{\rm D}(t)} (\mathbb T + \mathcal N +1)
    \end{align*}
    in the sense of quadratic forms on $\cF \otimes \cF$, where $R^{\rm D}(t)
    = 1 + \int_0^{|t|} \norm{u_s}_{H^3}^2 (1+ \norm{\alpha_s}_{\mathfrak
      h_{3/2}})^2 ds $.
  \item[\textnormal{(ii)}] For every $t\in \R$
    \begin{equation*}
      \mathbb U^{\infty}_{u,\alpha,1}(t) = \slim_{\Lambda\to \infty} \mathbb U^{\Lambda}_{u,\alpha,1} (t).
    \end{equation*}
  \item[\textnormal{(iii)}] $\mathbb U^\Lambda_{u,\alpha,1}(t)$ is a
    Bogoliubov transformation for all $t\in \mathbb R$, $\Lambda \in \mathbb
    R_+ \cup \{\infty\}$.
  \item[\textnormal{(iv)}] For every $\ell \in \mathbb N$ there is a constant
    $C(\ell)$ such that for all $t\in \mathbb R$ and $\Lambda \in \mathbb R_+
    \cup \{ \infty \}$
    \begin{align*}
      \mathbb U_{u,\alpha,1}^\Lambda (t)^* (\mathcal N+1)^\ell\, \mathbb U_{u,\alpha,1}^\Lambda(t) \le e^{C (\ell) R^{\rm D}(t)} \, (\mathcal N+1)^\ell
    \end{align*}
    in the sense of quadratic forms on $\cF \otimes \cF$, with $R^{\rm D}(t)$
    as in \textnormal{(i)}.
  \end{itemize}
\end{proposition}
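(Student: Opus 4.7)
The plan is to mirror the strategy already used for Proposition~\ref{prop:Gross-Bog:evolution}, invoking the general existence theory for evolutions generated by quadratic Hamiltonians from \cite[Thm.~8]{Lewin:2015a} once the appropriate operator bounds have been established.

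First I would show that for every admissible pair $(\theta,\Lambda)$ the quadratic form $\mathbb{H}^\Lambda_{u,\alpha,\theta}(t)$ is $(\cN+\mathbb{T}+1)$-bounded with locally integrable constants along $t\mapsto (u_t,\alpha_t)$. The pairing terms are estimated by Cauchy--Schwarz: the portion of $(q_{u_t}L^\Lambda_{\theta,\alpha_t}(k)u_t)(x)$ coming from $(1-\theta)G^\Lambda_x(k)$ has $L^2_{x,k}$-norm finite only for $\Lambda<\infty$---this is precisely why the cutoff is required when $\theta\neq 1$---while the $2kB^\Lambda_x(k)(-i\theta\nabla+\theta^2 F_{\alpha_t})u_t$ piece is square-integrable uniformly in $\Lambda$ because of the decay $kB_x(k)=\mathcal{O}(|k|^{-2})$ combined with $\|(-i\nabla) u_t\|_{L^2}<\infty$ and the bounds on $F_{\alpha_t}$ in terms of $\|\alpha_t\|_{\mathfrak{h}_{1/2}}$. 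Similarly $M^\Lambda_{u_t}$ has a $\Lambda$-uniform $L^2\otimes L^2$ kernel (giving $\cN_a$-boundedness), $V_\theta^\Lambda$ is bounded (giving the $\cN_b$-bound on $\mathbb{K}^{(1,2),\Lambda}_{\theta,u_t}$), and the diagonal $\mathrm{d}\Gamma_b(h_{u_t,\alpha_t,\theta})$ is controlled by $\mathrm{d}\Gamma_b(-\Delta)$ via the form-bound $A_{\alpha}+\theta^2 F_\alpha^2\leq\tfrac12(-\Delta)+C(1+\|\alpha\|_{\mathfrak{h}_{1/2}}^4)$. Equipped with these bounds, \cite[Thm.~8]{Lewin:2015a} yields the unique solution in $C(\R,\cF\otimes\cF)\cap L^\infty_{\mathrm{loc}}(\R,D((\cN+\mathbb{T})^{1/2}))$ and the unitary $\mathbb{U}^\Lambda_{u,\alpha,\theta}(t)$. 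The propagation into $\cF_{\perp u_t}\otimes\cF$ follows as in Proposition~\ref{prop:Gross-Bog:evolution}: every $b^*$-kernel carries a $q_{u_t}$ projector, so the commutator of $\mathbb{H}^\Lambda_{u,\alpha,\theta}(t)-\mathrm{d}\Gamma_b(h_{u_t,\alpha_t,\theta})$ with $\Gamma(q_{u_t})$ annihilates $\cF_{\perp u_t}\otimes \cF$, and $\tfrac{d}{dt}\|\Gamma(q_{u_t})\Psi(t)\|^2=0$.

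For properties (i)--(iv), which assume $\theta=1$, the decisive observation is that the singular term $(1-\theta)G^\Lambda_x(k)$ is \emph{absent}, so all coefficients enjoy $\Lambda$-uniform bounds. Property (i) is obtained by computing
\begin{equation*}
\tfrac{d}{dt}\langle\Psi(t),(\mathbb{T}+\cN+1)\Psi(t)\rangle=i\langle\Psi(t),[\mathbb{H}^\Lambda_{u,\alpha,1}(t),\mathbb{T}+\cN]\Psi(t)\rangle,
\end{equation*}
and bounding the commutator by $C\rho(t)(\mathbb{T}+\cN+1)$ with $\rho(t)=\|u_t\|_{H^3}^2(1+\|\alpha_t\|_{\mathfrak{h}_{3/2}})^2$; the $H^3$/$\mathfrak h_{3/2}$-regularity enters because commuting with $-\Delta$ produces an extra derivative on $u_t$ and $\alpha_t$. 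Gronwall gives the stated bound, uniformly in $\Lambda\in\R_+\cup\{\infty\}$. Property (iv) follows analogously from a commutator estimate $\pm i[(\cN+1)^\ell,\mathbb{H}^\Lambda_{u,\alpha,1}(t)]\leq C(\ell)\rho(t)(\cN+1)^\ell$, again uniform in $\Lambda$.

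Property (ii) then follows from a Duhamel argument analogous to~\eqref{eq:W-Duhamel}: the difference $\mathbb{H}^{\infty}_{u,\alpha,1}(s)-\mathbb{H}^{\Lambda}_{u,\alpha,1}(s)$ tends to zero as $\Lambda\to\infty$ as a quadratic form on $D((\cN+\mathbb{T})^{1/2})$ by dominated convergence applied to the kernels $L^\Lambda_{1,\alpha_s}-L^\infty_{1,\alpha_s}$, $M^\Lambda_{u_s}-M^\infty_{u_s}$, $K^{(j),\Lambda}_{1,u_s}-K^{(j),\infty}_{1,u_s}$; combined with the $\Lambda$-uniform control of $(\mathbb{T}+\cN+1)^{1/2}\mathbb{U}^{\infty}_{u,\alpha,1}(s)\Psi_0$ from (i), this gives weak, hence (by unitarity) strong, convergence. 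Property (iii) follows from the fact that the coefficients of the quadratic generator fit the hypothesis of the Bogoliubov criterion from Appendix~\ref{app:Bogoliubov:transformations} (bounded $\mathfrak{u},\mathfrak{v}$-maps with $\mathfrak{v}$ Hilbert--Schmidt), a property stable under the strong limit $\Lambda\to\infty$. The main obstacle I anticipate lies in verifying these $\Lambda$-uniform estimates at $\theta=1$ with sufficient care: one must track that every coefficient reduces to integrable objects $kB_x(k)$, $\omega B_x(k)$, or $V$ (cf.~\eqref{eq:definition of V}), none of which retain any trace of the singular $G^\Lambda_x$, and that the subsequent commutator and Gronwall arguments preserve this uniformity all the way to $\Lambda=\infty$.
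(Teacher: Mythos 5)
Your overall architecture matches the paper's: generator bounds feeding into \cite[Thm.~8]{Lewin:2015a}, the mapping property via $\Gamma(q_{u_t})$, Duhamel for (ii), Gr\"onwall for (i) and (iv), and the Appendix criterion for (iii). But there is a genuine error at the heart of your estimates. You claim that the pairing kernel $2kB^\Lambda_x(k)\big((-i\theta\nabla+\theta^2F_{\alpha_t})u_t\big)(x)$ is square-integrable in $(x,k)$ uniformly in $\Lambda$. It is not: $kB_0(k)\sim |k|^{-3/2}$ at infinity, so $\int |kB_0(k)|^2\,dk$ diverges logarithmically, i.e.\ $kB_0\notin L^2(\R^3)$; the paper only has $\|kB_0\|_{\mathfrak h_{-s}}\le C$ for $s>0$ (Lemma~\ref{lem:bounds:alpha:phi}). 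Consequently your plain Cauchy--Schwarz bound on the $a^*b^*$ term does not close; the correct estimate (Lemma~\ref{lem:aux:bounds} and the bound \eqref{eq:bound:ell:1}) places $\ell^{(1)}_t$ in $L^2\otimes\mathfrak h_{-1/4}$ and pays for the missing integrability by borrowing $\varepsilon\,\mathrm{d}\Gamma_a(\omega)$ from $\mathbb T$. This is not cosmetic --- it is precisely the residue of the renormalization problem at the level of the Bogoliubov generator.

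The same error undermines your argument for (iii) at $\Lambda=\infty$: since the $a^*b^*$ coefficient is not Hilbert--Schmidt there, the Shale--Stinespring-type criterion cannot be applied to the generator $\mathbb H^\infty_{u,\alpha,1}(t)$ directly, as the paper explicitly points out. The correct route is the one of Appendix~\ref{app:Bogoliubov:transformations}: for finite $\Lambda$ all two-creator kernels are square-integrable, so $\mathbb U^\Lambda_{u,\alpha,1}(t)$ is a Bogoliubov transformation, and Lemma~\ref{lem:Bogoliubov-convergence} then upgrades the strong limit to a Bogoliubov transformation --- but only because of the $\Lambda$-uniform number bound of (iv); ``stability under strong limits'' alone is not enough, and you should make this dependence explicit. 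A smaller methodological point: for (i) you propose to differentiate $\langle\Psi(t),(\mathbb T+\mathcal N)\Psi(t)\rangle$ directly, which requires a form bound on $[\mathbb H,\mathbb T]$ and a justification of this differentiation on $D((\mathcal N+\mathbb T)^{1/2})$; the paper instead applies Gr\"onwall to $2\langle\mathbb H\rangle+C\langle\mathcal N+1\rangle$, using only the three bounds $\pm(\mathbb H-\mathbb T)\le\tfrac12\mathbb T+C(\mathcal N+1)$, $\pm i[\mathcal N,\mathbb H]\le\tfrac12\mathbb T+C(\mathcal N+1)$ and $\pm\partial_t\mathbb H\le\tfrac12\mathbb T+C\rho(t)(\mathcal N+1)$ of Lemma~\ref{lem:bounds:dressed:Bog:theta}, thereby never commuting the generator with $\mathbb T$.
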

\begin{proof}
  Existence and uniqueness of the dynamics and Property (i) follow
  from~\cite[Thm.8]{Lewin:2015a} and the bounds of
  Lemma~\ref{lem:bounds:dressed:Bog:theta}. Note that the existence of the
  unitary $\mathbb U_{u,\alpha,\theta}^\Lambda(t)$ follows by the same
  reasoning as in the proof of Proposition~\ref{prop:Gross-Bog:evolution} and
  that the mapping property is obtained by a similar argument as in
  \eqref{eq:proof:propagation:property:W}.

  To prove (ii) use Duhamel's formula for $\Psi, \Xi\in D((\mathcal N +
  \mathbb{\mathbb T})^{1/2})$ together with Lemma
  \ref{lem:H:Bog:Lambda:difference} to obtain
  \begin{align}
    & | \langle \Xi, (1  -\mathbb{U}^{\infty}_{u,\alpha,1} (t)^* \mathbb{U}^{\Lambda}_{u,\alpha,1}  (t)) \Psi \rangle|  \\[1.5mm]
    &  \leq    \int_0^{|t|} ds\, |\langle \Xi, \mathbb U_{u,\alpha,1}^\infty(s)^* (\mathbb{H}_{u,\alpha,1}^{\infty}(s)  -\mathbb{H}_{u,\alpha,1}^{\Lambda}(s))\mathbb{U}^\Lambda_{u,\alpha,1}(s)\Psi \rangle| \notag\\
    &  \leq \varepsilon_\Lambda  \int_0^{|t|} ds\, e^{C R^{\rm D}(s) } \norm{(\mathbb T+ \mathcal N + 1)^{\frac12}  \mathbb U_{u,\alpha,1}^\infty(s)   \Xi }\,  \norm{ (\mathbb T + \mathcal N+ 1 )^{\frac12} \mathbb{U}^\Lambda_{u,\alpha,1}(s) \Psi } \notag
  \end{align}
  where $R^{\rm D}(s) = \norm{u_s}_{H^3}^2 (1+ \norm{\alpha_s}_{\mathfrak
    h_{3/2}})^2$ and $\varepsilon_\Lambda \to 0$ as $\Lambda \to 0$. The
  right hand side thus converges to zero by Property (i). This implies strong
  convergence of $\mathbb{U}_{u,\alpha,1}^\Lambda$ to
  $\mathbb{U}_{u,\alpha,1}^\infty$ by unitarity of
  $\mathbb{U}_{u,\alpha,1}^\Lambda$, $\Lambda\in \R_+\cup\{\infty\}$, as
  argued in the proof of Proposition~\ref{prop:Gross-Bog:evolution}.

  In Appendix \ref{app:Bogoliubov:transformations}, we provide a proof that
  $\mathbb U_{u,\alpha,1}^\Lambda(t)$ is a Bogoliubov transformation. For
  finite $\Lambda$, this is essentially due to the square-integrability of
  the kernels in \eqref{eq:theta:HG:Bog} appearing in the terms with $b^*
  b^*$, $b^* a^*$ and $a^*a^*$. For $\Lambda = \infty$, however, the kernel
  corresponding to $a^*b^*$ fails to meet the Hilbert--Schmidt criterion. In
  this case, we establish the statement by proving a suitable approximation
  argument and utilizing the fact that $\mathbb U_{u,\alpha,1}^\Lambda (t)\to
  \mathbb U_{u,\alpha,1}^\infty (t)$ converges strongly.

  The final statement can be derived using the same logic as in the proof of
  Proposition \ref{prop:Gross-Bog:evolution}, in combination with
  Proposition~\ref{prop:theta-Bog}(i).
\end{proof}

\subsection{Dressing identity and proof of
  Theorem~\ref{theorem:limit:unitary}}\label{sect:unitary}

We can now prepare for the proof of Theorem~\ref{theorem:limit:unitary}. We
will start by making precise how the unitaries $\mathbb{W}_{u,\alpha}^\Lambda
(\theta) $ interpolate between the Bogoliubov dynamics
$\mathbb{U}_{u,\alpha,\theta}^\Lambda (t) $ for different $\theta$. For
$\theta=1$ we can then take the limit $\Lambda\to \infty $ of
$\mathbb{U}_{u,\alpha,1}^\Lambda(t)$ to obtain information on the behavior of
$\mathbb{U}^\Lambda_{u,\alpha,0}(t)$ as $\Lambda\to \infty$.

Let $(u,\alpha) \in H^3(\mathbb R^3) \oplus \mathfrak h_{5/2}$ with $\| u\|
_{L^2}=1$, and consider the evolutions
\begin{equation} \label{eq:abbreviations:generators:2} \mathbb
  U^\Lambda_{\mathfrak{D} [\theta](u,\alpha), \theta}(t) \qquad \text{and}
  \qquad \mathbb W_{\mathfrak s_0[t](u,\alpha)}^\Lambda (\theta),
\end{equation}
with generators $ \mathbb H^\Lambda_{\mathfrak
  D[\theta](u,\alpha),\theta}(t)$ and $\mathbb D_{\mathfrak
  s_0[t](u,\alpha)}^\Lambda(\theta)$, respectively (see
Propositions~\ref{prop:theta-Bog} and \ref{prop:Gross-Bog:evolution}). It is
important to keep in mind that the subscripts refer to the {initial
  condition} of the mean-field flow, which is used to define the generator of
each evolution.

The next result shows that the two flows in
\eqref{eq:abbreviations:generators:2} commute, up to a global phase. The
additional phase is due to the fact that we wrote $\mathbb H_{\mathfrak{D}
  [\theta](u,\alpha), \theta}^\Lambda(t)$ in normal order, which is not
preserved by the transformations. As discussed below, this identity is the
key ingredient for the renormalization of the Nelson--Bogoliubov Hamiltonian.

\begin{proposition}\label{prop:id:Bog:ren}
  Let $(u,\alpha)\in H^3(\R^3)\oplus \mathfrak{h}_{5/2}$ with
  $\norm{u}_{L^2}=1$.  If we denote
  \begin{equation*}
    E^\Lambda_\theta= (2\theta-\theta^2) \langle G^\Lambda_0, B^\Lambda_0\rangle + \tfrac12 \langle u,V_\theta^\Lambda \ast |u|^2 u\rangle
  \end{equation*}
  then for all $ t , \theta \in \mathbb R$ and $\Lambda \in (0,\infty)$ we
  have the identity
  \begin{align*}
    \mathbb U^\Lambda_{u,\alpha,0} (t) e^{-i t E^\Lambda_\theta } = \mathbb W_{\mathfrak s_0[t](u,\alpha)}^\Lambda (\theta) ^* \, \mathbb U^\Lambda_{\mathfrak{D} [\theta](u,\alpha), \theta } (t) \,
    \mathbb W^\Lambda_{u,\alpha}(\theta) .
  \end{align*}
\end{proposition}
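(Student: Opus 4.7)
The plan is to show that both sides, viewed as unitary-valued functions of $t\in\mathbb R$ at fixed $\theta\in[0,1]$ and $\Lambda\in(0,\infty)$, satisfy the same Cauchy problem on $\cF\otimes\cF$; by the uniqueness statement in Proposition~\ref{prop:theta-Bog}, this yields the claim. At $t=0$ both sides equal the identity: the left since $\mathbb U^\Lambda_{u,\alpha,0}(0)=\id$ and $e^0=1$; the right since $\mathbb U^\Lambda_{\mathfrak D[\theta](u,\alpha),\theta}(0)=\id$ and $\mathfrak s_0[0]=\id$ forces the two flanking dressings to cancel. The LHS is by construction generated in $t$ by $\mathbb H^\Lambda_{u,\alpha,0}(t)+E^\Lambda_\theta$. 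Writing $\mathbb W_t^\theta:=\mathbb W^\Lambda_{(u_t,\alpha_t)}(\theta)$, whose $t$-dependence enters only through the reference state $(u_t,\alpha_t)=\mathfrak s_0[t](u,\alpha)$, and using $\partial_t(\mathbb W_t^{\theta*}\mathbb W_t^\theta)=0$, the product rule shows that the RHS is generated by
\begin{equation*}
\mathcal G_\theta(t):=\mathbb W_t^{\theta*}\Bigl(\mathbb H^\Lambda_{\mathfrak D[\theta](u,\alpha),\theta}(t)-(i\partial_t\mathbb W_t^\theta)\mathbb W_t^{\theta*}\Bigr)\mathbb W_t^\theta.
\end{equation*}
The proof thus reduces to the quadratic operator identity
\begin{equation*}
(\star)\qquad \mathbb W_t^\theta\bigl(\mathbb H^\Lambda_{u,\alpha,0}(t)+E^\Lambda_\theta\bigr)\mathbb W_t^{\theta*}=\mathbb H^\Lambda_{\mathfrak D[\theta](u,\alpha),\theta}(t)-(i\partial_t\mathbb W_t^\theta)\mathbb W_t^{\theta*}.
\end{equation*}

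To verify $(\star)$ I would interpolate in $\theta$. At $\theta=0$ it is trivial: $\mathbb W_t^0=\id$, $E^\Lambda_0=0$ (since $V_0^\Lambda\equiv 0$ and the prefactor $2\theta-\theta^2$ vanishes), and $(i\partial_t\mathbb W_t^0)=0$, so both sides reduce to $\mathbb H^\Lambda_{u,\alpha,0}(t)$. Differentiating $(\star)$ in $\theta$ and using the dressing equation $i\partial_\theta\mathbb W_t^\theta=\mathbb D^\Lambda_{(u_t,\alpha_t)}(\theta)\mathbb W_t^\theta$ (together with its adjoint), the left-hand side produces $[\mathbb D^\Lambda_{(u_t,\alpha_t)}(\theta),\mathbb W_t^\theta\mathbb H^\Lambda_{u,\alpha,0}(t)\mathbb W_t^{\theta*}]+i\,dE^\Lambda_\theta/d\theta$, whereas the right-hand side splits into $\partial_\theta\mathbb H^\Lambda_{\mathfrak D[\theta](u,\alpha),\theta}(t)$ (a direct derivative of the coefficients using the explicit formula $\mathfrak D[\theta](u,\alpha)=(e^{-i\theta\tau_{u,\alpha}}u,\alpha+\theta B_0\widehat{|u|^2})$ from \eqref{eq:explicit:solution:dressing}) and a mixed-partial contribution $-i\partial_\theta[(i\partial_t\mathbb W_t^\theta)\mathbb W_t^{\theta*}]$, which I would compute from $\partial_t\partial_\theta\mathbb W_t^\theta=\partial_\theta\partial_t\mathbb W_t^\theta$ combined with the SKG equations \eqref{eq:Schroedinger-Klein-Gordon equations regular} governing $(u_t,\alpha_t)$. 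Term-by-term matching of the resulting kernels $\tau,\kappa^\Lambda,G^\Lambda,B^\Lambda,L_\theta^\Lambda,M^\Lambda,V_\theta^\Lambda$ uses the commutative diagram $\mathfrak s_\theta[t]\circ\mathfrak D[\theta]=\mathfrak D[\theta]\circ\mathfrak s_0[t]$ from Lemma~\ref{lem:theta-mean-field-flow} to keep the reference states of the two Bogoliubov Hamiltonians consistent. The quadratic pieces cancel by the very design of the interpolating family $\mathbb H^\Lambda_{u,\alpha,\theta}(t)$; the only residual contributions are scalars coming from commuting pairs $a,a^*$ and $b,b^*$ into normal order, namely $(2-2\theta)\langle G_0^\Lambda,B_0^\Lambda\rangle$ from the $a^*b^*\cdot a b^*$-commutator in the $L_\theta^\Lambda$ terms and $\tfrac12\partial_\theta\langle u,V_\theta^\Lambda\ast|u|^2 u\rangle$ from the $b^*b^*\cdot b b$-commutator in $\mathbb K^{(2),\Lambda}$. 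Their sum is exactly $dE^\Lambda_\theta/d\theta$.

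The main obstacle is this last, computational step: every normal-ordering correction must be tracked and shown to integrate to $E^\Lambda_\theta$ without orphan remainders. A convenient way to organise the bookkeeping is to exploit the Bogoliubov structure of $\mathbb W_t^\theta$ guaranteed by Proposition~\ref{prop:Gross-Bog:evolution}(iii): its conjugation sends $c^*(f\oplus g)=b^*(f)+a^*(g)$ to $c^*(\mathfrak u_\theta(f\oplus g))+c(\mathfrak v_\theta(\overline{f\oplus g}))$ with $\mathfrak v_\theta$ Hilbert--Schmidt (which is finite precisely because $\Lambda<\infty$ keeps all kernels square-integrable), so each quadratic monomial in $\mathbb H^\Lambda_{u,\alpha,0}(t)$ transforms into a quadratic monomial plus an explicit finite scalar. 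Comparing these scalars with the $\theta$-primitive of the expression in the statement closes $(\star)$ and hence the proposition.
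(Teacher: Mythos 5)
Your proposal is correct and follows essentially the paper's route: reduce by uniqueness to the generator (conjugation) identity $(\star)$, verify it by differentiating in $\theta$, and show that the resulting quadratic-form identity
$-i\partial_t\mathbb D^\Lambda_t(\theta)+i\partial_\theta\mathbb H^\Lambda_\theta(t)+[\mathbb H^\Lambda_\theta(t),\mathbb D^\Lambda_t(\theta)]=i\partial_\theta E^\Lambda_\theta$
holds by explicit matching of kernels and scalars, with the $\theta=0$ case trivial. Two small remarks. First, your closing suggestion to organise the normal-ordering bookkeeping through the Bogoliubov matrices $(\mathfrak u_\theta,\mathfrak v_\theta)$ of $\mathbb W_t^\theta$ is superfluous: once you differentiate $(\star)$ in $\theta$ and observe that the commutator $[\mathbb D^\Lambda_t(\theta),\mathbb W_t^\theta\,\mathbb H^\Lambda_{u,\alpha,0}(t)\,\mathbb W_t^{\theta*}]$ from the left cancels against the corresponding term from $\partial_\theta[(i\partial_t\mathbb W_t^\theta)\mathbb W_t^{\theta*}]$ on the right, the dressing $\mathbb W_t^\theta$ disappears from the identity altogether, and the remaining verification involves only the explicit kernels of $\mathbb H^\Lambda_\theta(t)$ and $\mathbb D^\Lambda_t(\theta)$ — which is exactly how the paper proceeds (no knowledge of $\mathfrak u_\theta,\mathfrak v_\theta$ is needed, nor is Hilbert--Schmidt finiteness ever used there). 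Second, the attribution of the scalar $(2-2\theta)\langle G_0^\Lambda,B_0^\Lambda\rangle$ to an ``$a^*b^*\cdot ab^*$ commutator'' is imprecise: in the paper it is the sum of a pure-$a$ normal-ordering term $E_a=-2\theta\langle k^2 B_0^\Lambda,B_0^\Lambda\rangle$ and part of the pure-$b$ term $E_b$, combining via $(k^2+\omega)B_0=G_0$; the net result $\partial_\theta E^\Lambda_\theta$ is nonetheless the one you state.
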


This proposition follows from the uniqueness of both sides by comparing their
derivatives. The lengthy calculation is given in
Section~\ref{sec:proof:prop:commutin:Bog:evolution}.  Assuming this for now,
we can prove Theorem \ref{theorem:limit:unitary} and Proposition
\ref{prop:id:Bog:ren:new}.

\begin{proof}[Proofs of Theorem~\ref{theorem:limit:unitary} and Proposition \ref{prop:id:Bog:ren:new}]
  By strong continuity of $\mathbb W_{u,\alpha}^\Lambda(\theta)$ in $\Lambda$
  (Lemma~\ref{prop:Gross-Bog:evolution}) and of
  $\mathbb{U}^\Lambda_{u,\alpha,1}(t)$ (Proposition~\ref{prop:theta-Bog}), we
  have
  \begin{multline}
    \slim_{\Lambda\to \infty}  \mathbb W_{\mathfrak s_0[t](u,\alpha)}^\Lambda (1)^* \, \mathbb U^\Lambda_{\mathfrak{D} (u,\alpha), 1} (t) \,
    \mathbb W^\Lambda_{u,\alpha}(0)  \\
    =  \mathbb W_{\mathfrak s_0[t](u,\alpha)}^\infty (1)^* \, \mathbb U^\infty_{\mathfrak{D} (u,\alpha), 1} (t) \,
    \mathbb W^\infty_{u,\alpha}(0)
  \end{multline}
  By Proposition~\ref{prop:id:Bog:ren} with $\theta=1$ and $\mathbb
  U_{u,\alpha}^\Lambda(t) = \mathbb U_{u,\alpha,0}^\Lambda(t) $ this shows
  that
  \begin{equation}\label{eq:renormalized-Bog}
    \mathbb{U}_{u,\alpha}(t)=\slim_{\Lambda\to \infty}\mathbb U^\Lambda_{u,\alpha,0} (t) e^{-i t E^\Lambda_1 }= \mathbb W_{\mathfrak s_0[t](u,\alpha)}^\infty (1)^* \, \mathbb U^\infty_{\mathfrak{D} (u,\alpha), 1} (t) \,
    \mathbb W^\infty_{u,\alpha}(0) .
  \end{equation}
  Strong continuity of $t\mapsto \mathbb{U}_{u,\alpha}(t)$ follows since the
  right hand side is strongly continuous, as for $\Psi \in \cF \otimes \cF$
  the maps
  \begin{equation}
    t\mapsto  \mathbb W_{\mathfrak s_0[t](u,\alpha)}^\infty (1)^* \Psi, \, \qquad t\mapsto \mathbb U^\infty_{\mathfrak{D} (u,\alpha), 1} (t) \Psi \, \qquad t\mapsto
    \mathbb W^\infty_{u,\alpha}(0) \Psi.
  \end{equation}
  are continuous.

  To show the mapping property for $\mathbb U_{u,\alpha}(t)$, consider
  $\Gamma(q_{u_t})$ as in the proof of Proposition
  \ref{prop:Gross-Bog:evolution} for $(u_t , \alpha_t) = \mathfrak
  s[t](u,\alpha)$. By the same argument as in
  \eqref{eq:proof:propagation:property:W} one shows that $\| \Gamma(q_{u_t})
  \mathbb U^\Lambda_{u,\alpha,0}(t)\Psi \| = 0$ for every $\Psi \in \cF
  \otimes \cF$ and all $\Lambda \in \mathbb R_+$, $t\in \mathbb R$ . The
  desired result then follows from $\norm{\Gamma(q_{u_t} ) \mathbb
    U_{u,\alpha,0}^\Lambda(t) \Psi}\rightarrow \norm{\Gamma(q_{u_t} ) \mathbb
    U_{u,\alpha} (t) \Psi}^2 $ as $\Lambda \to \infty$.

  The property that $\mathbb{U}_{u,\alpha} (t)$ is a Bogoliubov
  transformation is a direct consequence of the result that it is a
  composition of three Bogoliubov transformations.
\end{proof}

\begin{remark}[On the renormalized Nelson--Bogoliubov Hamiltonian]\label{rem:renormalized Bogoliubov}
  If the evolution $\mathbb{U}_{u,\alpha}(t)$ was a semi-group, we could
  deduce from Theorem~\ref{theorem:limit:unitary} the existence of a
  generator $\mathbb{H}_{u,\alpha}$ that would represent the renormalization
  of the Nelson--Bogoliubov Hamiltonian~\eqref{eq:reg:Nelson:Bog}.  But
  $\mathbb{U}_{u,\alpha}(t)$ is associated with a non-autonomous evolution
  equation and $\mathbb{H}_{u,\alpha}$ should depend on the time $t$. In this
  setting, only the following, weaker theory is available
  (see~\cite{nickel-1997} for details).  Consider the extension of
  $\mathbb{U}_{u,\alpha}(t)$ to a two-parameter family
  $\mathbb{U}_{u,\alpha}(t,s)$ with
  $\mathbb{U}_{u,\alpha}(t,0)=\mathbb{U}_{u,\alpha}(t)$. On the Banach space
  $C_\infty(\R, \cF\otimes \cF)$ of continuous $\cF\otimes \cF$-valued
  functions tending to zero at infinity, we can define the corresponding
  evolution semi-group of isometries by
  \begin{equation}
    (T(t)\chi)(s)=\mathbb{U}_{u,\alpha}(s,s-t)\chi(s-t).
  \end{equation}
  Then $T$ has a generator $A$, $D(A)\subset C_\infty(\R, \cF\otimes \cF)$,
  which corresponds formally to
  \begin{equation}
    (A\chi)(s)=\tfrac{d}{dt}T(t) \big|_{t=0}\chi(s) =\big(-i\mathbb{H}_{u,\alpha}(s) - \tfrac{d}{ds}\big)\chi(s).
  \end{equation}
  However, we do not have any information concerning the domain or
  self-adjointness of $\mathbb{H}_{u,\alpha}(t)$ (one could consider
  using~\cite[Thm. 2.9]{nickel-1997}, but this does not apply since the
  solutions provided by Proposition~\ref{prop:theta-Bog} need not be
  differentiable in the norm of $\cF\otimes \cF$).
\end{remark}

\subsection{Proof of the dressing identity}

\label{sec:proof:prop:commutin:Bog:evolution}

In this section, we derive the dressing identity for the Nelson--Bogoliubov
evolution, that is, we prove Proposition \ref{prop:id:Bog:ren}. To enhance
the clarity of the presentation, we will use the shorthand notation
\begin{equation}
  \begin{aligned}
    \label{eq:abbreviations:generators}
    \mathbb U^\Lambda_\theta (t) & :=  \mathbb U^\Lambda_{\mathfrak{D} [\theta](u,\alpha),\theta } (t) \quad \qquad && \mathbb W_t^\Lambda (\theta)  := \mathbb W^\Lambda_{\mathfrak{s}_0[t](u,\alpha)}(\theta) \\[1mm]
    \mathbb H^{\Lambda}_{\theta}(t) & := \mathbb H^\Lambda_{\mathfrak{D} [\theta](u,\alpha) ,\theta}(t) \qquad && \ 
    \mathbb D^\Lambda_{t}(\theta)  :=\mathbb D^\Lambda_{\mathfrak{s}_0[t](u,\alpha)}(\theta) \qquad && .
  \end{aligned}
\end{equation}
\begin{proof}[Proof of Proposition \ref{prop:id:Bog:ren}] Adopting
  \eqref{eq:abbreviations:generators}, the identity we aim to prove becomes $
  \mathbb U^\Lambda_{0} (t) e^{-i t E^\Lambda_1 } = \mathbb W_t^\Lambda
  (\theta )^* \, \mathbb U^\Lambda_{\theta } (t) \, \mathbb
  W^\Lambda_{0}(\theta)$. The idea is to prove a differential version of this
  identity after taking derivatives in $t$ and $\theta$.  To make this
  precise, first note that the coefficients of
  $\mathbb{D}_{t}^\Lambda(\theta)$ given in~\eqref{eq:Gross-Bog-generator}
  depend on $u,\alpha$ in a differentiable way. Since for $(u,\alpha)\in
  H^2(\R^3)\oplus \mathfrak{h}_1$ the flow $\mathfrak{s}_\theta[t](u,
  \alpha)$ is differentiable in $t$ in the $L^2$-sense, this implies that for
  any $\Psi \in D(\cN)$ the mapping $t \mapsto \mathbb{D}^\Lambda_{t}(\theta)
  \Psi$ is differentiable in $\cF \otimes \cF$. Since $\mathbb
  W_t^\Lambda(\theta)$ preserves $D(\cN)$, we deduce from Duhamel's
  formula~\eqref{eq:W-Duhamel} that $ \mathbb W_t^\Lambda(\theta)\Psi$ is
  differentiable in $t$ for $\Psi \in D(\cN)$ and
  \begin{equation}
    \big(i\partial_t \mathbb{W}_t^\Lambda(\theta)^*\big)  \mathbb{W}_t^\Lambda(\theta)\Psi = -\int_0^\theta d\eta\, \mathbb{W}_t^\Lambda(\eta)^* \big(\partial_t \mathbb{D}_t^\Lambda\big)(\eta) \mathbb{W}_t^\Lambda(\eta)\Psi.
  \end{equation}
  Denote by $\mathbb{V}_\theta^\Lambda(t):=\mathbb W_{t}^\Lambda(\theta )^*
  \mathbb U^{\Lambda}_{\theta}(t) \mathbb W^\Lambda_{0}(\theta)$ the right
  hand side of the identity we want to prove.  It follows from our previous
  considerations and Lemma~\ref{lem:W-reg} that for $\Psi \in D( (
  \mathcal{N} + \mathbb{T})^{1/2})$ we have
  \begin{align}
    i\partial_t \mathbb{V}_\theta^\Lambda(t) \Psi &=\big(i\partial_t\mathbb W_{t}^\Lambda(\theta )^*\big) \mathbb U^{\Lambda}_{\theta}(t) \mathbb W^\Lambda_{0}(\theta)\Psi
    + \mathbb W_{t}^\Lambda(\theta )^*  \mathbb{H}_\theta^\Lambda (t) \mathbb  U^{\Lambda}_{\theta}(t) \mathbb W^\Lambda_{0}(\theta)\Psi \notag\\
    &=: \mathbb{B}_\theta(t)\mathbb{V}_\theta^\Lambda(t)\Psi
  \end{align}
  in $D(( \mathcal{N} + \mathbb{T})^{-1/2})$.  By uniqueness of the solutions
  proved in Proposition~\ref{prop:theta-Bog} our claim will follow if we can
  show that the generators of $ \mathbb{U}_0^\Lambda(t) e^{-iE_\theta^\Lambda
    t }$ and $\mathbb{V}_\theta^\Lambda(t)$ are equal, that is for every
  $t\in \R$,
  \begin{align}
    \mathbb{H}_0^\Lambda(t)  + E_\theta^\Lambda&=\mathbb{B}_\theta(t)   \\
    &=-\int_0^\theta d\eta\, \mathbb{W}_t^\Lambda(\eta)^* \big(\partial_t \mathbb{D}_t^\Lambda\big)(\eta) \mathbb{W}_t^\Lambda(\eta) + \mathbb W_{t}^\Lambda(\theta )^*\mathbb{H}_\theta^\Lambda(t)\mathbb W_{t}^\Lambda(\theta ). \notag
  \end{align}
  Equality holds for $\theta=0$ since $E_0^\Lambda=0$ and $\mathbb
  W_t^\Lambda(0)=1$, so it is sufficient to prove that for all $\Psi, \Xi\in
  D(( \mathcal{N} + \mathbb{T})^{1/2})$
  \begin{align}
    0&=i\partial_\theta \Big\langle \Xi, \Big( \mathbb{H}_0^\Lambda(t)  + E_\theta^\Lambda - \mathbb{B}_\theta(t) \Big) \Psi \Big\rangle  \label{eq:Bog-commute-form}\\
    &=\Big\langle \mathbb W_{t}^\Lambda(\theta )\Xi, \Big(   i \partial_\theta E_\theta^\Lambda + i\partial_t \mathbb{D}_t^\Lambda(\theta) - i\partial_\theta \mathbb{H}_\theta^\Lambda(t) -  [\mathbb{H}_\theta^\Lambda(t),\mathbb{D}_t^\Lambda(\theta)]\Big) \mathbb W_{t}^\Lambda(\theta )\Psi \Big\rangle.
    \notag
  \end{align}
  where we anticipated differentiability of $\mathbb{H}_\theta^\Lambda$ which
  follows easily from the explicit calculation of its derivative below.

  The remainder of the proof is an explicit calculation of this quadratic
  form.
  For ease of presentation, we set, using that the flows commute,
  \begin{equation}\label{eq:def:ut:alphat}
    (u_t^\theta, \alpha_t^\theta):=\mathfrak{s} _\theta[t]\circ \mathfrak{D}[\theta](u,\alpha) = \mathfrak{D}[\theta]\circ \mathfrak{s}_0[t](u,\alpha) .
  \end{equation}
  Moreover, we do not make the dependence of the different objects on
  $t,\theta,\Lambda$ explicit everywhere and adopt the following shorthand
  notation
  \begin{subequations}
    \begin{align}
      h_\theta & =   h_{u_t^\theta,\alpha_t^\theta,\theta}\\
      \tau_t & =  \tau_{u_t,\alpha_t}\\
      q&=q_{u^\theta_t}=1-| u_t\rangle \langle u^\theta_t| \\
      \kappa^\Lambda _t (k,x)&=\big(q iB_{(\cdot)}^\Lambda(k) u_t^\theta\big)(x) \\
      L^\Lambda_\theta(k)&=(1-\theta)G^\Lambda_{(\cdot)}+2B^\Lambda_{(\cdot)}(k)k(-i\theta\nabla+\theta^2 F_{\alpha_t^\theta}(\cdot))\\
      M^\Lambda (k,l)&=\langle u^\theta_t, kB^\Lambda_{(\cdot)}(k)lB^\Lambda_{(\cdot)}(l)u^\theta_t\rangle\\
      K^{(1),\Lambda}_\theta &=K^{(1),\Lambda}_{\theta, u_t^\theta} \\
      K^{(2),\Lambda}_\theta &=K^{(2),\Lambda}_{\theta, u_t^\theta} \\
      V^\Lambda_\theta(x)  & =  -4 \theta \Re \scp{G^\Lambda_{x}}{B_{0}} + 2 \theta^2 \Re \scp{B_{x}^\Lambda}{\omega B_{0}}
    \end{align}
  \end{subequations}
  Note that $M^\Lambda (k,l)$ is independent of $\theta$ as the flow
  $\mathfrak{D}$ preserves the modulus of $u$.

  After commuting $\mathbb{H}_\theta$ and $\mathbb{D}_t$, the expression
  from~\eqref{eq:Bog-commute-form} takes the form
  \begin{subequations}
    \begin{align}
      & - i \partial_t \mathbb{D}_t^\Lambda(\theta)  + i \partial_\theta \mathbb H^{\Lambda}_\theta(t)  + [\mathbb H^{\Lambda}_\theta(t),\mathbb{D}_t^\Lambda(\theta)]  \notag \\[1mm]
      &  \ = \int d x \, b_x^* \big( -i\partial_t \tau_{t} +  i\partial_\theta h_{\theta}  +[h_{\theta}, \tau_{t}] \big)b_x\, \label{eq:Bog-commute-mf} \\
      &   \quad  + \int dx dy \,  D(x,y)   b_x^*b_y   + i E_b+  \bigg(  \int dx dy\, \widetilde D(x,y) \, b_x^*b_y^* - \text{h.c.} \bigg)  \label{eq:Bog-commute-bb}\\
      &   \quad +  \int dk dx \Big(  X (k,x) \, a_k^* b_x^* +   \widetilde X (k,x) \, a_k b_x^* \Big) - \text{h.c.}   \label{eq:Bog-commute-ab} \\
      &  \quad + \int dk dl\,  A(k,l) a_k^* a_{l} + i E_a  +  \bigg( \int dk dl  \,  \widetilde{A}(k,l) a_k^* a_{l}^* - \text{h.c.} \bigg)  \label{eq:Bog-commute-aa} ,
    \end{align}
  \end{subequations}
  in the sense of sesquilinear forms on $D(( \mathcal{N} +
  \mathbb{T})^{1/2})$.  We will now show that all the coefficients of
  creation and annihilation operators vanish, and that $E_a+E_b=
  \partial_\theta E^\Lambda_\theta$.\medskip

  \noindent \textbf{Mean-field part~\eqref{eq:Bog-commute-mf}}: By
  \eqref{eq:classical dressing equations}, Lemma
  \ref{lem:theta-mean-field-flow} and \eqref{eq:def:ut:alphat} we have
  $\partial_\theta u^\theta_t = -i \tau_t u^\theta_t$ and $\partial_t
  u_t^\theta = -i h_\theta u_t^\theta$, and since the derivatives commute
  \begin{align}
    0 = ( \partial_t \partial_\theta -  \partial_\theta \partial_t )  u_t^\theta  = \big( - (i \partial_t \tau_t) + i (\partial_\theta h_\theta) + [ h_\theta , \tau_t ] \big) u_t^\theta.
  \end{align}
  Since \eqref{eq:def:ut:alphat} is a bijection, this implies that
  \eqref{eq:Bog-commute-mf} vanishes.\medskip

  \noindent \textbf{Terms quadratic in $b,b^*$ and
    $E_b$~\eqref{eq:Bog-commute-bb}}: In the commutator $[\mathbb
  H^{\Lambda}_\theta(t),\mathbb{D}_t^\Lambda(\theta)]$ a quadratic term in
  $b,b^*$ can arise either by commuting two terms with $b^\#a^\#$, or terms
  with $b^\#b^\#$. In the first case, the coefficients are combinations of
  $L^\Lambda_\theta(k)$ and $\kappa^\Lambda_t(k, \cdot)$ integrated over
  $k$. In the latter case, some terms have already been taken into account
  in~\eqref{eq:Bog-commute-mf} and only the commutators of
  $K^{(1),\Lambda}_\theta$, $K^{(2),\Lambda}_\theta$ with $\tau_t$
  remain. Combining these with the derivatives of $K^{(1),\Lambda}_\theta$,
  $K_\theta^{(2),\Lambda}$ and putting them in normal order yields
  \begin{subequations}
    \begin{align}
      D(x,y) & = i \partial_\theta K^{\Lambda,(1)}_{\theta}(x,y) - \big( \tau_t(x) -  \tau_t (y) \big)  K^{\Lambda,(1)}_{\theta}(x,y) \notag \\
      & \quad - \int dk \,   \big(q(L^\Lambda_\theta(k)+L^\Lambda_\theta(-k)^*)u^\theta_t\big)(x) \overline{\kappa^\Lambda_t(k,y)} \notag \\
      &\quad + \int dk \,    \kappa^\Lambda_t(k,x)\overline{\big(q(L_\theta^\Lambda(k)+L^\Lambda_\theta(-k)^*)u^\theta_t\big)(y)} ,  \\
      E_b & =  i \int dxdk \,   \overline{\kappa^\Lambda_t (k,x)} \big(q(L^\Lambda_\theta(k)+L^\Lambda_\theta(-k)^*)\big)u^\theta_t(x),  \\[1mm]
      \widetilde D(x,y) & = \frac{i}{2} \partial_\theta K_{ \theta}^{(2),\Lambda}(x,y) - \frac12 ( \tau_t(x) + \tau_t(y))  K_{ \theta}^{(2),\Lambda}(x,y)  \notag\\
      & \quad + \int dk   \, \big( q(L^\Lambda_\theta(k)^* +L^\Lambda_\theta(-k))u^\theta_t  \big) (x)\kappa^\Lambda_t(k,y).
    \end{align}
  \end{subequations}
  We now show that $D(x,y)= 0$. First, we may observe that $i\partial_\theta
  u^\theta_t = \tau_t u^\theta_t$ and $q =1- |u^\theta_t\rangle \langle
  u^\theta_t|$ imply for any operator $T_\theta$ the identity
  \begin{equation}\label{eq:dtheta-qTu}
    i\partial_\theta ( q T_\theta u^\theta_t ) =  \tau_t  q T_\theta u^\theta_t + q [T_\theta,\tau_t] u^\theta_t+ q (i\partial_\theta T_\theta) u^\theta_t  .
  \end{equation}
  Since $[V^\Lambda_\theta , \tau_t ] = 0$, this gives us
  \begin{align}\label{eq:K1:derivative}
    & i \partial_\theta K^{(1),\Lambda}_{\theta}(x,y) - \big( \tau_t(x) -  \tau_t (y) \big)  K^{(1),\Lambda}_{\theta}(x,y) \notag\\
    & \qquad = \int dz dz' q(x,z) u^\theta_t(z) (i\partial_\theta V^\Lambda_\theta(z-z'))\bar u^\theta_t(z') q(z',y).
  \end{align}
  To evaluate the terms involving
  $L^\Lambda_\theta(k)+L^\Lambda_\theta(-k)^*$, we first calculate using
  $\overline{B^\Lambda_x}(-k)=B^\Lambda_x(k)$
  \begin{align}
    (L^\Lambda_\theta(k)+L^\Lambda_\theta(-k)^*)u^\theta_t(x)&=\big((1-\theta)2 G^\Lambda_x(k) +2\theta k^2  B^\Lambda_x(k) \big) u^\theta_t(x).
  \end{align}
  This gives
  \begin{align}
    &-\int dk \,   \big(q(L^\Lambda_\theta(k)+L^\Lambda_\theta(-k)^*)u\big)(x) \overline{\kappa^\Lambda_t(k,y)} \\
    & =  \int dz dz' q(x,z) u^\theta_t(z)\Big(2i (1-\theta) \langle B^\Lambda_{z'},G^\Lambda_z\rangle +2i\theta  \langle k^2 B^\Lambda_{z'},B^\Lambda_z\rangle \Big) \bar u^\theta_t(z')  q(z',y). \notag
  \end{align}
  Adding this and minus its complex conjugate with $x,y$ exchanged (which
  leads to an exchange of $z,z'$) gives with $(k^2+\omega(k))B_x(k)=G_x(k)$
  \begin{align}
    \int\hspace{-6pt} dz dz' q(x,z) u^\theta_t(z)\big(
    \underbrace{ 4i(1-\theta) \Re\langle B^\Lambda_{z'},G^\Lambda_z\rangle +4i\theta  \Re\langle k^2 B^\Lambda_{z'},B^\Lambda_z\rangle}_{=4i \Re\langle B^\Lambda_{z'},G^\Lambda_z\rangle -4\theta i \Re\langle \omega B^\Lambda_{z'},B^\Lambda_z\rangle=-i\partial_\theta V^\Lambda_\theta}
    \big)
    \bar u^\theta_t(z')  q(z',y).
  \end{align}
  Combined with \eqref{eq:K1:derivative}, this shows that $D\equiv 0$.

  For $\widetilde D$ this follows from the same calculation, using
  that $\langle B_{x}^\Lambda, G_y^\Lambda\rangle=\widehat{B_0^\Lambda
    G_0^\Lambda}(x-y)$ is real-valued since $G,B$ are even functions.  By the
  same reasoning, the value of the constant is
  \begin{align}
    E_b &=\begin{aligned}[t]
      - i \int dx  dz dz' q(x,z)  u^\theta_t(z)&\Big(2i (1-\theta) \langle B^\Lambda_{z'},G^\Lambda_z\rangle \\
      &  +2i\theta  \langle k^2 B^\Lambda_{z'},B^\Lambda_z\rangle \Big) \bar u^\theta_t(z')  q(z',x)
    \end{aligned} \notag\\
    &= - \frac{i}{2} \int dz dz' q(z',z) (-i \partial_\theta V^\Lambda_\theta(z-z') u^\theta_t(z) \bar u^\theta_t(z')) \notag\\[1.5mm]
    &=   2 \Re \langle G_0^\Lambda, B_0^\Lambda \rangle - 2\theta \Re\langle \omega B^\Lambda_{0},B^\Lambda_0\rangle + \partial_\theta \tfrac12 \langle u, V^\Lambda_\theta\ast |u|^2 u\rangle  ,\label{eq:Bog-commute-E_b}
  \end{align}
  where we used that $|u^\theta_t|^2$ is independent of $\theta$ and
  integrates to one.\medskip

  \noindent \textbf{Mixed terms in $a,a^*$,
    $b,b^*$~\eqref{eq:Bog-commute-ab}}. Mixed terms with, say, $a^*_kb^*_x$
  arise from the derivatives of the respective terms in $\mathbb
  H^{\Lambda}_\theta$, $\mathbb{D}_t^\Lambda$, and from the commutator
  $[\mathbb H^{\Lambda}_\theta,\mathbb{D}_t^\Lambda]$ if one commutes a term
  with one $a^\#$ and one $b^\#$ with terms with two $a^\#$s or two $b^\#$s.
  The commutator
  \begin{align}
    \bigg[ &\int dk dl \Big(2M^\Lambda(k,-l)a^*_ka^*_l+ M^\Lambda(k,l)a^*_ka^*_l +  \text{h.c.}\Big), \notag  \\ 
    &\int dx dm\, \Big(\kappa^\Lambda_t(m,x)a^*_mb^*_x - \kappa^\Lambda_t(-m,x)a_mb^*_x \Big)+ \text{h.c}\bigg]
  \end{align}
  vanishes identically, as one easily checks.

  We group the remaining terms into two parts, $X=X^{(1)}+X^{(2)}$, which
  vanish separately.  Spelling things out, we set
  \begin{align*}
    X^{(1)} (k,x) &  = (i\partial_\theta -\tau_t(x)) (q L^\Lambda_\theta(k)u^\theta_t) (x)+   (- i\partial_t +h_\theta+\omega(k)) \kappa^\Lambda_t(k,x),  \\[1mm]
    X^{(2)}(k,x) & =  \int dy  \Big(   K_{\theta}^{(1),\Lambda}(x,y)  \kappa_t ^\Lambda(k,y)  +  K_{\theta}^{(2),\Lambda}(x,y) \overline{  \kappa^\Lambda_t (-k,y) } \Big),   \\[1.5mm]
    \widetilde X^{(1)}(k,x) &  = (i \partial_\theta-\tau_t(x)) (q L^\Lambda_\theta(k)^*u^\theta_t)(x) + (i\partial_t - h_\theta  + \omega(k)) \kappa^\Lambda_t(- k,x) ,\\[2.5mm]
    \widetilde X^{(2)}(k,x) & = -X^{(2)}(-k,x).
  \end{align*}
  To see that $X^{(1)}(k,x)=0$, we use~\eqref{eq:dtheta-qTu} with
  $T_\theta=L^\Lambda_\theta(k)$ and the identity
  \begin{equation}
    \nabla \tau_t(x)=\nabla 2\Re\langle iB_x, \alpha_t^\theta\rangle =  2\Re\langle k B_x, \alpha_t^\theta\rangle= F_{\alpha_t^\theta}(x) 
  \end{equation}
  to obtain
  \begin{align}
    &(i\partial_\theta -\tau_t(x)) (q L^\Lambda_\theta(k)u^\theta_t )  =
    q [ L^\Lambda_\theta(k) , \tau_t ] u_t^\theta + q (i \partial_\theta L^\Lambda_\theta(k) ) u_t^\theta \notag\\[1mm]
    &\quad =  q \Big( 2\theta B^\Lambda_{(\cdot)}(k)k[-i\nabla, \tau_t] -iG^\Lambda_{(\cdot)}(k)+2iB^\Lambda_{(\cdot)}(k)k(-i\nabla+2\theta F_{\alpha_t^\theta}) \Big)u^\theta_t \notag\\
    &\quad=q  \Big(-iG^\Lambda_{(\cdot)}(k) + 2B^\Lambda_{(\cdot)}(k)k \nabla + 2i\theta B^\Lambda_{(\cdot)}(k)k F_{\alpha_t^\theta}\Big)  u^\theta_t.\label{eq:Bog-commute-X1}
  \end{align}
  Using
  \begin{align*}
    [-\Delta, B_x^\Lambda(k)] & = k^2 B_x^\Lambda(k) + 2 k B_x^\Lambda(k) i \nabla \\[1mm]
    [A_{\alpha}, B^\Lambda_{x}(k)] & =[2(-i\nabla)\langle kB_x
    ,\alpha\rangle+\text{h.c}, B^\Lambda_{x}(k)]=- 2 k B^\Lambda_x(k)
    \underbrace{2\Re\langle kB_x,\alpha\rangle}_{=F_{\alpha}(x)}
  \end{align*}
  we find in the same way
  \begin{align}
    (-i\partial_t +h_\theta+\omega(k))\kappa^\Lambda_t(k,x)
    &= q\Big(i\omega(k)B^\Lambda_{(\cdot)} + [ -\Delta + \theta A_{\alpha_t^\theta},  iB^\Lambda_{(\cdot)}(k) ] \Big) u^\theta_t\notag \\
    &= q \Big(iG^\Lambda_{(\cdot)} -2B^\Lambda_{(\cdot)}(k)k\nabla -2i\theta B^\Lambda_{(\cdot)}(k)k F_{\alpha_t^\theta}\Big)  u^\theta_t.
  \end{align}
  This equals the negative of~\eqref{eq:Bog-commute-X1}, so $X^{(1)}\equiv
  0$.

  The equality $X^{(2)}\equiv 0$ follows simply by expanding the expressions:
  \begin{align}
    &\int dy\, K_{\theta}^{(2),\Lambda}(x,y) \overline{  \kappa_t^\Lambda (-k,y) } \notag \\
    & = \int dy dz dz' dz''\, q(x,z)q(y,z') u^\theta_t(z)u^\theta_t(z') V^\Lambda_\theta(z-z') \overline{i B^\Lambda_{z''}(-k) u^\theta_t(z'') q(y, z'')} \notag\\
    &=\int dz dz' dz'' \, q(x,z)q(z'',z') u^\theta_t(z)u^\theta_t(z') V^\Lambda_\theta(z-z')  \bar u^\theta_t(z'')(- i B^\Lambda_{z''}(k))\notag \\
    &=-\int dz dz' \, q(x,z) u^\theta_t(z) |u^\theta_t(z')|^2 V^\Lambda_\theta(z-z')  i B^\Lambda_{z'}(k) \notag\\
    &\quad +i \int dz \, q(x,z) (V^\Lambda_\theta\ast |u^\theta_t|^2)(z) \langle u^\theta_t, B^\Lambda_\cdot(k) u^\theta_t \rangle \notag\\
    &= - \int dy \, K_{\theta}^{(1),\Lambda}(x,y)  \kappa_t^\Lambda (k,y) ,
  \end{align}
  where the last equality is obtained by performing the same calculation for
  $K^{(1)}\kappa$, which just changes the location of some complex
  conjugates.  This implies vanishing of $\widetilde X^{(2)}$ and the
  argument for $\widetilde X^{(1)}$ is completely analogous to that for
  $X^{(1)}$.  \medskip

  \noindent \textbf{Terms quadratic in
    $a,a^*$~\eqref{eq:Bog-commute-aa}}. The only way to obtain a term with
  two $a^\#$s from the commutator $[\mathbb
  H^{\Lambda}_\theta,\mathbb{D}_t^\Lambda]$ is to commute two terms with an
  $a^\#$ and one $b^\#$ each. Since the coefficient
  $M^\Lambda(k,l)$~\eqref{eq:def:M:Lambda} of the terms with two $a^\#$s in
  $\mathbb{H}^\Lambda_\theta(t)$ is independent of $\theta$, we obtain for
  the coefficients in~\eqref{eq:Bog-commute-aa}
  \begin{subequations}
    \begin{align}
      A(k,l) & = -4i\theta M^\Lambda(k,-l)  \label{eq:Bog-commute-A(k,l)} \\
      &\quad - \int dx \Big( (q L^\Lambda_\theta(k)u^\theta_t)(x) \overline{\kappa^\Lambda_t (l,x)} + \overline{ ( q L^\Lambda_\theta(k)^*u^\theta_t ) (x)}  \kappa^\Lambda_t(-l,x) \Big)  \notag \\
      & \quad + \int dx \Big(  \overline{(q L^\Lambda_\theta(l)u^\theta_t)(x)} \kappa^\Lambda_t(k,x) + (q L^\Lambda_\theta(l)^*u^\theta_t)(x) \overline{\kappa^\Lambda_t(-k,x)} \Big)  \notag   \\
      E_a & = - i \int dk dx \Big(  \overline{(q L^\Lambda_\theta(k)u^\theta_t)(x)} \kappa^\Lambda_t(k,x) + (q L^\Lambda_\theta(k)^*u^\theta_t)(x) \overline{ \kappa^\Lambda_t(-k,x) } \Big)  \\
      \widetilde A(k,l) & = \begin{aligned}[t]
        2i\theta M^\Lambda(k,l)+ \int dx\,\Big( & \overline{(q L^\Lambda_\theta(k)^*u^\theta_t)(x)} \kappa^\Lambda_t(l,x) \\
        &+ (q L^\Lambda_\theta(k)u^\theta_t)(x) \overline{ \kappa^\Lambda_t(-l,x) } \Big).
      \end{aligned}
    \end{align}
  \end{subequations}
  To see that $A(k,l)=0$, we first calculate using that
  $q^2=q=1-|u^\theta_t\rangle\langle u^\theta_t|$ and
  \begin{align}
    &\int dx \Big( (q L^\Lambda_\theta(k)u^\theta_t)(x) \overline{\kappa^\Lambda_t (l,x)} + \overline{ q L^\Lambda_\theta(k)^*u^\theta_t(x)}  \kappa^\Lambda_t(-l,x) \Big) \notag \\
    &\quad=\begin{aligned}[t]
      \int dz dz'\, q(z',z) \Big(& (L^\Lambda_\theta(k)u^\theta_t)(z) (-i B^\Lambda_{z'}(-l) ) \bar u^\theta_t(z') \\
      &+ \overline{(L^\Lambda_\theta(k)^*u^\theta_t)(z')} i B^\Lambda_{z}(-l)u^\theta_t(z)\Big)
    \end{aligned}
    \notag \\
    &\quad= \langle u^\theta_t, i[L^\Lambda_\theta(k), B_{(\cdot)}^\Lambda(-l)]u^\theta_t\rangle
    \notag\\[2mm]
    &\quad=2i\theta\langle u^\theta_t,  B_{(\cdot)}^\Lambda(-l)l k B_{(\cdot)}^\Lambda(k) u_t^\theta\rangle = -2i\theta M^\Lambda(k,-l).
  \end{align}
  The second line in~\eqref{eq:Bog-commute-A(k,l)} is the complex conjugate
  of this with $k,l$ exchanged, so it equals $2i\theta M^\Lambda(k,-l)$.
  This implies that $A\equiv 0$. The argument for $\widetilde A\equiv 0$ is
  essentially the same.

  It remains to evaluate $E_a$. We have by the calculation of $A(k,l)$
  \begin{align}
    E_a= 2 \theta \int dk  M^\Lambda(k,-k) =-2 \theta  \langle k^2 B_{0}^\Lambda, B_{0}^\Lambda\rangle . 
  \end{align}
  Consequently with~\eqref{eq:Bog-commute-E_b}
  \begin{align}
    E_a+E_b &=  2  \Re \langle G_0^\Lambda, B_0^\Lambda \rangle - 2 \theta \Re\langle (k^2+\omega) B^\Lambda_{0},B^\Lambda_0\rangle + \partial_\theta \tfrac12 \langle u, V^\Lambda_\theta\ast |u|^2 u\rangle \notag \\
    &= (2-2\theta) \Re \langle G_0^\Lambda, B_0^\Lambda \rangle+ \partial_\theta \tfrac12 \langle u, V^\Lambda_\theta\ast |u|^2 u\rangle \notag\\
    &= \partial_\theta E^\Lambda_\theta.
  \end{align}
  This completes the proof of the proposition.
\end{proof}


\section{Estimates for the
  generators \label{sec:estimates:dressed:dynamics}}\label{chap:generators}

In this section we establish the inequalities on the different generators of
the dynamics considered in the previous sections. This includes the
generators of the fluctuation dynamics for $e^{-i t H_N^\mathrm{D}}$ and its
Bogoliubov approximation used in the proofs of Theorems
\ref{thm:gross-transformed dynamics reduced density matrices} and
\ref{thm:norm:approximation:dressed:dynamics}, given in
Sections~\ref{sec:dressed:fluc:generator}, \ref{sec:estimates:Bogoliubov},
respectively.  Similar bounds for the generators associated with the dressing
flow and its Bogoliubov approximation are given in
Section~\ref{section:estimates-dressing}.


\subsection{Fock space operator bounds}\label{sec:Fock:Space:Operators}

We start by proving a general bound for operators on Fock space that will
prove very useful.

\begin{lemma}
  \label{lem:ops-Fock}
  Let $n_a, n_b, m_a, m_b\in \NNN_0$, $M=n_a+n_b+m_a+m_b$, $s,t\in \RRR$ and
  \begin{equation*}
    T:   L^2(\RRR^3)^{\otimes m_b} \otimes \mathfrak{h}_{t }^{\otimes m_a}  \to L^2(\RRR^3)^{\otimes n_b} \otimes \mathfrak{h}_{-s }^{\otimes n_a}
  \end{equation*}
  be a bounded operator of norm $\tau$ with an integral kernel
  $T((K,X),(L,Y))\in \mathscr{S}'( \RRR^{3M})$.  Set
  \begin{equation*}
    A_{n}(K)=\prod_{i=1}^n a_{k_{i}} \text{ and } B_{n}(X)=\prod_{i=1}^n b_{x_{i}}.
  \end{equation*}
  Then for all $0\leq r_b\leq n_b+m_b$, $0\leq r_a \leq n_a+m_a$,
  \begin{align*}
    &\bigg|\Big\langle \chi, \int_{\R^{3M}} T(X,K,Y,L)
    B^*_{n_b}(X) A^*_{n_a}(K)A_{m_a}(L) B_{m_b}(Y) dX dK dY dL\xi\Big\rangle \bigg| \\
    &\leq \tau \|
    (\mathcal{N}_b+M)^{\frac{n_b+m_b-r_b}{2}} (\mathcal{N}_a+M+1)^{r_a} \textnormal{d}\Gamma_a (\omega^{2s} )^{\frac{n_a}{2}}\chi\| \\
    &\qquad\times \|(\mathcal{N}_b+M)^{\frac{r_b}{2}} (\mathcal{N}_a+1)^{-r_a} \textnormal{d}\Gamma_a (\omega^{2t})^{\frac{m_a}{2}}\xi\|.
  \end{align*}
\end{lemma}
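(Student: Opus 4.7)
The plan is to view the sesquilinear form as a bilinear pairing of $\cF\otimes\cF$-valued $L^{2}$-functions, apply Cauchy--Schwarz using the operator bound on $T$, and then estimate each resulting factor using standard CCR identities for second-quantized operators. First, after moving bra-side creation operators onto $\chi$ and ket-side annihilation operators onto $\xi$, I would rewrite the bracket as
\[
\int T(X,K,Y,L)\,\bigl\langle A_{n_a}(K)B_{n_b}(X)\chi,\,A_{m_a}(L)B_{m_b}(Y)\xi\bigr\rangle_{\cF\otimes\cF}\,dX\,dK\,dY\,dL.
\]
The hypothesis on $T$ amounts, in the scalar case, to $|\int T f g|\le \tau\|\omega^{s}(K)f\|_{L^{2}}\|\omega^{t}(L)g\|_{L^{2}}$; since a scalar bounded integral operator extends with the same norm to Hilbert-space-valued functions, combining this with Cauchy--Schwarz in $\cF\otimes\cF$ yields the upper bound
\[
\tau\,\bigl\|\omega^{s}(K)A_{n_a}(K)B_{n_b}(X)\chi\bigr\|_{L^{2}(dKdX;\,\cF\otimes\cF)}\,\bigl\|\omega^{t}(L)A_{m_a}(L)B_{m_b}(Y)\xi\bigr\|_{L^{2}(dLdY;\,\cF\otimes\cF)}.
\]

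Next I would estimate each of these two $L^{2}$-norms by pulling the integrals inside the Fock-space inner product and using the standard identities $\int B_{n_b}^{*}(X)B_{n_b}(X)\,dX = \cN_{b}(\cN_{b}-1)\cdots(\cN_{b}-n_b+1)\le \cN_{b}^{n_b}$ and $\int \omega^{2s}(K)A_{n_a}^{*}(K)A_{n_a}(K)\,dK \le (\textnormal{d}\Gamma_a(\omega^{2s}))^{n_a}$, the latter proved by induction on $n_a$ using the CCR (for $n_a=2$ one computes explicitly $(\textnormal{d}\Gamma_a(\omega^{2s}))^{2}-\textnormal{d}\Gamma_a(\omega^{4s})$). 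After absorbing the falling-factorial corrections into shifted powers, this yields the claim for the ``natural'' values $r_a=0$ and $r_b=m_b$, for which the powers of $\cN_b$ are distributed as $n_b/2$ on the left and $m_b/2$ on the right. The analogous computation on the $\xi$-side uses the weight $\omega^{2t}$ in place of $\omega^{2s}$.

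The two free parameters $r_a,r_b$ come from preliminary manipulations. For $r_a$: insert the identity $(\cN_a+M+1)^{r_a}(\cN_a+M+1)^{-r_a}$ between the $a^{*}$'s and the $a$'s, then commute $(\cN_a+M+1)^{r_a}$ past $A_{n_a}^{*}(K)$ onto $\chi$ and $(\cN_a+M+1)^{-r_a}$ past $A_{m_a}(L)$ onto $\xi$, using $A_{m_a}(L)f(\cN_a)=f(\cN_a - m_a)A_{m_a}(L)$ together with the elementary estimate $(\cN_a+M+1-m_a)^{-r_a}\le (\cN_a+1)^{-r_a}$ (valid since $M\ge m_a$). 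For $r_b$: exchange $|r_b-m_b|$ of the bosonic factors between the two sides of the inner product using $\langle\phi,b_{x}^{*}\psi\rangle=\langle b_{x}\phi,\psi\rangle$ before applying Cauchy--Schwarz, which redistributes the total power $n_b+m_b$ of $\cN_b$ in the claimed way for integer $r_b$; for non-integer $r_b$ the bound follows by interpolation in the self-adjoint operator $\cN_b$. The only real obstacle is the careful bookkeeping of commutator shifts: each commutation of $\cN_a$ or $\cN_b$ past a creation or annihilation operator changes the argument by $\pm 1$, and the cumulative shifts, bounded by $M=n_a+n_b+m_a+m_b$, are absorbed precisely into the factors $(\cN_a+M+1)$ and $(\cN_b+M)$ appearing on the left-hand side.
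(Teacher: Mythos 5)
Your overall strategy is the same as the paper's: pair the operators into two $L^2(\R^{3M'},\cF\otimes\cF)$-valued functions, apply Cauchy--Schwarz, invoke the operator norm of $T$, and then use falling-factorial identities for $\int B_n^*(X)B_n(X)\,dX$ and $\int\prod\omega^{2s}(k_i)A_n^*(K)A_n(K)\,dK$ together with insertion of compensating powers of shifted number operators. Your handling of $r_a$ is correct in substance (though your commutation identity is written with the arrows reversed: the relation you use is $f(\cN_a)A_{m_a}(L)=A_{m_a}(L)f(\cN_a-m_a)$; your conclusion that the resulting factor is $(\cN_a+M+1-m_a)^{-r_a}\leq(\cN_a+1)^{-r_a}$ is nonetheless what you actually need). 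The paper only writes out the case $n_b=m_b=0$ and declares the general case straightforward, so your treatment of the bosonic part is genuinely additional content.

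The flaw is your mechanism for the $r_b$ freedom. You propose to ``exchange $|r_b-m_b|$ of the bosonic factors between the two sides'' using $\langle\phi,b_x^*\psi\rangle=\langle b_x\phi,\psi\rangle$ before Cauchy--Schwarz. This does not produce a usable estimate: any $b^*$ left on the ket side (or moved to the bra side as a $b^*$) causes the subsequent $L^2(dX)$-integral of the squared Fock norm to diverge, since it produces a term $\int b_x b_x^*\,dx=\cN_b+\delta(0)\cdot|\mathrm{vol}|$. All creation operators must end up acting on the same factor as $\chi$. The correct way to realize the $r_b$ degree of freedom is the same device you used for $r_a$: insert $1=(\cN_b+c)^{(m_b-r_b)/2}(\cN_b+c)^{(r_b-m_b)/2}$ between $T$ and the operators, move one factor across the inner product (it commutes with $T$), and then commute it through $B_{n_b}$ or $B_{m_b}$, picking up a shift of at most $M$ in the argument which is absorbed into $(\cN_b+M)$. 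Interpolation is then unnecessary since the insertion works directly for all real $r_b\in[0,n_b+m_b]$ by choosing $c$ large enough (e.g.\ $c=M+n_b$) so that the argument of the commuted factor stays positive on the relevant range of $\cN_b$.
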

\begin{proof}
  To keep the notation manageable we give the proof in the case $n_b=m_b=0$,
  the generalization is straightforward.
  Set $n=n_a$, $m=m_a$, and let $\chi \in
  D(\textnormal{d}\Gamma(\omega^{2s})^{n/2})$, then
  \begin{align}
    \int \Big\|\prod_{i=1}^n \omega^{s}(k_i)  a_{k_i} \chi\Big\|^2 dK   &  =\int \Big(\prod_{i=1}^{n} \omega^{2s}(k_i)\Big) \Big\langle \chi,  a_{k_1}^*\cdots a_{k_n}^* a_{k_n} \cdots a_{k_1} \chi\Big\rangle dK  \notag\\
    & = \| \textnormal{d}\Gamma_a (\omega^{2s})^{n/2}\chi\|^2,
  \end{align}
  so
  \begin{equation*}
    (k_1, \dots, k_n) \mapsto \Big(\prod_{i=1}^n \omega^{s}(k_i) a_{k_i} \Big)\chi \in L^2(\RRR^{3n}, \mathcal{F}\otimes \cF) ,
  \end{equation*}
  and analogously for $\xi$. Hence
  \begin{align}
    &\bigg|\Big\langle \chi,  \int  T(K,L) A_n^*(K) A_m(L ) dK dL\,\xi\Big\rangle \bigg|\notag \\
    &=\begin{aligned}[t]
      \bigg|  \int dK dL \Big\langle (\mathcal{N}_a+1)^{r_a}&\Big(\prod_{i=1}^n \omega^{s}(k_i) a_{k_i}\Big)\chi, \\
      & T(K,L)(\mathcal{N}_a+1)^{-r_a}\Big(\prod_{i=1}^n \omega^{-s}(k_i)\Big) A_m(L)\xi\Big\rangle  \bigg|
    \end{aligned}
    \notag \\
    & \leq \|\textnormal{d}\Gamma_a (\omega^{2s})^{\frac{n}{2}}(\mathcal{N}_a+n+1)^{r_a}\chi\| \notag \\
    &\qquad \times \Big\| \int \Big(\prod_{i=1}^n\omega^{-s}(k_i) \Big)T(K,L)
    A_m(L)(\mathcal{N}_a+m+1)^{-r_a}\xi d L \Big\|_{L^2(\RRR^{3n}_K, \mathcal{F}\otimes \cF)} \notag \\
    &\leq \|T\|_{\mathfrak{h}_t^{\otimes m}\to \mathfrak{h}_{-s}^{\otimes n}}  \| \textnormal{d}\Gamma_a(\omega^{2s})^{\frac{n}{2}}(\mathcal{N}_a+M+1)^{r_a}\chi\|\|\textnormal{d}\Gamma_a(\omega^{2t})^{\frac{m}{2}}(\mathcal{N}_a+1)^{-r_a}\xi\|.
  \end{align}
  This proves the claim.
\end{proof}

Two special cases of the previous lemma we use frequently are given
separately below.
\begin{lemma} \label{lem:bounds:dGamma} For any $s\in \mathbb R$ and
  $\chi,\xi\in \mathcal F \otimes \mathcal F$, we have
  \begin{subequations}
    \begin{align}\label{eq:bound:dG:sqrt:omega}
      \big| \scp{\chi}{\textnormal{d} \Gamma_a(\omega^{s/2}) \xi} \big| & \le  \norm{ \textnormal{d}\Gamma_a(\omega^s)^{1/2} \chi} \,  \norm{\mathcal N_a^{1/2} \xi},   \\[1.5mm]
      \norm { \textnormal d\Gamma_a(\omega^{s/2})  \chi } & \le \norm{ \mathcal N_a^{1/2} \textnormal d\Gamma_a(\omega^s)^{1/2} \chi }.
      \label{eq:normbound:dGamma(omega^s/2)}
    \end{align}
  \end{subequations}
\end{lemma}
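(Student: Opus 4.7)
The plan is to prove (a) by a direct Cauchy--Schwarz in the momentum integral, and then to obtain (b) either by specializing (a) to $\xi=\mathrm{d}\Gamma_a(\omega^{s/2})\chi$ (and exploiting that the relevant operators commute) or via a pointwise operator inequality on each $n$-particle sector.

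For (a), I would write the operator in its standard integral form
\begin{equation*}
\mathrm{d}\Gamma_a(\omega^{s/2}) = \int dk\, \omega^{s/2}(k)\, a_k^*a_k ,
\end{equation*}
so that on a suitable dense set
\begin{equation*}
\langle \chi,\mathrm{d}\Gamma_a(\omega^{s/2})\xi\rangle = \int dk\, \omega^{s/2}(k)\,\langle a_k\chi, a_k\xi\rangle .
\end{equation*}
Splitting $\omega^{s/2}(k)= \omega^{s/2}(k)\cdot 1$ and applying Cauchy--Schwarz in $L^2(\mathbb R^3, dk)$ to the pairing of $\omega^{s/2}(k)\|a_k\chi\|$ against $\|a_k\xi\|$ immediately yields
\begin{equation*}
\bigl(\int dk\, \omega^{s}(k) \|a_k\chi\|^2\bigr)^{1/2}\bigl(\int dk\, \|a_k\xi\|^2\bigr)^{1/2}
= \|\mathrm{d}\Gamma_a(\omega^s)^{1/2}\chi\|\,\|\mathcal{N}_a^{1/2}\xi\|.
\end{equation*}
This is the content of (a). The argument is a special case of Lemma \ref{lem:ops-Fock} with $n_a=m_a=1$, but the direct computation is cleaner.

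For (b), the cleanest route is to note that $\mathrm{d}\Gamma_a(\omega^{s/2})$, $\mathrm{d}\Gamma_a(\omega^s)$ and $\mathcal N_a$ are all functions of commuting multiplication operators on each sector $\mathcal F^{(n)}_a$. On that sector the three operators act, respectively, as $\sum_{i=1}^n\omega^{s/2}(k_i)$, $\sum_{i=1}^n\omega^s(k_i)$, and $n$. Pointwise Cauchy--Schwarz in $\mathbb R^n$ gives
\begin{equation*}
\Bigl(\sum_{i=1}^n \omega^{s/2}(k_i)\Bigr)^2 \le n\sum_{i=1}^n\omega^s(k_i),
\end{equation*}
which, summed over sectors, is the operator inequality $\mathrm{d}\Gamma_a(\omega^{s/2})^2\le \mathcal N_a\,\mathrm{d}\Gamma_a(\omega^s)$. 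Taking the expectation in $\chi$ and using that $\mathcal{N}_a^{1/2}$ commutes with $\mathrm{d}\Gamma_a(\omega^s)^{1/2}$ gives
\begin{equation*}
\|\mathrm{d}\Gamma_a(\omega^{s/2})\chi\|^2 \le \langle \chi, \mathcal N_a\,\mathrm{d}\Gamma_a(\omega^s)\chi\rangle = \|\mathcal N_a^{1/2}\mathrm{d}\Gamma_a(\omega^s)^{1/2}\chi\|^2 ,
\end{equation*}
which is (b). Alternatively, (b) follows from (a) by taking $\xi=\mathrm{d}\Gamma_a(\omega^{s/2})\chi$ and commuting $\mathcal{N}_a^{1/2}$ through $\mathrm{d}\Gamma_a(\omega^{s/2})$, but the operator-inequality version is more transparent.

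There is essentially no obstacle here: the only delicate points are (i) justifying the manipulations on a dense domain (e.g.\ finite particle vectors with rapid decay, where both sides are finite, followed by a density/closability argument), and (ii) making sure the factor of $\omega^{s/2}$ is distributed symmetrically so that Cauchy--Schwarz produces the $\omega^s$-weighted sum on the $\chi$-side and the unweighted one on the $\xi$-side.
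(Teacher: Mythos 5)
Your proof is correct. For part (a) you carry out the Cauchy--Schwarz step directly in the momentum integral; the paper derives the same estimate as the special case $n_a=m_a=1$, $r_a=0$ of Lemma~\ref{lem:ops-Fock}, which is the identical computation, so the two arguments coincide. For part (b) you take a genuinely different, and in my view cleaner, route: you prove the operator inequality $\mathrm{d}\Gamma_a(\omega^{s/2})^2 \le \mathcal{N}_a\, \mathrm{d}\Gamma_a(\omega^{s})$ on each $n$-particle sector by pointwise Cauchy--Schwarz and then take expectations. The paper instead asserts that (b) ``follows from the first by taking the supremum over $\|\xi\|=1$''; read literally this does not work, since the factor $\|\mathcal{N}_a^{1/2}\xi\|$ on the right of (a) is not controlled by $\|\xi\|$. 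What is meant is that the same Cauchy--Schwarz computation, with the $\mathcal{N}_a^{1/2}$ weight shifted onto $\chi$ (legitimate because $\mathcal{N}_a$ and $\mathrm{d}\Gamma_a(\omega^{s})$ are simultaneously diagonal in the sector decomposition), yields $|\langle\xi,\mathrm{d}\Gamma_a(\omega^{s/2})\chi\rangle|\le \|\mathcal{N}_a^{1/2}\mathrm{d}\Gamma_a(\omega^{s})^{1/2}\chi\|\,\|\xi\|$, after which taking the supremum gives (b); your operator inequality is exactly this redistribution made explicit. One small caveat: the alternative you mention in passing --- taking $\xi=\mathrm{d}\Gamma_a(\omega^{s/2})\chi$ in (a) and ``commuting $\mathcal{N}_a^{1/2}$ through'' --- does not actually close, since it leaves $\|\mathrm{d}\Gamma_a(\omega^{s/2})\chi\|^2 \le \|\mathrm{d}\Gamma_a(\omega^{s})^{1/2}\chi\|\,\|\mathcal{N}_a^{1/2}\mathrm{d}\Gamma_a(\omega^{s/2})\chi\|$ with the unwanted $\omega^{s/2}$ still on the right-hand side; you rightly do not rely on it.
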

\begin{proof}
  The first inequality is a special case of Lemma~\ref{lem:ops-Fock} ($t=0$,
  $T \alpha= \omega^{s} \alpha$, $m_a=n_a=1$, $r_a=0$). The second inequality
  follows from the first by taking the supremum over $\|\xi\|=1$.
\end{proof}


\subsection{Preliminary estimates}\label{sec:preliminary:estimates}

Here we provide some bounds on the terms appearing in the mean-field
Hamiltonians as well as the kernels of the Bogoliubov Hamiltonians. These
will later be combined with the operator bounds from the previous section to
prove the estimates for the generators.

\begin{lemma} \label{lem:bounds:alpha:phi} Let $G$, $B$ and $V$ be defined as
  in \eqref{eq:definition of G-x notation section}, \eqref{eq:def:B(k)} and
  \eqref{eq:definition of V}. For every $s>0$ there exists a constant $C>0$
  such that
  \begin{subequations}
    \begin{align}
      \norm{\widehat V}_{\Lp{1+s}} & \le C  , \label{eq:bound:FT:V}\\[1mm]
      \norm{ k  B_0 }_{\h{- s }} & \le C  , \label{eq:bound:kB:hs:norm}\\[1mm]
      \forall n \in \mathbb N_0  \quad \scp{ | k|^n B_0}{|\alpha|} & \le  C  \norm{\alpha}_{\h{ (n-1)+ s }}  , \label{eq:bound:scp:kB:alpha} \\[1mm]
      \forall n\in \{1,2,3\} \quad \norm{ \langle k^n  B_{(\cdot)},\alpha \rangle u }_{\2} & \le C \norm{ \alpha }_{\h{(n-2)/2+s }} \norm{ u }_{\Sob{n/2}} \label{eq:bound:kn:alpha:u}  
    \end{align}
    for all $u \in H^n(\R^3)$ and $\alpha \in \mathfrak{h}_{n-1+s}$.
    Moreover, for every $\varepsilon>0$ there exists $C>0$ so that for $u\in
    H^1(\R^3)$, $\alpha\in \mathfrak{h}_{1/2}$
    \begin{align}
      \norm{\langle G_{(\cdot)}, \alpha \rangle u}_{L^2}&\leq \varepsilon (\norm{ u }_{\Sob{1}}^2 + \norm{ \alpha }_{\h{1/2}}^2 ) + C \norm{ u }_{L^2}^2 .\label{eq:bound:G:alpha:u}
    \end{align}

  \end{subequations}
\end{lemma}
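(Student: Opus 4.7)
My plan is to reduce all five bounds to the single pointwise inequality
\begin{equation*}
\frac{|k|^{2n}}{\omega(k)(k^2+\omega(k))^2}\le C\,\omega(k)^{2n-5}, \qquad n\in\{0,1,2,3\},
\end{equation*}
which is immediate from $\omega\ge|k|$ and $(k^2+\omega)^2\ge k^4$. For \eqref{eq:bound:kB:hs:norm} this gives $\|kB_0\|_{\mathfrak{h}_{-s}}^2\le C\int\omega^{-3-2s}dk<\infty$ for every $s>0$, and for \eqref{eq:bound:scp:kB:alpha} Cauchy--Schwarz in $k$ yields $\langle|k|^nB_0,|\alpha|\rangle\le \||k|^nB_0\|_{\mathfrak{h}_{-(n-1+s)}}\|\alpha\|_{\mathfrak{h}_{n-1+s}}$, with the first factor squared coinciding up to a constant with the same integral $\int\omega^{-3-2s}dk$.

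For \eqref{eq:bound:FT:V}, I would expand $V$ from its definition, use that it is real, and identify
$\hat V(\xi)=c_1\omega(\xi)^{-1}(\xi^2+\omega(\xi))^{-1}+c_2(\xi^2+\omega(\xi))^{-2}$.
Both summands are smooth at $0$ and decay like $|\xi|^{-3}$ at infinity, so by polar coordinates $\|\hat V\|_{L^{1+s}}^{1+s}\lesssim\int_0^\infty r^{2}(1+r)^{-3(1+s)}dr<\infty$ whenever $s>0$.

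For \eqref{eq:bound:kn:alpha:u}, $f(x):=\langle k^n B_x,\alpha\rangle$ is (up to a constant) the inverse Fourier transform of $k^n\omega^{-1/2}(k^2+\omega)^{-1}\alpha(k)$, so Plancherel together with the pointwise bound gives
$\|f\|_{H^\sigma}^2\le C\|\alpha\|_{\mathfrak{h}_{\sigma+n-5/2}}^2$
for any $\sigma\ge 0$. Hölder $\|fu\|_{L^2}\le\|f\|_{L^p}\|u\|_{L^q}$ with $1/p+1/q=1/2$ and Sobolev embeddings $\|f\|_{L^p}\le C\|f\|_{H^\sigma}$, $\|u\|_{L^q}\le C\|u\|_{H^{n/2}}$ then produce the claim by choosing $\sigma=1,\tfrac12,s$ for $n=1,2,3$: in the first case pair $L^6\supset H^1$ with $L^3\supset H^{1/2}$; in the second, $L^3\supset H^{1/2}$ with $L^6\supset H^1$; in the third, $L^{6/(3-2s)}\supset H^s$ with $L^{3/s}\supset H^{3/2-s}\supset H^{3/2}$. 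In each case $\sigma+n-5/2\le\tfrac{n-2}{2}+s$, so monotonicity of $\|\alpha\|_{\mathfrak{h}_a}$ in $a$ absorbs the gap.

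For \eqref{eq:bound:G:alpha:u}, applying the same scheme to $g(x):=\langle G_x,\alpha\rangle$ yields $|\hat g|^2\le C\omega^{-1}|\alpha|^2$ and hence $\|g\|_{H^1}\le C\|\alpha\|_{\mathfrak{h}_{1/2}}$. Sobolev then gives $\|gu\|_{L^2}\le\|g\|_{L^6}\|u\|_{L^3}\le C\|\alpha\|_{\mathfrak{h}_{1/2}}\|u\|_{H^{1/2}}$, and the interpolation $\|u\|_{H^{1/2}}^2\le\eta\|u\|_{H^1}^2+C_\eta\|u\|_{L^2}^2$ combined with Young's inequality on the resulting product of norms reshuffles this into a bound of the stated form. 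The main obstacle is the borderline case $n=3$ of \eqref{eq:bound:kn:alpha:u}: the naive $L^\infty$ bound on $f$ via $\hat f\in L^1$ would demand $s>1/2$, forcing one to work slightly sub-critically and spend the free parameter $s>0$ in the weight on $\alpha$ against the corresponding slack in the Sobolev exponents of $u$.
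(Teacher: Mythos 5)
Your proposal is correct. For \eqref{eq:bound:FT:V}--\eqref{eq:bound:scp:kB:alpha} you do essentially what the paper does (identify $\widehat V$ as a sum of $\omega^{-1}(k^2+\omega)^{-1}$ and $(k^2+\omega)^{-2}$ and use decay/integrability of $B_0$), though note that your stated justification of the master pointwise bound via $(k^2+\omega)^2\ge k^4$ degenerates at $k=0$ for $n\in\{0,1\}$; the clean route is $k^2+\omega\ge k^2+1=\omega^2$, which gives $|k|^{2n}\omega^{-1}(k^2+\omega)^{-2}\le\omega^{2n-5}$ with constant $1$. For \eqref{eq:bound:kn:alpha:u} and \eqref{eq:bound:G:alpha:u} your route genuinely differs from the paper's. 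The paper proves \eqref{eq:bound:kn:alpha:u} as a direct bilinear estimate: after Parseval it applies a weighted Cauchy--Schwarz to the convolution kernel and controls $\sup_{p}\int |k|^nB_0(k)^2\omega(k)^{2-2s}\omega(p-k)^{-n}\,dk$ by symmetric rearrangement; you instead put all the regularity of $\alpha$ into Sobolev norms of the multiplier $f(x)=\langle k^nB_x,\alpha\rangle$ via Plancherel and then split $\|fu\|_{L^2}$ by H\"older and Sobolev embeddings. Your exponent bookkeeping checks out in all three cases (for $n=3$ one should say explicitly that it suffices to treat $0<s<3/2$, the general case following by monotonicity of the $\mathfrak h_a$ norms, which you do implicitly). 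For \eqref{eq:bound:G:alpha:u} the paper extracts the small constant $\varepsilon$ by a momentum cutoff $\Lambda$, making the tail integral $\int_{|k|>\Lambda}|G_0|^2\omega^{-3}$ small; you get it from the interpolation $\|u\|_{H^{1/2}}^2\le\eta\|u\|_{H^1}^2+C_\eta\|u\|_{L^2}^2$ together with Young's inequality, which is equally valid and arguably more economical, since it reuses the $n=1$-type machinery with $G$ in place of $kB$. The paper's rearrangement argument has the advantage of being uniform in the form factor (it is recycled verbatim for the cutoff version in the proof of \eqref{eq:bound:G:alpha:u} and later in Section 6), whereas your Sobolev-embedding argument is more modular and makes the sharpness of the $n=3$ case transparent.
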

\begin{proof}[Proof of Lemma \ref{lem:bounds:alpha:phi}]
  In view of the formula for $V$, we have $(2\pi)^{3/2}\widehat V(k) = -4
  G_0(k) B _0(k) + 2 \omega(k) B _0^2(k)$, and the first three inequalities
  then follow immediately from the integrability properties of
  $G_0(k)=\omega(k)^{-1/2}$ and
  $B_0(k)=(k^2+\omega(k))^{-1}\omega(k)^{-1/2}$.

  For~\eqref{eq:bound:kn:alpha:u}, we use the Fourier representation in $x$
  together with Parseval to write
  \begin{align}
    & \norm{\scp{k^n B_{(\cdot)}}{\alpha} u}_{\2}^2  = \frac{1}{(2\pi)^3}\Big\|\int dk\, k^n B_0(k) \alpha(k)  \int dp \, e^{i p (\cdot)} \, \hat u(p-k) \Big\|_\2^2  \\[1 mm]
    & = \Big\| \int dk\, k^n B_0(k) \alpha(k) \, \hat u(\cdot-k) \Big\|_\2^2  \notag\\[1.5mm]
    & = \int dp dk d\ell\, \frac{k^n B_0(k)\overline{\alpha (k)}}{\omega( p-k ) ^{n/2}} \frac{\ell^n B_0(\ell)\alpha (\ell)}{\omega( p-\ell )^{n/2}} \omega( p - k )^{\frac{n}{2}} \widehat u (p-k) \omega( p - \ell ) ^{\frac{n}{2}} \overline{\widehat u (p-\ell) } .\notag
  \end{align}
  Since $k \mapsto |k|^n B_0(k)^2 \omega(k)^{2-2s}$ and $k\mapsto
  \omega(k)^{-n}$ are radial and decreasing functions for $s>0$ and $n\in
  \{1,2,3\}$, it follows by symmetric rearrangement that
  \begin{align}\label{eq:symmetric:rearrangement}
    \sup_{p \in \R^3} \int dk  \frac{|k|^n  B_0(k)^2 \omega(k)^{2- 2 s }}{\omega(p-k)^n} \le \int dk \, B_0(k)^2 \omega(k)^{2-2s} \le C.
  \end{align}
  With Cauchy--Schwarz we thus find
  \begin{align}
    & \norm{\scp{k^n B_{(\cdot)}}{\alpha} u}_{\2}^2  \\
    &\le \int dp dk d\ell \, \frac{|k|^{n} B_0(k)^2  \omega(k)^{ 2 - 2 s } }{\omega(p-k)^{n}}   |\ell|^n \omega(\ell)^{2s-2 } |\alpha (\ell)|^2   \omega(p-\ell)^n |\widehat u (p-\ell) |^2 \notag\\
    &   \le  C  \int d\ell  \, \omega(\ell)^{n + 2s - 2 } |\alpha (\ell)|^2  \int dp\,   (|p|^2+1)^{\frac{n}{2}} |\widehat u (p) |^2   \le C  \norm{ \alpha}_{\h{n/2+s-1}}^2 \norm{ u }_{\Sob{n/2}}^2  .\notag
  \end{align} 

  The final inequality~\eqref{eq:bound:G:alpha:u} is proved in a similar way,
  but to obtain the small constant in front of the energy norm we start by
  introducing a cutoff $\Lambda<\infty$. We then bound
  \begin{equation}
    \norm{\langle G_{(\cdot)}, \alpha \rangle u}_{L^2} \leq \norm{\langle \id_{|\cdot|>\Lambda}G_{(\cdot)}, \alpha \rangle u}_{L^2} +  \tfrac{\varepsilon}{2}\|\alpha\|_{\mathfrak{h}_{1/2}}^2 +  \tfrac{1}{2\varepsilon}\|\id_{|\cdot|\leq \Lambda} G_0\|^2_{\mathfrak{h}_{-1/2}} \|u\|_{L^2}.
  \end{equation}
  Now the first term is treated exactly as for~\eqref{eq:bound:kn:alpha:u},
  which yields
  \begin{align}\label{eq:bound:G,alpha}
    \norm{\langle \id_{|\cdot|>\Lambda}G_{(\cdot)}, \alpha \rangle u}_{L^2}^2
    \leq   \norm{ \alpha}_{\h{1/2}}^2 \norm{ u }_{\Sob{1}}^2 \int_{|k|>\Lambda}  dk \frac{|G_0(k)|^2 }{\omega(k)^3}.
  \end{align}
  Choosing $\Lambda$ so that the final integral is less or equal to
  $\varepsilon^2$ proves the claim.
\end{proof}

The next lemma collects bounds for the different potentials that appear in
the mean-field equations introduced in Section
\ref{sec:dressed-dyn-classical}.

\begin{lemma} \label{lem:bounds:f:g} Let $V$ be defined
  by~\eqref{eq:definition of V} and $f_u$ by~\eqref{eq:def:f}.  There exists
  a constant $C>0$ such that for all $u\in H^1(\mathbb R^3)$ with
  $\norm{u}_\2=1$
  \begin{subequations}
    \begin{align}
      \norm{V\ast |u |^2}_{\Lp{\infty}} +  	\norm{V^2\ast |u |^2}_{\Lp{\infty}} + 	\norm{\nabla ( V\ast |u |^2) }_{\Lp{\infty}} &  \le C \norm{u}_{\Sob{1}}^{3/2}, \label{eq:bound:convolutions}\\[1mm]
      \norm{f_u}_{\Lp{\infty}} +    \norm{f_u}_{\h{1/2}}   &  \le C \norm{ u }_{\Sob{1}}^2.\label{eq:bound:f(t)}
    \end{align}
  \end{subequations}
  Moreover, for every $s>0$ there exists a constant $C>0$ such that for all
  $u\in H^1(\R^3)$, $\norm{u}_\2=1$ and $\alpha \in \h{1+s}$, the objects
  $F_{\alpha}$, $g_{u,\alpha}$, $\mu_{u,\alpha}$ defined in
  \eqref{eq:def:F_alpha}--\eqref{eq:def:g} satisfy
  \begin{subequations}
    \begin{align}
      \| F_{\alpha }\|_{\Lp{\infty}} & \leq C \|  \alpha  \|_{\h{s}} \label{eq:bound:Phi:infty}\\[1mm]
      \| \nabla F_{\alpha } \|_{\Lp{\infty}} & \le C \norm{\alpha }_{\h{1+s}}\\[1mm]
      \norm{g_{u,\alpha}}_{\Lp{\infty}}   + \norm{g_{u,\alpha}}_{\h{1/2}}   & \le C \norm{u }_{\Sob{1}} \norm{ \alpha }_{\h{s}} \\[1mm]
      |\mu_{u,\alpha}| & \le C \norm{u}_{\Sob{1}}^2 \norm{ \alpha }_{\h{s}}^2.
    \end{align}
  \end{subequations}
\end{lemma}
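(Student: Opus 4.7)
The six inequalities are largely independent, each reducing to a combination of the Fourier bounds of Lemma~\ref{lem:bounds:alpha:phi}, Young's convolution inequality, and Sobolev/Gagliardo--Nirenberg embeddings in $\R^3$. The normalization $\norm{u}_{L^2}=1$ will be used freely throughout, so in particular $\norm{u}_{H^1}\ge 1$ and lower powers of $\norm{u}_{H^1}$ may be replaced by higher ones when convenient.

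For~\eqref{eq:bound:convolutions}: from~\eqref{eq:definition of V} one computes $\widehat V(k)$ as a sum of $C(\omega(k)(k^2+\omega(k)))^{-1}$ and $C(k^2+\omega(k))^{-2}$, which decays as $|k|^{-3}$ at infinity; hence $\widehat V\in L^p(\R^3)$ for every $p>1$, yielding $V\in L^q$ for every $q\in[2,\infty)$ by Hausdorff--Young. In particular $V,V^2\in L^2$, and similarly $\widehat{\nabla V}(k)=ik\widehat V(k)\in L^2$ gives $\nabla V\in L^2$. Young's inequality combined with $\norm{u}_{L^4}^2\le C\norm{u}_{L^2}^{1/2}\norm{\nabla u}_{L^2}^{3/2}$ then yields $\norm{W\ast|u|^2}_{L^\infty}\le\norm{W}_{L^2}\norm{u}_{L^4}^2\le C\norm{u}_{H^1}^{3/2}$ for $W\in\{V,V^2,\nabla V\}$.

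The main obstacle is the $\mathfrak{h}_{1/2}$ part of~\eqref{eq:bound:f(t)}: the pointwise estimate $|f_u(k)|\le 2|k|B_0(k)\norm{\nabla u}_{L^2}$ leads to the divergent integral $\int\omega(k)k^2 B_0(k)^2\,dk$. I would instead use the Fourier representation
\begin{equation*}
  \omega(k)^{1/2}f_u(k)=\tfrac{2ik(2\pi)^{3/2}}{k^2+\omega(k)}\widehat{\bar u(-i\nabla u)}(k),
\end{equation*}
bound the multiplier by $|k|/(k^2+\omega(k))\le\tfrac12\omega(k)^{-1/2}$ via AM--GM, and so reduce to an $H^{-1/2}$-estimate on $\bar u(-i\nabla u)$. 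Decomposing $\bar u(-i\nabla u)=j(u)-\tfrac{i}{2}\nabla|u|^2$ with $j(u):=\Re(\bar u(-i\nabla u))$, the current is controlled by $\norm{j(u)}_{L^{3/2}}\le\norm{u}_{L^6}\norm{\nabla u}_{L^2}\le C\norm{u}_{H^1}^2$ together with the embedding $L^{3/2}\hookrightarrow H^{-1/2}$ (dual to $H^{1/2}\hookrightarrow L^3$); the gradient term is handled by absorbing the extra factor of $k$ into the $\omega^{1/2}$ weight and estimating $\norm{|u|^2}_{L^2}=\norm{u}_{L^4}^2\le C\norm{u}_{H^1}^{3/2}$. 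The $L^\infty$ half of~\eqref{eq:bound:f(t)} is immediate from pointwise Cauchy--Schwarz and $\norm{|k|B_0}_{L^\infty}<\infty$.

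The remaining bounds are more routine. The $L^\infty$ estimates on $F_\alpha$ and $\nabla F_\alpha$ follow from pointwise Cauchy--Schwarz using~\eqref{eq:bound:scp:kB:alpha} with $n=1,2$. For $g_{u,\alpha}$, the $L^\infty$ bound factors $\norm{F_\alpha}_{L^\infty}\le C\norm{\alpha}_{\mathfrak{h}_s}$ out of the $x$-integral, and the $\mathfrak{h}_{1/2}$ bound proceeds as for $f_u$ with $-i\nabla u$ replaced by $F_\alpha u$, combined with $\norm{F_\alpha}_{L^q}\le C\norm{\alpha}_{\mathfrak{h}_s}$ for $q\in[2,\infty]$ (obtained by Plancherel from the Fourier expression of $F_\alpha$) and Hölder. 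Finally, $|\mu_{u,\alpha}|$ is bounded term by term in~\eqref{eq:def:dressed:mu}: by $\tfrac12\norm{V\ast|u|^2}_{L^\infty}$ on the convolution term, and by Cauchy--Schwarz pairings $|\langle \alpha, f_u\rangle|\le\norm{\alpha}_{\mathfrak{h}_{-1/2}}\norm{f_u}_{\mathfrak{h}_{1/2}}$ and similarly for $\langle \alpha, g_{u,\alpha}\rangle$, using $\norm{\alpha}_{\mathfrak{h}_{-1/2}}\le\norm{\alpha}_{\mathfrak{h}_s}$ for $s\ge -1/2$.
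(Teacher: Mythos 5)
Your proposal is correct, and for the inequalities that are not essentially one line — the gradient term in \eqref{eq:bound:convolutions} and the $\mathfrak{h}_{1/2}$-bounds on $f_u$ and $g_{u,\alpha}$ — it takes a genuinely different route from the paper. For $\norm{\nabla(V\ast|u|^2)}_{\Lp{\infty}}$, the paper moves the derivative onto $|u|^2$ and pairs $V\in \Lp{4}$ (via Hausdorff--Young) with $\overline u\nabla u\in \Lp{4/3}$; you instead keep the derivative on $V$, note $\nabla V\in \Lp{2}$ directly from $k\widehat V(k)\sim|k|^{-2}$, and apply Young with $\norm{|u|^2}_{\Lp{2}}$ — both work and give the same power of $\norm{u}_{\Sob{1}}$. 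For the central estimate $\norm{f_u}_{\h{1/2}}$, the paper factorizes $\sqrt\omega f_u = 2(\sqrt\omega\,kB_0)\cdot\mathscr F[\overline u\nabla u]$ and closes with a single $\Lp{4}\times \Lp{4/3}$ H\"older pairing plus Hausdorff--Young; you use the AM--GM bound $|k|/(k^2+\omega)\le\tfrac12\omega^{-1/2}$ to reduce to an $H^{-1/2}$-estimate of $\overline u(-i\nabla u)$, then split into the current $j(u)$ and the density gradient $\nabla|u|^2$, treating each separately ($\Lp{3/2}\hookrightarrow H^{-1/2}$ for the current, a direct multiplier cancellation for $\nabla|u|^2$). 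Your route makes it transparent exactly where the half-derivative weight is spent, but pays for it with the current/density decomposition that the paper's single H\"older application avoids. Two small points. First, in your $g_{u,\alpha}$ step you should pin down the H\"older exponent: choosing $q=6$ in $\norm{F_\alpha}_{\Lp{q}}\norm{|u|^2}_{\Lp{r}}$ would give $\norm{u}_{\Lp{4}}^2\lesssim\norm{u}_{\Sob{1}}^{3/2}$ and overshoot the claimed power $\norm{u}_{\Sob{1}}$; take $q=\infty$ (which is in your stated range), i.e.\ $\norm{F_\alpha}_{\Lp{\infty}}\norm{u}_{\Lp{3}}^2\le C\norm{\alpha}_{\h{s}}\norm{u}_{\Sob{1}}$ — essentially what the paper does. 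Second, the stray factor $i$ in your Fourier identity for $\sqrt\omega f_u$ and the fact that $f_u$ is a dot product (so $\widehat{\overline u(-i\nabla u)}$ is $\C^3$-valued) are bookkeeping and do not affect any magnitude estimate.
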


\begin{proof} 
  For the first term in \eqref{eq:bound:convolutions}, applying Young's
  inequality and Parseval,
  \begin{align}\label{eq:V:convolution}
    \norm{V\ast |u |^2}_{\Lp{\infty}} \le \norm{V}_{\Lp{2}} \norm{| u |^2}_{\Lp{2}} =  \norm{\widehat V}_{\Lp{2}}  \norm{ u }_{\Lp{4}}^2 \stackrel{\eqref{eq:bound:FT:V}}{\le} C  \norm{ u }_{\Lp{4}}^2,
  \end{align}
  and then the Cauchy-Schwarz and Sobolev inequalities yield
  \begin{align} \label{eq:bound:L4:varphi} \norm{ u }_{\Lp{4}}^2 \le \norm{u
    }_{\Lp{2}}^{1/2} \norm{ u }_{\Lp{6}}^{3/2} \le C\norm{ u }_{H^1}^{3/2}.
  \end{align}
  For the convolution involving $V^2$, we proceed similarly, and obtain
  \begin{align}\label{eq:V:2:convolution}
    \norm{V^2\ast | u |^2}_{\Lp{\infty}} \le \norm{V^2}_{\Lp{2}} \norm{ |u|^2}_{\Lp{2}} =  \norm{V}_{\Lp{4}}^{2} \norm{ u }_{\Lp{4}}^2 \le C \norm{V}_{\Lp{4}}^2 \norm{ u }_{\Sob{1}}^{3/2}.
  \end{align}
  By Hausdorff--Young, the Fourier transform is bounded from $L^{4/3}$ to
  $L^4$, so by~\eqref{eq:bound:FT:V}
  \begin{align}
    \norm{V}_{\Lp{4}} \le C\norm{\widehat V}_{\Lp{4/3}}\le C.
  \end{align}
  For the convolution involving the gradient, we use the same inequalities to
  estimate
  \begin{align}\label{eq:V:nabla:convolution}
    \norm{\nabla ( V\ast | u |^2 ) }_{\Lp{\infty}} \le 2 \norm{V}_{L^{4}} \norm{\overline{u} \nabla u }_{\Lp{4/3}} \le C  \norm{\overline{ u } \nabla u }_{\Lp{4/3}}.
  \end{align}
  With H\"older and \eqref{eq:bound:L4:varphi} we conclude that
  \begin{align}\label{eq:bound:uDu}
    \| \overline{ u }   \nabla u \|_{\Lp{4/3}} \leq C \| u \|_{\Lp{4}}  \| \nabla u \|_{\2}  \leq  C \|u \|_{\Sob{1}}^{3/2}
  \end{align}
  and thus obtain \eqref{eq:bound:convolutions}.

  In view of $f_u(k)=\scp{u}{ k B_{(\cdot)}(k) (-i \nabla ) u}$, the bound on
  $\norm{f_u}_{\Lp{\infty}}$ is obvious.  To bound the $\h{1/2}$-norm of
  $f_u$, we use that
  \begin{equation}
    \norm{f_ u \sqrt \omega }_{\2} \leq 2 \| kB_0 \sqrt \omega \|_{\Lp{4}} \|\mathscr{F} [ \overline{ u  } \nabla  u ] \|_{\Lp{4}}.
  \end{equation}
  This implies the bound by Hausdorff-Young inequality
  and~\eqref{eq:bound:uDu}.

  The bounds on $F_\alpha(x)=2 \Re \scp{k B _x}{\alpha}$ and $\nabla
  F_\alpha$ follow directly from the from the fact that $|k|B(k)\in
  \mathfrak{h}_{-s}$.
  The bound on $F_\alpha$ implies that on the $L^\infty$-norm of
  $g_{u,\alpha}=2 \scp{ u }{ k B_{(\cdot)}(k) F_{\alpha } u}$. For the
  $\h{1/2}$-norm, writing $g_{u,\alpha} = 2 kB_0 \cdot \mathscr{F}[
  F_{\alpha} | u |^2]$ gives
  \begin{align}
    \norm{g_{u ,\alpha}\sqrt \omega}_{\2} &\leq 2\|  k B_0 \sqrt{\omega} \|_{L^4} \|\mathscr{F} [ F_{\alpha} | u |^2 ] \|_{\Lp{4}} \notag\\
    &\leq C\| F_{\alpha} | u |^2\|_{\Lp{4/3}}\leq  C \| F_{\alpha}\|_{\Lp{\infty}} \| |u|^2  \|_{\Lp{4/3}}.
  \end{align}
  The claimed bound then follows from~\eqref{eq:bound:L4:varphi} by
  H\"older's inequality, since $\|u\|_{H^1}\geq 1$.
	
  The estimate for $\mu_{u ,\alpha}=\tfrac{1}{2} \scp{u }{ V \ast |u|^2 u } +
  \Re \scp{\alpha }{ f_{u} } + \Re \scp{\alpha }{g_{u,\alpha} }$ follows from
  the previous bounds.
\end{proof}

We have similar bounds on the mean-field Hamiltonian.
\begin{lemma}\label{lemma:bounds:nabla:phi} Let $h_{u,\alpha}$ be defined by
  \eqref{eq:def:dressed:meanfield:Ham}. For every $s>0$ there is a constant
  $C>0$ such that for all $(u,\alpha) \in H^3\oplus \mathfrak{h}_{1+s}$ with
  $\norm{u}_\2=1$
  \begin{align*}
    \norm{h_{u ,\alpha} u }_{\2}  
    &  \le C \big( \norm{ u }_{\Sob{2}} + \norm{ u }_{\Sob{1}}^2 \big) \big(1+ \norm{\alpha }_{\h{s}}^2  \big)  , \\[0.5mm]
    \norm{\nabla h_{u,\alpha} u  }_{\2}  
    & \le C \big( \norm{ u }_{\Sob{3}}  +  \norm{ u }_{\Sob{1}}^{5/2} \big)\big(1+ \norm{\alpha}_{\h{1+s}} \norm{\alpha}_{\h{s}}^2 \big).
  \end{align*}
\end{lemma}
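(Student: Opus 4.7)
The plan is to bound each term in the expansion of $h_{u,\alpha}u$ separately via the triangle inequality, using the preliminary estimates of Lemmas~\ref{lem:bounds:alpha:phi} and~\ref{lem:bounds:f:g}, and then to combine them via Young's inequality and the normalization $\|u\|_{\2}=1$ (which in particular gives $\|u\|_{\Sob{k}}\ge 1$ for every integer $k\ge 0$).

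For the first estimate, I would decompose $h_{u,\alpha}u = -\Delta u + A_\alpha u + F_\alpha^2\, u + (V\ast|u|^2)u - \mu_{u,\alpha}\,u$. The kinetic term contributes $\|u\|_{\Sob{2}}$. For $A_\alpha u = -2i\nabla\!\cdot\!(vu) - 2i\bar v\!\cdot\!\nabla u$ with $v(x)=\scp{kB_x}{\alpha}$, a Leibniz expansion followed by \eqref{eq:bound:kn:alpha:u} (with $n=1,2$) and the $L^\infty$-bound $\|v\|_{L^\infty}\le C\|\alpha\|_{\h{s}}$ coming from~\eqref{eq:bound:kB:hs:norm} yields $\|A_\alpha u\|_{\2}\le C\|\alpha\|_{\h{s}}\|u\|_{\Sob{1}}$. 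The quadratic potential is treated by~\eqref{eq:bound:Phi:infty}, the convolution by~\eqref{eq:bound:convolutions}, and $|\mu_{u,\alpha}|$ via Lemma~\ref{lem:bounds:f:g}. Combining and using Young's inequality in the form $\|\alpha\|_{\h{s}}\|u\|_{\Sob{1}}\le \tfrac{1}{2}(\|u\|_{\Sob{1}}^2+\|\alpha\|_{\h{s}}^2)$ to convert additive combinations into the desired product structure $(\|u\|_{\Sob{2}}+\|u\|_{\Sob{1}}^{2})(1+\|\alpha\|_{\h{s}}^2)$ yields the first bound.

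For the second estimate, apply Leibniz to write
\begin{equation*}
  \nabla(h_{u,\alpha}u) \,=\, -\Delta\nabla u \,+\, \nabla(A_\alpha u) \,+\, \nabla(F_\alpha^2\, u) \,+\, \nabla\big((V\ast|u|^2)u\big) \,-\, \mu_{u,\alpha}\nabla u ,
\end{equation*}
and bound each piece separately. For $\nabla(F_\alpha^2\,u) = 2F_\alpha(\nabla F_\alpha)u+F_\alpha^2\nabla u$, it is essential to apply~\eqref{eq:bound:kn:alpha:u} (with $n=2$) to the product $(\nabla F_\alpha)\,u$ rather than extracting $\nabla F_\alpha$ in $L^\infty$ (which would force the higher-order weight $\h{1+s}$ to appear linearly); combined with~\eqref{eq:bound:Phi:infty} this gives $C\|\alpha\|_{\h{s}}^2\|u\|_{\Sob{1}}$. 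The convolution and $\mu_{u,\alpha}\nabla u$ terms are handled as in the first estimate, producing a contribution $C\|u\|_{\Sob{1}}^3\|\alpha\|_{\h{s}}^2$, which collapses to $C\|u\|_{\Sob{3}}\|\alpha\|_{\h{s}}^2$ via the interpolation $\|u\|_{\Sob{1}}^3\le\|u\|_{\2}^2\|u\|_{\Sob{3}}=\|u\|_{\Sob{3}}$. Finally, all quadratic-in-$\alpha$ combinations obtained along the way fit into the target factor $1+\|\alpha\|_{\h{1+s}}\|\alpha\|_{\h{s}}^2$ by the elementary dichotomy: if $\|\alpha\|_{\h{s}}\le 1$ the bound is trivial, while if $\|\alpha\|_{\h{s}}> 1$ then $\|\alpha\|_{\h{1+s}}\ge\|\alpha\|_{\h{s}}> 1$ and the cubic term dominates the quadratic one.

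The hard part will be the single term $\partial_m(\nabla\!\cdot\! v)u$ in $\nabla(A_\alpha u)$, whose Fourier representation is $(\int |k|^2 k_m B_0(k)e^{ikx}\alpha(k)\,dk)\,u(x)$: a naive application of~\eqref{eq:bound:kn:alpha:u} with $n=3$ yields $C\|\alpha\|_{\h{1/2+s}}\|u\|_{\Sob{3/2}}$, a factor of $\alpha$ that is genuinely linear and cannot be absorbed into the cubic combination $1+\|\alpha\|_{\h{1+s}}\|\alpha\|_{\h{s}}^2$. The resolution is to re-run the Cauchy--Schwarz and symmetric-rearrangement argument from the proof of Lemma~\ref{lem:bounds:alpha:phi} with reweighted factors, choosing the $\alpha$-weight $\omega^{2s}$ while paying with $\omega(p-k)^{r}$ for $r$ slightly above $4-2s$ on the $\hat u$-side. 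The integrability condition at infinity becomes $r>4-2s$, so taking $r=4-s$ gives the refined estimate $\|\partial_m(\nabla\!\cdot\! v)u\|_{\2}\le C\|\alpha\|_{\h{s}}\|u\|_{\Sob{2}}$, which trades the intermediate $\h{1/2+s}$-weight on $\alpha$ for the stronger $H^2$-norm on $u$ that is available via $\|u\|_{\Sob{2}}\le\|u\|_{\Sob{3}}$. With this refinement the remaining terms $(\nabla v)\!\cdot\!\nabla u$ and $v\!\cdot\!\nabla^2 u$ are controlled by~\eqref{eq:bound:kn:alpha:u} for $n=1,2$ and $\|v\|_{L^\infty}\le C\|\alpha\|_{\h{s}}$, and the whole bound assembles into the stated form.
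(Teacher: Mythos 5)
Your term-by-term decomposition, the use of Lemmas~\ref{lem:bounds:alpha:phi} and~\ref{lem:bounds:f:g}, and the interpolation $\|u\|_{\Sob{1}}^3\le\|u\|_{\Sob{3}}$ all follow the same route as the paper. Your treatment of $\nabla(F_\alpha^2 u)$ is in fact sharper than the paper's display: they bound $2\|F_\alpha\|_{L^\infty}\|\nabla F_\alpha\|_{L^\infty}$, which gives $C\|\alpha\|_{\h{s}}\|\alpha\|_{\h{1+s}}$, whereas pairing $(\nabla F_\alpha)u$ inside the $L^2$-norm via \eqref{eq:bound:kn:alpha:u} gives $C\|\alpha\|_{\h{s}}^2\|u\|_{\Sob{1}}$. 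Only your version is absorbed by $1+\|\alpha\|_{\h{1+s}}\|\alpha\|_{\h{s}}^2$ through the dichotomy; the paper's contribution is not (take $\|\alpha\|_{\h{s}}$ small and $\|\alpha\|_{\h{1+s}}$ large). You are likewise correct that $\|\alpha\|_{\h{1/2+s}}\|u\|_{\Sob{3/2}}$ from \eqref{eq:bound:kn:alpha:u} with $n=3$ is not absorbed by that factor uniformly in $\alpha$: the displayed estimate chain does not obviously close.

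However, your proposed refinement $\|\scp{k^3B_{(\cdot)}}{\alpha}u\|_{\2}\le C\|\alpha\|_{\h{s}}\|u\|_{\Sob{2}}$ is \emph{false} for $s<1/2$, and the integrability analysis behind it is incomplete. With $\alpha$-weight $\omega(k)^s$ and $\hat u$-weight $\omega(p-k)^2$, Cauchy--Schwarz requires
\begin{equation*}
  \sup_{p\in\R^3}\int\frac{|k|^6 B_0(k)^2}{\omega(k)^{2s}\,\omega(p-k)^{4}}\,dk<\infty ,
\end{equation*}
but $|k|^6B_0(k)^2\omega(k)^{-2s}\sim|k|^{1-2s}$ is an \emph{increasing} function for $s<1/2$, so the symmetric-rearrangement step used in the proof of Lemma~\ref{lem:bounds:alpha:phi} is unavailable and the integral grows like $|p|^{1-2s}$ from the mass near $k\approx p$. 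The condition $r>4-2s$ you quote controls integrability at $k\to\infty$ for fixed $p$ only, not uniformity in $p$. The claimed bound genuinely fails: take $\alpha=a\,\id_{B(Re_3,1)}$ (so $\|\alpha\|_{\h{s}}\sim aR^s$) and $\hat u(p)\sim\omega(p)^{-3-\epsilon}$ with $\epsilon>3/2$ (so $\|u\|_{\Sob{2}}$ and $\|u\|_{\Sob{3}}$ are $R$-independent); then the Fourier transform of $\scp{k^3B_{(\cdot)}}{\alpha}u$ is of size $aR^{1/2}$ on a unit ball near $p\approx Re_3$, so $\|\scp{k^3B_{(\cdot)}}{\alpha}u\|_{\2}\gtrsim aR^{1/2}\gg aR^s$. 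The $\alpha$-weight on this term cannot be reduced below $\h{1/2}$, which is why the hypothesis $\alpha\in\h{1+s}$ is genuinely needed.

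So a gap remains for $s<1/2$. The chain of estimates closes with the slightly weaker factor $(1+\|\alpha\|_{\h{1+s}})(1+\|\alpha\|_{\h{s}}^2)$ in place of $1+\|\alpha\|_{\h{1+s}}\|\alpha\|_{\h{s}}^2$, since $\|\alpha\|_{\h{1/2+s}}\le\|\alpha\|_{\h{1+s}}$ is absorbed by the former; this also accommodates the paper's $\|F_\alpha\|_{L^\infty}\|\nabla F_\alpha\|_{L^\infty}$ term. That weaker factor is all that Lemma~\ref{lem:bounds:alpha:phi:dot} actually uses, where the lemma is invoked at $s=1/2$ and $\|\alpha_t\|_{\h{1/2}}$ is uniformly bounded. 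You should either retain the $\|\alpha\|_{\h{1/2+s}}$ weight from \eqref{eq:bound:kn:alpha:u} and adjust the final factor accordingly, or restrict your claimed refinement to $s\ge1/2$.
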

\begin{proof}[Proof of Lemma \ref{lemma:bounds:nabla:phi}] The proof follows
  from Lemmas \ref{lem:bounds:alpha:phi} and \ref{lem:bounds:f:g} in
  combination with (we use $\norm{\cdot}_{\Sob{s}}\ge 1$ and
  $\norm{\cdot}_{\h{s}}\ge \norm{\cdot}_{\h{r}}$ for $s\ge r $)
  \begin{align}
    \norm{h_{u,\alpha }  u }_{\2} & \le \norm{ \Delta  u  }_{\Lp{2}} +  2  \norm{  \scp{k^2 B_{(\cdot)}}{\alpha }  u  }_{\2} +  4 \scp{| k| B_0}{|\alpha |}  \norm{ \nabla  u  }_{\2}  \notag\\[1mm]
    & \quad +  \norm{F_{\alpha }}_{\Lp{\infty}}^2  + \norm{V\ast | u  |^2}_{\Lp{\infty}} + |\mu_{ u ,\alpha} | \notag \\[1mm]
    &  \le C \big( \norm{u }_{\Sob{2}} + \norm{ u }_{\Sob{1}}^2 \big) \big(1+ \norm{\alpha }_{\h{s}}^2,  \big)
  \end{align}
  and
  \begin{align}
    \norm{\nabla h_{u ,\alpha} u  }_{\2} & \le \norm{\nabla \Delta u }_{\Lp{2}} +  2  \norm{  \scp{k^3 B_{(\cdot)}}{\alpha } u }_{\2} +  6 \norm{  \scp{k^2 B_{(\cdot)}}{\alpha}  \nabla  u  }_{\2} \notag\\[1mm]
    & \ + 4 \scp{ | k |  B_0}{|\alpha |} \norm{ \Delta  u  }_{\2}  + 2 \norm{ F_{\alpha }}_{\Lp{\infty}}  \norm{\nabla F_{\alpha }}_{\Lp{\infty}}   + \norm{F_{\alpha }}_{\Lp{\infty}}^2 \norm{\nabla  u }_{\Lp{2}}  \notag\\[1mm]
    & \ + \norm{\nabla ( V\ast|u|^2) }_{\Lp{\infty}} +  \norm{V\ast |u |^2}_{\Lp{\infty}} \norm{\nabla u  }_{\2} + |\mu_{u ,\alpha}|  \norm{\nabla u }_{\Lp{2}} \notag \\[1mm]
    & \le C \big( \norm{ u }_{\Sob{3}}  +  \norm{ u }_{\Sob{1}}^{5/2} \big)\big(1+ \norm{\alpha}_{\h{1+s}} \norm{\alpha}_{\h{s}}^2 \big).
  \end{align}
\end{proof}
Next, we state suitable bounds for the time-derivatives of $u_t$, $\alpha_t$
and $ \mu_{u_t,\alpha_t}$.  Note that the constant $C$ in the bound is
uniform in $t$ but not in $(u,\alpha)$, as it depends on the energy of the
initial condition.

\begin{lemma}\label{lem:bounds:alpha:phi:dot} Let $(u,\alpha) \in H^3(\mathbb
  R^3) \oplus \mathfrak h_{5/2}$ with $\norm{ u }_\2=1$ and $(u_t, \alpha_t)
  =\mathfrak s^{\rm D} [t](u,\alpha)$ denote the solution to \eqref{eq:SKG
    dressed}. Let $\mu_{u_t,\alpha_t}$ be defined as in
  \eqref{eq:def:dressed:mu}. There exists a constant $C>0$ such that for all
  $|t|\ge 0$
  \begin{align}
    \norm{\dot u_t}_{\Sob{1}}+ \norm{\dot \alpha_t}_{\h{1/2}} +  |\dot \mu_{u_t,\alpha_t} |   & \le C\,  \norm{ u_t}_{\Sob{3}}  \, \big(1+ \norm{\alpha_t}_{\h{3/2}} \big) . \notag 
  \end{align}
\end{lemma}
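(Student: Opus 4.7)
The plan is to treat the three quantities $\dot u_t$, $\dot\alpha_t$, and $\dot\mu_{u_t,\alpha_t}$ separately, substituting the dressed mean-field equations~\eqref{eq:SKG dressed} and invoking the preliminary estimates of Section~\ref{sec:preliminary:estimates}. Throughout, I would use Proposition~\ref{prop:skg well posed} to absorb the energy-controlled norms $\norm{u_t}_{H^1}$ and $\norm{\alpha_t}_{\mathfrak h_{1/2}}$ into the generic constant $C$ (which may depend on the initial data through $M=\|(u,\alpha)\|_{H^3\oplus\mathfrak h_{5/2}}$).

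For $\dot u_t$, I would use $\dot u_t=-i h_{u_t,\alpha_t}u_t$ and bound $\norm{\dot u_t}_{H^1}\le \norm{h_{u_t,\alpha_t}u_t}_{L^2}+\norm{\nabla h_{u_t,\alpha_t}u_t}_{L^2}$ via Lemma~\ref{lemma:bounds:nabla:phi} with $s=\tfrac12$. The resulting polynomial factors $\norm{u_t}_{H^1}^{5/2}$ and $\norm{\alpha_t}_{\mathfrak h_{1/2}}^2$ are uniformly bounded by conservation, leaving the advertised $\|u_t\|_{H^3}(1+\|\alpha_t\|_{\mathfrak h_{3/2}})$. For $\dot\alpha_t$, the dressed equation reads $i\dot\alpha_t=\omega\alpha_t+f_{u_t}+g_{u_t,\alpha_t}$; the triangle inequality combined with $\norm{\omega\alpha_t}_{\mathfrak h_{1/2}}=\norm{\alpha_t}_{\mathfrak h_{3/2}}$ and Lemma~\ref{lem:bounds:f:g} immediately yields the claimed estimate.

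For $\dot\mu_{u_t,\alpha_t}$, the key simplification is the algebraic identity
\[
\Re\scp{\alpha}{g_{u,\alpha}}=\|F_\alpha u\|_{L^2}^2,
\]
obtained by expanding $F_\alpha(x)=2\Re\scp{kB_x}{\alpha}$ and recognizing the resulting integral in the definition~\eqref{eq:def:g} of $g_{u,\alpha}$. This gives the more convenient representation $\mu_{u,\alpha}=\tfrac12\scp{u}{V*|u|^2 u}+\Re\scp{\alpha}{f_u}+\|F_\alpha u\|^2$, whose time derivative splits into five terms of the schematic form $\scp{\dot u_t}{(V*|u_t|^2)u_t}$, $\scp{\dot\alpha_t}{f_{u_t}}$, $\scp{\alpha_t}{\dot f_{u_t}}$, $\scp{F_{\alpha_t}u_t}{F_{\dot\alpha_t}u_t}$, and $\scp{F_{\alpha_t}u_t}{F_{\alpha_t}\dot u_t}$. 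Each of these can be estimated by Cauchy--Schwarz using $L^\infty$-bounds on $V*|u_t|^2$, $F_{\alpha_t}$, and $\scp{kB_{(\cdot)}}{\alpha_t}$ from Lemmas~\ref{lem:bounds:alpha:phi}--\ref{lem:bounds:f:g}, together with the $L^2$-norm of $\dot u_t$ and $\mathfrak h_{1/2}$-norm of $\dot\alpha_t$ already controlled in the first two steps. For the derivative $\dot f_{u_t}$, I would not estimate it in $\mathfrak h_{-1/2}$ but instead distribute $\alpha_t$ into the integral by the pairing trick $\scp{\alpha}{\scp{u}{kB_{(\cdot)}(\cdot)Tu}}=\scp{u}{\overline{\scp{kB}{\alpha}}Tu}$, reducing everything to an $L^\infty$-bound on $\scp{kB_{(\cdot)}}{\alpha_t}$ times $\|\dot u_t\|_{L^2}\|\nabla u_t\|_{L^2}$ or its symmetric counterpart.

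The main obstacle is purely bookkeeping: one must verify that no term produces $\|u_t\|_{H^3}^2$ or $\|\alpha_t\|_{\mathfrak h_s}$ with $s>3/2$, so that every factor is either uniformly bounded by conservation or absorbed into $\|u_t\|_{H^3}(1+\|\alpha_t\|_{\mathfrak h_{3/2}})$. In particular $\norm{\alpha_t}_{\mathfrak h_{3/2}}$ should enter only through $\norm{\omega\alpha_t}_{\mathfrak h_{1/2}}$ in the $\dot\alpha_t$ estimate (and by propagation through the three terms involving $\dot\alpha_t$ in $\dot\mu_t$), which is exactly the highest Sobolev index permitted by the statement.
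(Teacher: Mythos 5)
Your proposal is correct. The first two parts ($\dot u_t$ via Lemma~\ref{lemma:bounds:nabla:phi} and $\dot\alpha_t$ via the triangle inequality and Lemma~\ref{lem:bounds:f:g}) coincide with the paper's argument. For $\dot\mu_{u_t,\alpha_t}$, however, you take a genuinely different route, and it is worth spelling out the comparison.

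The paper keeps $\mu$ in the form $\tfrac12\scp{u}{V\ast|u|^2u}+\Re\scp{\alpha}{f_u}+\Re\scp{\alpha}{g_{u,\alpha}}$, differentiates term by term, and estimates $\dot f_{u_t}$ and $\dot g_{u_t,\alpha_t}$ in $\mathfrak h_{-1/2}$, pairing against $\norm{\alpha_t}_{\mathfrak h_{1/2}}$. You instead observe the algebraic identity $\Re\scp{\alpha}{g_{u,\alpha}}=\|F_\alpha u\|_{L^2}^2$, which I have checked carefully: writing $w(x)=\scp{kB_x}{\alpha}\in\CCC^3$ so that $F_\alpha=2\Re w$, one computes
\begin{equation*}
  \Re\scp{\alpha}{g_{u,\alpha}} = 2\int dx\,|u|^2\Re\big((w+\bar w)\cdot\bar w\big) = 2\int dx\,|u|^2\sum_j\big(|w_j|^2+\Re\bar w_j^2\big) = 4\int dx\,|u|^2\,|\Re w|^2 = \|F_\alpha u\|_{L^2}^2.
\end{equation*}
This converts the most unwieldy term in $\mu$ into a manifestly real quadratic form. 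Its time derivative then splits into the two terms $2\Re\scp{F_{\dot\alpha_t}u_t}{F_{\alpha_t}u_t}$ and $2\Re\scp{F_{\alpha_t}\dot u_t}{F_{\alpha_t}u_t}$, which are immediately controlled by $\norm{F_{\alpha_t}}_{L^\infty}$, $\norm{F_{\dot\alpha_t}}_{L^\infty}\le C\norm{\dot\alpha_t}_{\mathfrak h_{1/2}}$ and $\norm{\dot u_t}_{L^2}$ via Lemma~\ref{lem:bounds:f:g}; no negative-index Sobolev norm of $\dot g$ appears. Your treatment of $\scp{\alpha_t}{\dot f_{u_t}}$ by moving $\alpha_t$ under the $k$-integral and using $\norm{\scp{kB_{(\cdot)}}{\alpha_t}}_{L^\infty}\le C\norm{\alpha_t}_{\mathfrak h_{1/2}}$ (from \eqref{eq:bound:scp:kB:alpha}) is likewise equivalent to, but structurally tidier than, the paper's $\mathfrak h_{\pm1/2}$ duality pairing. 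Net effect: your argument buys a cleaner bookkeeping of the $g$-term at the price of first spotting and verifying the identity, and both proofs deliver the same final bound.
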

\begin{proof} We use that $\norm{\dot u_t}_{\Sob{1}}^2 = \norm{ \dot u_t
  }_\2^2 + \norm{\nabla \dot u_t}_{\2}^2$ and $i\dot u_t = h_{u_t,\alpha_t}
  u_t$ . Since $\norm{u_t}_{\Sob{1}} + \norm{\alpha_t}_{\h{1/2}}\le C$ (by
  Proposition \ref{prop:skg well posed} and
  Lemma~\ref{lem:mf-dressing-regularity}), we obtain from Lemma
  \ref{lemma:bounds:nabla:phi} for $s=\tfrac{1}{2}$
  \begin{subequations}
    \begin{align}
      \norm{h_{u_t,\alpha_t} u_t}_{\2}  
      & \le C  \norm{u_t}_{\Sob{2}},  \\[1mm]
      \norm{\nabla h_{u_t,\alpha_t} u_t }_{\2}   
      & \le C   \norm{u_t}_{\Sob{3}}  \big( 1+ \norm{\alpha_t}_{\h{3/2}}\big) .
    \end{align}
  \end{subequations}
  With the aid of Lemma \ref{lem:bounds:f:g}, one easily verifies
  \begin{align}
    \norm{\dot \alpha_t}_{\h{1/2}} \le \norm{\alpha_t}_{\h{3/2}} + 
    \norm{u_t}_{\Sob{1}}^2  \norm{ \alpha_t}_{\h{1/2}}.
  \end{align} 
  Recall that $\mu_{u ,\alpha}=\tfrac{1}{2} \scp{u }{ V \ast |u|^2 u } + \Re
  \scp{\alpha }{ f_{u} } + \Re \scp{\alpha }{g_{u,\alpha} }$. Since $V$ is an
  even function,
  \begin{align}
    \big|\tfrac{d}{dt}\scp{u }{ V  \ast |u|^2 u }\big| & =  4 |\Re \scp{\dot u_t}{V \ast |u_t|^2 u_t}| \notag \\
    &\le 4 \norm{V\ast |u_t|^2}_{\Lp{\infty}} \norm{\dot u_t}_\2 \le C \norm{\dot u_t}_\2 .
  \end{align}
  We further estimate

\begin{align}
  \norm{ \dot f_{u_t} }_{\h{-1/2}} 
  &\leq 2 \norm{  kB_{0}    }_{\h{-1/2}} \left( \norm{\dot{u}_t}_{L^2} \norm{u_t}_{H^1} + \norm{\dot{u}_t}_{H^1} \norm{u_t}_{L^2} \right)
  \leq C \norm{\dot{u}_t}_{H^1}, 
  \notag\\[1mm]
  \norm{ \dot g_{u_t,\alpha_t}  }_{\h{-1/2}} & \le 4 \norm{  \scp{\dot u_t}{ k B_{(\cdot)} \cdot F_{\alpha_t} u_t} }_{\h{-1/2}}   + 2 \norm{  \scp{ u_t}{ k B_{(\cdot)} \cdot F _{\dot \alpha_t} u_t} }_{\h{-1/2}} \notag\\[1mm]
  & \le \big( 2 \norm{\dot u_t}_\2 \norm{F_{\alpha_t}}_{\Lp{\infty}} + \norm{F_{\dot \alpha_t}}_{\Lp{\infty}}  \big) \norm{kB_0  }_{\h{-1/2}} .
\end{align}
With \eqref{eq:bound:scp:kB:alpha} we have $\norm{ F_{\dot
    \alpha_t}}_{\Lp{\infty}} \le C \norm{\dot \alpha_t}_{\h{1/2}}$, and hence
\begin{align}
  |\dot \mu_{u_t,\alpha_t}| & \le C\|\dot u_t\|_{L^2} + \norm{\dot \alpha_t}_\2 \norm{f_{u_t}+ g_{u_t,\alpha_t} }_\2  + \norm{\alpha_t}_{\h{1/2}} \norm{ \dot f_{u_t} + \dot g_{u_t,\alpha_t}}_{\h{-1/2}} \notag\\[1mm]
  &  \le C (   \norm{\dot \alpha_t}_{\h{1/2}} + \norm{\dot u_t}_{H^1}   ).
\end{align}
This completes the proof of the lemma.
\end{proof}

The next lemma summarizes estimates for the different kernels (and their
time-derivatives) that appear in the (dressed) Bogoliubov Hamiltonian
\eqref{eq:HD:Bog definition} and the fluctuation Hamiltonian introduced in
\eqref{eq:generator:restriction}, given explicitly by
\eqref{eq:def:fluct:generator}.

We introduce the integral kernels
\begin{subequations}
  \begin{align}
    N_{u}(x,k,l) & = (q_{u} k B_{(\cdot)}(k) \cdot l B_{(\cdot)}(l)u )(x), \label{eq:kernel:N(t)} \\[1mm]
    Q_{u}(x,y,k,l) & = (q_{u} k B_{(\cdot)}(k) \cdot l B_{(\cdot)}(l)q_{u})(x,y)  \label{eq:kernel:Q(t)},
  \end{align}
\end{subequations}
where we note that $k B_{x}(k) \cdot l B_{x}(l)$ acts as a multiplication
operator in $x$. Recalling the definition $\big(L_{ \alpha}(k)f\big)(x) = 2 k
B _{ x }(k) \big((-i \nabla + F_\alpha(x))f\big)(x)$
from~\eqref{eq:L(k):operator}, we set
\begin{align}
  \ell_t^{(1)} (x,k) & = (q_{u_t} L_{\alpha_t}(k) u_t)(x), \quad \ell_t^{(2)} (x,k) = (q_{u_t}  L_{\alpha_t}(k)^* u_t)(x).
\end{align}
Moreover, for $u\in L^2(\mathbb R^3)$ and $\{ u \}^\perp \subset L^2(\mathbb
R^3)$, we consider the operators
\begin{subequations}
  \begin{align}
    K_{u}^{(3)} & : \{ u \}^\perp  \rightarrow  \{ u  \}^\perp \otimes \{ u \}^\perp \notag \\[1mm]
    \psi & \mapsto (K_{u }^{(3)}\psi)(x_1,x_2) = (q_{u})_1  (q_{u})_2  W_{u}(x_1,x_2) u(x_1) (q_{u} \psi)(x_2) , \label{eq:def:of:LK3} \\[1mm]
    K_{u}^{(4)}  & : \{ u \}^\perp \otimes \{ u \}^\perp  \rightarrow  \{ u \}^\perp \otimes \{ u \}^\perp \notag \\[1mm]
    \psi & \mapsto\label{eq:def:of:LK4}
    (K_{u}^{(4)}\psi)(x_1,x_2) = (q_{u})_1 (q_{u})_2 W_{u}(x_1,x_2) (q_{u} \otimes q_{u} \psi)(x_1,x_2).
  \end{align}
\end{subequations}
with, for $V$ defined in~\eqref{eq:definition of V},
\begin{align}\label{eq:def:W}
  W_{u}(x_1,x_2) = V(x_1-x_2) - V\ast |u|^2(x_1) - V\ast |u|^2(x_2) +  \scp{u }{V \ast |u|^2 u}.
\end{align}

\allowdisplaybreaks
\begin{lemma}
  \label{lem:aux:bounds} Let $(u,\alpha) \in H^3(\mathbb R^3) \oplus
  \mathfrak h_{5/2}$ with $\norm{ u }_\2=1$, denote by $(u_t, \alpha_t)
  =\mathfrak s^{\rm D} [t](u,\alpha)$ the solution to \eqref{eq:SKG dressed},
  and $\rho(t) = \norm{ u_t}_{\Sob{3}}^2 \, (1+
  \norm{\alpha_t}_{\h{3/2}})^2$. There exists a constant $C>0$ such that for
  all $|t|\ge 0$
  \begin{subequations}
    \begin{alignat}{4}
      & \norm{ K_{u_t}^{(2)} }_{L^2(\mathbb R^6)} && \le C \hspace{2cm}
      \norm{ \dot K_{u_t}^{(2)} }_{L^2(\mathbb R^6)}  && \le C  \sqrt{\rho(t)} \label{eq:aux:bounds:d} \\[1mm]
      &\norm{ K_{u_t}^{(1)} }_{L^2 \to L^2 } && \le C \hspace{2cm}
      \norm{ \dot K_{u_t}^{(1)} }_{L^2  \to L^2 } && \le C  \sqrt{\rho(t)}  \label{eq:aux:bounds:c} \\[1mm]
      & \| \ell_t^{(1)} \|_{  L^2 \otimes \h{-1/4} } && \le C \hspace{2cm}  \| \dot \ell_t^{(1)} \|_{ L^2 \otimes \h{-1/4} } && \le C \sqrt{\rho(t)} \label{eq:aux:bounds:e}  \\[1mm]
      &\| \ell_t^{(2)} \|_{ \h{ 1/4} \to L^2 }^2 && \le C\hspace{2cm} \| \dot \ell_t^{(2)} \|_{\h{1/4} \to L^2 } && \le C  \sqrt{\rho(t)} \label{eq:aux:bounds:f}  \\[1mm]
      &\norm{M_{u_t}}_{(L^2)^{\otimes 2}} && \le C \hspace{2cm}
      \norm{\dot M_{u_t}}_{(L^2)^{\otimes 2}} && \le C   \sqrt{\rho(t)} \label{eq:aux:bounds:g}   \\[1mm]
      &\norm{ K_{u_t}^{(3)} }_{L^2 \to L^2 \otimes L^2 } && \le C
      \hspace{2cm}
      \norm{ \dot K_{u_t}^{(3)} }_{L^2 \to L^2 \otimes L^2 } && \le C  \sqrt{\rho(t)}   \label{eq:aux:bounds:2:a}\\[1.5mm]
      & \norm{N_{u_t}}_{L^2 \otimes \h{-1/8}^{\otimes 2} } && \le C
      \hspace{2cm} \norm{ \dot N_{u_t}}_{L^2\otimes \h{-1/8}^{\otimes 2}} &&
      \le C \sqrt{\rho(t)} \label{eq:aux:bounds:2:h}
      \\[0.5mm]
      & \norm{Q_{u_t}}_{L^2 \otimes \h{1/8}^{\otimes 2} \to L^2 } && \le C
      \hspace{2cm} \norm{ \dot Q_{u_t}}_{L^2\otimes \h{1/8 }^{\otimes 2} \to
        L^2} && \le C \sqrt{\rho(t)} \label{eq:aux:bounds:2:b}
    \end{alignat}
  \end{subequations}
  where $L^2$ stands for $L^2(\mathbb R^3)$.
\end{lemma}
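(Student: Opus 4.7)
The proof proceeds by treating each of the nine pairs of bounds separately, though all follow a common template: expand the kernel explicitly in terms of $u_t$, $\alpha_t$, $V$, $B_x(k)$, and the projections $q_{u_t}$; recognize each factor as something controlled by the preliminary estimates of Lemmas \ref{lem:bounds:alpha:phi}, \ref{lem:bounds:f:g}, \ref{lemma:bounds:nabla:phi}; then apply Young/H\"older/Hausdorff–Young for the spatial convolutions and either Parseval or the weighted bound \eqref{eq:bound:kn:alpha:u} for the field-momentum integrations. For the time-derivative bounds one uses the Leibniz rule, $\dot q_{u_t} = -|\dot u_t\rangle\langle u_t| - |u_t\rangle \langle \dot u_t|$, $i\dot u_t = h_{u_t,\alpha_t}u_t$, and the equation for $\dot \alpha_t$, then invokes Lemma \ref{lem:bounds:alpha:phi:dot} to convert $\|\dot u_t\|_{H^1}$ and $\|\dot \alpha_t\|_{\mathfrak h_{1/2}}$ into the factor $\sqrt{\rho(t)}$.

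For the potential-type kernels \eqref{eq:aux:bounds:d}, \eqref{eq:aux:bounds:c}, the bounds on $K^{(2)}_{u_t}$ and $K^{(1)}_{u_t}$ reduce (after dropping the bounded projections $q_{u_t}$) to $\|u_t(x)V(x-y)u_t(y)\|_{L^2(\mathbb R^6)}$ and $\sup_{\|\psi\|=1}\|u_t (V \ast \bar u_t\psi)\|_{L^2}$, both of which are controlled by \eqref{eq:bound:FT:V}, \eqref{eq:bound:convolutions} and $\|u_t\|_{L^\infty}\leq C\|u_t\|_{H^2}$ together with the Sobolev embedding. The time derivatives produce analogous terms with one $u_t$ replaced by $\dot u_t$, bounded by the same estimates after using $\|\dot u_t\|_{H^1}\leq C\sqrt{\rho(t)}$ (Lemma \ref{lem:bounds:alpha:phi:dot}). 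For $K^{(3)}_{u_t}$ in \eqref{eq:aux:bounds:2:a}, use the explicit form \eqref{eq:def:W} of $W_{u_t}$: the $V$ term is handled as for $K^{(2)}$, while the subtracted terms are constants or functions of one variable whose $L^\infty$ norm is controlled by \eqref{eq:bound:convolutions}.

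The kernels involving the field factor $kB_x(k)$ — namely $\ell^{(1)}_t$, $\ell^{(2)}_t$, $M_{u_t}$, $N_{u_t}$, $Q_{u_t}$ — are treated using the identity $kB_x(k) = kB_0(k)e^{-ikx}$, which separates the $k$-dependence (whose weighted $\mathfrak h_{-s}$ norm is finite by \eqref{eq:bound:kB:hs:norm} for any $s>0$) from the $x$-dependence (a phase). For $\|\ell^{(1)}_t\|_{L^2\otimes \mathfrak h_{-1/4}}^2$ one writes, dropping $q_{u_t}$, $\int|kB_0(k)|^2\omega(k)^{-1/2} dk \cdot \int|(-i\nabla + F_{\alpha_t})u_t|^2(x)dx$, which is bounded by $C(1+\|F_{\alpha_t}\|_{L^\infty})^2\|u_t\|_{H^1}^2$ via \eqref{eq:bound:Phi:infty}. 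The operator bound on $\ell^{(2)}_t : \mathfrak h_{1/4}\to L^2$ is obtained as in Lemma \ref{lem:bounds:alpha:phi} with the same symmetric-rearrangement trick yielding \eqref{eq:symmetric:rearrangement}. For $M_{u_t}(k,l) = kl \int |u_t(x)|^2 B_x(k)B_x(l)dx = kl B_0(k)B_0(l)\widehat{|u_t|^2}(-k-l)$, Hausdorff–Young gives $\|\widehat{|u_t|^2}\|_{L^4}\leq C\|u_t\|_{L^{8/3}}^2$, so by H\"older on $L^2(dkdl)$ with the weight coming from $|kB_0(k)|^2$ being $L^2$, one obtains the uniform $L^2$ bound. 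The kernels $N_{u_t}$ and $Q_{u_t}$ are handled similarly, with the weights $\mathfrak h_{-1/8}^{\otimes 2}$ and $\mathfrak h_{1/8}^{\otimes 2}$ chosen precisely so that $|kB_0(k)|^2\omega(k)^{\mp 1/4}$ remains integrable.

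The main technical obstacle is the bookkeeping: each weight $\mathfrak h_{\pm s}$ must be small enough (or large enough) so that $|k|^n B_0(k)^2\omega(k)^{\pm 2s}$ remains integrable in $k$, while the remaining regularity is sufficient to absorb the spatial derivatives coming from $L_{\alpha_t}(k)$ and from $h_{u_t,\alpha_t}u_t$. All weights appearing in the lemma are strictly less than $1/2$, which is exactly the margin provided by \eqref{eq:bound:kB:hs:norm}. Once this is checked for each of the nine kernels, the time-derivative bounds follow mechanically: differentiating produces one factor of $\dot u_t$ or $\dot\alpha_t$ (from the Leibniz rule and the dynamics), contributing the $\sqrt{\rho(t)}$ factor via Lemma \ref{lem:bounds:alpha:phi:dot}, while the remaining factors are bounded by the time-independent estimates already established.
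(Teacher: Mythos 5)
Your overall strategy coincides with the paper's: each kernel is treated by dropping the projections $q_{u_t}$, controlling the spatial factors via Lemmas \ref{lem:bounds:alpha:phi} and \ref{lem:bounds:f:g} and the momentum factors via the integrability of $|k|B_0(k)$, and obtaining the time derivatives from the Leibniz rule, $\dot q_{u_t}=-\dot p_{u_t}$ and Lemma \ref{lem:bounds:alpha:phi:dot}. Two points in your write-up need attention.

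First, a uniformity issue in \eqref{eq:aux:bounds:d}--\eqref{eq:aux:bounds:c}: you invoke $\norm{u_t}_{L^\infty}\le C\norm{u_t}_{H^2}$, but the first column of the lemma asserts bounds with a constant independent of $t$, and only $\norm{u_t}_{H^1}+\norm{\alpha_t}_{\h{1/2}}$ is uniformly bounded in time (Proposition \ref{prop:skg well posed}); the $H^2$-norm grows polynomially. The detour through $L^\infty$ is unnecessary, since $\norm{\widetilde K^{(2)}_{u_t}}_{L^2(\mathbb R^6)}^2=\scp{|u_t|^2}{V^2\ast|u_t|^2}\le\norm{V^2\ast|u_t|^2}_{L^\infty}\le C\norm{u_t}_{H^1}^{3/2}$ by \eqref{eq:bound:convolutions}, which you already cite; but as literally written your constant would be time-dependent, which would spoil the separation between the uniform constant and $\rho(t)$ needed downstream in Lemma \ref{lem:bounds:dressed:Bog:theta}.

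Second, the estimate for $M_{u_t}$ is the one place where you genuinely diverge from the paper, which extracts the factor $(k+l)\cdot\mathcal F[i\nabla|u_t|^2](k+l)/|k+l|^2$ and applies the Hardy--Littlewood--Sobolev inequality with $|k|B_0\in L^3$. Your route via $M_{u_t}(k,l)=kB_0(k)\cdot lB_0(l)\,\widehat{|u_t|^2}(-k-l)$ is workable, but the pairing you state ($\widehat{|u_t|^2}\in L^4$ against $|kB_0|^2\in L^2$) does not close by a single H\"older application, because the third factor depends on $k+l$; one must use the convolution structure, e.g.
\begin{equation*}
  \int |kB_0(k)|^2\,|lB_0(l)|^2\,\big|\widehat{|u_t|^2}(k+l)\big|^2\,dk\,dl
  \;\le\; \big\|\,|\widehat{|u_t|^2}|^2\big\|_{L^1}\,\big\||kB_0|^2\big\|_{L^2}^2 ,
\end{equation*}
where the first factor is $\norm{u_t}_{L^4}^4\le C\norm{u_t}_{H^1}^{3}$ and the second is finite since $|k|B_0(k)\in L^4$. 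This gives the required uniform-in-$t$ bound from the $H^1$-norm alone; with that correction the remaining items and all time-derivative bounds go through as you describe, in the same way as in the paper.
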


Let us note the evident fact that the bounds \eqref{eq:aux:bounds:d} --
\eqref{eq:aux:bounds:g} hold uniformly also for the $\Lambda$-dependent
kernels introduced in \eqref{eq:theta:HG:Bog}. For instance,
$\norm{K^{(2),\Lambda}_{1,u_t}}_{L^2(\mathbb{R}^6)} \le C$ for all
$\Lambda\in \mathbb R_+ \cup \{ \infty \}$ with $K^{(2),\Lambda}_{1,u_t}$
defined by \eqref{eq:def:K2:theta}. While we do not state them explicitly
here, such uniform bounds will be used in the proofs of Lemmas
\ref{lem:bounds:dressed:Bog:theta} and \ref{lem:H:Bog:Lambda:difference}.

\begin{proof} Recall that $\norm{u_t}_\2=1$ and $ \norm{u_t}_{H_1} +
  \norm{\alpha_t}_{\h{1/2}} \le C$ for all $|t|\ge 0$ by
  Proposition~\ref{prop:skg well posed} and
  Lemma~\ref{lem:mf-dressing-regularity}.  We go through the claimed bounds
  line-by-line.  \medskip

  \noindent \textit{Line \eqref{eq:aux:bounds:d}}. We use that
  $\|q_{u_t}\|=1$, so $\norm{ K_{u_t}^{(2)} }^2_{ L^2(\mathbb R^6)} \le
  \norm{\widetilde K_{u_t}^{(2)}}_{L^2(\mathbb R^6)}$ and
  \begin{align}
    \norm{ \widetilde K_{u_t}^{(2)} }^2_{L^2(\mathbb R^6)} & = \int dx dy\, |u_t(x)|^2  V^2(x-y) |u_t(y)|^2 \notag \\
    &\le \norm{V^2 \ast |u_t|^2}_{\Lp{\infty}} \norm{u_t}_\2^2 \le  C   \label{eq:K1:K2:norm:2}
  \end{align}
  by Lemma \ref{lem:bounds:f:g}. Invoking
  \begin{align}
    \dot K_{ u_t}^{(2)}
    &= \dot{q}_{u_t}  \otimes q_{u_t} \widetilde{K}_{u_t}^{(2)}
    + q_{u_t} \otimes \dot{q}_{u_t}  \widetilde{K}_{u_t}^{(2)}
    + q_{u_t} \otimes q_{u_t} \tfrac{d}{dt}{\widetilde{K}}_{u_t}^{(2)}
  \end{align}
  together with $\dot {q}_{u_t} = -\dot{p}_{u_t}$, $\norm{\dot{p}_{u_t}}_{L^2
    \rightarrow L^2} \leq 2 \norm{\dot{u}_t}_{\2}$ and
  $\norm{\tfrac{d}{dt}\widetilde{K}_{u_t}^{(2)}}_{L^2(\mathbb R^6)} \le C \
  \norm{\dot u_t}_{\2} $ as in \eqref{eq:K1:K2:norm:2}, we obtain
  \begin{align}
    \norm{\dot K_{u_t}^{(2)}}_{L^2(\mathbb R^6)} \le C \norm{\dot u_t}_{\2} \le C  \sqrt{\rho(t)},
  \end{align}
  where we employed Lemma \ref{lem:bounds:alpha:phi:dot} in the last
  step.\medskip

  \noindent \textit{Line \eqref{eq:aux:bounds:c}}. Since
  $|K_{u_t}^{(1)}(x,y)| = | K^{(2)}_{u_t} (x,y)|$ we can use that the
  operator norm is bounded by the Hilbert--Schmidt norm,
  $\norm{K_{u_t}^{(1)}}_{L^2 \to L^2 } \le \norm{K_{u_t}^{(2)}}_{L^2(\mathbb
    R^6)}$, so that we can apply the previous bounds. The time-derivative is
  bounded analogously.\medskip

  \noindent \textit{Line \eqref{eq:aux:bounds:e}}. We recall $\ell_t^{(1)}
  (x,k) = (q_{u_t} kB_{(\cdot)}(k) \cdot (-i\nabla+ F_{\alpha_t}) u_t)(x) $
  and estimate
  \begin{align}
    \| \ell_t^{(1)} \|_{ L^2 \otimes \h{-1/4} } & \le 2 \norm{ kB_0 }_{\h{-1/4}}  \norm{  (-i\nabla+ F_{\alpha_t}) u_t }_{\2}\notag \\
    &\le C (\norm{u_t}_{\Sob{1}} + \norm{ F_{\alpha_t}}_{\Lp{\infty}}) \le C,
  \end{align}
  and similarly for
  \begin{align}
    \|\dot  \ell_t^{(1)} \|_{ L^2 \otimes  \h{-1/4}} & \le 2 \norm{  k B_0}_{\h{-1/4}} \norm{\tfrac{d}{dt}q_{u_t}e^{-ik(\cdot)}(-i\nabla + F_{\alpha_t}) u_t}_\2 \\
    & \le C (  \norm{\dot u_t}_{\Sob{1}} + \norm{ F_{\dot \alpha_t}}_{\Lp{\infty}} ) \le C (  \norm{\dot u_t}_{\Sob{1}} + \norm{ \dot \alpha_t }_{\h{1/2}} ) \le C  \sqrt{\rho(t)}, \notag
  \end{align}
  where we used $ \norm{\dot q_{u_t}}_{L^2 \to L^2 } \le 2 \norm{\dot
    u_t}_{L^2 } \le 2 \norm{\dot u_t}_{H^1 } $ and $ \norm{F_{\dot
      \alpha_t}}_{\Lp{\infty}} \le C \norm{\dot \alpha_t}_{\h{1/2}}$.\medskip

  \noindent \noindent \textit{Line \eqref{eq:aux:bounds:f}}. We have
  \begin{equation}
    \ell_t^{(2)} (x,k) = \ell^{(1)}_t(x,k) + 2 ( q_{u_t}  k^2 B _{(\cdot)}(k) u_t \big)(x),
  \end{equation}
  and thus
  \begin{align}
    \norm{ \ell^{(2)}_t}_{\mathfrak h_{1/4} \to L^2}
    &\le \|\ell^{(1)}\|_{\mathfrak h_{1/4} \to L^2} + 2\sup_{\norm{\eta}_{\h{1/4}}=1} \Big\| \int dk\, k^2 B_{(\cdot)}(k) \eta(k) u_t \Big\|_{L^2} \notag\\
    &\le \|\ell^{(1)}\|_{L^2 \otimes \mathfrak h_{-1/4}} + C \|u_t\|_{H^1}
  \end{align}
  by~\eqref{eq:bound:kn:alpha:u} (with $\alpha=\eta$, $u=u_t$, $n=2$ and
  $s=1/4$). Similarly, for the time-derivative
  \begin{align}
    \| \dot \ell_t^{(2)} \|_{\h{1/4} \to L^2 }    &\le  \|\dot \ell^{(1)}\|_{L^2 \otimes \mathfrak h_{-1/4}} + C \norm{\dot u_t}_{\Sob{1}} \leq C \sqrt{\rho(t)}.
  \end{align}
  \noindent \textit{Line \eqref{eq:aux:bounds:g}}. The estimate
  \begin{align}
    \abs{M_{u_t}(k,l)} &= \abs{\frac{k B_0(k) \cdot l B_0(l) (2 \pi)^{3/2} (k + l) \cdot \mathcal{F}[i \nabla \abs{u_t}^2](k+l)}{\abs{k+l}^2}} \notag\\
    &\leq C \norm{u_t\nabla u_t}_{L^1} \frac{\abs{k} B_0(k) \abs{l} B_0(l)}{\abs{k+l}}
  \end{align}
  and the Hardy--Littlewood--Sobolev inequality imply
  \begin{align}
    \norm{M_{u_t}}_{L^2(\mathbb{R}^6)} &\leq C \norm{u_t}_{H^1}^2 \norm{\abs{\cdot} B_0}_{L^{3}}^2 \leq C \norm{u_t}_{H^1}^2 .
  \end{align}
  For $\dot M_{u_t}$ we obtain by the same argument
  \begin{equation}
    \norm{\dot M_{u_t}}_{L^2(\mathbb{R}^6)} \leq C \norm{u_t}_{H^1} \norm{\dot u_t}_{H^1},
  \end{equation}
  which implies the claim by Lemma~\ref{lem:bounds:alpha:phi:dot}.  \medskip

  \noindent \textit{Line \eqref{eq:aux:bounds:2:a}}. Recall the definition of
  $W_u$ in \eqref{eq:def:W}, and denote $q_i = (q_{u_t})_i$.  Using
  $\norm{u_t}_{H^1} \leq C$ we get
  \begin{align}\label{eq:bound:K3}
    \norm{K^{(3)}_{ u _t} }_{L^2 \to L^2 \otimes L^2} &    =  \sup_{\norm{\psi }=1} \norm{ q_1 q_2 W_{u_t}(x_1,x_2) u_t(x_1) (q_{u_t} \psi )(x_2) }_{L^2 \otimes L^2} \notag\\[1mm]
    &  \le C \norm{q_{u_t}}_{L^2 \to L^2}^3 \big(  \norm{V^2  \ast |u_t|^2}_{\Lp{\infty}}^{1/2} + \norm{V  \ast |u_t|^2}_{\Lp{\infty}} \big)
    \le C
  \end{align}
  by Lemma \ref{lem:bounds:f:g}. For the norm of the time-derivative, one
  computes
  \begin{align}
    & (\dot K^{(3)}_{u_t}  \psi )(x_1,x_2) =  q_1  q_2  \dot W_{u_t}(x_1,x_2)  u_t(x_1) (q \psi)(x_2) \notag \\[1mm]
    &\qquad \qquad  + \dot q_1 q_2  W_{u_t}(x_1,x_2) u_t(x_1) (q \psi)(x_2)
    + q_1  \dot q_2  W_{u_t}(x_1,x_2) u_t(x_1) (q \psi)(x_2)  \notag\\[1mm]
    &\qquad \qquad + q_1 q_2  W_{u_t}(x_1,x_2) \dot u_t(x_1) (q \psi)(x_2) + q_1  q_2   W_{ u_t}(x_1,x_2) u_t(x_1) (\dot q \psi)(x_2) 
  \end{align}
  where each term can be estimated similarly as in \eqref{eq:bound:K3}. Using
  Lemma \ref{lem:bounds:f:g} in
  \begin{align}\label{eq:convolution:time-derivative}
    \norm{\tfrac{d}{dt}  V \ast \abs{u_t}^2 }_{\Lp{\infty}} \le 2 \norm{\dot u_t}_\2\norm{V^2 \ast | u_t |^2}_{\Lp{\infty}}^{1/2} \le C \norm{\dot u_t}_{\2},
  \end{align}
  one obtains $\norm{ \dot W_{u_t} }_{L^\infty(\mathbb R^6)} \le C \norm{\dot
    u_t}_{\2}$.  Together with $\norm{\dot q_{u_t} }_{L^2 \to L^2} \le
  2\norm{\dot u_t}_{L^2}$, this leads to $ \norm{ \dot K^{(3)}_{u_t} }_{L^2
    \to L^2 \otimes L^2} \le C \sqrt{\rho(t)} $.\medskip

  \noindent \textit{Lines \eqref{eq:aux:bounds:2:h} and
    \eqref{eq:aux:bounds:2:b}}. Recalling the definitions of $N_t, Q_t$ in
  \eqref{eq:kernel:N(t)}, \eqref{eq:kernel:Q(t)} it follows readily that
  \begin{subequations}
    \begin{align}
      \norm{N_{u_t}}_{L^2 \otimes \h{-1/8}^{\otimes 2} } + \norm{Q_{u_t}}_{L^2 \otimes \h{1/8}^{\otimes 2} \to L^2}   & \le C \norm{kB_0}_{\h{-1/8}}^2  \le C
      \\[1mm]
      \norm{\dot N_{u_t}}_{L^2 \otimes \h{-1/8}^{\otimes 2} } + 
      \norm{\dot Q_{u_t}}_{L^2 \otimes \h{1/8}^{\otimes 2} \to L^2}  & \le C \norm{kB_0}_{\h{-1/8}}^2 \norm{\dot u_t}_\2 \le C  \sqrt{\rho(t)}.
    \end{align}
  \end{subequations}
  This completes the proof of the lemma.
\end{proof}

\subsection{Estimates for the Bogoliubov
  Hamiltonians}\label{sec:estimates:Bogoliubov}

The first Lemma provides bounds on the dressed Bogoliubov Hamiltonian
\eqref{eq:HD:Bog definition} and its time-derivative, its difference to the
operator $\mathbb T = \text{d} \Gamma_b(-\Delta) + \text{d}
\Gamma_a(\omega)$, and its commutator with the total number operator
$\mathcal N = \mathcal N_b+\mathcal N_a$. These imply existence and
uniqueness of the associated dynamics, as explained in
Proposition~\ref{prop:theta-Bog}.  Similar bounds also hold for the family of
interpolating Bogoliubov Hamiltonians \eqref{eq:theta:HG:Bog} (recall that $
\mathbb{H}_{u,\alpha}^{\rm D}(t) = \mathbb {H}_{u,\alpha,1}^\infty (t)$).
Note that for $\theta=1$ the bounds in part (b) are uniform in $\Lambda$.

\begin{lemma} \label{lem:bounds:dressed:Bog:theta} \textbf{\textnormal{(a)}}
  Let $(u,\alpha) \in H^3(\mathbb R^3) \oplus \mathfrak h_{5/2}$ with
  $\norm{u}_\2=1 $ and $(u_t,\alpha_t) = \mathfrak s^{\rm D} [t](u,\alpha)$
  denote the solution to \eqref{eq:SKG dressed}. There exists a constant
  $C>0$ such that for all $t\in \mathbb R$
  \begin{subequations}
    \begin{align}
      \pm \big( \mathbb H_{u,\alpha}^{\rm D}(t)- \mathbb T \big) & \le \tfrac{1}{2}\mathbb T + C ( \mathcal N + 1 )  \label{eq:key:bound:1:Bog:Lambda}\\[1mm]
      \pm i [\mathcal N ,   \mathbb H_{u,\alpha}^{\rm D} (t) ]  & \le  \tfrac{1}{2} \mathbb T + C (\mathcal N + 1)  \label{eq:key:bound:2:Bog:Lambda}\\[1mm]
      \pm \tfrac{d}{dt} \mathbb  H_{u,\alpha}^{\rm D}(t) & \le  \tfrac{1}{2} \mathbb T  +  C  \rho(t) (\mathcal N + 1) \label{eq:key:bound:Bog:3:Lambda}
    \end{align}
  \end{subequations}
  as quadratic forms on $\Fock \otimes \Fock $, where $\rho(t) =
  \norm{u_t}_{\Sob{3}}^2 (1+ \norm{\alpha_t}_{\h{3/2}})^2 $.\medskip

  \noindent \textbf{\textnormal{(b)}} Let $(u,\alpha) \in H^3(\mathbb R^3)
  \oplus \mathfrak h_{5/2}$ with $\norm{u}_\2 =1$ and $(u_t,\alpha_t) =
  \mathfrak s_\theta [t](u,\alpha)$ as defined in
  \eqref{eq:s_theta-definition}. There exists a constant $C>0$ such that for
  all $t\in \mathbb R$, $|\theta| \le 1$ and $\Lambda \in \mathbb R_+$
  \begin{subequations}
    \begin{align}
      \pm \big( \mathbb H_{u,\alpha,\theta}^\Lambda(t)- \mathbb T \big) & \le \tfrac{1}{2}\mathbb T + C (1   +  | 1-\theta | \Lambda ) ( \mathcal N + 1 )  \label{eq:key:bound:1:Bog:Lambda:x}\\[1.5mm]
      \pm i [\mathcal N , \mathbb  H_{u,\alpha,\theta}^\Lambda(t) ]  & \le  \tfrac{1}{2} \mathbb T + C (1   +  | 1-\theta | \Lambda )  (\mathcal N + 1)  \label{eq:key:bound:2:Bog:Lambda:x}\\[1.5mm]
      \pm \tfrac{d}{dt} \mathbb  H_{u,\alpha,\theta}^\Lambda(t) & \le  \tfrac{1}{2} \mathbb T  +  C (1   +  | 1-\theta | \Lambda )   \rho(t) (\mathcal N + 1) \label{eq:key:bound:Bog:3:Lambda:x}
    \end{align}
  \end{subequations}
  as quadratic forms on $\Fock \otimes \Fock $, where $\rho(t) =
  \norm{u_t}_{\Sob{3}}^2 (1+ \norm{\alpha_t}_{\h{3/2}})^2 $.
\end{lemma}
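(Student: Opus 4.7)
The plan is to estimate each term in $\mathbb H_{u,\alpha}^{\rm D}(t)$ (and more generally $\mathbb H_{u,\alpha,\theta}^\Lambda(t)$) separately, using the abstract Fock-space bounds from Lemma~\ref{lem:ops-Fock}/Lemma~\ref{lem:bounds:dGamma} combined with the kernel estimates of Lemma~\ref{lem:aux:bounds} and the pointwise bounds on $h_{u_t,\alpha_t}$ from Lemma~\ref{lemma:bounds:nabla:phi}. All three inequalities \eqref{eq:key:bound:1:Bog:Lambda}--\eqref{eq:key:bound:Bog:3:Lambda} follow the same recipe; the difference lies only in which kernel (or kernel derivative) gets inserted, so the main task is a single book-keeping exercise carried out carefully enough that a factor $\tfrac12$ of $\mathbb T$ can be absorbed on each side via Young's inequality.

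First I would treat the diagonal quadratic piece $\textnormal{d}\Gamma_b(h_{t}) + \textnormal{d}\Gamma_a(\omega)$. Writing $h_{t} = -\Delta + R_t$ with $R_t = A_{\alpha_t} + F_{\alpha_t}^2 + V\ast|u_t|^2 - \mu_{u_t,\alpha_t}$, Lemma~\ref{lemma:bounds:nabla:phi} together with \eqref{eq:bound:scp:kB:alpha} and \eqref{eq:bound:Phi:infty} give $\pm R_t \le \tfrac14 (-\Delta) + C$, so
\begin{equation*}
\pm\big(\textnormal{d}\Gamma_b(h_t)+\textnormal{d}\Gamma_a(\omega) - \mathbb T\big) \le \tfrac14 \textnormal{d}\Gamma_b(-\Delta) + C\mathcal N_b.
\end{equation*}
Next, the kernels $\mathbb K^{(1)}_{u_t}$ and $\mathbb K^{(2)}_{u_t}$ are bounded on $\cF\otimes\cF$ in terms of $\mathcal N_b+1$ by Lemma~\ref{lem:ops-Fock} (with $n_b=m_b=1$ resp.\ $n_b=2,m_b=0$, no $a$-operators) using the Hilbert--Schmidt/operator-norm bounds \eqref{eq:aux:bounds:d}, \eqref{eq:aux:bounds:c}. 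The pure $a^*a^*$/$a^*a$ terms with kernel $M_{u_t}$ are handled by Lemma~\ref{lem:ops-Fock} using \eqref{eq:aux:bounds:g} and the estimate $|\langle\chi, a^*a^* \chi\rangle|\le \|M_{u_t}\|_{L^2(\R^6)}\|\mathcal N_a^{1/2}\chi\|\|(\mathcal N_a+1)^{1/2}\chi\|$. For the mixed $a^\#b^*$ terms with kernel $\ell^{(j)}_t$, I would apply Lemma~\ref{lem:ops-Fock} with $n_b=m_a=1$, $s=1/4$, $r_b=r_a=0$: then
\begin{equation*}
\Big|\int dxdk\,(q_{u_t}L_{\alpha_t}(k)u_t)(x)\,\langle\chi, a_k^* b_x^*\phi\rangle\Big| \le \|\ell^{(1)}_t\|_{L^2\otimes \mathfrak h_{-1/4}}\|\textnormal{d}\Gamma_a(\omega^{1/2})^{1/2}\chi\|\,\|(\mathcal N_b+1)^{1/2}\phi\|,
\end{equation*}
and Cauchy--Schwarz plus Lemma~\ref{lem:bounds:dGamma} turn $\|\textnormal{d}\Gamma_a(\omega^{1/2})^{1/2}\chi\|^2 \le \tfrac12 \|\textnormal{d}\Gamma_a(\omega)^{1/2}\chi\|^2 + \tfrac12 \|\mathcal N_a^{1/2}\chi\|^2$ so that Young absorbs $\tfrac14\textnormal{d}\Gamma_a(\omega)$. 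The $ab^*$ term is identical with $\ell^{(2)}_t$ in place of $\ell^{(1)}_t$ using \eqref{eq:aux:bounds:f}. Summing these pieces and choosing constants to leave a margin of $\tfrac12\mathbb T$ gives \eqref{eq:key:bound:1:Bog:Lambda}.

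For \eqref{eq:key:bound:2:Bog:Lambda}, I observe that $i[\mathcal N,\cdot]$ annihilates every term preserving the total number (the $\textnormal{d}\Gamma$-parts, $\mathbb K^{(1)}_{u_t}$, and the $a^*a$-term), and multiplies the remaining off-diagonal terms by $\pm 2$ or $\pm 2i$; hence it is bounded by exactly the same estimates used above. For \eqref{eq:key:bound:Bog:3:Lambda} I replace each kernel by its time derivative. The key input is that Lemma~\ref{lem:aux:bounds} gives derivative kernel bounds controlled by $\sqrt{\rho(t)}$, together with $\|\dot h_t\|$-type estimates from Lemmas~\ref{lemma:bounds:nabla:phi} and \ref{lem:bounds:alpha:phi:dot}. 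Since each term is bilinear in the excitation operators, the single factor $\sqrt{\rho(t)}$ coming from the kernel derivative appears together with a bound of order one from the other side; after Young, the resulting prefactor in front of $\mathcal N+1$ becomes $\rho(t)$, as required.

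Part (b) proceeds in exactly the same way, but with the family of kernels appearing in \eqref{eq:theta:HG:Bog}. All bounds of Lemma~\ref{lem:aux:bounds} have $\Lambda$-uniform analogs for $\theta=1$ because $B_x$ and its derivatives are integrable without any cutoff. For $\theta\neq 1$, the only new ingredient is the term $(1-\theta)G^\Lambda_x(k)$ in $L^\Lambda_{\theta,\alpha}(k)$, whose contribution to the $a^*b^*$-coefficient has $L^2$-norm controlled by $|1-\theta|\,\|\id_{|k|\le\Lambda}G_0\|_{L^2}\|u_t\|_{L^2} \le C|1-\theta|\Lambda$, which directly yields the factor $(1+|1-\theta|\Lambda)$ in \eqref{eq:key:bound:1:Bog:Lambda:x}--\eqref{eq:key:bound:Bog:3:Lambda:x}. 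All other terms (including $K^{(j),\Lambda}_{\theta,u_t}$, $M^\Lambda_{u_t}$, and the $\theta$-dressed mean-field Hamiltonian $h_{u_t,\alpha_t,\theta}$) admit bounds of the same form as in part (a) uniformly in $\Lambda$ and $\theta\in[0,1]$, so the rest of the proof goes through unchanged.

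The main technical obstacle I anticipate is the careful splitting of the various $\omega$-weights in the mixed $a^\#b^*$ bounds so as to end up with \emph{half} of $\mathbb T$ on the right-hand side rather than a large multiple of it; this requires choosing the parameters $s,r_a,r_b$ in Lemma~\ref{lem:ops-Fock} judiciously and making disciplined use of \eqref{eq:bound:dG:sqrt:omega} to trade one factor of $\textnormal{d}\Gamma_a(\omega)^{1/2}$ against $\mathcal N_a^{1/2}$. Once this is done for \eqref{eq:key:bound:1:Bog:Lambda}, the bounds \eqref{eq:key:bound:2:Bog:Lambda} and \eqref{eq:key:bound:Bog:3:Lambda} require only straightforward reuse of the same estimates with different kernels.
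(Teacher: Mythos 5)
Your proposal is correct and follows essentially the same route as the paper: a term-by-term estimate of \eqref{eq:diff:Bog:A} combining Lemma~\ref{lem:ops-Fock} with the kernel bounds of Lemma~\ref{lem:aux:bounds} (and the mean-field bounds of Lemmas~\ref{lem:bounds:alpha:phi}--\ref{lemma:bounds:nabla:phi}) with a small $\varepsilon$ in front of $\mathbb T$, observing that the commutator with $\mathcal N$ only retains the off-diagonal terms and that the time derivative merely replaces kernels by their derivatives at the cost of $\sqrt{\rho(t)}$ per kernel. For part (b) you also isolate the same single new term, $(1-\theta)G_x^\Lambda(k)$, and bound it via $\|G_0^\Lambda\|_{L^2}\le C\Lambda$, exactly as in the paper.
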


\begin{proof} The proof follows essentially by combining the operator bounds
  of Lemma~\ref{lem:ops-Fock} with the kernel bounds of
  Lemma~\ref{lem:aux:bounds}. We give the details below.\medskip

  \noindent\textit{Proof of \eqref{eq:key:bound:1:Bog:Lambda}}. Recall that
  \begin{align}
    \mathbb H^{\rm D}_{u,\alpha}(t) - \mathbb T &= \textnormal{d}\Gamma_b\big(  A_{\alpha_t} + F^2_{\alpha_t} - \mu_{u_t,\alpha_t} \big) + \mathbb K_{u_t}^{(1)} + \big( \mathbb K_{u_t}^{(2)} + \text{h.c.} \big) \notag \\
    & \quad + \bigg( \int dk dx \Big( \ell_t^{(1)}(x,k) b_x^* a_k^*   + \ell_t^{(2)}(x,k)  b_x^* a_k  \Big) + \text{h.c.} \bigg) \notag\\
    &\quad +  \int dk dl \, M_{u_t}(k,l) \, \mathcal A_{kl} .\label{eq:diff:Bog:A}
  \end{align}
  Since $A_{ \alpha_t } = 2 (- i\nabla_x) \cdot \scp{kB_x}{ \alpha_t} +
  \text{h.c.}$ the first term in the first line is bounded, using the
  Cauchy--Schwarz inequality, by
  \begin{align}
    & \pm   \textnormal{d}\Gamma_b\big(  A_{\alpha_t} + F^2_{\alpha_t} - \mu_{u_t,\alpha_t} \big)  \notag\\
    & \hspace{0.5cm} \le \varepsilon \text{d} \Gamma_b(-\Delta)  + \Big( \frac{4}{\varepsilon} \scp{|k |  B_0}{|\alpha_t|}^2 + \norm{ F_{\alpha_t} }_{\Lp{\infty}}^2 + | \mu_{u_t,\alpha_t} | \Big) \mathcal N_b \notag\\
    &\hspace{0.5cm} \le  \varepsilon \text{d} \Gamma_b(-\Delta)  + \frac{C}{\varepsilon}  \mathcal N_b,
  \end{align}
  where the last bound follows from Lemmas \ref{lem:bounds:alpha:phi} and
  \ref{lem:bounds:f:g}, and $\norm{u_t}_{\Sob{1}} + \norm{\alpha_t}_{\h{1/2}}
  \le C$.

  For the second and third term in the first line, we apply Lemmas
  \ref{lem:ops-Fock} and \ref{lem:aux:bounds} to get
  \begin{subequations}
    \begin{align}
      \label{eq:bound:mathbb:K:1}
      \pm \mathbb K_{u_t}^{(1)} &   \le \norm{ K_{u_t}^{(1)} }_{L^2  \to L^2 } (\mathcal N_b+1) \le C (\mathcal N_b+1), \\[1mm]
      \label{eq:bound:mathbb:K:2}
      \pm ( \mathbb K_{u_t}^{(2)} + \text{h.c.} ) &  \le 2 \norm{ K_{u_t}^{(2)} }_{L^2(\mathbb R^6)} (\mathcal N_b+1) \le  C (\mathcal N_b+1).
    \end{align}
  \end{subequations}

  For the mixed quadratic terms, we use the Lemma \ref{lem:ops-Fock} choosing
  $m_a=m_b=0$, $n_a=n_b=1$, $s=1/2$ and $r_a=0$, $r_b=1$ for the term
  involving $b_x^*a_k^* $, and $m_a= 1 = n_b $, $n_a=m_b=0$, $t=1/2$ and
  $r_a= r_b=0$ for the term involving $b_x^* a_k$. Thus gives
  \begin{align}\label{eq:bound:ell:1}
    & \pm \bigg( \int dk dx  \Big( \ell^{(1)}_t(x,k) b_x^*a_k^* +  \ell^{(2)}_t(x,k) b_x^* a_k \Big) + \text{h.c.} \bigg) \notag\\
    & \quad\quad\quad  \le  \varepsilon \text{d} \Gamma_a(\omega) + C\varepsilon^{-1}  \Big( \|  \ell_t^{(1)}\|^2_{   L^2 \otimes \h{-1/2}} + \|  \ell_t^{(2)}\|^2_{\h{1/2} \to L^2 }\Big)  (\mathcal N_b+1) \notag\\[1mm]
    & \quad\quad\quad  \le  \varepsilon  \text{d} \Gamma_a(\omega) + C\varepsilon^{-1} (\mathcal N_b+1),
  \end{align}
  where we used again Lemma \ref{lem:aux:bounds} and monotonicity of the
  $\mathfrak{h}_s$-norms.

  In the last term in \eqref{eq:diff:Bog:A} we have $\mathcal A_{kl} = -2
  a_k^* a_{-l} + a_k^* a_l^* + a_{-l} a_{-k}$, so using that the operator
  norm is bounded by the Hilbert-Schmidt norm, we have with
  Lemma~\ref{lem:ops-Fock} and \eqref{eq:aux:bounds:g}
  \begin{align}\label{eq:bound:M}
    \pm \int dk dl \ M_{u_t}(k,l)  \mathcal A_{k,l} & \leq C  ( \mathcal N_a + 1) .
  \end{align}
  \noindent \textit{Proof of \eqref{eq:key:bound:2:Bog:Lambda}}. The
  commutator is easily found to be
  \begin{multline} [\mathcal N , \mathbb H^{\rm D}_{u,\alpha}(t) ]
    =  2 \bigg(  \int dk dx\, \ell_t^{(1)}(x,k) b_x^* a_k^* - \text{h.c.} \bigg)\\  +  2  \int dk dl \, M_{u_t}(k,l) \, ( a^*_k a_l^* - a_{-k} a_{-l} ),
  \end{multline}
  which can be estimated exactly as in \eqref{eq:bound:ell:1} and
  \eqref{eq:bound:M}.\medskip

  \noindent \textit{Proof of \eqref{eq:key:bound:Bog:3:Lambda}}. We compute
  \begin{align}
    \label{eq:time derivative of Bogoliubov Hamiltonian}
    \frac{d}{dt} \mathbb H^{\rm D}_{u,\alpha}(t)& = \textnormal{d}\Gamma_b ( \dot h_{u_t,\alpha_t} ) +  \dot{ \mathbb K}_{u_t}^{(1)} + (  \dot {\mathbb K}_{u_t}^{(2)} + \text{h.c.}) \notag \\
    & \quad + \bigg( \int dk dx \Big( \dot{\ell}^{(1)}_t (x,k) b_x^* a_k^*   + \dot{\ell}^{(2)}_t(x,k)  b_x^* a_k  \Big) + \text{h.c.} \bigg) \notag \\
    &\quad +  \int dk dl \, \dot{M}_{u_t}(k,l) \, \mathcal A_{kl}  ,
  \end{align}
  with
  \begin{align}
    \dot h_{u_t,\alpha_t} = A_{\dot{\alpha}_t}
    + 2 F_{\alpha_t} \cdot F_{\dot{\alpha}_t}  + \tfrac{d}{dt} V \ast   \abs{u_t}^2 - \dot{\mu}_{u_t,\alpha_t} .
  \end{align}

  Since $A_{\dot \alpha_t } = 2 (- i\nabla_x) \cdot \scp{kB_x}{\dot \alpha_t}
  + \text{h.c.}$ we can use Cauchy--Schwarz and Lemma
  \ref{lem:bounds:alpha:phi} to obtain
  \begin{align}
    \pm \textnormal{d}\Gamma_b( A_{\dot{\alpha}_t } ) \leq \varepsilon \text{d} \Gamma_b (- \Delta)
    + \frac{C}{\varepsilon} \norm{ \dot{\alpha}_t}_{\h{1/2}}^2 \, \mathcal{N}_b .
  \end{align}
  Recalling $F_{\dot \alpha_t} (x) = 2 \Re \scp{k B_x}{ \dot \alpha_t}$ we
  estimate
  \begin{align}
    \pm \textnormal{d}\Gamma_b(  F_{\alpha_t}  \cdot F_{\dot{\alpha}_t}  )
    & \leq   \norm{F_{\alpha_t}}_{\Lp{\infty}} \norm{F_{\dot{\alpha}_t}}_{\Lp{\infty}} \, \mathcal N_b \le C \norm{\dot \alpha_t}_{\h{1/2}}\, \mathcal{N}_b  .
  \end{align}
  The time-derivative of the convolution is estimated in
  \eqref{eq:convolution:time-derivative}, and thus
  \begin{align}
    \pm \textnormal{d}\Gamma_b(  \tfrac{d}{dt} V\ast | u_t |^2  ) \le C \norm{\dot u_t }_\2 \, \mathcal N_b.
  \end{align}
  Combining the above estimates, we arrive at
  \begin{align}
    \pm  \textnormal{d}\Gamma_b( \dot h_{u_t,\alpha_t} )
    &\leq  \varepsilon \text{d} \Gamma_b (- \Delta)
    +  \frac{C}{\varepsilon} \Big( \norm{ \dot{\alpha}_t}_{\h{1/2}}^2 + \norm{\dot u_t}_{\2}  \Big) \mathcal{N}_b .
  \end{align}

  Similarly as in \eqref{eq:bound:mathbb:K:1} and \eqref{eq:bound:mathbb:K:2}
  we bound the remaining terms in the first line in \eqref{eq:time derivative
    of Bogoliubov Hamiltonian}
  \begin{subequations}
    \begin{align}
      \pm \dot{\mathbb K}_{u_t}^{(1)}
      &\leq \norm{ \dot K_{u_t}^{(1)} }_{L^2  \rightarrow L^2 } \left( \mathcal{N}_b + 1 \right) \le C  \sqrt{\rho(t)} \left( \mathcal{N}_b + 1 \right),\\[1mm]
      \pm ( \dot{\mathbb K}_{u_t}^{(2)} + \text{h.c.} )
      &\leq 2 \norm{ \dot K_{u_t}^{(2)} }_{L^{2}(\mathbb{R}^6)}
      \left( \mathcal{N}_b + 1 \right) \le C  \sqrt{\rho(t)} \left( \mathcal{N}_b + 1 \right).
    \end{align}
  \end{subequations}
  By the reasoning of~\eqref{eq:bound:ell:1} we obtain
  \begin{multline}
    \pm \bigg( \int dk dx \Big( {\dot{\ell}_t^{(1)}} (x,k) b_x^* a_k^*   + {{\dot{\ell}_t^{(2)}} (x,k)} b_x^* a_k  \Big) + \text{h.c.} \bigg)  \\
    \leq  \varepsilon  \text{d} \Gamma_a ( \omega)+ \frac{C \rho(t) }{\varepsilon} \left( \mathcal{N}_b + 1 \right),
  \end{multline}
  and analogously to \eqref{eq:bound:M} one also verifies that
  \begin{align}
    \pm \int dk dl \, \dot{M}_{u_t}(k,l) \, \mathcal A_{kl}
    & \leq  C \rho(t)\,  \left( \mathcal{N}_a + 1 \right).
  \end{align}
  The above estimates prove \eqref{eq:key:bound:Bog:3:Lambda} and thus
  complete the proof of part (a) of the lemma.\medskip

  Turning to part (b), first note that for $\theta=1$ the proof is verbatim
  the same as the one for part (a). To see this, recall that all kernels that
  appear in $\mathbb H_{u,\alpha,1}^\Lambda(t)$ satisfy the same bounds as in
  Lemma \ref{lem:aux:bounds}, as explained thereafter (in particular, they
  are uniformly bounded in $\Lambda$).

  The crucial difference for $\theta \neq 1$, apart from the trivial
  $\theta$-dependence of the kernels, is the appearance of the term
  \begin{align}\label{eq:theta:G:term:proof}
    (1-\theta) \int dk dx\, \bigg( (q_{u_t} G_x^\Lambda(k)   u_t) a_k^* b_x^*  + (q_{u_t} \overline{G_x^\Lambda(k)}  u_t) a_k^* b_x  \bigg) +   \text{h.c.}
  \end{align}
  in $\mathbb H_{u,\alpha,\theta}^\Lambda$ which is not present when
  $\theta=1$ (hence this term did not appear in the proof of part (a)). Since
  $\norm{q_{u_t} G_{(\cdot)}^\Lambda u_t }_{L^2\otimes L^2} \le
  \norm{G_0^\Lambda}_\2 \le C \Lambda$, we can bound the above term by $C
  \Lambda (\mathcal N+1)$. All other contributions in $\mathbb
  H_{u,\alpha,\theta}^\Lambda(t)$ are estimated as for $\theta=1$, i.e. they
  are uniformly bounded in $\Lambda$. This explains the $\Lambda$-dependent
  upper bound in \eqref{eq:key:bound:1:Bog:Lambda}.
  The bounds for the commutator and the time-derivative are obtained in the
  same way. 
\end{proof}

The next lemma was used in the proof of Proposition \ref{prop:theta-Bog} to
show that $\mathbb U_1^\Lambda(t) \to \mathbb U_1^\infty(t)$ strongly as
$\Lambda \to \infty$.

\begin{lemma}\label{lem:H:Bog:Lambda:difference}
  Let $(u,\alpha) \in H^3(\mathbb R^3) \oplus \mathfrak h_{5/2}$ with
  $\norm{u}_\2 =1$, $(u_t,\alpha_t) = \mathfrak s_1[t](u,\alpha)$ as defined
  in \eqref{eq:s_theta-definition} and $ \rho(t) = \norm{u_t}_{\Sob{3}}^2 (1+
  \norm{\alpha_t}_{\h{3/2}})^2$. There is a family $\varepsilon_\Lambda >0$
  with $\varepsilon_\Lambda \xrightarrow{\Lambda \to \infty} 0 $ such that
  \begin{multline*}
    \big| \scp{\chi }{ ( \mathbb H_{u,\alpha,1}^\infty(t) - \mathbb H_{u,\alpha,1}^\Lambda (t) ) \phi } \big|\\
    \le\varepsilon_\Lambda e^{C \int_0^{|t|} \rho(s) ds} \norm{ ( \mathbb  T + \mathcal N +1 )^{1/2} \chi} \norm{ ( \mathbb  T + \mathcal N +1 )^{1/2} \phi}
  \end{multline*}
  for all $t\in \mathbb R$ and $\chi, \phi \in D((\mathbb T + \mathcal
  N)^{1/2})$.
\end{lemma}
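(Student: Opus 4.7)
The plan is to decompose the operator difference $\mathbb H^\infty_{u,\alpha,1}(t) - \mathbb H^\Lambda_{u,\alpha,1}(t)$ into its constituent pieces according to the structure of $\mathbb H^\Lambda_{u,\alpha,\theta}(t)$ in \eqref{eq:theta:HG:Bog}, noting that at $\theta=1$ the linear-in-$a^\#$ term with coefficient $(1-\theta)G_x^\Lambda$ disappears and the operator $h_{u_t,\alpha_t,1}$ contains no $\Lambda$-cutoff; hence only four families of terms contribute: the interactions $\mathbb K^{(j),\Lambda}_{1,u_t}$ built from $V^\Lambda_1-V_1$ ($j=1,2$), the coupling $\int dxdk\,(q_{u_t}(L_{\alpha_t}-L_{1,\alpha_t}^\Lambda)u_t)(x)\,a_k^*b_x^* + \text{h.c.}$ (with similar $a_kb_x^*$ piece and hermitian conjugates), and the purely bosonic quadratic term in $a^\#$ with coefficient $M_{u_t}-M_{u_t}^\Lambda$. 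On each of these terms the bound proceeds exactly as in the proof of Lemma~\ref{lem:bounds:dressed:Bog:theta}(a), except that the kernel norms appearing in Lemma~\ref{lem:aux:bounds} are replaced by the norms of the \emph{tail} contributions from $|k|>\Lambda$.

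First I would verify that each of these tail kernels admits a bound of the form $\varepsilon_\Lambda\cdot\Pi(t)$, where $\varepsilon_\Lambda\to 0$ as $\Lambda\to\infty$ and $\Pi(t)$ is a polynomial in $\|u_t\|_{H^3}$, $\|\alpha_t\|_{\mathfrak h_{3/2}}$ bounded by $e^{C\int_0^{|t|}\rho(s)\,ds}$.  Concretely: for $K^{(j),\Lambda}_{1,u_t}-K^{(j)}_{u_t}$ the relevant Hilbert--Schmidt/operator norms are controlled by $\|V-V^\Lambda\|_{L^2}\|u_t\|_{L^4}^2$, and $\|V-V^\Lambda\|_{L^2}\leq \|\widehat V\,\id_{|k|>\Lambda}\|_{L^2}\to 0$ by dominated convergence using the integrability of $\widehat V$ from~\eqref{eq:bound:FT:V}; for $L_{\alpha_t}-L^\Lambda_{1,\alpha_t}$ the relevant mapping norms $\|\cdot\|_{L^2\otimes\mathfrak h_{-1/4}}$ and $\|\cdot\|_{\mathfrak h_{1/4}\to L^2}$ are controlled by the tail $\|kB_0\,\id_{|k|>\Lambda}\|_{\mathfrak h_{-1/4}}$ times $\|(-i\nabla+F_{\alpha_t})u_t\|_{L^2}$, and the former vanishes as $\Lambda\to\infty$; and for $M_{u_t}-M_{u_t}^\Lambda$ the $L^2(\mathbb R^6)$-norm is bounded, via the Hardy--Littlewood--Sobolev estimate from the proof of \eqref{eq:aux:bounds:g}, by a vanishing constant times $\|u_t\|_{H^1}^2$. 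In each case $\varepsilon_\Lambda$ can be chosen independent of $t$.

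Second, having established these kernel bounds, I would apply Lemma~\ref{lem:ops-Fock} term-by-term with the same choices of parameters $(n_a,n_b,m_a,m_b,r_a,r_b,s,t)$ that were used in the proof of \eqref{eq:key:bound:1:Bog:Lambda}, now with $\tau$ replaced by the tail kernel norm. This produces an estimate of the form
\begin{equation*}
  \big|\langle\chi,(\mathbb H^\infty_{u,\alpha,1}(t)-\mathbb H^\Lambda_{u,\alpha,1}(t))\phi\rangle\big|\leq C\,\varepsilon_\Lambda\,\Pi(t)\,\|(\mathbb T+\mathcal N+1)^{1/2}\chi\|\,\|(\mathbb T+\mathcal N+1)^{1/2}\phi\|,
\end{equation*}
with $\Pi(t)$ at most polynomial in the $H^3\oplus\mathfrak h_{5/2}$-norm of $(u_t,\alpha_t)$ and therefore bounded by $e^{C\int_0^{|t|}\rho(s)\,ds}$ after absorbing constants.

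There is no real obstacle; the exercise is mostly bookkeeping.  The only delicate point is making sure that the specific $\mathfrak h_s$-norms used in Lemma~\ref{lem:aux:bounds} for the $L$ and $M$ kernels admit a tail bound that vanishes as $\Lambda\to\infty$ with the same $s$, which amounts to checking that $\|kB_0\,\id_{|k|>\Lambda}\|_{\mathfrak h_{-s}}$ and $\|kB_0\,\id_{|k|>\Lambda}\|_{L^3}$ both tend to zero; this follows from the explicit decay of $B_0$ and dominated convergence, so the argument goes through.
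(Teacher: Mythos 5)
Your proposal is correct and follows essentially the same route as the paper: the paper also observes that the $\mathrm{d}\Gamma_b(h)+\mathrm{d}\Gamma_a(\omega)$ part cancels, that the remaining difference has the structure of \eqref{eq:theta:HG:Bog} with all kernels replaced by their $|k|>\Lambda$ tails, and that the bound then follows from the argument for \eqref{eq:key:bound:1:Bog:Lambda} once one checks (via Lemma~\ref{lem:aux:bounds} and continuity in $\Lambda$) that those tail norms vanish. You merely spell out the tail estimates that the paper leaves implicit.
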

\begin{proof}
  The difference $\mathbb H_{u,\alpha,1}^\infty(t) - \mathbb
  H_{u,\alpha,1}^\Lambda (t)$ is up to the term
  $\textnormal{d}\Gamma_b(h_{u_t,\alpha_t}) + \textnormal{d}\Gamma_a(\omega)$
  precisely of the same form as \eqref{eq:theta:HG:Bog} with all kernels
  replaced by kernels of the form $K_{u_t,1}^{ (1), \infty } - K_{ u_t , 1
  }^{ (1) , \Lambda}$ and $M^\infty_{u_t , 1}(k,l) - M_{ u_t ,
    1}^\Lambda(k,l)$, and analogously for the other terms. The claimed bound
  is now obtained following the same steps as in the proof of
  \eqref{eq:key:bound:1:Bog:Lambda} and taking into account that, by Lemma
  \ref{lem:aux:bounds} and continuity in $\Lambda$, the norms of the kernel
  differences all vanish as $\Lambda \to \infty$.
\end{proof}

\subsection{Fluctuation generator for the dressed dynamics}
\label{sec:dressed:fluc:generator}

We start by stating the precise form of the fluctuation generator $H^{\rm
  D}_{u,\alpha}(t)$ introduced in~\eqref{eq:generator:restriction}.  For
$[a]_+ = \max\{0,a\}$ and $\mathfrak s^{\rm D}[t](u,\alpha)$ set
\begin{align}
  H_0(t)  &= \text{d}\Gamma_a(\omega) +  \text{d}\Gamma_b(h_{u_t,\alpha_t})  +  \big[1 - \tfrac{\mathcal N_b}{N} \big]_+   \mathbb K^{(1)}_{u_t}  \notag\\
  &\quad+ \Big( \mathbb K^{(2)}_{u_t} \frac{\sqrt{[(N-\mathcal N_b)(N-\mathcal N_b-1)]_+}}{ N } + \text{h.c.}\Big) \notag\\
  &\quad  +   \int dk dx   (q_{u_t} L_{\alpha_t}(k) u_t)(x)  \, b_x^*a_k^* \sqrt{[1 - \tfrac{\mathcal N_b}{N}]_+}  + \textnormal{h.c.} \notag\\
  &\quad+ \int dk dx (q_{u_t}  L_{\alpha_t}(k)^* u_t)(x)    b_x^* a_k   \sqrt{[1 - \tfrac{\mathcal N_b}{N}]_+}  + \textnormal{h.c.} \notag  \\[1.5mm]
  &\quad  +   \int dk dl \, M_{u_t}(k,l)    \big[1 - \tfrac{\mathcal N_b}{N} \big]_+    \,  \big( -2 a^*_k a_{-l} + a_k^* a_l^* + a_{-k} a_{-l} \big) \label{eq:def:H:0:b}
\end{align}
with $L_{\alpha_t}(k)$, $M_{u_t}(k,l)$ and $\mathbb K^{(j)}_{u_t} $ defined
in \eqref{eq:L(k):operator}, \eqref{eq:def:dressed:meanfield:Ham} and
\eqref{eq:def:K(2)}, respectively. Note that up to the $N$-dependent factors
$H_0(t)$ coincides with $\mathbb H_{u,\alpha}^{\rm D}(t)$. We further
introduce the operators
\begin{subequations}
  \begin{align}
    \mathbb K^{(3)}_{u_t} &  = \int dx_1 dx_2 dx_3\, K_{u_t}^{(3)}(x_1,x_2,x_3) b^*_{x_1} b^*_{x_2} b_{x_3}\\
    \mathbb K^{(4)}_{u_t} &  = \frac{1}{2}\int dx_1 dx_2 dx_3 dx_4 \, K_{u_t}^{(4)}(x_1,x_2,x_3,x_4) b^*_{x_1} b^*_{x_2} b_{x_3}b_{x_4},
  \end{align}
\end{subequations}
where for $u\in L^2(\mathbb R^3)$ and $\{ u \}^\perp \subset L^2(\mathbb
R^3)$ we used the kernels of the operators $K_{u}^{(3)}$, $K_{u}^{(4)}$
introduced in \eqref{eq:def:of:LK3}, \eqref{eq:def:of:LK4}. Lastly, recall
\eqref{eq:kernel:N(t)} and \eqref{eq:kernel:Q(t)} and let
\begin{align}
  J_{u,\alpha}(x,k,y) & = 2  (q_{u} kB_{(\cdot)}(k) \cdot (-i\nabla+ F_{\alpha}) q_{u})(x,y) \label{eq:kernel:J(t)} .
\end{align}

The proof of the next lemma follows from a straightforward computation, which
is postponed to Appendix \ref{section:Generator of the fluctuation dynamics
  of the Gross-transformed Nelson dynamics}.  \allowdisplaybreaks
\begin{lemma}
  \label{lem:fluctuation:generator} For $ (u,\alpha)\in H^3(\mathbb R^3)
  \oplus \mathfrak h_{5/2}$ with $\norm{u}_\2 =1$ let $(u_t,\alpha_t) =
  \mathfrak s^{\mathrm{D}}[t](u,\alpha)$ be the solution to \eqref{eq:SKG
    dressed}. The operator $H_{u,\alpha}^{ {\rm D}, \le N }(t) : \cF^{\leq
    N}_{\perp u_t} \otimes \cF \rightarrow \cF \otimes \cF$ defined by
  \eqref{eq:fluctuation:generator:2} satisfies the identity $H^{ {\rm D}, \le
    N }_{u,\alpha}(t) = H_{u,\alpha}^{\rm D}(t) \restriction \cF^{\leq
    N}_{\perp u_t} \otimes \cF $ where $H^{\rm D}_{u,\alpha}(t) : \cF \otimes
  \cF \rightarrow \cF \otimes \cF $ is given by
  \begin{align}\label{eq:def:fluct:generator}
    H_{u,\alpha}^{\rm D}(t)  & =  \sum_{j=0}^5 H_j (t)
  \end{align}
  with $H_0(t)$ defined by \eqref{eq:def:H:0:b}, and
  \begin{subequations}
    \begin{align}
      H_1(t) & = \begin{aligned}[t]
        &- \frac{1}{N} \mathrm{d} \Gamma(V\ast |u_t|^2-\mu_{u_t,\alpha_t})  \\
        &\quad + \Big( \mathbb K^{(3)}_{u_t} \frac{\sqrt{[N-\mathcal N_b]_+}}{ N} + \textnormal{h.c.} \Big) + \frac{1}{N} \mathbb K^{(4)}_{u_t} \label{eq:def:H:1:b} ,
      \end{aligned}
      \\
      H_2(t) & =  -  \frac{1}{\sqrt N}  \mathcal N_b \hat \Phi(f_{u_t}+g_{u_t,\alpha_t}) , \\[1mm]
      H_3(t) & =  \frac{1}{\sqrt N} \int dk dx dy\, J_{u_t,\alpha_t}(x,k,y)\, b_x^* b_y \, a_k^* + \textnormal{h.c.} , \\[1.5mm]
      H_4(t) & =\begin{aligned}[t]
        \frac{1}{\sqrt N} \int dk dl dx \Big(& N_{u_t}(x,k,l) \, b_x^* \big[ 1  - \tfrac{\mathcal N_b}{N} \big]_+^{1/2}    \\
        &+ \overline{ N_{u_t}(x, k, l) } \big[ 1  - \tfrac{\mathcal N_b}{N}\big]_+^{1/2} b_x  \Big) \mathcal A_{kl} ,
      \end{aligned}
      \\
      H_5(t) & = \frac{1}{N} \int dk dl  dx dy\, Q_{u_t}(x,y,k,l) \, b_x^* b_y \, \mathcal A_{kl},
    \end{align}
  \end{subequations}
  where $\mathcal A_{kl} = -2 a^*_k a_{-l} + a_k^* a_l^* + a_{-k} a_{-l}$.
\end{lemma}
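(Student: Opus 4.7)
The strategy is a direct computation. Factor the excitation map as $X_{u_t,\alpha_t}=U_{u_t}W^*(\sqrt{N}\alpha_t)$ where $U_{u_t}$ denotes the bosonic excitation map (acting only on the particle variables). Then
\[
H^{{\rm D},\le N}_{u,\alpha}(t) = \underbrace{i\dot X_{u_t,\alpha_t} X_{u_t,\alpha_t}^*}_{\text{(A)}} + \underbrace{X_{u_t,\alpha_t} H_N^{\rm D} X_{u_t,\alpha_t}^*}_{\text{(B)}}.
\]
Part (B) is computed in two substeps. First, conjugating $H_N^{\rm D}$ with $W(\sqrt N\alpha_t)$ uses only the shift property $W^*(\sqrt N\alpha_t)a_k W(\sqrt N\alpha_t) = a_k+\sqrt N\alpha_t(k)$; this replaces $a_k$ by $a_k+\sqrt N\alpha_t(k)$ everywhere in $H_N^{\rm D}$, producing classical fields $\phi_{\alpha_t}$, $A_{\alpha_t}$, $F_{\alpha_t}$ together with quadratic $a,a^*$ remainders. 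Second, $U_{u_t}$-conjugation acts only on the particle operators and is handled by applying Lemma~\ref{lemma:properties U}: every particle creation/annihilation operator is replaced by $b^*/b$ decorated with factors of $[N-\mathcal N_b]_+^{1/2}/N^{1/2}$, and diagonal pieces $\sum_{i}T_i$, $\sum_{i<j}V(x_i-x_j)$ get split according to whether each leg projects onto $u_t$ or onto $q_{u_t}$. Explicitly, one uses $\sum_{i=1}^N T_i = N\langle u_t,T u_t\rangle + \sqrt{[N-\mathcal{N}_b]_+}(b^*(q_{u_t}Tu_t)+\mathrm{h.c.}) + \mathrm{d}\Gamma_b(q_{u_t}Tq_{u_t})$, and a similar split for two-body operators producing $\mathbb K^{(1)}_{u_t},\mathbb K^{(2)}_{u_t}, \mathbb K^{(3)}_{u_t}, \mathbb K^{(4)}_{u_t}$ with the correct $N$-dependent prefactors.

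Part (A) is computed from the identity
\[
i\partial_t X_{u_t,\alpha_t}\psi = i\dot U_{u_t}W^*(\sqrt N\alpha_t)\psi - iU_{u_t}(\partial_t W^*(\sqrt N\alpha_t))\psi,
\]
using the mean-field equations~\eqref{eq:SKG dressed} to express $\dot u_t$ and $\dot\alpha_t$. The derivative $i\dot U_{u_t} U_{u_t}^*$ produces a term $b^*(h_{u_t,\alpha_t}u_t)\sqrt{[N-\mathcal N_b]_+}+\mathrm{h.c.}$ plus a diagonal contribution collapsing into $\mathrm{d}\Gamma_b$, while $i(\partial_t W^*)W$ gives $-\sqrt N\Re\langle\dot\alpha_t, a\rangle$-type linear contributions. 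Substituting $i\dot\alpha_t=\omega\alpha_t+2\langle u_t,kB_{(\cdot)}(-i\nabla+F_{\alpha_t})u_t\rangle$ rewrites these in terms of $a^\#(f_{u_t}+g_{u_t,\alpha_t})$.

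The main technical point is verifying that all terms linear in $b^*,b,a^*,a$ cancel. On one hand, the $N^{1/2}$-coupling in $H_N^{\rm D}$ combined with the Weyl shift produces linear terms $b^*([h_{u_t,\alpha_t}+\text{c-number shifts}]u_t)\sqrt{[N-\mathcal N_b]_+}$ and $a^\#$-linear contributions proportional to $\omega\alpha_t+\langle u_t,kB_{(\cdot)}(-i\nabla+F_{\alpha_t})u_t\rangle$. On the other hand, the time-derivative (A) contributes the exact negatives of these after substituting~\eqref{eq:SKG dressed}; the chemical-potential term $-\mu_{u_t,\alpha_t}$ in the definition of $h_{u_t,\alpha_t}$ is precisely what ensures the $b^*$ coefficient is of the form $q_{u_t}(\cdots)u_t$, i.e.\ orthogonal to $u_t$ and visible in the operator $L_{\alpha_t}(k)$ after pairing with $a^\#$. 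The resulting quadratic $b^*a^*$ and $b^*a$ terms assemble to the $L_{\alpha_t}(k)u_t$ and $L_{\alpha_t}(k)^*u_t$ contributions in $H_0(t)$.

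Finally one collects: the purely quadratic (in $b,b^\#,a,a^\#$) terms with their $N$-dependent projections form $H_0(t)$; the two-body particle interaction produces $H_1(t)$ (the $-\frac{1}{N}\mathrm{d}\Gamma(V\ast|u_t|^2-\mu_{u_t,\alpha_t})$ piece arises from carefully tracking the $[N-\mathcal N_b]_+$ factors in the diagonal/off-diagonal splitting of $\frac1N\sum_{i<j}V$); the $N$-dependent $\mathcal N_b$-factor arising from the Weyl shift applied to linear-in-$a^\#$ terms in $H_N^{\rm D}$ produces $H_2(t)$; the mixed $b^*b\,a^\#$ contribution from $\hat A_{x_i}$ and the off-diagonal part of the quadratic field terms give $H_3(t),H_4(t)$; and the fully diagonal-in-$b$ part of the quadratic field terms gives $H_5(t)$. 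The symmetric operator $H_{u,\alpha}^{\rm D}(t):\cF\otimes\cF\to\cF\otimes\cF$ defined this way agrees with $H^{{\rm D},\le N}_{u,\alpha}(t)$ on $\cF^{\le N}_{\perp u_t}\otimes\cF$ since the $[N-\mathcal N_b]_+$ factors coincide with the corresponding unprojected expressions on that subspace. The main obstacle is purely organizational: keeping track of the combinatorics of Wick ordering together with the exact cancellations enforced by the mean-field equations.
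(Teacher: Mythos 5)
Your proposal follows essentially the same route as the paper: factor the excitation map as $X_{u_t,\alpha_t}=X_{u_t}\otimes W^*(\sqrt N\alpha_t)$, conjugate $H_N^{\rm D}$ with the Weyl operator via the shift property, conjugate the particle part via the identities of Lemma~\ref{lemma:properties U}, and use the dressed mean-field equations to produce the cancellations of the linear terms (with the residual $-N^{-1/2}\mathcal N_b\hat\Phi(f_{u_t}+g_{u_t,\alpha_t})$ surviving as $H_2(t)$), exactly as in Appendix~\ref{section:Generator of the fluctuation dynamics of the Gross-transformed Nelson dynamics}. The only difference is that the paper delegates the purely bosonic pieces ($i\dot X_{u_t}X_{u_t}^*$ and the two-body interaction) to known formulas from \cite{Lewin:2015a,Bossmann2019}, whereas you sketch them directly; the content is the same.
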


The next lemma provides estimates for the fluctuation Hamiltonian that are an
important ingredient of the proof of Theorem \ref{thm:gross-transformed
  dynamics reduced density matrices}.

\begin{lemma}\label{lem:bounds:fluc:generator} Let $(u,\alpha) \in
  H^3(\mathbb R^3) \oplus \mathfrak h_{5/2}$ with $\norm{u}_\2=1$,
  $(u_t,\alpha_t)= \mathfrak s^{\rm D}[ t](u,\alpha)$ denote the solution to
  \eqref{eq:SKG dressed}. There exists a constant $C>0$ such that for all
  $t\in \mathbb R$
  \begin{subequations}
    \begin{align}
      \pm \big( H^{\rm D}_{u,\alpha}(t) - \mathbb T \big) & \le \tfrac{1}{2}\mathbb T + C  (\mathcal N+ 1) (1 + \tfrac{1}{N}\mathcal N_b )^2    \label{eq:key:bound:1:L(t)}\\[1.5mm]
      \pm i [\mathcal N ,  H^{\rm D}_{u,\alpha}(t)  ]  & \le  \tfrac{1}{2} \mathbb T + C  (\mathcal N+ 1) (1 + \tfrac{1}{N}\mathcal N_b )^2    \label{eq:key:bound:3:L(t)}\\[1.5mm]
      \pm \tfrac{d}{dt}  H^{\rm D}_{u,\alpha}(t)  & \le  \tfrac{1}{2} \mathbb T  + C \rho(t)  (\mathcal N+ 1) (1 + \tfrac{1}{N}\mathcal N_b )^2 .   \label{eq:key:bound:4:L(t)}
    \end{align}
  \end{subequations}
  as quadratic forms on $\Fock \otimes \Fock$, where $\rho(t) =
  \norm{u_t}^2_{\Sob{3}} (1+\norm{\alpha_t}_{\h{3/2}} )^2 $.
\end{lemma}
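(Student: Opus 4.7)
The plan is to decompose $H^{\rm D}_{u,\alpha}(t)-\mathbb{T}$ according to the splitting $H^{\rm D}_{u,\alpha}(t)=\sum_{j=0}^5 H_j(t)$ of Lemma~\ref{lem:fluctuation:generator} and to bound each $H_j(t)$ separately (and likewise $i[\mathcal N, H_j(t)]$ and $\dot H_j(t)$). The key inputs will be the abstract Fock-space operator inequality of Lemma~\ref{lem:ops-Fock}, combined with the kernel estimates of Lemma~\ref{lem:aux:bounds} and the growth estimates of Lemmas~\ref{lem:bounds:alpha:phi}, \ref{lem:bounds:f:g} and \ref{lem:bounds:alpha:phi:dot}. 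For terms of quadratic type one extracts a small fraction $\varepsilon$ of $\mathbb{T}$ via Cauchy--Schwarz; for the cubic and quartic terms in $b,b^*,a,a^*$ that contain explicit factors $N^{-1/2}$ or $N^{-1}$ the number-operator powers produced by Lemma~\ref{lem:ops-Fock} will be paid for precisely by these prefactors together with the factors $[1-\mathcal N_b/N]_+^{1/2}$, producing the $(1+\mathcal N_b/N)^2$ gain in the right-hand side.

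First I would handle $H_0(t)$. Up to the scalar multipliers $[1-\mathcal N_b/N]_+$ and $\sqrt{[(N-\mathcal N_b)(N-\mathcal N_b-1)]_+}/N$, the operator $H_0(t)$ coincides with $\mathbb{H}^{\rm D}_{u,\alpha}(t)$, so the estimate \eqref{eq:key:bound:1:Bog:Lambda} of Lemma~\ref{lem:bounds:dressed:Bog:theta}(a) applies verbatim after commuting these bounded scalar multipliers through, at the cost of a uniform constant and a factor $(1+\mathcal N_b/N)$. For the remaining $H_j$ with $j\geq 1$ I would apply Lemma~\ref{lem:ops-Fock} to each term, distributing the powers of $\mathcal N_a,\mathcal N_b,\mathbb{T}$ so as to close the estimate: for $H_1$, bound $\mathbb K^{(3)}_{u_t}$ and $\mathbb K^{(4)}_{u_t}$ using \eqref{eq:aux:bounds:2:a} and \eqref{eq:aux:bounds:d} together with the $1/\sqrt N$, $1/N$ prefactors (and the $\sqrt{[N-\mathcal N_b]_+}/N$ factor) to obtain a contribution $C(\mathcal N_b+1)(1+\mathcal N_b/N)$; for $H_2$, use $\|f_{u_t}+g_{u_t,\alpha_t}\|_{\mathfrak h_{1/2}}\le C$ from Lemma~\ref{lem:bounds:f:g} together with \eqref{eq:bound:dG:sqrt:omega} and Cauchy--Schwarz to extract a fraction of $\mathrm{d}\Gamma_a(\omega)$, paying for $N^{-1/2}\mathcal N_b$ by $\mathcal N_b/\sqrt N\le (1+\mathcal N_b/N)(\mathcal N_b+1)^{1/2}$; for $H_3$, use \eqref{eq:aux:bounds:2:h}--\eqref{eq:aux:bounds:2:b}, apply Lemma~\ref{lem:ops-Fock} with $r_a=0$ on the $a^*$-side and $r_b=1$ on the $b^*b$-side to gain the $\mathrm{d}\Gamma_a(\omega)$ from the $\mathfrak h_{-1/8}$ factor, and absorb the excess $N^{-1/2}\mathcal N_b$ into the $(1+\mathcal N_b/N)^2$ factor; the treatment of $H_4$ is analogous with the kernel $N_{u_t}$ of \eqref{eq:kernel:N(t)}; and $H_5$ is controlled directly by Lemma~\ref{lem:ops-Fock} and the bound on $Q_{u_t}$ in \eqref{eq:aux:bounds:2:b}, paying the cost of the resulting $\mathcal N_a^2\mathcal N_b/N$ with the $1/N$ prefactor.

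The commutator bound \eqref{eq:key:bound:3:L(t)} follows by the same strategy after observing that the commutator $[\mathcal N, H_j(t)]$ kills all particle-number preserving pieces (so only the $b^*a^*$, $a^*a^*$, $b^*b^*$ type terms and their adjoints survive, possibly with a change of sign) and yields a form of the same structural type as $H_j(t)-$ (diagonal part), hence admits the same operator bounds. For the time-derivative bound \eqref{eq:key:bound:4:L(t)} I would expand $\dot H^{\rm D}_{u,\alpha}(t)$ term by term; every kernel appearing in $H_j(t)$ is replaced by its time derivative, for which Lemma~\ref{lem:aux:bounds} gives a bound proportional to $\sqrt{\rho(t)}$, and the mean-field coefficients $h_{u_t,\alpha_t}$, $f_{u_t}+g_{u_t,\alpha_t}$ are controlled via Lemma~\ref{lem:bounds:alpha:phi:dot}. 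Applying the same Lemma~\ref{lem:ops-Fock} estimates then produces the right-hand side with $\rho(t)$ in place of the time-independent constant.

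The main obstacle, and the place where care is required, is the bookkeeping of the powers of $\mathcal N_b/N$ in the cubic and quartic terms $H_1, H_3, H_4, H_5$. Lemma~\ref{lem:ops-Fock} generates $\mathcal N_b^{\#/2}$ with $\#\in\{2,3,4\}$, and it is essential that these be paid for by the explicit prefactors $N^{-1/2}$ or $N^{-1}$ together with the $N$-dependent scalar factors like $[1-\mathcal N_b/N]_+^{1/2}$ so that on the right-hand side we only see $(\mathcal N+1)(1+\mathcal N_b/N)^2$ rather than a larger polynomial in $\mathcal N_b/N$. In particular, the worst term $H_5\sim N^{-1}\int Q\, b^*b\,\mathcal A_{kl}$ produces $N^{-1}\mathcal N_b\mathcal N_a$ after Lemma~\ref{lem:ops-Fock}, which fits within $C(\mathcal N+1)(1+\mathcal N_b/N)$; all the intermediate terms need analogous careful bookkeeping, but none produces a worse power than $(1+\mathcal N_b/N)^2$.
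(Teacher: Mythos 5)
Your overall strategy matches the paper's (split $H^{\rm D}_{u,\alpha}=\sum_{j=0}^5 H_j$, control $H_0$ via Lemma~\ref{lem:bounds:dressed:Bog:theta}, handle the remainder with Lemma~\ref{lem:ops-Fock} and the kernel estimates), but there are two genuine gaps in the terms $H_1$ and $H_3$, where the most delicate work of the proof actually lies.

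For the quartic term $N^{-1}\mathbb{K}^{(4)}_{u_t}$ in $H_1$, you propose to bound the kernel via \eqref{eq:aux:bounds:d}, but that estimate is for $K^{(2)}_{u_t}$, and no kernel bound for $K^{(4)}_{u_t}$ appears in Lemma~\ref{lem:aux:bounds} at all. In fact no such bound can hold: the kernel $W_{u_t}(x_1,x_2)$ in \eqref{eq:def:W} contains the bare potential $V(x_1-x_2)$, which is not a bounded multiplier on $L^2(\R^6)$ (its Fourier transform $\widehat V\sim G_0B_0$ decays only like $|k|^{-3}$). The paper handles this by splitting $W_{u_t}=W^{\rm b}_{u_t}-4\Re\langle G_{x_1},B_{x_2}\rangle$ with $W^{\rm b}_{u_t}$ bounded, and then regularizing the singular piece by inserting the operator identity
\[
  e^{ikx}=(1+(-i\nabla_1-k)^2)^{-1/2}\,e^{ikx}\,(1-\Delta_1)^{1/2},
\]
which transfers a derivative onto $\chi$; the resulting integral $\sup_p\int dk\,G_0(k)B_0(k)/(1+(p-k)^2)^{1/2}$ is finite and one pays with $\|\textnormal{d}\Gamma_b(1-\Delta)^{1/2}\chi\|$. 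Without some version of this trick the $\mathbb{K}^{(4)}$ contribution simply cannot be closed by the prefactor $N^{-1}$ and number operators alone.

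For $H_3(t)$ you propose to invoke the bounds on $N_{u_t}$ and $Q_{u_t}$ from \eqref{eq:aux:bounds:2:h}--\eqref{eq:aux:bounds:2:b}, but those are the kernels of $H_4$ and $H_5$, respectively. The kernel of $H_3$ is $J_{u_t,\alpha_t}(x,k,y)=2(q_{u_t}kB_{(\cdot)}(k)\cdot(-i\nabla+F_{\alpha_t})q_{u_t})(x,y)$, which contains a free gradient acting on the excitation variables and has \emph{no} entry in Lemma~\ref{lem:aux:bounds}. The paper treats this term separately by a direct computation: expanding in the number sectors, applying Cauchy--Schwarz with $\|kB_0\|_{\mathfrak h_{-1/4}}<\infty$, using $\|(1-\Delta_1)^{1/2}q_1\chi^{(n)}\|\leq C\|u_t\|_{H^1}\|(1-\Delta_1)^{1/2}\chi^{(n)}\|$ (Eq. \eqref{eq:nabla:q:Psi}) to convert the gradient into $\textnormal{d}\Gamma_b(1-\Delta)$, and then interpolating via \eqref{eq:bound:dG:sqrt:omega} to reach the claimed form. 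Your proposal does not account for the gradient in $J$, so the estimate would not close. For the remaining terms ($H_0$, $H_2$, $H_4$, $H_5$, the commutator and the time derivative) your outline is essentially the paper's argument.
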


\begin{proof}[Proof of \eqref{eq:key:bound:1:L(t)}] Comparing
  \eqref{eq:def:H:0:b} with \eqref{eq:diff:Bog:A} we see that $H_0(t)$
  differs from $\mathbb H^{\rm D}_{u,\alpha}(t)$ only by the factors
  \begin{align}
    0\le \big[1 - \tfrac{\mathcal N_b}{N} \big ]_+ \le 1 ,\quad 0\le \tfrac{\sqrt{[(N-\mathcal N_b)(N-\mathcal N_b-1)]_+}}{ N } \le 1
  \end{align}
  and is thus estimated in analogy to the proof of Lemma
  \ref{lem:bounds:dressed:Bog:theta}.

  For later purpose we consider $ \scp{\chi}{H_1(t) \phi}$ for $\chi,\phi \in
  \mathcal F \otimes \mathcal F$. The first two terms in $H_{1}(t)$ are
  estimated by
  \begin{align}
    \label{eq:estimate second quantisation of mean-field potential of V}
    \big| \scp{ \chi}{ \frac{1}{N}  \textnormal{d}\Gamma_b \big( V\ast |u_t|^2 - \mu_{u_t,\alpha_t} \big) \phi } \big| \le \frac{C}{N} \| \mathcal  N_b^{1/2}  \chi \| \, \| \mathcal N_b^{1/2} \phi \|,
  \end{align}
  where we used Lemma \ref{lem:bounds:f:g}, and
  \begin{align}
    &\Big| \scp{\chi}{\Big( \mathbb K^{(3)}_{u_t} \tfrac{\sqrt{[N-\mathcal N_b]_+}}{ N}  + \text{h.c.} \Big) \phi } \Big| \notag\\
    &\quad \le  \tfrac{1}{\sqrt N} \norm{ K^{(3)}_{ u_t } }_{L^2 \to L^2\otimes L^2} \| ( \mathcal N_b + 1 )^{1/2} \chi \|\, \| (\mathcal N_b+1) \phi\|  \notag \\
    &  \quad \le  \tfrac{C}{\sqrt N} \| ( \mathcal N_b + 1 )^{1/2} \chi \|\, \| (\mathcal N_b+1) \phi\|,\label{eq:estimate K3-term}
  \end{align}
  by Lemmas \ref{lem:ops-Fock} and \ref{lem:aux:bounds}.

  To estimate the term involving $\mathbb K_{u_t}^{(4)}$ we recall
  \eqref{eq:definition of V} and write $ W_{u_t}(x,y) = W^{{\rm
      b}}_{u_t}(x,y) - 4 \Re\scp{G_x}{B_y} $, where $ W_{u_t}^{{\rm b}}(x,y)$
  is point-wise bounded.  Using the symmetry of $\chi^{(n)}, \phi^{(n)}\in
  \mathcal F^{(n)} \otimes \mathcal F$ in the $n$ particle coordinates, we
  find
  \begin{multline}
    \scp{\chi }{\mathbb{K}_{u_t}^{(4)} \phi}
    \\
    = \sum_{n=2}^{\infty} \frac{n(n-1)}{2} \scp{q_1 q_2\chi^{(n)} }{\big( W_{u_t}^{\rm{b}}(x_1,x_2) - 4 \Re \scp{G_{x_1}}{B_{x_2}} \big) q_1 q_2 \phi^{(n)}}
  \end{multline}
  with
  \begin{align}
    \Big|\sum_{n=2}^{\infty} \frac{n(n-1)}{2} \scp{q_1 q_2 \chi^{(n)}}{ W_{u_t}^{\rm{b}}(x_1,x_2)  q_1 q_2 \phi^{(n)} }\Big|
    &\leq C \norm{\mathcal{N}_b^{\frac12} \chi}
    \norm{\mathcal{N}_b^{\frac32} \phi} .
  \end{align}
  We use the Cauchy--Schwarz inequality and the fact that
  \begin{equation}\label{eq:identity:p-k}
    e^{ikx}= (1+ (-i\nabla_1 - k )^2)^{-1/2} e^{ikx} (1-\Delta_1)^{1/2}
  \end{equation}
  to bound the remaining term by
  \begin{align}
    & \Big|\sum_{n=2}^{\infty} \frac{n(n-1)}{2} \scp{q_1 q_2 \chi^{(n)} }{\Re \scp{G_{x_1}}{B_{x_2}}  q_1 q_2 \phi^{(n)}}\Big|
    \\
    &  \leq
    C
    \Big( \sum_{n=0}^{\infty} n^3 \norm{ \phi^{(n)}}^2 \Big)^{1/2}
    \Big( \sum_{n=0}^{\infty} n
    \norm{\Re \scp{G_{x_1}}{B_{x_2}}  q_1 q_2 \chi^{(n)} }^2 \Big)^{1/2}
    \nonumber \\
    &  \leq  C\norm{\mathcal{N}_b^{3/2} \phi} \notag
    \\
    &\ \times \Big( \sum_{n=0}^{\infty} n
    \Big\|\int dk   ( 1 + (- i \nabla_1 - k)^2 )^{-\frac12}  \overline{G_{x_1}(k)} B_{x_2}(k) ( 1 - \Delta_1 )^{\frac12}  q_1 q_2 \chi^{(n)}\Big\|^2 \Big)^{\frac12} .\notag
  \end{align}
  The last factor is bounded by
  \begin{multline}
    \sup_{p \in \mathbb{R}^3}
    \left\{   \int dk \, \frac{G_0(k) B_0(k)}{(1 + (p - k)^2)^{1/2}}  \right\}
    \left( \sum_{n=0}^{\infty} n
      \norm{ \left( 1 - \Delta_1 \right)^{1/2} q_1 \phi^{(n)}}^2 \right)^{1/2}\\
    \leq  C
    \norm{\textnormal{d}\Gamma_b(1 - \Delta)^{1/2} \chi },
  \end{multline}
  where the supremum over $p \in \mathbb R^3$ is finite by the same argument
  as in \eqref{eq:symmetric:rearrangement}, and where we further used
  \begin{align}\label{eq:nabla:q:Psi}
    \norm{\left( 1 - \Delta_1 \right)^{1/2} q_1 \chi ^{(n)}(t)} \leq C \norm{u_t}_{H^1} \norm{\left( 1 - \Delta_1 \right)^{1/2} \chi^{(n)}(t)}.
  \end{align}
  Adding up the relevant terms, we arrive at the desired estimate
  \begin{multline}\label{eq:chi:phi:bound:H1}
    \big| \scp{\chi } { H_1(t)  \phi } \big|\\
    \leq C  \norm{\left( \mathcal{N} +\mathbb T + 1 \right)^{1/2} \chi}\Big( \tfrac{1}{\sqrt N}
    \norm{(\mathcal{N}_b+1) \phi} + \tfrac{1}{N}
    \norm{(\mathcal{N}_b+1) \mathcal N_b^{1/2} \phi} \Big).
  \end{multline}

  The bound for $H_2(t)$ is straightforward,
  \begin{align}\label{eq:chi:phi:bound:H2}
    \big| \scp{\chi}{H_2(t) \phi} \big|   \le  C \tfrac{1}{\sqrt N}  \| (\mathcal N+1)^{1/2} \chi \| \, \| (\mathcal N+1) \phi \| ,
  \end{align}
  where we used that $\norm{f_{u_t}+g_{u_t,\alpha_t}}_\2 \le C $ by Lemma
  \ref{lem:aux:bounds}.

  With the definition of the kernel of $H_3(t)$ in \eqref{eq:kernel:J(t)}, we
  write
  \begin{align}
    & \scp{\chi}{ H_3(t) \chi} \\
    & \ = \sum_{n=1}^\infty  \frac{2n}{\sqrt N}   \int dk \, kB_0(k) \cdot 2\Re \scp{q_1  \chi^{(n)}}{  e^{ - i k x_1 } (-i\nabla_{1} + F_{\alpha_t}(x_1) ) q_1  \, a_k^*\chi^{(n)}  } \notag
  \end{align}
  with $\chi^{(n)}\in \mathcal F^{(n)} \otimes \mathcal F $. With $\norm{
    F_{\alpha_t} }_{\Lp{\infty}} \le C$ and $ \norm{kB_0}_{\h{-1/4}} \le C$,
  we obtain by Cauchy--Schwarz
  \begin{align}
    &  \big| \scp{\chi}{H_3 (t) \chi}\big| \leq  \sum_{n=1}^\infty \frac{4n}{\sqrt N}  \norm{  (-i\nabla_{1} + F_{\alpha_t}(x_1) ) q_1  \chi^{(n)}  }   \int dk \, |kB_0(k)|\, \norm{a_k \chi^{(n)}}  \notag\\
    & \quad\stackrel{\eqref{eq:nabla:q:Psi}}{\le} C  \sum_{n=1}^\infty \frac{4n}{\sqrt N} \norm{(1 -\Delta_1)^{1/2}   \chi^{(n)}  }   \norm{ \text{d}  \Gamma_a (\sqrt\omega)^{1/2} \chi^{(n)} } \notag\\
    &\quad \stackrel{\hphantom{\eqref{eq:bound:dG:sqrt:omega}}}{\le} \frac{C}{\sqrt N} \| \textnormal{d}\Gamma_b(1-\Delta)^{1/2} \chi \| \, \| \mathcal N_b^{1/2} \textnormal{d}\Gamma_a(\sqrt  \omega)^{1/2} \chi\| \notag\\
    &\quad\stackrel{\eqref{eq:bound:dG:sqrt:omega}}{\leq} \frac{C}{\sqrt N} \| \textnormal{d}\Gamma_b(1-\Delta)^{1/2} \chi \| \, \| d\Gamma_a(\omega)^{1/2} \chi \|^{1/2} \,  \| \mathcal N_a^{1/2} \mathcal N_b \chi\|^{1/2},
    \label{eq:jt:bound}
  \end{align}
  which implies the desired bound.

  To bound $H_4(t)$, we apply Lemma \ref{lem:ops-Fock} (for instance for the
  term involving $b_x^* a_k^* a_{-l}$ we choose $m_b=0,n_b = m_a=n_a=1$,
  $s=t=1/4$ and $r_a=r_b=0$) and use from Lemma \ref{lem:aux:bounds} that $
  \norm{N_t}_{\h{1/4} \to L^2 \otimes \h{-1/4}} + \norm{ N_t }_{L^2\otimes
    \h{1/4}^{\otimes 2}\to \mathbb C} \le 2 \norm{ N_t}_{ L^2 \to
    \h{-1/4}^{\otimes 2} } \le 2 \norm{N_t}_{L^2 \otimes \h{-1/8}^{\otimes
      2}} \le C $. This implies
  \begin{align}
    \big| \scp{\chi}{H_4(t) \chi} \big| &  \le C \Big[  \norm{ (\mathcal N_b+1)^{1/2}  \text{d} \Gamma_a(\sqrt \omega )^{1/2} \chi } \norm{  \text{d} \Gamma_a(\sqrt \omega )^{1/2} \chi } \\[1mm]
    & \qquad + \norm{ (1+\mathcal N_b)^{\tfrac{1}{2}}  (1+\mathcal N_a)^{\tfrac{1}{2}} \chi}  \norm{(1+\mathcal N_a)^{-\tfrac{1}{2}}   \text{d} \Gamma_a(\sqrt \omega ) \chi} \notag\\[2mm]
    & \qquad +   \norm{  (\mathcal N_b+1)^{1/2}  (\mathcal N_a+1)^{1/2} \chi}  \norm{ (\mathcal N_a+1 )^{-1/2} \text{d} \Gamma_a(\sqrt \omega )  \chi} \Big]. \notag
  \end{align}
  The desired bound now follows easily from \eqref{eq:bound:dG:sqrt:omega}.
  For $H_5(t)$, we use that $\| Q_t\|_{L^2 \otimes \mathfrak h_{1/4} \to L^2
    \otimes \mathfrak h_{-1/4} } + \norm{Q_t}_{L^2 \otimes \h{1/4}^{\otimes
      2} \to L^2 } \le C\norm{Q_t}_{L^2 \otimes \h{1/8}^{\otimes 2} \to L^2 }
  \le C$ by Lemma \ref{lem:aux:bounds}, and Lemma \ref{lem:ops-Fock} imply
  \begin{align}
    \big| \scp{\chi}{H_5(t) \chi } \big| & \le \frac{C}{N}  \Big[  \norm{ (\mathcal N_b + 1 )^{1/2} \text{d} \Gamma_a (\sqrt \omega)^{1/2} \chi }^2 \\[1mm]
    & \qquad + \norm{(\mathcal N_b+1)  (\mathcal N_a+1)^{1/2} \chi }  \norm{ (\mathcal N_a+1)^{-1/2} \text{d}\Gamma_a(\sqrt{\omega})     \chi} \Big].\notag
  \end{align}
  The desired bound now follows from Lemma \ref{lem:bounds:dGamma}.\medskip

  \noindent \textit{Proof of \eqref{eq:key:bound:3:L(t)}}. The commutation
  relations imply, e.g., $[\cN, a_k]=-a_k$, so the non-zero terms in the
  commutator $[\cN, H_{u,\alpha}^\mathrm{D}(t)]$ have the same kernels, up to
  signs, as those in $H_{u,\alpha}^\mathrm{D}(t)-\mathbb T$. They can thus be
  estimated as in the proof of \eqref{eq:key:bound:1:L(t)}, and we omit the
  details.
  \medskip

  \noindent \textit{Proof of \eqref{eq:key:bound:4:L(t)}}. This inequality is
  obtained following similar steps as in the proof of
  \eqref{eq:key:bound:1:L(t)} with some obvious modifications, like the use
  of the bounds for the time-derivatives in Lemma \ref{lem:aux:bounds} and
  the use of $\| \dot q_{u_t}\|_{H^1} \le C \sqrt{\rho(t)} $,
  cf. Lemma. \ref{lem:bounds:alpha:phi:dot}.
\end{proof}

The next lemma shows that the fluctuation generator can be approximated by
the Bogoliubov Hamiltonian $\mathbb H^{\rm D}_{u,\alpha}(t)$ for large $N$,
when tested on suitable states.

\begin{lemma}\label{lem:fluctuationgen:BogHam:difference}
  Let $(u, \alpha) \in H^3(\mathbb{R}^3) \oplus \mathfrak{h}_{5/2}$ with
  $\norm{u}_{L^2} =1$ and $(u_t,\alpha_t)= \mathfrak s^{\rm D}[ t](u,\alpha)$
  denote the solution to \eqref{eq:SKG dressed}. There exists a constant
  $C>0$ such that
  \begin{multline}
    \big| \scp{\chi }{\left( H^{\rm D}_{u,\alpha}(t) - \mathbb{H}_{u,\alpha}^{\rm D}(t) \right)  \phi} \big| \\
    \le C \rho(t)   N^{-1/2} \ln N \norm{\left( \mathcal{N} + \mathbb{T} + 1 \right)^{1/2} \chi }
    \norm{\left( \mathcal{N}^3 + \mathbb{T} + 1 \right)^{1/2} \phi}\notag
  \end{multline}
  for all $\chi \in \mathcal F^{\le N} \otimes \mathcal F$ and $\phi \in
  \mathcal F \otimes \mathcal F$, where $\rho(t) = \norm{u_t}^2_{\Sob{3}}
  (1+\norm{\alpha_t}_{\h{3/2}} )^2$.
\end{lemma}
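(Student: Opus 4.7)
The plan is to decompose
\begin{equation*}
H^{\rm D}_{u,\alpha}(t) - \mathbb{H}^{\rm D}_{u,\alpha}(t) = \big(H_0(t) - \mathbb{H}^{\rm D}_{u,\alpha}(t)\big) + \sum_{j=1}^5 H_j(t),
\end{equation*}
according to Lemma~\ref{lem:fluctuation:generator}, and to control each piece separately, closely following the template used in the proof of Lemma~\ref{lem:bounds:fluc:generator} but gaining an extra small factor in $N$.

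First I would estimate the difference $H_0(t) - \mathbb{H}^{\rm D}_{u,\alpha}(t)$. By construction, $H_0(t)$ has the same quadratic structure as $\mathbb{H}^{\rm D}_{u,\alpha}(t)$ but with insertions of the combinatorial operators $[1-\cN_b/N]_+$, $[1-\cN_b/N]_+^{1/2}$ and $N^{-1}\sqrt{[(N-\cN_b)(N-\cN_b-1)]_+}$. Using the elementary pointwise bounds
\begin{equation*}
\big|1-[1-x]_+\big|\leq x,\qquad \big|1-\sqrt{[1-x]_+}\big|\leq x,\qquad x\geq 0,
\end{equation*}
and commuting these functions of $\cN_b$ through the creation operators, each insertion yields an extra operator factor controlled by $C(\cN_b+1)/N$. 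Combining this with the kernel norms from Lemma~\ref{lem:aux:bounds} and the operator bounds of Lemma~\ref{lem:ops-Fock}, exactly as in the proof of~\eqref{eq:key:bound:1:Bog:Lambda}, produces contributions of order $C\rho(t)^{1/2} N^{-1}$ multiplied by the target weighted norms of $\chi$ and $\phi$. The asymmetric allocation of weights in the statement — only $\cN$ on $\chi$ but $\cN^3$ on $\phi$ — is precisely what is needed to absorb the extra $(\cN_b+1)/N$ arising here, since $(\cN_b+1)^2\leq C(\cN+1)^3$ and $\chi\in\cF^{\leq N}\otimes\cF$ allows $\cN_b\chi\leq N\chi$.

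For the residual terms $H_1,\dots,H_5$, the bounds derived in the proof of Lemma~\ref{lem:bounds:fluc:generator} already give explicit $N^{-1/2}$ or $N^{-1}$ prefactors. I would recast them into the target form by the same dispatch of moments of $\cN_b$, $\cN_a$ and $\mathbb{T}$ between $\chi$ and $\phi$. In particular, the cubic and quartic contributions $\mathbb K^{(3)}_{u_t}$ and $\mathbb K^{(4)}_{u_t}$ in $H_1(t)$ carry their prefactors $N^{-1/2}$ and $N^{-1}$ directly, and the $\cN_b$-powers in~\eqref{eq:chi:phi:bound:H1} fit into $(\cN^3+1)^{1/2}$ on $\phi$.

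The main obstacle is the logarithmic factor $\ln N$ in the target. I expect it to arise only in the quartic contribution from $\mathbb K^{(4)}_{u_t}$, whose effective potential $V$ contains the marginally singular piece $-4\Re\langle G_x,B_y\rangle$. The treatment of this term in the proof of~\eqref{eq:chi:phi:bound:H1} used the identity~\eqref{eq:identity:p-k} to transfer a derivative onto $\chi$, an argument that is logarithmically borderline; splitting the $k$-integration at a scale $\Lambda\sim\sqrt{N}$ and optimizing the high- and low-frequency pieces separately yields a $\sqrt{\ln N}$ correction on each side, which combine into the claimed $\ln N$. The delicate step is to balance the weights — showing that the logarithmically divergent integrals can be distributed so that exactly one power of $\mathbb T$ ends up on $\chi$ and on $\phi$, while the $N^{-1/2}$ prefactor survives — and to verify that, after summing over all contributions, the result depends on $(u_t,\alpha_t)$ only through $\rho(t)$.
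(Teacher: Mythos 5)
Your decomposition of $H^{\rm D}_{u,\alpha}(t) - \mathbb{H}^{\rm D}_{u,\alpha}(t)$ via Lemma~\ref{lem:fluctuation:generator} into the $H_0(t)-\mathbb{H}^{\rm D}_{u,\alpha}(t)$ piece plus $\sum_{j\ge 1}H_j(t)$, your treatment of the combinatorial factors $[1-\mathcal{N}_b/N]_+^{1/2}$ etc.\ via the pointwise bound $|1-\sqrt{[1-x]_+}|\le x$, and your plan to dispatch moments of $\mathcal{N}$ asymmetrically onto $\phi$ using $\mathbb{\cN}^3$ — all of that matches what the paper does and is sound.

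The genuine gap is the source of the logarithm. You attribute the $\ln N$ to $\mathbb{K}^{(4)}_{u_t}$ inside $H_1(t)$, arguing that the piece $-4\Re\langle G_x,B_y\rangle$ of $V$ is logarithmically borderline. That is not the case: the paper's treatment of $\mathbb{K}^{(4)}$ (in the proof of Lemma~\ref{lem:bounds:fluc:generator}) uses the momentum-shift identity~\eqref{eq:identity:p-k} to throw one derivative onto $\chi$, after which the supremum $\sup_p \int dk\, G_0(k)B_0(k)(1+(p-k)^2)^{-1/2}$ is \emph{finite} (no logarithm), because $G_0B_0\sim\omega^{-3}$. Combined with the explicit prefactor $N^{-1}$ already present in $H_1(t)$, the $\mathbb{K}^{(4)}$ contribution lands inside~\eqref{eq:chi:phi:bound:H1} with $N^{-1}\norm{(\cN_b+1)\cN_b^{1/2}\phi}$, which fits $\norm{(\cN^3+1)^{1/2}\phi}$ cleanly. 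So you are optimizing a term that is not the bottleneck.

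The actual problematic term is $H_3(t)$, whose kernel $J_{u_t,\alpha_t}(x,k,y) = 2(q_{u_t}kB_{(\cdot)}(k)\cdot(-i\nabla+F_{\alpha_t})q_{u_t})(x,y)$ carries a derivative that is sandwiched between two $q_{u_t}$ projections rather than hitting $u_t$. In the symmetric bound of Lemma~\ref{lem:bounds:fluc:generator} one can always route the derivative onto whichever side is convenient, but here the asymmetry is forced: the derivative must be paid for by $\mathbb{T}^{1/2}$ on $\phi$, and then no further $\cN$-moments are available on $\phi$ to pay for the $k$-integration. The paper resolves this by commuting $a_k^*$ through $(-i\nabla_1+F_{\alpha_t}(x_1))$ to generate the identity in~\eqref{eq:Norm approximation Gross transformed dynamics 2f1}--\eqref{eq:Norm approximation Gross transformed dynamics 2f4}, splitting the momentum integration at a scale $\Lambda$, using $\chi^{(n)}=0$ for $n>N$ to absorb the $N^{-1/2}$ prefactor in the high-frequency piece (yielding $\Lambda^{-1/2}$), and bounding the low-frequency pieces by $N^{-1/2}\sqrt{\ln\Lambda}$ via $\norm{\id_{|\cdot|\le\Lambda}kB_0}_{L^2}\lesssim\sqrt{\ln\Lambda}$ and a rearrangement bound for $\sup_p\int_{|k|\le\Lambda}k^4B_0(k)^2(1+(p-k)^2)^{-1}dk\lesssim\ln\Lambda$. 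The choice is $\Lambda=N$, not $\Lambda\sim\sqrt{N}$, which gives the final $N^{-1/2}\sqrt{\ln N}$ (the lemma states $\ln N$, which is an overestimate). Your proposed mechanism of two $\sqrt{\ln N}$ factors combining multiplicatively into $\ln N$ does not correspond to how the estimate actually closes.
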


\begin{proof} Recalling the definitions of the fluctuation generator
  \eqref{eq:def:fluct:generator} and the Bogoliubov Hamiltonian
  \eqref{eq:diff:Bog:A}, we write
  \begin{subequations}
    \begin{align}
      & \scp{\chi}{ ( H^{\rm D}_{u,\alpha}(t) - \mathbb{H}_{u,\alpha}^{\rm D}(t) ) \phi} \notag\\[1mm]
      &  =  \label{eq:Norm approximation Gross transformed dynamics 2b}
      \scp{\chi}{ \mathbb K^{(1)}_{u_t} \left( \big[1 - \tfrac{\mathcal N_b}{N} \big]_+ - 1 \right)   \phi}  \\[1mm]
      & \ + \scp{\chi}{  \left( \mathbb K^{(2)}_{u_t} \big( N^{-1} \sqrt{[(N-\mathcal N_b)(N-\mathcal N_b-1)]_+} - 1 \big) + \text{h.c.} \right) \phi} \label{eq:Norm approximation Gross transformed dynamics 2bb}
      \\
      \label{eq:Norm approximation Gross transformed dynamics 2c}
      &\ + \scp{\chi}{\Big( \int dx  dk\, \big(q_{u_t} L_{ \alpha_t}(k) u_t\big)(x) a^{*}_k b_x^* \Big( \big[1 - \tfrac{\mathcal N_b}{N} \big]_+^{\frac12} - 1 \Big) + \text{h.c.}   \Big)  \phi}
      \\
      \label{eq:Norm approximation Gross transformed dynamics 2d}
      &\ +  \scp{\chi}{ \Big( \int dx  dk\, \big(q_{u_t} L_{ \alpha_t}(k)^* u_t\big)(x) a_{k} b_x^* \Big( \big[1 - \tfrac{\mathcal N_b}{N} \big]_+^{\frac12} - 1 \Big) + \text{h.c.}   \Big)  \phi}
      \\
      \label{eq:Norm approximation Gross transformed dynamics 2e}
      &\ +  \scp{\chi}{\int dk dl \, M_{u_t}(k,l) \mathcal{A}_{kl} \Big( \big[1 - \tfrac{\mathcal N_b}{N} \big]_+ - 1 \Big)  \phi} \\[1mm]
      \label{eq:Norm approximation Gross transformed dynamics 2a}
      & \ + \scp{\chi}{\left( H_1(t) + H_2(t) + H_3(t) + H_4(t) + H_5(t) \right) \phi}
    \end{align}
  \end{subequations}
  When bounding these terms, we have to take care to put any powers of the
  number operator exceeding one-half to the right, i.e., on $\phi$. At the
  same time, the power of $\cN + \mathbb{T}$ acting on $\phi$ cannot exceed
  one-half, either.  For the terms in~\eqref{eq:Norm approximation Gross
    transformed dynamics 2b}--\eqref{eq:Norm approximation Gross transformed
    dynamics 2e} the estimates are rather straightforward and given at the
  end of the proof.  The most difficult estimate is that for the term coming
  from $H_3(t)$, which is also responsible for the presence of the factor
  $\ln N$ in the statement. \medskip

\noindent\textit{Term $H_3(t)$}. Recall the expression for the kernel $J_{u,\alpha} = 2  (q_{u} kB_{(\cdot)}(k) \cdot (-i\nabla+ F_{\alpha}) q_{u})$, which multiplies $b_x^*b_y a_k^*$ in $H_3(t)$. This term is problematic, for when the gradient acts on $\phi$ we cannot put further powers of $\cN$ on $\phi$ while keeping control by $\norm{\left( \mathcal{N}^3 + \mathbb{T} \right)^{1/2} \phi}$.
We deal with this problem by using the identity
\begin{align}
  k \cdot B_{x_1}(k) \left( - i \nabla_1 + F_{\alpha_t}(x_1) \right)a^*_k
  &= \left( - i \nabla_1 + F_{\alpha_t}(x_1) \right)\cdot k B_{x_1}(k)
  + a^*_k k^2 B_{x_1}(k)
\end{align}
and splitting the momentum integration into $| k | \le \Lambda$ and
$|k|>\Lambda$. Together with the adjoint expression, which is less of a
problem, this gives
\begin{subequations}
  \begin{align}
    \label{eq:Norm approximation Gross transformed dynamics 2f1}
    &\tfrac{1}{2} \scp{\chi}{H_3(t) \phi}   \notag \\
    &= \frac{1}{\sqrt{N}} \sum_{n=0}^{\infty} n  \scp{\left( - i \nabla_1 + F_{\alpha_t}(x_1) \right)  q_1  \chi^{(n)} }{ \int dk \, k \overline{B_{x_1}(k)} a_k q_1 \phi^{(n)} }
    \\
    \label{eq:Norm approximation Gross transformed dynamics 2f2}
    &\ +  \frac{1}{\sqrt{N}} \sum_{n=0}^{\infty} n  \scp{\chi^{(n)}  }{q_1 \int\limits_{\abs{k} \geq \Lambda} dk \, k B_{x_1}(k) a_k^* \left( - i \nabla_1 + F_{\alpha_t}(x_1) \right)  q_1 \phi^{(n)}  }
    \\
    \label{eq:Norm approximation Gross transformed dynamics 2f3}
    &\ +  \frac{1}{\sqrt{N}} \sum_{n=0}^{\infty} n  \scp{\left( - i \nabla_1 + F_{\alpha_t}(x_1) \right)  q_1 \chi^{(n)} }{ \int\limits_{\abs{k} \leq \Lambda} dk \, k B_{x_1}(k) a_k^*   q_1 \phi^{(n)} }
    \\
    \label{eq:Norm approximation Gross transformed dynamics 2f4}
    &\ +  \frac{1}{\sqrt{N}}\sum_{n=0}^{\infty} n  \scp{ \int_{\abs{k} \leq \Lambda} dk \, k^2 \overline{B_{x_1}(k)} a_k   q_1 \chi^{(n)}  }{  q_1 \phi^{(n)} }  .
  \end{align}
\end{subequations}
In the first line, the gradient acts on $\chi$, so we can simply bound it as
in \eqref{eq:jt:bound},
\begin{align}
  | \eqref{eq:Norm approximation Gross transformed dynamics 2f1} |
  &\leq C N^{-1/2}
  \norm{\textnormal{d} \Gamma_b(1 - \Delta)^{1/2} \chi } \norm{\textnormal{d} \Gamma_a (\omega)^{1/2}\phi\|^{1/2} }  (\mathcal{N}+1)^{3/2} \phi \|^{1/2}.
\end{align}
In the second line, we can use that $\chi^{(n)} = 0$ for $n > N$ to remove a
factor of $(n/N)^{1/2}$, since the lower cutoff $\Lambda$ will give us a
small pre-factor. With the Cauchy-Schwarz inequality and Lemma
\ref{lem:ops-Fock} this gives
\begin{align}
  &| \eqref{eq:Norm approximation Gross transformed dynamics 2f2} |
  \leq \sum_{n=0}^{\infty} n^{1/2} \Big| \scp{  \chi^{(n)} }{q_1 \int_{\abs{k} \geq \Lambda} dk \, k B_{x_1}(k) a_k^* \left( - i \nabla_1 + F_{\alpha_t}(x_1) \right)  q_1 \phi^{(n)} } \Big|
  \nonumber \\
  &\ \leq C
  \Big( \sum_{n=0}^{\infty} \Big\|\int_{\abs{k} \geq \Lambda} dk \, k \overline{B_{x_1}(k)} a_k q_1 \chi^{(n)} \Big\|^2 \Big)^{\frac12}
  \Big(  \sum_{n=0}^{\infty} n \norm{\left( 1 - \Delta_1 \right)^{\frac12} \phi^{(n)} }^2 \Big)^{\frac12}
  \nonumber \\
  &\ \leq C \norm{\id_{\abs{\cdot} \geq \Lambda} \omega^{-1/2} k B_0}_{L^2}
  \norm{ \textnormal{d} \Gamma_a(\omega)^{1/2} \chi }
  \norm{\textnormal{d} \Gamma_b(1-\Delta)^{1/2}\phi}
  \nonumber \\[2mm]
  &\ \leq C  \Lambda^{-1/2} \norm{\mathbb{T}^{1/2} \chi} \norm{\left( \mathcal{N} + \mathbb{T} \right)^{1/2} \phi },
\end{align}
where we used $\norm{\id_{\abs{\cdot} \geq \Lambda} \omega^{-1/2} k
  B_0}_{L^2} \leq \sqrt{4 \pi / \Lambda}$ in the last
step. Lemma~\ref{lem:ops-Fock} together with $\norm{\id_{\abs{\cdot} \leq
    \Lambda} k B_0}_2 \leq \sqrt{4 \pi \ln \Lambda }$ yields for the third
line
\begin{align}
  | \eqref{eq:Norm approximation Gross transformed dynamics 2f3} |
  &\leq C N^{-1/2} C N^{-1/2} \sqrt{\ln \Lambda }
  \norm{\textnormal{d} \Gamma_b (1 - \Delta)^{1/2} \chi } \norm{\left( \mathcal{N} + 1 \right) \phi } .
\end{align}
It remains to bound the last line, \eqref{eq:Norm approximation Gross
  transformed dynamics 2f4}. Here, we will need to use the regularity of
$\chi$ in $x$ to improve the integrability of $k^2 B_x(k)$.  Using the
identity~\eqref{eq:identity:p-k} to this end, we obtain
\begin{align}
  &\Big\|\int\limits_{\abs{k} \leq \Lambda} dk \, k^2 \overline{B_{x_1}(k)} a_k \chi\Big\|^2
  = \hspace{-1pt} \int\limits_{\abs{k} \leq \Lambda}\hspace{-1pt} dk  \hspace{-1pt} \int\limits_{\abs{l} \leq \Lambda} \hspace{-1pt} dl \, k^2 B_{0}(k)  l^2 B_{0}(l)
  \scp{e^{i k x_1} a_k \chi}{ e^{i l x_1}  a_l \chi}
  \nonumber \\
  &=  \int\limits_{\abs{k} \leq \Lambda} dk  \int\limits_{\abs{l} \leq \Lambda} dl \,
  \begin{aligned}[t]
    k^2 B_{0}(k)   l^2 B_{0}(l) \Big\langle & ( (- i \nabla_1 - l)^2 + 1 )^{-\frac12}   e^{ikx_1} (1 - \Delta_1)^{\frac12}  a_k \chi , \\
    &( (- i \nabla_1 - k)^2 + 1 )^{-\frac12} e^{ilx_1}  (1 - \Delta_1)^{\frac12} a_l \chi \Big\rangle
  \end{aligned}
  \nonumber \\
  &\ \leq
  \int_{\abs{k} \leq \Lambda} dk \,  \int_{\abs{l} \leq \Lambda} dl \, k^4 B_{0}(k)^2
  \norm{( (- i \nabla_1 - k)^2 + 1 )^{-1/2}
    e^{ilx_1}  (1 - \Delta_1)^{1/2} a_l \chi}^2
  \nonumber \\
  &\ \leq
  \sup_{p \in \mathbb{R}^3}
  \left\{  \int_{\abs{k} \leq \Lambda} dk \,   \frac{k^4 B_{0}(k)^2}{1 + (p-k)^2}  \right\} \norm{\mathcal{N}_a^{1/2} \left( 1 - \Delta_1 \right)^{1/2} \chi}^2.
\end{align}
By symmetric rearrangement (similarly as in
\eqref{eq:symmetric:rearrangement}) the supremum over $p\in \mathbb R^3$ is
bounded by a constant times $\ln \Lambda$.  With this inequality, we can
estimate the remaining term by
\begin{align}
  &| \eqref{eq:Norm approximation Gross transformed dynamics 2f4} |
  = \frac{1}{\sqrt N} \sum_{n=1}^{\infty} n \Big| \scp{ \int\limits_{\abs{k} \leq \Lambda} dk \, k^2 \overline{B_{x_1}(k)} a_k   \mathcal{N}_a^{-\frac12}  q_1 \chi^{(n)} }{  q_1 \left( \mathcal{N}_a + 1 \right)^{\frac12} \phi^{(n)} } \Big|
  \nonumber \\
  &\quad\leq C \frac{\sqrt{\ln \Lambda }}{\sqrt{N}}
  \Big( \sum_{n=0}^{\infty} n \norm{\left( 1 - \Delta_1 \right)^{\frac12} q_1 \chi^{(n)}}^2 \Big)^{\frac12}
  \Big( \sum_{n=0}^{\infty} n \norm{\left( \mathcal{N}_a + 1 \right)^{\frac12} \phi^{(n)} }^2 \Big)^{\frac12}
  \nonumber \\[1mm]
  &\quad\leq C \frac{\sqrt{\ln \Lambda }}{\sqrt{N}}
  \norm{\left( \mathcal{N} + \mathbb{T} \right)^{\frac12} \chi }
  \norm{\left( \mathcal{N} + 1 \right) \phi } .
\end{align}
If we choose the cutoff parameter $\Lambda = N$ we thus arrive at
\begin{multline}
  \big| \scp{\chi}{H_3(t) \phi} \big| \notag\\
  \leq CN^{-\frac12} \norm{\left( \mathcal{N} + \mathbb{T} + 1 \right)^{\frac12} \chi }
  \big(
  \norm{\left( \mathcal{N}^3 + \mathbb{T} + 1 \right)^{\frac12} \chi } + \sqrt{\ln  N} \norm{\left( \mathcal{N} + 1 \right) \phi }
  \big).
\end{multline}

The terms $H_j(t)$ for $j\neq3$ are somewhat easier to treat, since if there
are any gradients (as in $H_2(t)$, via $f_{u}$) they act on $u_t$ and not
$\chi, \phi$. The powers of $\omega$ needed to render the kernels integrable
are strictly less than one, so the possibility of distributing factors of
$\cN$ given by Lemma~\ref{lem:ops-Fock} is sufficient to treat theses terms,
as we now show.  \medskip

\noindent\textit{Term} $H_1(t)+H_2(t)$. For this contribution we can use the
already established bounds from the proof of Lemma
\ref{lem:bounds:fluc:generator}, that is, \eqref{eq:chi:phi:bound:H1} and
\eqref{eq:chi:phi:bound:H2}, respectively.\medskip

\noindent \textit{Term $H_4(t)$}. We use $\norm{N_t}_{\mathfrak{h}_{1/8}
  \rightarrow L^2 \otimes \mathfrak{h}_{- 1/8}} +
\norm{N_t}_{\mathfrak{h}_{1/8}^{\otimes 2} \rightarrow L^2} \leq
\norm{N_t}_{L^2 \otimes \mathfrak{h}_{-1/8}^{\otimes 2}} \le C $ (see Lemma
\ref{lem:aux:bounds}) and Lemma \ref{lem:ops-Fock} to get
\begin{align}
  \big| \scp{\chi}{ H_4(t) \phi } \big|
  & \leq  \frac{C}{\sqrt N} \begin{aligned}[t]
    \Big[&
    \norm{\left( \mathcal{N} + 1 \right)^{\frac12} \chi } \norm{\textnormal{d} \Gamma_a (\omega^{\frac14}) \phi}
    \\
    &+ \norm{\left( \mathcal{N} + 1 \right)^{-\frac12} \textnormal{d} \Gamma_a(\sqrt{\omega}) \chi } \norm{\left( \mathcal{N} + 1 \right) \phi }
    \\
    &+ \norm{ \textnormal{d} \Gamma_a(\sqrt \omega)^{\frac12} \chi } \norm{ \textnormal{d} \Gamma_a(\sqrt \omega)^{\frac12} \left( \mathcal{N} + 1 \right)^{\frac12} \phi}
    \Big] .
  \end{aligned}
\end{align}
By means of \eqref{eq:normbound:dGamma(omega^s/2)} we then obtain
\begin{align}
  \big| \scp{\chi}{H_4(t) \phi } \big|  & \leq C N^{-1/2} 
  \norm{\left( \mathcal{N} + \mathbb{T} + 1 \right)^{1/2} \chi } 
  \norm{\left( \mathcal{N}^3 + \mathbb{T} + 1 \right)^{1/2} \phi } .
\end{align}
\noindent \textit{Term $H_5(t)$}. Recalling that $| k| B_0(k) \in
\mathfrak{h}_{-s}$ for $s >0$, Lemma~\ref{lem:ops-Fock} gives
\begin{align}
  \big| \scp{\chi}{ H_5(t) \phi } \big|
  &\leq C N^{-1} \Big[ \norm{\cN_b^{1/2} \textnormal{d} \Gamma_a (\sqrt{\omega})^{1/2} \chi }
  \norm{ \textnormal{d} \Gamma_a (\sqrt{\omega})^{1/2} \mathcal{N}_b ^{1/2} \phi }
  \nonumber \\[2mm]
  &\hspace{2cm} + 
  \norm{\cN_b^{1/2}\left(  \mathcal{N}_a + 1 \right)^{-1/2} \textnormal{d}\Gamma_a (\sqrt{\omega}) \chi }
  \norm{\left( \mathcal{N} + 1 \right) \phi}
  \nonumber \\[2mm]
  &\hspace{2cm} + 
  \norm{ \mathcal{N}_b   \chi }
  \norm{ \textnormal{d}\Gamma_a(\omega^{1/4})  \phi } \Big].
\end{align} 
Since $ \chi=\id_{\cN_b\leq N}\chi$,
Equation~\eqref{eq:normbound:dGamma(omega^s/2)} leads to
\begin{align}
  \big| \scp{\chi }{H_5(t) \phi } \big| & \leq C N^{-1/2}  \norm{\left( \mathcal{N} + \mathbb{T} + 1 \right)^{1/2} \chi } \norm{\left( \mathcal{N}^3 + \mathbb{T} + 1 \right)^{1/2} \phi } .
\end{align}

We conclude by estimating the terms from \eqref{eq:Norm approximation Gross
  transformed dynamics 2b}--\eqref{eq:Norm approximation Gross transformed
  dynamics 2e} by using that
\begin{subequations}
  \begin{align}
    \label{eq:expansion of the number operators}
    \pm \big(\big[1 - \tfrac{\mathcal N_b}{N} \big]_+ - 1 \big) & \leq N^{-1} \mathcal{N}_b , \\
    \pm \big( N^{-1} \sqrt{[(N-\mathcal N_b)(N-\mathcal N_b-1)]_+} - 1 \big)
    & \leq C  N^{-1} \mathcal{N}_b.
  \end{align}
\end{subequations}

\noindent\textit{Terms \eqref{eq:Norm approximation Gross transformed dynamics 2b} and  \eqref{eq:Norm approximation Gross transformed dynamics 2bb}}. Using \eqref{eq:estimate second quantisation of mean-field potential of V}, \eqref{eq:estimate K3-term} and $\id_{\mathcal N_b \le N} \chi = \chi$,
we can estimate the first two lines by
\begin{align}
  | \eqref{eq:Norm approximation Gross transformed dynamics 2b} | + | \eqref{eq:Norm approximation Gross transformed dynamics 2bb} |
  &\leq C N^{-1/2} \norm{\left( \mathcal{N} + 1 \right)^{1/2} \chi }
  \norm{\left( \mathcal{N} + 1 \right) \phi }.
\end{align}

\noindent \textit{Term \eqref{eq:Norm approximation Gross transformed
    dynamics 2c}}.  Using Lemmas \ref{lem:aux:bounds} and \ref{lem:ops-Fock}
we arrive at 
\begin{align}
  &|\eqref{eq:Norm approximation Gross transformed dynamics 2c}| =  2 \Big| \mathrm{Re} \scp{\chi }{  \int dx  dk\, \ell^{(1)}(x,k) a^{*}_k b_x^* \Big( \big[1 - \tfrac{\mathcal N_b}{N} \big]_{+}^{1/2} - 1 \Big)  \phi}\Big| \notag \\
  & \stackrel{\hphantom{\eqref{eq:expansion of the number operators}}}{\leq}
  \|\ell^{(1)}\|_{L^2\otimes \mathfrak{h}_{-1/2}} \norm{ \text{d} \Gamma_a(\omega)^{1/2} \chi } \norm{\left( \mathcal{N} + 1 \right)^{1/2}  ([1 - \tfrac{\mathcal N_b}{N} ]_{+}^{1/2} - 1 ) \phi } \notag \\
  &\qquad \qquad + \|\ell^{(1)}\|_{L^2\otimes \mathfrak{h}_{-1/4}}  \norm{ ([1 - \tfrac{\mathcal N_b}{N} ]_{+}^{1/2} - 1 ) \chi }
  \norm{ \text{d} \Gamma_a(\sqrt{\omega})^{1/2}  \mathcal{N}_b ^{1/2} \phi } \notag\\
  &\stackrel{\eqref{eq:expansion of the number operators}}{\leq}
  CN^{-1}   \norm{ \text{d} \Gamma_a(\omega)^{\frac12} \chi } \norm{\left( \mathcal{N} + 1 \right)^{\frac32}   \phi }
  + CN^{-1}\norm{\mathcal{N}_b   \chi }\norm{ \text{d} \Gamma_a(\sqrt{\omega})^{\frac12} \mathcal{N}_b^{\frac12} \phi }
  \notag   \\
  &\stackrel{\eqref{eq:normbound:dGamma(omega^s/2)}}{\leq} C N^{-\frac12} (\norm{ \text{d} \Gamma_a(\omega)^{\frac12} \chi } +\norm{ N^{-\frac12}\mathcal{N}_b   \chi }) (\norm{\left( \mathcal{N} + 1 \right)^{3/2}   \phi } + \norm{\mathbb{T}^{\frac12}\phi} ).
\end{align}
This implies the claimed bound since $\id_{\cN_b\leq N}\chi=\chi$.\medskip

\noindent \textit{Term \eqref{eq:Norm approximation Gross transformed
    dynamics 2d}}. This term is treated in close analogy to the previous one,
leading to
\begin{align}
  | \eqref{eq:Norm approximation Gross transformed dynamics 2d}|
  &\leq N^{-1/2} \norm{\left( \mathcal{N} + \mathbb T + 1 \right)^{1/2} \chi }
  \norm{\left( \mathcal{N}^3 + \mathbb T + 1 \right)^{1/2} \phi } .
\end{align}
\textit{Term \eqref{eq:Norm approximation Gross transformed dynamics 2e}}.
By means of Lemmas \ref{lem:aux:bounds} and \ref{lem:ops-Fock}, and
\eqref{eq:expansion of the number operators} we get
\begin{align}
  | \eqref{eq:Norm approximation Gross transformed dynamics 2e} |
  &\leq C N^{-1} \norm{\left( \mathcal{N} + 1 \right)^{1/2} \chi }
  \norm{\left( \mathcal{N} + 1 \right)^{3/2} \phi} .
\end{align}

This completes the proof of the lemma.
\end{proof}

\subsection{Estimates for the dressing
  transformation}\label{section:estimates-dressing}

In this section, we will derive estimates for the fluctuation generator
associated with the dressing transformation $D_{u,\alpha}$ defined in
\eqref{eq:def:fluc:gen:dressing:trafo}, and its quadratic approximation
$\mathbb{D}^\Lambda_{u,\alpha}$ defined in \eqref{eq:Gross-Bog-generator}.
We also give the proof of Lemma~\ref{lem:beta:gamma:relation}.

\begin{lemma} \label{lem:bounds:Bog:generator:Gross} Let $\mathbb
  D^\Lambda_{u,\alpha}(\theta)$ and $D_{u,\alpha}(\theta)$ be defined by
  \eqref{eq:Gross-Bog-generator} and \eqref{eq:def:fluc:gen:dressing:trafo}.
  There exists a constant $C>0$, such that for all $(u,\alpha) \in
  H^1(\mathbb R^3) \oplus \mathfrak h_0$ with $\| u\|_{L^2} = 1 $, $|\theta|
  \le 1 $ and $\Lambda \in \R_+\cup\{\infty\}$
  \begin{align*}
    \pm   \mathbb D^{\Lambda}_{u,\alpha} (\theta)   & \le   C ( \| u\|_{H^1}^2 + \| \alpha \|_{\mathfrak h_0} )    ( \mathcal N+ 1 ) \\
    \pm i [\mathcal N , \mathbb D_{u,\alpha}^{\Lambda}(\theta) ]  & \le   C ( \| u\|_{H^1}^2 + \| \alpha \|_{\mathfrak h_0} )  (\mathcal N + 1) \\
    \pm \tfrac{d}{d\theta} \mathbb D_{u,\alpha}^{\Lambda}(\theta) & \le   C (\| u\|_{H^1}^{3/2} + \| \alpha \|_{\mathfrak h_0} ) (\mathcal N + 1) \\
    \pm i [\mathcal N ,  D_{u,\alpha}(\theta) ]  & \le  C( \| u\|_{H^1}^2 + \| \alpha \|_{\mathfrak h_0} )    (\mathcal N+ 1) (1 + (\tfrac{1}{N}\mathcal N_b)^{1/2})   
  \end{align*}
  in the sense of quadratic forms on $\mathcal F\otimes \mathcal F$ and
  \begin{align*}
    \big| \langle \phi , ( D_{u,\alpha}(\theta) - \mathbb D_{u,\alpha}^\infty(\theta) ) \chi \rangle \big| \le C  \| \phi \| \, \| (\mathcal N+1)^{3/2} \chi \| N^{-1/2}
  \end{align*}
  for all $\phi,\chi \in \cF \otimes \cF$.
\end{lemma}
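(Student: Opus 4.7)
The plan is to reduce each of the five claims to an application of Lemma~\ref{lem:ops-Fock}, once we have kernel estimates for the coefficients of $\mathbb D^{\Lambda}_{u,\alpha}(\theta)$ and $D_{u,\alpha}(\theta)$. The preliminary bounds I would establish are all uniform in $\Lambda\in \R_+\cup\{\infty\}$ and $|\theta|\le 1$: (i) $\|\tau_{u,\alpha}\|_{L^\infty}\le C\|\alpha\|_{\mathfrak h_0}$, by Cauchy--Schwarz since $\|iB_0\|_{L^2}<\infty$; (ii) $\|\kappa^{\Lambda}_{u^\theta}\|_{L^2(\R^6)}\le C$, using $\kappa^{\Lambda}_{u^\theta}(k,x)=iB^{\Lambda}_x(k)u^\theta(x)-\langle u^\theta,iB^{\Lambda}_\cdot(k)u^\theta\rangle u^\theta(x)$ together with $\|u^\theta\|_{L^2}=\|u\|_{L^2}=1$ (which is preserved by $\mathfrak D[\theta]$, cf.~\eqref{eq:explicit:solution:dressing}); and (iii) $\|\partial_\theta \kappa^{\Lambda}_{u^\theta}\|_{L^2(\R^6)}\le C\|\alpha\|_{\mathfrak h_0}$, from $i\partial_\theta u^\theta=\tau_{u,\alpha}u^\theta$ and bound (i).

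Given these, the first three inequalities follow by standard arguments: the diagonal part $\textnormal{d}\Gamma_b(\tau_{u,\alpha})$ is controlled by $\|\tau_{u,\alpha}\|_{L^\infty}\mathcal N_b$, while the off-diagonal quadratic forms $\int dxdk\,\kappa^\Lambda_{u^\theta}(k,x)a_k^*b_x^*$ and $\int dxdk\,\kappa^\Lambda_{u^\theta}(-k,x)a_kb_x^*$ (together with their adjoints) are bounded by Lemma~\ref{lem:ops-Fock} with $n_a=n_b=1$, $m_a=m_b=0$ and respectively $n_b=m_a=1$, $n_a=m_b=0$, giving $C\|\kappa^\Lambda_{u^\theta}\|_{L^2(\R^6)}(\mathcal N+1)$. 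The commutator with $\mathcal N$ preserves the form of these terms (picking up a factor $2$ on the $a^*b^*$ block and vanishing on the number-conserving $ab^*$ block), so the bound transfers. For $\partial_\theta \mathbb D^{\Lambda}_{u,\alpha}(\theta)$, the same argument with $\kappa^\Lambda_{u^\theta}$ replaced by $\partial_\theta \kappa^\Lambda_{u^\theta}$ and bound (iii) gives the claim.

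For the fourth and fifth statements, I would decompose $D_{u,\alpha}(\theta)-\mathbb D^{\infty}_{u,\alpha}(\theta)$ into (a) the factors $\sqrt{[1-\mathcal N_b/N]_+}-1$ sitting to the right of the quadratic $a^\#b^*$ terms in $D_{u,\alpha}(\theta)$ and (b) the cubic term $N^{-1/2}\int dx\,b_x^*(q_{u^\theta}\hat\Phi(iB_x)q_{u^\theta}-\langle u^\theta,\hat\Phi(iB_\cdot)u^\theta\rangle)b_x$. For (a), the pointwise estimate $|\sqrt{[1-\mathcal N_b/N]_+}-1|\le \mathcal N_b/N$ combined with Lemma~\ref{lem:ops-Fock} gives a contribution bounded by $CN^{-1}\|\phi\|\,\|(\mathcal N+1)^{3/2}\chi\|$. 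For (b), Lemma~\ref{lem:ops-Fock} applied to the cubic form—using that the kernel $(y,y')\mapsto q_{u^\theta}(y,x)q_{u^\theta}(x,y')$ is bounded in $L^2$ uniformly in $x$ and that $\hat\Phi(iB_x)$ contributes only $\|iB_0\|_{L^2}$—yields $CN^{-1/2}\|\phi\|\,\|(\mathcal N+1)^{3/2}\chi\|$, which dominates the contribution from (a). The commutator $[\mathcal N,D_{u,\alpha}(\theta)]$ is handled in the same spirit: on the quadratic part the factor $\sqrt{[1-\mathcal N_b/N]_+}$ commutes with $\mathcal N$ and the argument for $\mathbb D^\Lambda$ applies, while on the cubic part only $\hat\Phi(iB_x)$ does not commute with $\mathcal N_a$, producing $a^*(iB_x)-a(iB_x)$ and a contribution of order $CN^{-1/2}(\mathcal N+1)(\mathcal N_b+1)^{1/2}=C(\mathcal N+1)(\mathcal N_b/N)^{1/2}$, of the claimed shape.

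The principal technical subtlety will be the cubic term in $D_{u,\alpha}(\theta)$: the subtraction of $\langle u^\theta,\hat\Phi(iB_\cdot)u^\theta\rangle$ and the presence of the projections $q_{u^\theta}$ are crucial for fitting the kernel of the commutator and of $D-\mathbb D^\infty$ into the $L^2$ framework of Lemma~\ref{lem:ops-Fock} with only a power $(\mathcal N+1)^{3/2}$ on the right. A more naive bound ignoring the cancellation induced by the subtraction would produce an uncontrolled trace and fail to close the estimate at order $N^{-1/2}$. Everything else is a careful but routine bookkeeping of operator bounds via Lemma~\ref{lem:ops-Fock} and the kernel estimates above.
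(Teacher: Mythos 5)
Your proposal is correct and follows essentially the same route as the paper: the same three kernel bounds ($\|\tau_{u,\alpha}\|_{L^\infty}\lesssim\|\alpha\|_{\mathfrak h_0}$, $\|\kappa^\Lambda_{u^\theta}\|_{L^2(\R^6)}\lesssim 1$, $\|\partial_\theta\kappa^\Lambda_{u^\theta}\|_{L^2(\R^6)}=\|\tau_{u,\alpha}\kappa^\Lambda_{u^\theta}\|_{L^2(\R^6)}\lesssim\|\alpha\|_{\mathfrak h_0}$), Lemma~\ref{lem:ops-Fock} for the quadratic terms, and the same split of $D_{u,\alpha}(\theta)-\mathbb D^\infty_{u,\alpha}(\theta)$ into the $\big[1-\tfrac{\mathcal N_b}{N}\big]_+^{1/2}-1$ factors and the cubic term, with the cubic commutator giving the $N^{-1/2}\mathcal N_b\mathcal N_a^{1/2}$ contribution. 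The only inaccuracy is your closing remark: no cancellation from the subtraction of $\langle u^\theta,\hat\Phi(iB_\cdot)u^\theta\rangle$ is actually needed, since each piece of the cubic term is separately bounded by $C\|B_0\|_{L^2}N^{-1/2}\mathcal N_b\mathcal N_a^{1/2}$ (the relevant operator norm being $\sup_x\|B_x\|_{L^2}=\|B_0\|_{L^2}$, not an $L^2$ bound on the distributional kernel of $q_{u^\theta}$).
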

\begin{proof} Recall that $\mathbb D_{u,\alpha}^\Lambda(\theta)$ and
  $D_{u,\alpha}(\theta)$ are defined w.r.t. the mean-field flow
  $(u^\theta,\alpha^\theta) = \mathfrak D[\theta](u,\alpha)$ and that $|
  u^\theta | =| u | $ by \eqref{eq:explicit:solution:dressing}. One readily
  shows that $\| \tau _{u,\alpha} \|_{L^\infty} \le 3 \| B_0\|_\2 \| \alpha
  \|_\2 $, and $\| \kappa^\Lambda_{u^\theta} \|_{L^2 (\mathbb R^6)} \le \|
  B_0 \|_\2$. By means of \eqref{eq:explicit:solution:dressing} and
  \eqref{eq:bound:L4:varphi} one further obtains
  \begin{align}
    \| \partial_\theta \kappa^{\Lambda}_{u^\theta} \|_{L^2(\mathbb R^6)} = \| \tau_{u,\alpha} \kappa_{u^\theta}^\Lambda \|_{L^2(\mathbb R^6)}  \le 3 \| B_0\|^2_\2 \| \alpha\|_\2 .
  \end{align}

  Using this, the estimates involving $\mathbb D^\Lambda_{u,\alpha}(\theta)$
  follow standard bounds for creation and annihilation operators that are
  special cases of Lemma~\ref{lem:ops-Fock}.

  Since $0\le [ 1- \tfrac{\mathcal N_b}{N} ]_+ \le 1 $ the commutator of
  $\mathcal N$ with the first two terms in
  \eqref{eq:def:fluc:gen:dressing:trafo} can be bounded as before.  Now let
  us denote the last term in \eqref{eq:def:fluc:gen:dressing:trafo}, which is
  cubic in the creation/annihilation operators, by
  $D_{u,\alpha}^{(3)}(\theta)$. Using the canonical commutation relations and
  again standard estimates for creation and annihilation operators, one
  obtains
  \begin{align}
    \label{eq:commutator:bound:D(3)}
    \pm i [\mathcal N,  D_{u,\alpha}^{(3)} (\theta)] &  \le 4 \| B_0 \|_{L^2} N^{-1/2} \mathcal N_b\, \mathcal N_a^{1/2}.
  \end{align}
  This proves the bound on the commutator $[\cN,D_{u,\alpha}^{(3)}]$.

  To show the last inequality, write
  \begin{align}
    & D_{u,\alpha}(\theta) - \mathbb D^\infty_{u,\alpha}(\theta) \\
    &=  \int dx dk  \Big( \kappa^\infty_{u^\theta}( k,x )  a_k^*  - \kappa^\infty_{u^\theta}( - k , x ) a_{ k} \Big) b_x^* \big( \big[ 1  - \tfrac{\mathcal N_b}{N}]_+^{1/2} -1 \big)   \notag \\
    & \quad  +   \int dx dk \Big( \overline{\kappa^\infty_{u^\theta}( k,x )}  a_k  - \overline{\kappa^\infty_{u^\theta}( - k , x )} a_{ k}^* \Big) b_x \big( \big[ 1  - \tfrac{\mathcal N_b-1}{N}]_+^{1/2} -1 \big)    + D_{u,\alpha}^{(3)}(\theta).\notag
  \end{align}
  It is straightforward to show that
  \begin{equation}
    |\langle \phi , D_{u,\alpha}^{(3)} (\theta) \chi \rangle | \le  4 \| B_0 \|_{L^2} \| \phi\| \|  \mathcal N_b\, \mathcal N_a^{1/2} \chi\| N^{-1/2}.
  \end{equation}
  For the first two terms, the bounds $\| \kappa^\infty_{u^\theta}
  \|_{L^2(\mathbb R^6) } \le \| B_0 \|_{\2}$ and $( [ 1 - \tfrac{\mathcal
    N_b-j}{N} ]_+^{1/2} -1 )^2 \le C N^{-1} (\mathcal N_b+1)$ for $j\in
  \{0,1\}$ imply that
  \begin{multline}
    \Big| \langle \phi, \int dx dk \, \Big( \kappa^\infty_{u^\theta}( k,x )  a_k^*  - \kappa^\infty_{u^\theta}( - k , x ) a_{ k} \Big) b_x^\bullet \big( \big[ 1  - \tfrac{\mathcal N_b-j}{N}]_+^{1/2} -1 \big)    \chi \rangle \Big| \\
    \le C \| \phi \|   \| (\mathcal N_b+1)^{1/2} \mathcal N_{b}^{1/2}\mathcal N_a^{1/2}  \chi\| N^{-1/2}
  \end{multline}
  where $\bullet\in \{\varnothing,*\}$. This completes the proof of the
  lemma.
\end{proof}

We now turn to the proof of Lemma \ref{lem:beta:gamma:relation}, which
relates the energy of excitations in $W^\mathrm{D} \Psi_N$ to the difference
of the energy of $\Psi$ to its mean-field energy. As explained below the
statement of the lemma, it is not possible to apply the strategy of the proof
of Theorem \ref{thm:gross-transformed dynamics reduced density matrices}
because $\mathbb T$ is not dominated by the generator $D_{u,\alpha}(\theta)$
in \eqref{eq:def:fluc:gen:dressing:trafo}. Instead, our proof relies on
comparing $X_{\mathfrak D}^* \mathbb T X_{\mathfrak D}$ directly with the
difference between the many-body energy per particle and the dressed
mean-field energy $\mathcal E_1$, evaluated at $\mathfrak
D(u,\alpha)$. Energy estimates of this kind were previously used in a
different context in \cite{KP2010}.

\begin{proof}[Proof of Lemma \ref{lem:beta:gamma:relation}]

  We recall Lemma \ref{lem:H:D:representation} and the fact that
  $\mathcal{E}= \mathcal E_1 \circ \mathfrak D$ as shown by Equation
  \eqref{eq:proof:conversation:mf:energy} for $\mathcal E_0 = \mathcal
  E$. With this at hand, we write the difference between the many-body energy
  per particle and the mean-field energy as
  \begin{align}
    \label{eq:relation energy difference}
    &  N^{-1} \scp{\Psi_N}{H_N \Psi_N} - \mathcal{E} (u , \alpha)
    =  N^{-1} \scp{ W^{\rm D}  \Psi_N}{ H_N^{\rm D} W^{\rm D}  \Psi_N} - \mathcal{E}_1 \circ \mathfrak D(u,\alpha)  .
  \end{align}
  Moreover, for $\zeta = {X}_{\mathfrak D(u,\alpha) } W^{\rm D}
  \Psi$ 
  , we can use \eqref{eq:gamma:second:quant} to write the relevant $\gamma$
  functional as
  \begin{align}
    \label{eq:relation between kinetic energy on excitation space und corresponding functional on N-particle space}
    \gamma\big[ W^{\rm D} \Psi_N, \mathfrak D(u ,\alpha) \big] = N^{-1} \scp{\zeta}{  \mathbb T  \zeta}.
  \end{align}

  To relate the expressions on the right-hand side of \eqref{eq:relation
    energy difference} and \eqref{eq:relation between kinetic energy on
    excitation space und corresponding functional on N-particle space}, we
  make use of the excitation map $X_{\mathfrak D(u,\alpha)}$. To do so, we
  rewrite $H_N^{\rm D}$ in terms of the fluctuation generator $H_{\mathfrak
    D(u,\alpha)}^{\rm D}(0)$ from \eqref{eq:def:fluct:generator}. This will
  allow us to employ previously established estimates.

  To ease up the notation, we set from now on
  $(u^\mathrm{D},\alpha^\mathrm{D}) = \mathfrak D(u,\alpha)$ and the
  shorthand $q=q_{u^\mathrm{D}}$, $h=h_{\mathfrak{D}(u,\alpha)}$,
  $f=f_{u^\mathrm{D}}$, $g=g_{u^\mathrm{D},\alpha^\mathrm{D}}$ (see
  \eqref{eq:def:dressed:meanfield:Ham}--\eqref{eq:def:g} for the definitions
  of these objects).  We can employ the results from Appendix
  \ref{section:Generator of the fluctuation dynamics of the Gross-transformed
    Nelson dynamics} to obtain
  \begin{align}
    X_{\mathfrak D(u,\alpha)} H_N^{\rm D} X_{\mathfrak D(u,\alpha)}^* & = H^{\rm D}_{\mathfrak D(u,\alpha)} (0)
    + \scp{u^\mathrm{D}}{h u^\mathrm{D}} \big( N - \mathcal{N}_b \big)+ N \scp{\alpha^\mathrm{D}}{\omega \alpha^\mathrm{D}}
    \nonumber \\
    &\quad
    + \sqrt{N} \Phi \big( \omega \alpha^\mathrm{D} + f + g \big) - b^* (u^\mathrm{D} ) b (q h u^\mathrm{D})
    \nonumber \\
    &\quad
    + \sqrt{N - \mathcal{N}_b}  b (q h u^\mathrm{D})
    + b^* (q h u^\mathrm{D}) \sqrt{N - \mathcal{N}_b}.
  \end{align}
  From the first inequality of Lemma \ref{lem:bounds:fluc:generator}, the
  fact that $\id_{\cN_b\leq N}\zeta=\zeta$, and \eqref{eq:expansion of the
    number operators}, we get
  \begin{align}
    \langle \zeta, \mathbb T \zeta\rangle
    &\leq 2 \langle \zeta, H_{ \mathfrak D(u , \alpha) }^{\rm D}(0) \zeta\rangle  + C \langle \zeta, \mathcal{N}  \zeta\rangle.
  \end{align}
  With the formula \eqref{eq:theta:mf-energy} for the dressed mean-field
  energy $\mathcal{E}_1$, we arrive at
  \begin{align}
    & N^{-1} \scp{ \zeta }{ \mathbb T  \zeta} -\abs{N^{-1} \scp{\Psi_N}{H_N \Psi_N} - \mathcal{E}( u, \alpha) } \label{eq:energy difference rest}\\
    & \leq C N^{-1}\langle \zeta, \mathcal{N}  \zeta\rangle + \Big|-N^{-1} \scp{ u^\mathrm{D}}{h u^\mathrm{D}} \scp{ \zeta }{\mathcal{N}_b \zeta}
    -N^{-1} \scp{ \zeta}{b^* (u^\mathrm{D} ) b (q h u^\mathrm{D}) \zeta}
    \notag\\
    & \qquad +N^{-\frac12} \scp{ \zeta }{\Phi \big( \omega \alpha^\mathrm{D} + f +g \big) \zeta }
    + 2N^{-\frac12}  \Re \scp{ \zeta }{\sqrt{1 - \mathcal{N}_b/N} b (q h u^\mathrm{D}) \zeta}\Big|. \notag
  \end{align}
  We bound the terms on the right hand side of in~\eqref{eq:energy difference
    rest} by (using that $\|\zeta\|=\|\Psi_N\|=1$)
  \begin{multline}
    |\eqref{eq:energy difference rest}| \leq C \scp{\zeta}{\mathcal{N} \zeta} N^{-1}\big(1 + \norm{h u^\mathrm{D}}_{L^2}
    \big)\\
    +CN^{-1/2} \big( \norm{\omega \alpha^\mathrm{D}  + f +g}_{L^2}   \norm{\mathcal{N}_a^{1/2} \zeta}    + C\norm{\mathcal{N}_b^{1/2} \zeta} \big).
  \end{multline}
  By Lemmas \ref{lem:bounds:f:g} and \ref{lemma:bounds:nabla:phi}, the norms
  of $f$, $g$, $hu^\mathrm{D}$ are bounded in terms of the
  $H^2\oplus\mathfrak{h}_{1/2}$-norm of $(u^\mathrm{D}, \alpha^\mathrm{D})$,
  which by Lemma \ref{lem:mf-dressing-regularity} is controlled by the norm
  of $(u,\alpha)\in H^2\oplus \mathfrak{h}_{3/2}$.  Thus there exists a
  constant $C$, depending on this norm, so that
  \begin{align}
    |\eqref{eq:energy difference rest}| \leq C \big(N^{-1}\scp{\zeta}{\mathcal{N} \zeta} +  (N^{-1} \scp{\zeta}{\mathcal{N} \zeta})^{1/2}\big). \label{eq:gamma-zeta-bound}
  \end{align}
  Moreover by \eqref{eq:beta:second:quant} and Lemma~\ref{lemma:reduced
    densities dressing flow} for $\theta=1$, we have
  \begin{align}
    \label{eq: relation between number operator and functional on N-particle space}
    N^{-1} \scp{\zeta}{ \mathcal{N}  \zeta}
    & = \beta \big[ W^{\rm D} \Psi_N, \mathfrak D( u , \alpha ) \big]  \le C \big(  \beta \big[\Psi_N, ( u , \alpha) \big] + N^{-1} \big).
  \end{align}
  Combined with~\eqref{eq:gamma-zeta-bound} this proves the statement of the
  Lemma.
\end{proof}


\appendix

\section{Initial states}
\label{sect:initial states prop}

\begin{proof}[Proof of Proposition \ref{proposition:class of suitable initial states}]
  Let $W^{\rm D}_{\geq K} = W(N^{-1/2} \sum_{j=1}^N B_{K,x_j})$ and $W^{\rm
    D}_{\geq K, x_j} = W(N^{-1/2} B_{K,x_j})$. The first inequality of the
  Proposition can be obtained similarly as \cite[Prop.~II.2]{LMS2021}. More
  explicitly, we use \eqref{eq: Weyl operators product} and \eqref{eq: Weyl
    operators shift property} to estimate
  \begin{align}
    N^{-1} \norm{\mathcal{N}_a^{1/2} W^*(\sqrt{N} \alpha) \Psi_{N,K}}
    &= N^{-1} \int d^3 k \,
    \norm{a_k (W^{\rm D}_{\geq K})^* (u^{\otimes N} \otimes \Omega)}^2 \notag \\
    & \leq \norm{B_{K,0}}_{L^2 }^2   \leq C K^{-2} .
  \end{align}
  By means of
  \begin{align}
    W^{\rm D}_{\geq K} (q_u)_1  (W^{\rm D}_{\geq K})^*
    &= (q_u)_1 + \ket{u} \bra{u}_1  - W^{\rm D}_{\geq K, x_1} \ket{u} \bra{u}_1 (W^{\rm D}_{\geq K,x_1})^*
  \end{align}
  and
  \begin{align}
    \big( 1 - (W^{\rm D}_{\geq K,x_1}) \big) \big( 1 - (W^{\rm D}_{\geq K,x_1})^* \big) \leq N^{-1} \hat{\Phi}(i B_{K,x_1})^2
  \end{align}
  we get
  \begin{align}
    \abs{\scp{\Psi_{N,K}}{(q_u)_1 \Psi_{N,K}}}
    &\leq 2 \norm{\left( 1 - (W^{\rm D}_{\geq K,x_1})^* \right)  u^{\otimes N}\otimes W(\sqrt{N}\alpha)\Omega}  
    \nonumber \\
    &\leq 2 N^{-1/2} \norm{B_{K,0}}_{L^2}  \norm{\big( \mathcal{N}_a + 1 \big)^{1/2}  u^{\otimes N}\otimes W(\sqrt{N}\alpha)\Omega} \notag \\
    & \leq 
    C K^{-1} 
    \big( N^{-1/2}  +  \norm{\alpha}_{L^2}  \big), 
  \end{align}
  and thus $\beta[\Psi_{N,K}, u, \alpha] \leq C K^{-1} \left( 1 +
    \norm{\alpha}_{L^2} \right)$.
  Since $W^{\rm D}_{\geq K} = \prod_{j=1}^N W^{\rm D}_{\geq K,x_j}$, the
  transformation relations of the dressing transformation from \cite[Section
  II]{GW2018} lead to
  \begin{align}
    W^{\rm D}_{\geq K} H_N (W^{\rm D}_{\geq K})^* 
    &= \sum_{j=1}^N
    \Big[ - \Delta_{j}
    + N^{-1/2} \hat{\Phi}  (\id(\abs{\cdot} \leq K) G_{x_j}) 
    \nonumber \\
    &\quad + N^{-1} \big( a( k B_{K, x_j})^2 + \text{h.c.} + 2a^*( k B_{K, x_j})a( k B_{K, x_j})\big) \notag \\
    &\quad 
    - 2  N^{-1/2} \big( 
    i \nabla_{x_j} \cdot a (k B_{K,x_j}) + a^* (k B_{K,x_j}) \cdot i \nabla_{x_j} \big)
    \Big]
    \nonumber\\
    &\quad 
    + N^{-1} \sum_{i < j}  V_{K}(x_i - x_j)
    + \textnormal{d} \Gamma_a (\omega) +  E_K
  \end{align}
  with
  \begin{align}
    V_{K}(x_i - x_j)
    &= 2 \Re \scp{B_{K,x_i}}{\omega B_{K,x_j}} - 4 \Re \scp{G_{x_i}}{B_{K,x_j}}
  \end{align}
  and
  \begin{align}
    E_{K} = \int_{\abs{k} \leq K} \frac{dk}{\omega(k) \left( k^2 + \omega(k) \right)} .
  \end{align}
  The shifting property of the Weyl operator \eqref{eq: Weyl operators shift
    property} then lets write the expectation value of the energy per
  particle as
  \begin{multline}
    N^{-1} \scp{\Psi_{N,K}}{H_N \Psi_{N,K}} =  N^{-1} E_K   \mathcal{E}( u,\alpha) \\
    +\Big\langle u,\big( 2 \Re \scp{ G_{(\cdot)}}{\alpha_{\geq K}}
    + A_{\alpha_{\geq K},(\cdot)} + F_{\alpha_{\geq K}}^2 
    + \frac{1}{2} V_{K} * \abs{u}^2 \big) u\Big\rangle
  \end{multline}
  with $\alpha_{\geq K} = \id_{\abs{\cdot} \geq K} \alpha$ and $A_\alpha$,
  $F_\alpha$ as defined in \eqref{eq:def:A_alpha},
  \eqref{eq:def:F_alpha}. Note that $\abs{E_K} \leq C \left( 1 + \ln K
  \right)$. By means of $\sup_{x \in \mathbb{R}^3} \abs{\scp{k
      B_x}{\alpha_{\geq K}}} \leq C K^{-1} \norm{\alpha}_{\mathfrak{h}_1}$,
  $\norm{V_K} \leq C K^{-3/2}$ and $\norm{u}_{L^2} = 1$ we get
  \begin{align}
    \abs{\scp{u}{A_{\alpha_{\geq K},(\cdot)} u}} &\leq C K^{-1} \norm{\alpha}_{\mathfrak{h}_1} \norm{u}_{H^1},
    \quad 
    \norm{F_{\alpha_{\geq K}}}_{L^{\infty}} \leq C K^{-1}  \norm{\alpha}_{\mathfrak{h}_1} 
  \end{align}
  and
  \begin{align}
    \norm{V_K * \abs{u}^2}_{L^{\infty}} 
    &\leq \norm{V_K}_{L^2} \norm{\abs{u}^2}_{L^2} 
    \leq C K^{-3/2} \norm{u}_{H^1}^2  .
  \end{align}
  Inequality \eqref{eq:bound:G,alpha} with $G_{(\cdot)} \mapsto \omega^{-1/2}
  G_{(\cdot)}$ and $\alpha \mapsto \omega^{1/2} \alpha$ leads to
  \begin{equation}
    \norm{\Re \scp{ G_{(\cdot)}}{\alpha_{\geq K}} u}_{L^2} \leq C K^{-1} \norm{\alpha}_{\mathfrak{h}_1} \norm{u}_{H^1}.
  \end{equation}
  In total, we obtain
  \begin{multline}
    \abs{N^{-1} \scp{\Psi_{N,K}}{H_N \Psi_{N,K}} - \mathcal{E} (u, \alpha) }\\
    \leq C  \left(   K^{-1} +  N^{-1} (1+ \ln K )  \right) \left( \norm{u}_{H^1}^2 + \norm{\alpha}_{\mathfrak{h}_1}^2 \right) .
  \end{multline}
\end{proof}

\section{Bogoliubov transformations}
\label{app:Bogoliubov:transformations}
For a linear map $T$ on a complex Hilbert space, we denote by $\bar T
f=\overline{T \bar f}$ its complex conjugate.

\begin{lemma}\label{lem:Bogoliubov-convergence}
  Let $\mathscr{H}$ be a Hilbert space and $\mathbb{U}_n$ with $n\in
  \mathbb{N}$ be a family of unitary Bogoliubov transformations on the Fock
  space over $\mathscr{H}$, that is, there exist $\mathfrak{u}_n$ linear,
  bounded and $\mathfrak{v}_n$ Hilbert Schmidt, so that
  \begin{equation*}
    \mathbb{U}_n^* a^*(f) \mathbb{U}_n =a^*(\mathfrak{u}_n f) + a(\mathfrak{v}_n \bar f), 
    \qquad \mathbb{U}_n^* a(f) \mathbb{U}_n = a(\overline{\mathfrak{u}}_n f) + a^*( \overline{ \mathfrak{v}_n  f}).
  \end{equation*}
  Assume that
  \begin{equation*}
    \mathbb{U}_\infty:=\slim_{n\to \infty} \mathbb{U}_n
  \end{equation*}
  exists, and moreover there exists a self-adjoint $A$, $D(A)$ with $A\geq 1$
  and $C>0$ so that for all $n\in \mathbb{N}\cup\{\infty\}$ and $\Psi\in
  D(\textnormal{d} \Gamma(A)^{1/2})$ it holds
  \begin{equation*}
    \langle \mathbb{U}_n\Psi, \cN \mathbb{U}_n\Psi\rangle \leq C \langle  \Psi, (1+\textnormal{d}\Gamma(A))\Psi\rangle.
  \end{equation*}
  Then $\mathbb{U}_\infty$ is a Bogoliubov transformation and the
  corresponding maps $\mathfrak{u},\mathfrak{v}$ satisfy $\|
  \mathfrak{u}\|_{\mathscr{H}\to \mathscr{H}} \leq C+1$, $\| \mathfrak{v}
  \|_{\mathfrak{S}_2(\mathscr{H})}\leq C$.
\end{lemma}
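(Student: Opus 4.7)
The plan is to extract the Bogoliubov coefficients $\mathfrak{u}, \mathfrak{v}$ of the limit $\mathbb{U}_\infty$ from weak operator limits of $\mathfrak{u}_n, \mathfrak{v}_n$ and to pass to the limit in the canonical Bogoliubov relation by exploiting the number-operator bound to control creation/annihilation operators on $\mathbb{U}_n\Psi$.

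First I would establish uniform bounds. Applying the relation
\begin{equation*}
\mathbb{U}_n^* a^*(f)\mathbb{U}_n\Omega = a^*(\mathfrak{u}_n f)\Omega + a(\mathfrak{v}_n \bar f)\Omega = a^*(\mathfrak{u}_n f)\Omega
\end{equation*}
and using unitarity together with the CCR gives $\|\mathfrak{u}_n f\|^2 = \|f\|^2 + \|a(f)\mathbb{U}_n\Omega\|^2 \leq \|f\|^2 (1+\|\mathcal{N}^{1/2}\mathbb{U}_n\Omega\|^2)$. Applying the adjoint relation to $\Omega$ and summing over an orthonormal basis shows $\|\mathfrak{v}_n\|_{\mathfrak{S}_2}^2 = \|\mathcal{N}^{1/2}\mathbb{U}_n\Omega\|^2$. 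Since $\langle\Omega, \mathrm{d}\Gamma(A)\Omega\rangle=0$, the assumption yields $\|\mathcal{N}^{1/2}\mathbb{U}_n\Omega\|^2\leq C$, hence $\|\mathfrak{u}_n\|\leq \sqrt{1+C}\leq C+1$ and $\|\mathfrak{v}_n\|_{\mathfrak{S}_2}\leq \sqrt{C}$.

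Second, by Banach–Alaoglu applied to the unit balls of $\mathcal{B}(\mathscr{H})$ and $\mathfrak{S}_2(\mathscr{H})$, one extracts a subsequence (not relabelled) along which $\mathfrak{u}_n \to \mathfrak{u}$ in the weak operator topology and $\mathfrak{v}_n \to \mathfrak{v}$ weakly in $\mathfrak{S}_2$ (which, combined with the uniform operator bound $\|\mathfrak{v}_n\|\leq \|\mathfrak{v}_n\|_{\mathfrak{S}_2}$, in particular gives WOT convergence). By lower semicontinuity these limits satisfy $\|\mathfrak{u}\|\leq C+1$ and $\|\mathfrak{v}\|_{\mathfrak{S}_2}\leq C$.

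Third, I pass to the limit in the Bogoliubov relation on a dense set. Fix $h\in\mathscr{H}$ and $\Xi,\Psi\in D(\mathrm{d}\Gamma(A)^{1/2})$. On the one hand, strong convergence $\mathbb{U}_n\Xi\to\mathbb{U}_\infty\Xi$, $\mathbb{U}_n\Psi\to \mathbb{U}_\infty\Psi$ together with the uniform bound
\begin{equation*}
\|a^\#(h)\mathbb{U}_n\Psi\|^2 \leq \|h\|^2(1+\|\mathcal{N}^{1/2}\mathbb{U}_n\Psi\|^2) \leq C\|h\|^2\langle \Psi,(1+\mathrm{d}\Gamma(A))\Psi\rangle
\end{equation*}
yields $\langle\Xi,\mathbb{U}_n^* a^*(h)\mathbb{U}_n \Psi\rangle\to \langle\Xi,\mathbb{U}_\infty^* a^*(h)\mathbb{U}_\infty \Psi\rangle$. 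On the other hand, the functionals $g\mapsto \langle\Xi, a^*(g)\Psi\rangle$ and $g\mapsto \langle\Xi, a(g)\Psi\rangle$ are continuous linear (resp.\ antilinear) on $\mathscr{H}$ when $\Xi,\Psi\in D(\mathcal{N}^{1/2})$, and WOT convergence of $\mathfrak{u}_n,\mathfrak{v}_n$ implies weak convergence $\mathfrak{u}_n h\rightharpoonup \mathfrak{u}h$, $\mathfrak{v}_n\bar h \rightharpoonup \mathfrak{v}\bar h$ in $\mathscr{H}$, so that
\begin{equation*}
\langle\Xi,(a^*(\mathfrak{u}_n h)+a(\mathfrak{v}_n\bar h))\Psi\rangle \to \langle\Xi,(a^*(\mathfrak{u}h)+a(\mathfrak{v}\bar h))\Psi\rangle.
\end{equation*}
Equating the two limits and using the density of $D(\mathrm{d}\Gamma(A)^{1/2})$ proves $\mathbb{U}_\infty^* a^*(h)\mathbb{U}_\infty=a^*(\mathfrak{u}h)+a(\mathfrak{v}\bar h)$, i.e., $\mathbb{U}_\infty$ is a Bogoliubov transformation. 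Uniqueness of $(\mathfrak{u},\mathfrak{v})$ from this identity makes the extraction of a subsequence harmless: every subsequential limit coincides, so the full sequence converges.

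The main obstacle is the mismatch of topologies—strong for $\mathbb{U}_n$, weak operator for $\mathfrak{u}_n,\mathfrak{v}_n$, and norm for the $a^\#(h)\mathbb{U}_n\Psi$—and the number-operator hypothesis serves precisely to bridge this gap by furnishing the uniform domain bounds needed to interchange limits with $a^\#(h)$.
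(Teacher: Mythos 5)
Your proof is correct and follows essentially the same strategy as the paper: use the number-operator bound on $\Omega$ to get uniform control of $\|\mathfrak{u}_n\|$ and $\|\mathfrak{v}_n\|_{\mathfrak{S}_2}$, use it on general $\Psi\in D(\mathrm{d}\Gamma(A)^{1/2})$ to pass the strong convergence of $\mathbb{U}_n$ through $a^{\#}(f)$, and combine to identify a limiting Bogoliubov transformation. The only organizational variation is that you produce $\mathfrak{u},\mathfrak{v}$ by Banach--Alaoglu compactness (requiring separability of $\mathscr H$ for sequential extraction, which holds here) and remove the subsequence by uniqueness, whereas the paper identifies the weak limits directly from the convergent bilinear forms $\langle f,\mathfrak{u}_n g\rangle$, $\langle f,\mathfrak{v}_n g\rangle$ tested against $a^*(f)\Omega$; this is a cosmetic difference, not a different method.
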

\begin{proof}
  We start by showing that for $\Psi, \Phi \in D(\mathrm{d}\Gamma(A)^{1/2})$,
  $f\in \mathscr{H}$, and $\bullet\in \{\varnothing, *\}$
  \begin{equation}
    \lim_{n \to \infty} \langle \Phi, \mathbb{U}_n^* a^\bullet(f) \mathbb{U}_n \Psi \rangle = \langle \Phi, \mathbb{U}_\infty^* a^\bullet(f) \mathbb{U}_\infty \Psi \rangle.
  \end{equation}
  To see this, note that
  \begin{align}
    &\Big| \langle \Phi, \mathbb{U}_n^* a^\bullet(f) \mathbb{U}_n \Psi \rangle - \langle \Phi, \mathbb{U}_\infty^* a^\bullet(f) \mathbb{U}_\infty \Psi \rangle\Big|\\
    &\quad \leq \begin{aligned}[t]
      C \|f\|_\mathscr{H} \Big(& \|(\mathbb{U}_n - \mathbb{U}_\infty) \Phi\|_\cF \|(1+\mathrm{d}\Gamma(A))^{1/2} \Psi\|_\cF \\
      & +\|(\mathbb{U}_n - \mathbb{U}_\infty) \Psi\|_\cF \|(1+\mathrm{d}\Gamma(A))^{1/2} \Phi\|_\cF\Big),
    \end{aligned} \notag
  \end{align}
  which tends to zero since $\mathbb{U}_n$ converges strongly to
  $\mathbb{U}_\infty$.

  Now let $f\in D(A)$, so $a^*(f)\Omega\in
  D(\mathrm{d}\Gamma(A)^{1/2})$. Then we have, using that $\mathbb{U}_n$ is a
  Bogoliubov transformation for $n\in \mathbb{N}$,
  \begin{subequations}
    \begin{align}
      \langle a^*(f)\Omega, \mathbb{U}_\infty a^*(g)\mathbb{U}_\infty^* \Omega\rangle
      &= \lim_{n \to \infty}\langle a^*(f) \Omega, (a^*(\mathfrak{u}_n g)+a(\mathfrak{v}_n \bar g)) \Omega\rangle \notag \\
      &= \lim_{n \to \infty} \langle  f, \mathfrak{u}_n g\rangle,\\
      \langle a^*(f)\Omega, \mathbb{U}_\infty a(g)\mathbb{U}_\infty^* \Omega\rangle
      &= \lim_{n \to \infty}\langle a^*(f) \Omega, (a(\overline{\mathfrak{u}}_n g)+a^*(\overline{\mathfrak{v}_n   g})) \Omega\rangle \notag\\
      &= \lim_{n \to \infty} \langle  f, \overline{\mathfrak{v}_n    g}\rangle.
    \end{align}
  \end{subequations}
  Since moreover
  \begin{equation}
    | \langle a^*(f)\Omega, \mathbb{U}_\infty a^\bullet(g)\mathbb{U}_\infty^* \Omega\rangle | \leq \|f\|_\mathscr{H} \|g\|_\mathscr{H}  \overbrace{\|(1+\cN^{1/2})\mathbb{U}_\infty\Omega\|_\cF}^{\leq C+1},
  \end{equation}
  the operators $\mathfrak{u}_n, \mathfrak{v}_n$ converge weakly to operators
  $\mathfrak{u}, \mathfrak{v}$ with norm less than $C+1$.  Weak convergence
  of $\mathfrak{u}_n$, $\mathfrak{v}_n$, implies that for $\Phi\in
  D(\cN^{1/2})$, $\Psi\in \Fock$, and $f\in \mathscr{H}$
  \begin{equation}
    \lim_{n\to \infty} \langle \Phi, a(\mathfrak{u}_n f)\Psi\rangle = \langle \Phi, a( \mathfrak{u} f)\Psi\rangle, \qquad \lim_{\Lambda\to \infty} \langle \Phi, a( \mathfrak{v}_n f)\Psi\rangle = \langle \Phi, a( \mathfrak{v}   f)\Psi\rangle,
  \end{equation}
  and thus for $\Phi, \Psi\in D(\cN^{1/2})$
  \begin{equation}
    \langle \Phi, \mathbb{U}_\infty^* a^*(f) \mathbb{U}_\infty \Psi \rangle 
    = \langle a(\mathfrak{u} f) \Phi,  \Psi \rangle +\langle  \Phi, a(\mathfrak{v} \bar f) \Psi \rangle
    = \langle  \Phi, (a^*(\mathfrak{u}f) + a(\mathfrak{v} \bar f)) \Psi \rangle,
  \end{equation}
  and similarly for $a(f)$. Moreover, we have
  \begin{equation}
    \|\mathfrak{v}_n\|_{\mathfrak{S}_2}^2 = \|\cN^{1/2}\mathbb{U}_n\Omega\|^2 \leq C,
  \end{equation}
  so the sequence $\mathfrak{v}_n$ is bounded in $\mathfrak{S}_2$, whence it
  has a subsequence that converges weakly in $\mathfrak{S}_2$. Since $\langle
  f, \mathfrak{v}_n g\rangle=\mathrm{Tr}( |f\rangle \langle g|
  \mathfrak{v}_n)$, the limit must be $\mathfrak{v}$, so $\mathfrak{v}\in
  \mathfrak{S}_2$ with norm less than $C$.  This proves the claim.
\end{proof}

\begin{lemma}
  The unitaries $\mathbb W^\Lambda_{u,\alpha}(\theta)$ and $\mathbb
  U_{u,\alpha,1}^\Lambda(t)$ defined in Propositions
  \ref{eq:Gross-Bog-generator} and \ref{prop:theta-Bog} are Bogoliubov
  transformations under the hypothesis given there.
\end{lemma}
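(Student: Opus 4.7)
The plan is to split the argument by whether $\Lambda$ is finite and by which evolution we treat, since the regularity of the generator's off-diagonal kernels differs in each case.

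First I would handle the cases where the off-diagonal kernels of the quadratic generator are Hilbert--Schmidt on $L^2(\mathbb{R}^6)$. For $\mathbb W^\Lambda_{u,\alpha}(\theta)$ with $\Lambda\in\mathbb{R}_+\cup\{\infty\}$, this is the case because $\|\kappa^\Lambda_{u^\theta}\|_{L^2(\mathbb{R}^6)}\leq\|B_0\|_{L^2}$ uniformly in $\Lambda$. For $\mathbb U^\Lambda_{u,\alpha,1}(t)$ with finite $\Lambda$, the cutoff $\id_{|k|\leq\Lambda}$ makes $kB_0^\Lambda$ square-integrable, so the bounds of Lemma~\ref{lem:aux:bounds} (together with the trivial kernel bound for the term coupling $a^*_k b^*_x$) place all off-diagonal kernels in $L^2$, while the diagonal kernels are one-body and bounded below by the non-interacting generator. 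In both situations the standard correspondence between quadratic generators of this type and unitary Bogoliubov transformations (see for instance \cite[Lem.~4.8]{Bossmann2019}) gives the conclusion.

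The main obstacle is the cutoff-free evolution $\mathbb U^\infty_{u,\alpha,1}(t)$: the kernel $\ell^{(1)}_t(x,k)=(q_{u_t}L_{\alpha_t}(k)u_t)(x)$ multiplying $a^*_k b^*_x$ lies only in $L^2\otimes\mathfrak{h}_{-1/4}$ and fails to be Hilbert--Schmidt on $\mathbb{R}^6$, because $|k|B_0(k)\notin L^2(\mathbb{R}^3)$. The direct implementability argument of the previous step therefore does not apply.

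To overcome this I would combine Lemma~\ref{lem:Bogoliubov-convergence} (proved just above) with Proposition~\ref{prop:theta-Bog}(ii), which yields strong convergence $\mathbb U^\Lambda_{u,\alpha,1}(t)\to\mathbb U^\infty_{u,\alpha,1}(t)$ as $\Lambda\to\infty$ along any sequence. Each $\mathbb U^\Lambda_{u,\alpha,1}(t)$ with finite $\Lambda$ is already a Bogoliubov transformation by the first step. The uniform control required by Lemma~\ref{lem:Bogoliubov-convergence} is furnished by Proposition~\ref{prop:theta-Bog}(i): taking the one-body operator $A=(1-\Delta)\oplus(1+\omega)\geq 1$ on $L^2(\mathbb{R}^3)\oplus L^2(\mathbb{R}^3)$, one has $\mathbb T+\mathcal N=\textnormal{d}\Gamma(A)$, and hence
\[
\mathbb U^\Lambda_{u,\alpha,1}(t)^*\,\mathcal N\,\mathbb U^\Lambda_{u,\alpha,1}(t)\ \leq\ e^{CR^{\mathrm D}(t)}\bigl(1+\textnormal{d}\Gamma(A)\bigr)
\]
uniformly in $\Lambda$. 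Applying Lemma~\ref{lem:Bogoliubov-convergence} then yields that $\mathbb U^\infty_{u,\alpha,1}(t)$ is a Bogoliubov transformation, completing the proof.
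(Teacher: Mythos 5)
Your proposal is correct and follows essentially the same route as the paper: handle the Hilbert--Schmidt cases (all $\mathbb{W}^\Lambda_{u,\alpha}$ and $\mathbb{U}^\Lambda_{u,\alpha,1}$ with $\Lambda<\infty$) by the standard implementability argument for quadratic generators with square-integrable pairing kernels, then obtain $\mathbb{U}^\infty_{u,\alpha,1}(t)$ as a strong limit and invoke Lemma~\ref{lem:Bogoliubov-convergence} with the uniform-in-$\Lambda$ bound from Proposition~\ref{prop:theta-Bog}(i). The paper spends most of its effort spelling out the first step (setting up the symplectic flow $\mathcal{V}(\theta)$, applying Shale--Stinespring, and fixing the phase by comparing with the vacuum), whereas you compress this into a single citation to \cite[Lem.~4.8]{Bossmann2019}; since the paper itself defers to that same source for the identification step, this is a matter of exposition rather than a gap.
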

\begin{proof}
  In view of the strong convergence of $\mathbb{U}^\Lambda_{u,\alpha,1}(t)$
  to $\mathbb{U}^\infty_{u,\alpha,1}(t)$ and the bound from
  Proposition~\ref{prop:theta-Bog}(i), it is sufficient it prove the claim on
  $\mathbb{U}_{u,\alpha,1}$ for $\Lambda<\infty$. For $\mathbb
  W^\Lambda_{u,\alpha}$ such a distinction is not necessary.  Note that the
  terms in the generators $\mathbb{D}^\infty_{u,\alpha}$ and
  $\mathbb{H}_{u,\alpha,1}^\Lambda$, $\Lambda<\infty$ of these unitaries with
  two creation operators ($b^*$ and/or $a^*$) have coefficients that are
  square integrable functions of their arguments (compare
  Lemma~\ref{lem:aux:bounds}). Since this is the only relevant property, we
  give an exemplary proof in the case of $\mathbb W_{u,\alpha}^\infty$.

  As a first step, consider the (tentative) equations for the operators
  $\mathfrak{u}(t)$, $\mathfrak{v}(t)$ associated with a Bogoliubov
  transformation. These are usually expressed in terms of the matrix
  \begin{equation}
    \mathcal{V}=\begin{pmatrix}
      \mathfrak{u} & \mathfrak{v} \\ \overline{\mathfrak{v}} & \overline{\mathfrak{u}}
    \end{pmatrix}\,
  \end{equation}
  where, in our case, $\mathfrak{u}, \mathfrak{v}:L^2(\R^3)\oplus
  L^2(\R^3)\to L^2(\R^3)\oplus L^2(\R^3)$ are linear and bounded.  For the
  case of $\mathbb{W}_{u,\alpha}^\infty$, asking that
  \begin{equation}
    i\partial_\theta \big(c^*(\mathfrak{u}(\theta) (f\oplus g)) +c(\mathfrak{v}(\theta) (\bar f\oplus\bar g)) \big)=[ \mathbb{D}_{u,\alpha}^\infty(\theta) , c^*(\mathfrak{u}(\theta) (f\oplus g)) +c(\mathfrak{v}(\theta)(\bar f\oplus \bar g))]
  \end{equation}
  yields the equations for $\mathcal{V}$
  \begin{equation}
    i\partial_\theta \mathcal{V}(\theta)=\mathcal{A}(\theta)\mathcal{V}(\theta),
  \end{equation}
  with
  \begin{align}
    \mathcal{A}(\theta) &=\begin{pmatrix} A_\mathrm{d} & -A_\mathrm{o} \\ \bar A_\mathrm{o} & -\bar A_\mathrm{d} \end{pmatrix}, \\
    A_\mathrm{o}=\begin{pmatrix} 0 & \kappa_{u^\theta}^\infty (\cdot,x) \\  \kappa_{u^\theta}^\infty (k,\cdot)& 0
    \end{pmatrix},& \quad
    A_{\mathrm{d}}=\begin{pmatrix} \tau_{u, \alpha} & -\kappa_{u^\theta}^\infty (-(\cdot),x) \\  -\overline{\kappa}_{u^\theta}^\infty (-k,\cdot)& 0
    \end{pmatrix}, \notag
  \end{align}
  where $\kappa$ acts as an integral operator by integrating in the variable
  denoted by $(\cdot)$.  These equations admit a unique solution
  $\mathcal{V}(\theta)$ with $\mathcal{V}(0)=1$ ($\mathcal{A}$ is a bounded
  perturbation of the diagonal terms, which are generators).

  Since $\kappa_{u^\theta}^\infty\in L^2(\R^6)$, the off-diagonal part
  $A_\mathrm{o}$ is a Hilbert-Schmidt operator, and consequently the
  off-diagonal part $\mathfrak{v}(\theta)$ of $\mathcal{V}(\theta)$ is also
  Hilbert-Schmidt~\cite[Lem.4.9]{Bossmann2019}.  By Shale-Stinespring
  criterion (see, e.g.,~\cite[Lem. 4.2]{Bossmann2019}) there thus exists a
  Bogoliubov transformation $\mathbb{V}(\theta)$ associated to
  $\mathfrak{u}(\theta)$, $\mathfrak{v}(\theta)$,
  i.e. $\mathcal{V}(\theta)$. This transformation is determined up to a
  $\theta$-dependent phase.  We can fix this phase by asking that
  $\mathbb{V}(\theta)\Omega = \mathbb{W}_0^\infty(\theta)\Omega$, as we now
  show.

  The vacuum vector $\Omega$ spans the one-dimensional space on which
  $c(f\oplus g)=b(f)+a(g)$ vanish for all $f,g\in L^2(\R^3)$. Then
  $\mathbb{V}(\theta)\Omega$ spans the joint kernel of $\mathbb{V}(\theta)
  c(f\oplus g) \mathbb{V}(\theta)^*$.  Using the equation satisfied by
  $\mathcal{V}(\theta)^{-1}$, we find (since we do not know a priori that
  $\mathbb{V}$ is differentiable, $i\partial_\theta \mathbb{V}(\theta)$
  denotes the weak derivative)
  \begin{align}
    0&= i\partial_\theta\big(\mathbb{V}(\theta) c(f\oplus g) \mathbb{V}(\theta)^*\big)\mathbb{V}(\theta) \Omega \notag \\
    &= \big(\mathbb{V}(\theta) c(f\oplus g) \mathbb{V}(\theta)^*\big) i\partial_\theta \mathbb{V}(\theta)\Omega
    + [\mathbb{V}(\theta) c(f\oplus g) \mathbb{V}(\theta)^*, \mathbb{D}_0^\infty(\theta)]\mathbb{V}(\theta)\Omega .
  \end{align}
  It follows that
  \begin{equation}
    i\partial_\theta \mathbb{V}(\theta)\Omega  - \mathbb{D}_0^\infty(\theta)\mathbb{V}(\theta)\Omega
  \end{equation}
  is in the kernel of $\mathbb{V}(\theta) c(f\oplus g) \mathbb{V}(\theta)^*$,
  and thus proportional to $\mathbb{V}(\theta)\Omega$. The constant of
  proportionality must be real, since $\mathbb{V}(\theta)$ is unitary, and
  thus we can set it to zero by adjusting the phase, i.e., setting
  $\widetilde{\mathbb{V}}(\theta)=\mathbb{V}(\theta) e^{i \int_0^\theta
    \nu(s)ds}$, which is also a Bogoliubov transformation associated with
  $\mathcal{V}(\theta)$. By uniqueness of the solution proved in
  Proposition~\ref{prop:Gross-Bog:evolution}, we thus have
  $\widetilde{\mathbb{V}}(\theta)\Omega =\mathbb{W}^\infty_0(\theta)\Omega$.
  Using the explicit action of $\mathbb{V}(\theta)$ on the creation and
  annihilation operators, one shows by induction that
  $\widetilde{\mathbb{V}}(\theta)\Psi=\mathbb{W}_0^\infty(\theta)$ for any
  state $\Psi$ obtained by application of a finite number of creation and
  annihilation operators (see the proof of~\cite[Lem.4.8]{Bossmann2019}).
  Since the span of such $\Psi$ is dense, this proves equality and thus that
  $\mathbb{W}_0^\infty(\theta)$ is a Bogoliubov transformation.
\end{proof}

\section{Fluctuation generator of the dressed dynamics}
\label{section:Generator of the fluctuation dynamics of the Gross-transformed
  Nelson dynamics}

In this section we provide the derivation of the fluctuation generator of the
dressed Nelson dynamics.

In order to disentangle the calculation, we write the excitation map as
\begin{equation}
  X_{u,\alpha} =X_u \otimes W^*(\sqrt{N} \alpha),
\end{equation}
where $X_u: \bigotimes_{\rm sym}^N L^2(\R^3) \to \cF_{\perp u} $ acts as
$\Psi_N \mapsto ( (X_u\Psi_N)^{(k)})_{k=0}^N$ with
\begin{equation}
  (X_u \Psi_N )^{(k)} =   \binom{N}{k}^{1/2} \prod_{i=1}^k (q_{u})_i
  \scp{u^{\otimes (N-k)}}{\Psi_N}_{L^2(\mathbb{R}^{3(N-k)})} \in \cF^{(k)}_{\perp u}
\end{equation}

We first calculate the result of applying only the Weyl operator.
\begin{lemma}
  For $ (u,\alpha)\in H^3(\mathbb R^3) \oplus \mathfrak h_{5/2}$ with
  $\norm{u}_\2 =1$ let $(u_t,\alpha_t) = \mathfrak
  s^{\mathrm{D}}[t](u,\alpha)$ be the solution to \eqref{eq:SKG dressed}.
  Let
  \begin{equation}
    H_{\alpha}^{\mathrm{D}, \leq N}(t) = i \dot{W}(\sqrt{N} \alpha_t)^* W(\sqrt{N} \alpha_t) + W(\sqrt{N} \alpha_t)^*  H_N^{\rm D} W(\sqrt{N} \alpha_t)
  \end{equation}
  then
  \begin{subequations}
    \begin{align}
      H_{\alpha}^{\mathrm{D}, \leq N}(t) & = \sum_{j=1}^N \Big(-\Delta_{x_j} + A_{\alpha_t,x_j} +  F^2_{\alpha_t} (x_j) -  \Re\scp{\alpha_t}{f_{u_t}+g_{u_t,\alpha_t}}  \Big)   \label{eq:Lambda_1_line1}\\[2mm]
      & \quad + \frac{1}{N} \sum_{i<j} V(x_i-x_j) + \textnormal{d}\Gamma_a(\omega) - \sqrt N \hat \Phi(f_{u_t} + g_{u_t,\alpha_t} )  \label{eq:Lambda_1_line2} \\[1.5mm]
      & \quad + \frac{2}{\sqrt N} \sum_{j=1}^N  \int d k \, kB_{x_j}(k) \cdot \big(-i\nabla_j+ F_{\alpha_t}(x_j) \big)  a_k^* + \mathrm{h.c.}  \label{eq:Lambda_1_line3}\\
      & \quad +  \frac{1}{N} \sum_{j=1}^N \big(a(k B_{x_j})^2 + \textnormal{h.c.} +2a^*(k B_{x_j})a(k B_{x_j}) \big)  \label{eq:Lambda_1_line4}
    \end{align}
  \end{subequations}
\end{lemma}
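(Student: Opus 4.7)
The overall strategy is to compute the two contributions separately, namely the Weyl conjugation $W^*(\sqrt N \alpha_t) H_N^{\rm D} W(\sqrt N \alpha_t)$ by a term-by-term application of the shift identity, and the time-derivative $i\dot W^*(\sqrt N\alpha_t) W(\sqrt N\alpha_t)$ via the Weyl composition rule; the SKG equation \eqref{eq:SKG dressed} for $\alpha_t$ then makes the two pieces fit together cleanly.

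The computation of $W^*(\sqrt N\alpha_t) H_N^{\rm D} W(\sqrt N\alpha_t)$ uses the shift identities $W^*(\sqrt N \alpha) a(f) W(\sqrt N \alpha) = a(f) + \sqrt N \scp{f}{\alpha}$ and $W^*(\sqrt N\alpha) a^*(f) W(\sqrt N\alpha) = a^*(f) + \sqrt N \overline{\scp{f}{\alpha}}$ on the three operator-valued pieces of $H_N^{\rm D}$ from Lemma \ref{lem:H:D:representation}. Conjugating $\textnormal{d}\Gamma_a(\omega)$ produces $\textnormal{d}\Gamma_a(\omega) + \sqrt N \hat\Phi(\omega\alpha_t) + N\scp{\alpha_t}{\omega\alpha_t}$; conjugating $\frac{1}{\sqrt N}\sum_i \hat A_{x_i}$ produces itself plus the gradient-coupling term $\sum_i A_{\alpha_t,x_i}$ (using that $-2[i\nabla_x \cdot \scp{kB_x}{\alpha} + \overline{\scp{kB_x}{\alpha}}\cdot i\nabla_x] = A_{\alpha,x}$ by \eqref{eq:def:A_alpha}); conjugating the quadratic field piece $\frac{1}{N}\sum_i(a(kB_{x_i})^2 + \text{h.c.} + 2 a^*(kB_{x_i}) a(kB_{x_i}))$ produces itself, a classical potential $\sum_i F_{\alpha_t}^2(x_i)$ (via $z\cdot z + \bar z\cdot \bar z + 2|z|^2 = (z+\bar z)\cdot(z+\bar z) = F_\alpha^2(x)$ for $z=\scp{kB_x}{\alpha_t}$), and a mixed linear contribution $\frac{2}{\sqrt N}\sum_i F_{\alpha_t}(x_i)\cdot\hat\Phi(kB_{x_i})$ that combines with the $\hat A$-shift output to give line \eqref{eq:Lambda_1_line3}.

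For the time-derivative, the composition rule $W(f)W(g)=W(f+g)e^{-i\Im \scp{f}{g}}$ yields, upon expanding $W^*(\sqrt N\alpha_{t+s}) W(\sqrt N\alpha_t)$ to first order in $s$, the identity
\begin{equation*}
  i\dot W^*(\sqrt N \alpha_t) W(\sqrt N \alpha_t) = -i\sqrt N\bigl(a^*(\dot\alpha_t)-a(\dot\alpha_t)\bigr) + N\Im\scp{\alpha_t}{\dot\alpha_t}.
\end{equation*}
Substituting the SKG equation $\dot\alpha_t = -i(\omega\alpha_t + f_{u_t}+g_{u_t,\alpha_t})$ and using the linearity of $a^*$ together with the antilinearity of $a$, the operator part simplifies to $-\sqrt N \hat\Phi(\omega\alpha_t + f_{u_t}+g_{u_t,\alpha_t})$. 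The piece $-\sqrt N \hat\Phi(\omega\alpha_t)$ cancels against the linear shift from $W^*\textnormal{d}\Gamma_a(\omega) W$, leaving the $-\sqrt N \hat\Phi(f_{u_t}+g_{u_t,\alpha_t})$ of line \eqref{eq:Lambda_1_line2}. Meanwhile, using $\Im(-iz) = -\Re z$ and that $\scp{\alpha_t}{\omega\alpha_t}$ is real, the scalar terms combine as $N\scp{\alpha_t}{\omega\alpha_t} + N\Im\scp{\alpha_t}{\dot\alpha_t} = -N\Re\scp{\alpha_t}{f_{u_t}+g_{u_t,\alpha_t}}$, which is exactly $-\sum_{j=1}^N \Re\scp{\alpha_t}{f_{u_t}+g_{u_t,\alpha_t}}$ in line \eqref{eq:Lambda_1_line1}. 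Collecting all contributions reproduces the stated identity.

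\paragraph{Main obstacle.}
The calculations are algebraically long but conceptually routine at the formal level. The main care-point is bookkeeping the (anti)linearity of $a^{(*)}$ when substituting the SKG equation into the time-derivative, since the cancellation of the $\omega\alpha_t$-part against the Weyl conjugation of $\textnormal{d}\Gamma_a(\omega)$ hinges on the correct sign. A secondary issue is giving analytic meaning to $i\dot W^*(\sqrt N\alpha_t) W(\sqrt N\alpha_t)$: the derivative should be interpreted as a quadratic form on the form domain of $H_N^{\rm D}$, which is legitimate because the shifts $a^{(*)}(\dot\alpha_t)$ are controlled by $\cN_a^{1/2}$ and $\dot\alpha_t \in L^2$ by Proposition~\ref{prop:skg well posed}.
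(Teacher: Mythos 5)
Your proposal follows the same strategy as the paper's own proof: conjugate each piece of $H_N^{\rm D}$ with the Weyl shift property, compute $i\dot W^*(\sqrt N\alpha_t)W(\sqrt N\alpha_t)$, and then insert the SKG equation for $\alpha_t$ so that the $\hat\Phi(\omega\alpha_t)$ contributions cancel. The only genuine difference is cosmetic: the paper computes $\dot W^*$ from the normal-ordered (Glauber) form $W(\sqrt N\alpha_t)^*=e^{a(\sqrt N\alpha_t)}e^{-a^*(\sqrt N\alpha_t)}e^{\frac{N}{2}\norm{\alpha_t}^2}$ and then conjugates back through $W$, whereas you differentiate the product $W^*(\sqrt N\alpha_{t+s})W(\sqrt N\alpha_t)$ directly from the Weyl composition rule \eqref{eq: Weyl operators product}. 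Your route is slightly more direct because it avoids the additional Weyl conjugation of $\hat\Phi(i\dot\alpha_t)$; both yield the same formula, and your book-keeping of the (anti)linearity of $a,a^*$ under the substitution $\dot\alpha_t=-i(\omega\alpha_t+f_{u_t}+g_{u_t,\alpha_t})$ checks out.

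One small imprecision in the write-up: you say the mixed linear contribution $\tfrac{2}{\sqrt N}\sum_i F_{\alpha_t}(x_i)\cdot\hat\Phi(kB_{x_i})$ ``combines with the $\hat A$-shift output to give line \eqref{eq:Lambda_1_line3}.'' The shift output of $\hat A$ is $\sum_i A_{\alpha_t,x_i}$, and that term lives in line \eqref{eq:Lambda_1_line1}. What actually pairs with the mixed contribution to produce line \eqref{eq:Lambda_1_line3} is the \emph{unchanged} piece $\tfrac{1}{\sqrt N}\sum_j\hat A_{x_j}$, since $\tfrac{1}{\sqrt N}\hat A_{x_j}+\tfrac{2}{\sqrt N}F_{\alpha_t}(x_j)\cdot\hat\Phi(kB_{x_j}) = \tfrac{2}{\sqrt N}\bigl(a^*(kB_{x_j})\cdot(-i\nabla_j+F_{\alpha_t}(x_j))+\mathrm{h.c.}\bigr)$. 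This is a labelling slip, not a conceptual error.
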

\begin{proof}
  We recall the form of $H^\mathrm{D}_N$ given in
  Lemma~\ref{lem:H:D:representation}. By the shift property of the Weyl
  operator \eqref{eq: Weyl operators shift property}, we have
  \begin{subequations}
    \begin{align}
      W(\sqrt{N} \alpha_t)^*  \textnormal{d}\Gamma_a(\omega)  W(\sqrt{N} \alpha_t) & = \textnormal{d} \Gamma_a(\omega) +  \hat \Phi(\sqrt N \omega \alpha_t ) + N \scp{\alpha_t}{\omega \alpha_t }, \label{eq:Weyl-dGamma}\\[1,5mm]
      W(\sqrt{N} \alpha_t)^*   \hat{A}_{x_j}   W(\sqrt{N} \alpha_t) & =   \hat{A}_{x_j}  +  \sqrt N A_{\alpha_t,x_j}.
    \end{align}
    with $ A_{\alpha_t,x} = -2 i\nabla_x \cdot \scp{kB_{x}}{\alpha_t} +
    \text{h.c.}$, and
    \begin{align}
      &W(\sqrt{N} \alpha_t)^* \big(a(k B_{x_j})^2 + \textnormal{h.c.} +2a^*(k B_{x_j})a(k B_{x_j})\big)W(\sqrt{N} \alpha_t) \notag \\
      &\ =  \big(a(k B_{x_j})^2 + \textnormal{h.c.} +2a^*(k B_{x_j})a(k B_{x_j})\big) \notag \\
      &\qquad  +  2 \sqrt N F_{\alpha_t}(x_j) \hat{\Phi}( k B_{x_j} ) +  N F^2_{\alpha_t}(x_j),
    \end{align}
  \end{subequations}
  where we inserted $F_{\alpha_t}(x)= 2 \Re \scp{k B_x}{\alpha_t}$ and used
  \begin{align}
    2 \big| \scp{\alpha_t}{kB_{x_j}} \big|^2 + \scp{\alpha_t}{kB_{x_j}}^2 + \scp{kB_{x_j}}{\alpha_t}^2 = \big( 2 \Re \scp{k B_{x_j}}{\alpha_t} \big)^2.
  \end{align}
  The operators $-\Delta_{x_j}$ and $V(x_j-x_i)$ are left unchanged by
  $W(\sqrt{N} \alpha)$, so it remains to the term with the time-derivative.
  We use the formula $ W(\sqrt{N} \alpha_t)^* = e^{a(\sqrt N \alpha_t)} e^{-
    a^*(\sqrt N \alpha_t)} e^{\frac{N}{2}\norm{\alpha_t}^2}$ to compute
  \begin{align}
    \tfrac{d}{dt}  W(\sqrt{N} \alpha_t)^*
    & = e^{a(\sqrt N \alpha_t)} a(\sqrt N \dot \alpha_t) e^{- a^*(\sqrt N \alpha_t)} e^{\frac{N}{2}\norm{\alpha_t}^2}\notag\\
    & \quad - e^{a(\sqrt N \alpha_t)} e^{-a^*(\sqrt N \alpha_t)} e^{ \frac{N}{2}\norm{\alpha_t}^2}  \big( a^*(\sqrt N \dot \alpha_t) - N \Re \scp{\dot \alpha_t}{\alpha_t} \big) \notag\\
    & =   W(\sqrt{N} \alpha_t)^* \Big( a(\sqrt N \dot \alpha_t) - a^*(\sqrt N \dot \alpha_t) - N\Im \scp{\dot \alpha_t}{ \alpha_t} \Big)
  \end{align}
  where the last step follows from
  \begin{align}
    e^{a^*(\sqrt N \alpha_t)} a(\sqrt N \dot \alpha_t) e^{-a^*(\sqrt N \alpha_t)} =   a(\sqrt N \dot \alpha_t) - N \scp{ \dot \alpha_t}{\alpha_t}  .
  \end{align}
  Inserting the equation of motion \eqref{eq:classical:transformed:equations}
  for $\alpha_t$, we find
  \begin{align}
    & i\tfrac{d}{dt}  W(\sqrt{N} \alpha_t)^*  =  W(\sqrt{N} \alpha_t)^* \big(-  \sqrt N \hat \Phi ( i\dot \alpha_t) + N \Re \scp{ \alpha_t}{ i \dot \alpha_t} \big) \\[1mm]
    & =  W(\sqrt{N} \alpha_t)^* \big( - \sqrt N \hat \Phi (\omega\alpha_t +  f_{u_t} + g_{u_t,\alpha_t} ) + N \Re \scp{ \alpha_t}{ \omega \alpha_t +  f_{u_t} + g_{u_t,\alpha_t} } \big), \notag
  \end{align}
  and with the shift property of the Weyl operator \eqref{eq: Weyl operators
    shift property}, we get
  \begin{multline}\label{eq:Weyl-derivative}
    (i \tfrac{d}{dt}  W(\sqrt{N} \alpha_t)^*)  W(\sqrt{N} \alpha_t) \\
    = - \sqrt N \hat \Phi (\omega\alpha_t + f_{u_t} + g_{u_t,\alpha_t}) - N \Re \scp{ \alpha_t}{ \omega \alpha_t + f_{u_t} + g_{u_t,\alpha_t}} .
  \end{multline}
  Noting that the terms involving $\omega \alpha_t$
  in~\eqref{eq:Weyl-derivative} and \eqref{eq:Weyl-dGamma} cancel gives the
  claim.
\end{proof}

The excitation map for the particles $X_u$ satisfies a general transformation
property.

\begin{lemma}\label{lem:trafo:dGamma(B)} For any densely defined operator $B
  : D(B) \subseteq L^2(\mathbb R^3) \to L^2(\mathbb R^3)$ and $u\in D(B)$ we
  have
  \begin{align}
    & X_u \bigg( \sum_{j=1}^N B_j\bigg) X_u^* = \scp{u }{B u} (N-\mathcal N_b) \notag \\
    & \quad +  \int dx \bigg( (q_{u} B u )(x) b_x^* \sqrt{[N - \mathcal N_b]_+}  + \overline{(q_{u}  B^{*} u)}(x)  \sqrt{[N - \mathcal N_b]_+}  b_x \bigg) \notag\\
    & \quad + \int dx dy\, b_x^* B b_y
  \end{align}
  as an operator identity on $\Fock^{(\le N)}_{\perp u}$.
\end{lemma}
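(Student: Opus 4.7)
The plan is to decompose the one-body operator orthogonally as $B = p_u B p_u + q_u B p_u + p_u B q_u + q_u B q_u$, with $p_u = |u\rangle\langle u|$ and $q_u = 1 - p_u$, and to treat each of the four resulting pieces of $\sum_{j=1}^N B_j = \textnormal{d}\Gamma_b(B)$ separately under conjugation by $X_u$. A short calculation of integral kernels, using $(q_u B p_u)(x,y) = (q_u B u)(x)\overline{u(y)}$ and $(p_u B q_u)(x,y) = u(x)\overline{(q_u B^* u)(y)}$, shows
\begin{equation*}
  \textnormal{d}\Gamma_b(B) = \scp{u}{Bu}\, b^*(u)b(u) + b^*(q_u B u)\, b(u) + b^*(u)\, b(q_u B^* u) + \textnormal{d}\Gamma_b(q_u B q_u).
\end{equation*}

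Next, I would apply the four identities of Lemma \ref{lemma:properties U}, which remain valid with $X_u$ in place of $X_{u,\alpha}$ since the Weyl factor acts on an independent tensor factor and commutes with every $b^{\#}$. Taking $f = q_u B u$ and $g = q_u B^* u$ in $\{u\}^\perp$, conjugation by $X_u$ turns the first three terms into $\scp{u}{Bu}[N-\mathcal{N}_b]_+$, $b^*(q_u B u)[N-\mathcal{N}_b]^{1/2}_+$, and $[N-\mathcal{N}_b]^{1/2}_+\, b(q_u B^* u)$, respectively. For the remaining piece, I would expand $q_u B q_u$ in an orthonormal basis of $\{u\}^\perp$ and apply the fourth identity termwise to obtain $X_u \,\textnormal{d}\Gamma_b(q_u B q_u)\, X_u^* = \textnormal{d}\Gamma_b(q_u B q_u) = \int dx\, dy\,(q_u B q_u)(x,y)\, b_x^* b_y$.

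Collecting the four contributions and using $[N-\mathcal{N}_b]_+ = N - \mathcal{N}_b$ on $\cF_{\perp u}^{(\le N)}$ gives the stated identity; the last term on the right is interpreted as the action of $\int dx\,dy\, b_x^* B\, b_y$ on $\cF_{\perp u}$, which coincides with $\textnormal{d}\Gamma_b(q_u B q_u)$ once the $p_u$-parts of $B$ have been absorbed into the linear terms. The argument is essentially bookkeeping: the substantive input, namely the transformation of quadratic monomials in $b^\#$ under $X_u$, is furnished entirely by Lemma \ref{lemma:properties U}, so no serious obstacle is expected — only some care in verifying the kernel identifications for the mixed $p_u B q_u$ and $q_u B p_u$ pieces.
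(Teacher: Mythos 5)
Your proof is correct and is exactly the paper's argument (which is stated there in one line): view $\sum_j B_j$ as $\mathrm{d}\Gamma_b(B)$, split $B$ with $p_u,q_u$, and apply the four conjugation identities of Lemma \ref{lemma:properties U} termwise. Your closing remark on the last term is the right reading of the statement: $\int dx\,dy\,b_x^*Bb_y$ and $\mathrm{d}\Gamma_b(q_uBq_u)$ agree on $\cF^{\le N}_{\perp u}$ up to $b^*(u)b(q_uB^*u)$, which vanishes as a form on that subspace, matching the paper's implicit convention.
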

\begin{proof}
  This follows by writing $\sum_j B_j$ as the restriction to
  $L^2(\R^3)^{\otimes N}$ of $\mathrm{d}\Gamma(B)$ and using the identities
  of Lemma~\ref{lemma:properties U}.
\end{proof}

With this, we can now give the proof of the formula for the fluctuation
Hamiltonian.

\begin{proof}[Proof of Lemma~\ref{lem:fluctuation:generator}]
  We first calculate
  \begin{equation}
    H_{u,\alpha}^{\mathrm{D}, \leq N} (t)= X_{u_t} H_{\alpha}^{\mathrm{D}, \leq N} (t) X_{u_t}^*  + i \dot X_{u_t} X_{u_t}^*
  \end{equation}
  and then add convenient terms that vanish on $\cF_{\perp u_t}^{\leq N}
  \otimes \cF$ to obtain the symmetric expression for
  $H_{u,\alpha}^\mathrm{D}(t)$.

  The terms in $H_{\alpha}^{\mathrm{D}, \leq N} (t)$ that are not either
  invariant under $X_u$ or of the well-known form arising for many-boson
  systems are those of lines \eqref{eq:Lambda_1_line3} and
  \eqref{eq:Lambda_1_line4}.  Using $\overline{B_{x_j}(k)} = B_{x_j}(-k)$ we
  write
  \begin{multline}
    \big(a(k B_{x_j})^2 + \textnormal{h.c.} +2a^*(k B_{x_j})a(k B_{x_j})\big)
    \\= \int dk dl\, k B_{x_j}(k) \cdot l B_{x_j}(l) \LaTeXunderbrace{\big[- 2 a_k^* a_{-l} + a_k^* a_l^* +a_{-k} a_{-l} \big]}_{=: \mathcal A_{kl}}
  \end{multline}
  and by Lemma \ref{lem:trafo:dGamma(B)} \allowdisplaybreaks
  \begin{align}
%
    & \frac{1}{N} X_{u_t} \Big( \int dk dl \, \sum_{j=1}^N  \big( kB_{x_j}(k)\cdot l B_{x_j}(l) \big)  \, \mathcal A_{kl}\Big)X_{u_t}^*\notag\\
    & \quad = \frac{1}{N}\int dk dl \underbrace{\scp{u_t}{k B_{(\cdot)}(k) \cdot l B_{(\cdot)}(l) u_t}}_{=M_{u_t}(k,l)} ( N - \mathcal N_b ) \, \mathcal A_{kl} \notag\\
    & \quad  \quad + \frac{1}{N} \int dk dl dx \bigg( \underbrace{(q_{u_t} k B_{(\cdot)}(k) \cdot l B_{(\cdot)}(l) u_t)(x)}_{=N_{u_t}(x,k,l)} b_x^*  \sqrt{[ N - \mathcal N_b]_+ } +\text{h.c} \notag\\
    & \quad \quad + \frac{1}{N} \int dk dl  dx dy \underbrace{(q_{u_t} k B_{(\cdot)}(k) \cdot l B_{(\cdot)}(l) q_{u_t} )(x,y)}_{=Q_{u_t}(x,y,k,l)} b_x^* b_y \, \mathcal A_{kl}.
  \end{align}
  Recalling the formulas \eqref{eq:def:f}, \eqref{eq:def:g}, and
  \eqref{eq:L(k):operator} for $f_{u_t}(k)$, $g_{u_t,\alpha_t} (k)$, and
  $L_\alpha(k)$, we also find
  \begin{align}
    & \frac{2}{\sqrt N} X_{u_t} \bigg( \sum_{j=1}^N  \int d k \, kB_{x_j}(k) \cdot \big(-i\nabla_j+ F_{\alpha_t}(x_j) \big)  a_k^* + \text{h.c.} \bigg) X_{u_t}^* \\
    &\quad  = \frac{2}{\sqrt N}  \int d k \, X_{u_t} \bigg( \sum_{j=1}^N  kB_{x_j}(k) \cdot \big(-i\nabla_j+ F_{\alpha_t}(x_j) \big) \bigg) X_{u_t}^*  a_k^* + \text{h.c.} \notag\\
    & \quad  = \frac{1}{\sqrt N}  \int dk \big( f_{u_t}(k) + g_{u_t,\alpha_t} (k)\big) ( N - \mathcal N_b  ) \, a_k^* + \text{h.c.} \notag\\
    & \qquad  + \frac{2}{\sqrt N}  \int dk dx  (q_{u_t} L_{\alpha_t}(k)u_t)(x) b_x^* a_k^* \sqrt{[N - \mathcal N_b]_+ }+ \text{h.c.} \notag\\
    &\qquad  + \frac{2}{\sqrt N}  \int dk dx ( \overline{ q_{u_t} L_{\alpha_t}(k)^* u_t)} (x)  \sqrt{[N - \mathcal N_b]_+ }  b_x  a_k^* + \text{h.c.} \notag\\
    & \quad \quad + \frac{2}{\sqrt N} \int dk  dx dy \underbrace{(q_{u_t} kB_{(\cdot)}(k) \cdot (-i\nabla + F_{\alpha_t}) q_{u_t})(x,y)}_{J_{u_t, \alpha_t}(x,k,y)} b_x^* b_y \, a_k^* + \text{h.c.} \notag
  \end{align}
  The term $\sqrt N \hat \Phi(f_{u_t} +g_{u_t,\alpha_t})$ from the first line
  cancels with the corresponding term in $H_{\alpha}^{\mathrm{D},\leq N}(t)$,
  and the remaining term $- N^{-1/2} \mathcal N_b \hat
  \Phi(f_{u_t}+g_{u_t,\alpha_t})$ equals $H_2(t)$ from the formula for
  $H^{\rm D}_{u,\alpha}(t)$.  We thus have
  \begin{multline}
    H_{u,\alpha}^{\mathrm{D},\leq N}(t) = i\dot{X}_{u_t} X_{u_t}^* \\+ X_{u_t} \Big( \sum_{j=1}^N h_{u_t,\alpha_t} + \frac{1}{N} \sum_{i<j} V(x_i-x_j) \Big) X_{u_t}^*  +\sum_{j=2}^5 H_j(t) . \label{eq:excitation:part:boson}
  \end{multline}
  From~\cite[Eq.(40)]{Lewin:2015a} and the equation satisfied by $u_t$, we
  deduce
  \begin{multline}\label{eq:X_u-derivative}
    i\dot{X}_{u_t} X_{u_t}^* = b^*(u_t)b(q_{u_t}h_{u_t,\alpha_t} u_t) -\langle ih_{u_t,\alpha_t}, u_t\rangle (N-\cN_b)
    \\ - \sqrt{N-\cN_b}b(q_{u_t}h_{u_t,\alpha_t} u_t) - \text{h.c}.
  \end{multline}
  This combines with the second term above to yield $H_0(t)+H_1(t)$ as in the
  analogous computations in \cite{Bossmann2019,Lewin:2015a}. Taking into
  account the obvious modifications from replacing $v \mapsto \tfrac{N-1}{N}
  V$, one can use for instance \cite[Eq. (2.20)]{Bossmann2019}. .
\end{proof}

\section*{Acknowledgements}
The authors would like to thank S\"oren Petrat for many fruitful discussions
throughout the project. M.F.\ acknowledges the support of the MUR grant
``Dipartimento di Eccellenza 2023--2027'', and of the ``Centro Nazionale di
ricerca in HPC, Big Data and Quantum Computing''. J.L. was supported by the
Agence Nationale de la Recherche (ANR) through the projects DYRAQ
(ANR-17-CE40-0016) and QUACO (ANR-17-CE40-0007), and the ICB received
additional support through the EUR-EIPHI Graduate School
(ANR-17-EURE-0002). N.L. gratefully acknowledges support from the Swiss
National Science Foundation through the NCCR SwissMap and funding from the
European Union's Horizon 2020 research and innovation programme under the
Marie Sk\l
odowska-Curie grant agreement N\textsuperscript{o} 101024712.\\

{}

\end{document}